\documentclass[11pt,letterpaper]{article}

\usepackage[utf8]{inputenc}

\usepackage{amsmath, amsthm}
\usepackage{MnSymbol} 
\usepackage{graphicx,color}
\usepackage[hyphens]{url}
\usepackage{dsfont}
\usepackage{booktabs}
\usepackage[normalem]{ulem}
\usepackage{mathtools}
\usepackage{nicefrac}

\usepackage{hyperref}
\usepackage[nameinlink,capitalize]{cleveref}

\usepackage{multirow}
\usepackage{algorithm}
\usepackage[noend]{algpseudocode}
\usepackage{tikz} %

\usepackage{soul} %
\usepackage{amsmath}
\usepackage[T1]{fontenc}
\DeclareMathAlphabet{\mathpzc}{T1}{pzc}{m}{it}

\usepackage[giveninits=true, maxnames=10, style=alphabetic, defernumbers=true]{biblatex}
\AtBeginRefsection{\GenRefcontextData{sorting=ynt}}
\AtEveryCite{\localrefcontext[sorting=ynt]}

\usepackage{subcaption}

\usepackage{tabularx}
\newcolumntype{L}{>{\raggedright\arraybackslash}X}

\DeclareMathAlphabet\mathbb{U}{msb}{m}{n}

\numberwithin{equation}{section}
\newtheorem{theorem}{Theorem}[section]
\newtheorem{cor}[theorem]{Corollary}
\newtheorem{lemma}[theorem]{Lemma}
\newtheorem{remark}[theorem]{Remark}
\newtheorem{prop}[theorem]{Proposition}

\newtheorem{defin}[theorem]{Definition}
\newtheorem{assumption}[theorem]{Assumption}

\newenvironment{proof-of-lemma}[1][{}]{\noindent\textbf{Proof of {#1}}
  \hspace*{1em}}{\qed\\}

\newcommand{\nfrac}[2]{\nicefrac{#1}{#2}}

\newcommand{\cN}{\mathcal{N}}

\newcommand{\R}{\mathbb{R}}

\newcommand{\N}{\mathbb{N}}

\newcommand*{\E}{\mathbb{E}}

\DeclareMathOperator*{\Tr}{tr}
\providecommand{\tr}{\operatorname{tr}}

\newcommand*{\Otilde}{\widetilde{O}}

\newcommand*{\poly}{\mathrm{poly}}
\newcommand*{\polylog}{\mathrm{polylog}}

\newcommand*{\eps}{\varepsilon}

\DeclareMathOperator*{\argmin}{arg\,min}

\renewcommand{\le}{\leqslant}
\renewcommand{\ge}{\geqslant}
\renewcommand{\leq}{\leqslant}
\renewcommand{\geq}{\geqslant}

\providecommand{\abs}[1]{\lvert{#1}\rvert}

\providecommand{\norm}[1]{\lVert{#1}\rVert}

\DeclareMathOperator{\KL}{\mathsf{KL}}

\newcommand{\TV}{\mathsf{TV}}

\newcommand{\Ren}{\mathsf{R}}

\usepackage{blkarray}
\usepackage{cancel}
\usepackage{enumitem}
\usepackage{float}
\usepackage{mathrsfs}
\usepackage[framemethod=tikz]{mdframed}
\usepackage{microtype}
\usepackage{ragged2e}
\usepackage{sectsty}
\usepackage{thmtools}
\usepackage{thm-restate}
\usepackage[Symbolsmallscale]{upgreek}
\usepackage{array}

\usetikzlibrary{cd}

\hypersetup{citecolor=violet, colorlinks=true, linkcolor=blue}

\makeatletter
\patchcmd{\@counteralias}
{\@ifdefinable{c@#1}}
{\expandafter\@ifdefinable\csname c@#1\endcsname}
{}{}
\makeatother

\makeatletter
\newcommand{\opnorm}{\@ifstar\@opnorms\@opnorm}
\newcommand{\@opnorms}[1]{%
	\left|\mkern-1.5mu\left|\mkern-1.5mu\left|
	#1
	\right|\mkern-1.5mu\right|\mkern-1.5mu\right|
}
\newcommand{\@opnorm}[2][]{%
	\mathopen{#1|\mkern-1.5mu#1|\mkern-1.5mu#1|}
	#2
	\mathclose{#1|\mkern-1.5mu#1|\mkern-1.5mu#1|}
}
\makeatother

\newcommand{\PreserveBackslash}[1]{\let\temp=\\#1\let\\=\temp}
\newcolumntype{C}[1]{>{\PreserveBackslash\centering}p{#1}}

\newcommand\bs[1]{\boldsymbol{#1}}
\newcommand\mb[1]{\mathbf{#1}}

\newcommand\mc[1]{\mathcal{#1}}

\newcommand\ms[1]{\mathscr{#1}}

\newcommand\msf[1]{\mathsf{#1}}

\definecolor{MITBrown}{RGB}{164, 31, 50}

\DeclareMathOperator\ent{ent}

\DeclareMathOperator\id{id}

\DeclareMathOperator\law{law}

\DeclareMathOperator\one{\mathbf{1}}

\DeclareMathOperator\var{var}

\renewcommand{\Pr}{\mathbb{P}}

\newcommand{\D}{\mathrm{d}}

\newcommand{\Coup}{\ms C}

\newcommand\deq{\coloneqq}

\newcommand\mmid{\mathbin{\|}}

\newcounter{dummy}
\makeatletter
\newcommand\myitem[1][]{\item[#1]\refstepcounter{dummy}\def\@currentlabel{#1}}
\makeatother

\tikzset{
    shadedNode/.style={rectangle, draw=none, fill=blue!20, inner sep=1mm}
}

\global\long\def\inner#1{\left\langle #1\right\rangle }%

\newcommand{\sh}{\msf{sh}}
\newcommand{\aux}{\msf{aux}}
\newcommand{\msx}{\msf{x}}
\newcommand{\msp}{\msf{p}}

\newcommand{\msO}{\msf{O}}
\newcommand{\msOH}{\msf{OH}}
\newcommand{\msOHO}{\msf{OHO}}
\newcommand{\msOB}{\msf{OB}}
\newcommand{\msOBA}{\msf{OBA}}
\newcommand{\msOBAB}{\msf{OBAB}}
\newcommand{\msOBABC}{\msf{OBABC}}
\newcommand{\msOBABCO}{\msf{OBABCO}}

\newcommand{\eff}{{\rm H_2}}

\newcommand{\Renyi}{\msf{R}}

\newcommand{\target}{{\operatorname{target}}}

\newcommand{\op}{\operatorname{op}}
\newcommand{\betH}{\beta_{\rm H_1}}
\newcommand{\kapH}{\kappa_{\rm H_2}}
\usepackage[margin=1in]{geometry}

\makeatletter
\def\blfootnote{\gdef\@thefnmark{}\@footnotetext}
\makeatother

\allowdisplaybreaks

\begin{document}

    \title{Algorithmic warm starts for Hamiltonian Monte Carlo}
 	 \author{
         Matthew S.\ Zhang \\
         \small UToronto \\
         \texttt{\small matthew.zhang@mail.utoronto.ca} \and
     		Jason M.\ Altschuler\\
		 \small UPenn\\
		 \texttt{\small alts@upenn.edu}
		 \and
		 Sinho Chewi \\
		 \small Yale \\
		 \texttt{\small sinho.chewi@yale.edu}
     }  
    \date{}
	\maketitle

    \begin{abstract}
        Generating samples from a continuous probability density is a central algorithmic problem across statistics, engineering, and the sciences. For high-dimensional settings, Hamiltonian Monte Carlo (HMC) is the default algorithm across mainstream software packages. However, despite the extensive line of work on HMC and its widespread empirical success, it remains unclear how many iterations of HMC are required as a function of the dimension $d$. On one hand, a variety of results
        show that Metropolized HMC converges in $O(d^{1/4})$ iterations from a \emph{warm start} close to stationarity. On the other hand, Metropolized HMC is significantly slower without a warm start, e.g., requiring  $\Omega(d^{1/2})$ iterations even for simple target distributions such as isotropic Gaussians. 
        Finding a warm start is therefore the computational bottleneck for HMC.
        
        \par We resolve this issue for the well-studied setting of
        sampling from a probability distribution satisfying 
        strong log-concavity (or 
        isoperimetry) and third-order derivative bounds. We prove that \emph{non-Metropolized} HMC generates a warm start in $\Otilde(d^{1/4})$ iterations, after which we can exploit the warm start using Metropolized HMC. 
        Our final complexity of $\Otilde(d^{1/4})$ is the fastest algorithm for high-accuracy sampling under these assumptions, improving over the prior best of $\Otilde(d^{1/2})$. This closes the long line of work on the dimensional complexity of MHMC for such settings, and also provides a simple warm-start prescription for practical implementations.
    \end{abstract}
    
    \newpage

    \small
	\setcounter{tocdepth}{2}
	\tableofcontents	
	\normalsize
	\newpage

        \section{Introduction}

Generating samples from a high-dimensional probability distribution $\pi$ on $\R^d$ is a central algorithmic challenge in statistics, machine learning, computational mathematics, physics, chemistry, and more~\cite{james1980monte, jerrum1996markov, robert1999monte, liu2001monte, andrieu2003introduction, brooks2011handbook, dwork2014algorithmic,chewibook}. The de facto algorithmic approach is Markov chain Monte Carlo (MCMC): simulate a Markov chain that is designed to have stationary distribution (approximately) equal to $\pi$. 
The basic idea is that after simulating the Markov chain for a sufficiently long period---the \emph{mixing time}---then the final iterate has law close to $\pi$. The effectiveness of this idea has led to MCMC being listed as one of the top ten algorithms of the 20th century~\cite{cipra2000best,dongarra2000guest}. 

\par Among the extensive suite of MCMC algorithms, Metropolized Hamiltonian Monte Carlo (MHMC\footnote{In the literature, Metropolized HMC is often simply called HMC. To avoid confusion, throughout we refer to Metropolized HMC as MHMC, and we refer to (unadjusted, non-Metropolized) HMC as HMC.}) 
has long been the practitioners' algorithm of choice 
for sampling from high-dimensional distributions with tractable gradients. Today MHMC is the workhorse algorithm in mainstream software packages such as NumPyro~\cite{phan2019composable}, PyMC~\cite{abril2023pymc}, Stan~\cite{gelleeguo15stan}, and TensorFlow~\cite{dillon2017tensorflow}.

The overarching idea behind MHMC is to give the iterates of the Markov chain ``momentum'' so that they can accelerate in important directions. Concretely, HMC augments the state space by introducing an auxiliary momentum variable and then proposes long-distance moves by approximately following Hamiltonian dynamics for time $T \asymp 1$ in the resulting phase space (see \S\ref{sec:prelim} for details). Because the Hamiltonian ODE does not have a closed form solution, it is integrated numerically---most commonly with the St\"ormer--Verlet ``leapfrog'' scheme---using a step size $h$ for $T/h$ integration steps. Each integration step requires one gradient query to the negative log-density $-\log \pi$; these queries are the computational bottleneck and are therefore the standard measure of computational complexity. Finally, MHMC builds upon HMC by applying a Metropolis--Hastings accept/reject step to the HMC proposal; 
this ensures that the stationary distribution is exactly equal to $\pi$ despite discretization error in the HMC step. Empirically, these momentum-based trajectories mitigate random-walk behavior intrinsic to other MCMC algorithms, leading to fast mixing even in high dimension~\cite{neal2011mcmc}.

However, despite the widespread practical usage of MHMC for high-dimensional sampling, it remains unclear how many iterations are required as a function of the dimension $d$. The current understanding is surprisingly nuanced due to a phase transition: the convergence rate of MHMC depends on whether the current iterate is already close to stationarity, as we detail below.

\paragraph*{Predictions of $d^{1/4}$ assuming a warm start.} 
On one hand, an influential line of work dating back to the 1980s has shown that, if initialized sufficiently close to stationarity, MHMC converges in $\sim d^{1/4}$ iterations. 
Traditionally, these results were of the following flavor (since acceptance probability is simpler to analyze than mixing time):
in order for the acceptance probability to not vanish in the limit $d \to \infty$, then the step size of HMC (for the standard leapfrog discretization) should be of size $h \asymp d^{-1/4}$, hence HMC takes roughly $d^{1/4}$ iterations to move $O(1)$ distance in the state space. Such results were first shown via scaling limit predictions based on statistical physics~\cite{creutz88global,gupta88tuning}, then later shown rigorously for Gaussian $\pi$~\cite{kennedy91acceptances} and more general families of i.i.d.\ product distributions satisfying fourth-order derivative bounds~\cite{Beskos13}. The $d^{1/4}$ scaling of MHMC is appealing since it is better than other MCMC algorithms such as the Metropolized Random Walk (MRW) and the Metropolis-Adjusted Langevin Algorithm (MALA); those respectively require $O(d)$ and $O(d^{1/3})$ scalings to move $O(1)$ distance in the state space, even with warm starts~\cite{gelman1997weak, roberts1998optimal}.

\par Of course, traversing the state space does not imply mixing; the recent breakthrough of~\cite{chengatmiry23} upgraded the aforementioned results to a bona fide $O(d^{1/4})$ mixing time for MHMC---again assuming a sufficiently warm start.
This result also extended beyond i.i.d.\ product distributions: similarly to other non-asymptotic results of this flavor, their results hold when $\pi$ enjoys third-order derivative bounds and isoperimetric inequalities (or strong log-concavity).

\paragraph*{Significantly slower convergence without a warm start.}  On the other hand, the convergence of MHMC is dramatically slower if not initialized at a warm start. This is the relevant regime in practice, since warm starts are not available a priori.
 The worsened dimension dependence occurs because, at a cold start, MHMC requires significantly smaller step sizes to ensure a non-vanishing acceptance probability. These small step sizes lead to slow movement. Hence, many iterations are required even to traverse the state space, let alone to mix. This failure is not a limitation of existing analysis techniques: it is a fundamental phenomenon that occurs even for simple target distributions such as isotropic Gaussian $\pi$ and initialization at its mode, see Figure~\ref{fig:accept}. In this case, even the classical prediction based on non-vanishing acceptance probability degrades from $d^{1/4}$ to $d^{1/2}$ without a warm start~\cite{lee2021lower,lee25thesis}, and leads to $\widetilde{\Omega}(d^{1/2})$ lower bounds on the mixing time of MHMC from a cold start~\cite[Theorem 4]{lee2021lower}.

\begin{figure}[t]
    \begin{subfigure}[b]{0.51\textwidth}
        \centering
        \includegraphics[width=0.95\textwidth]{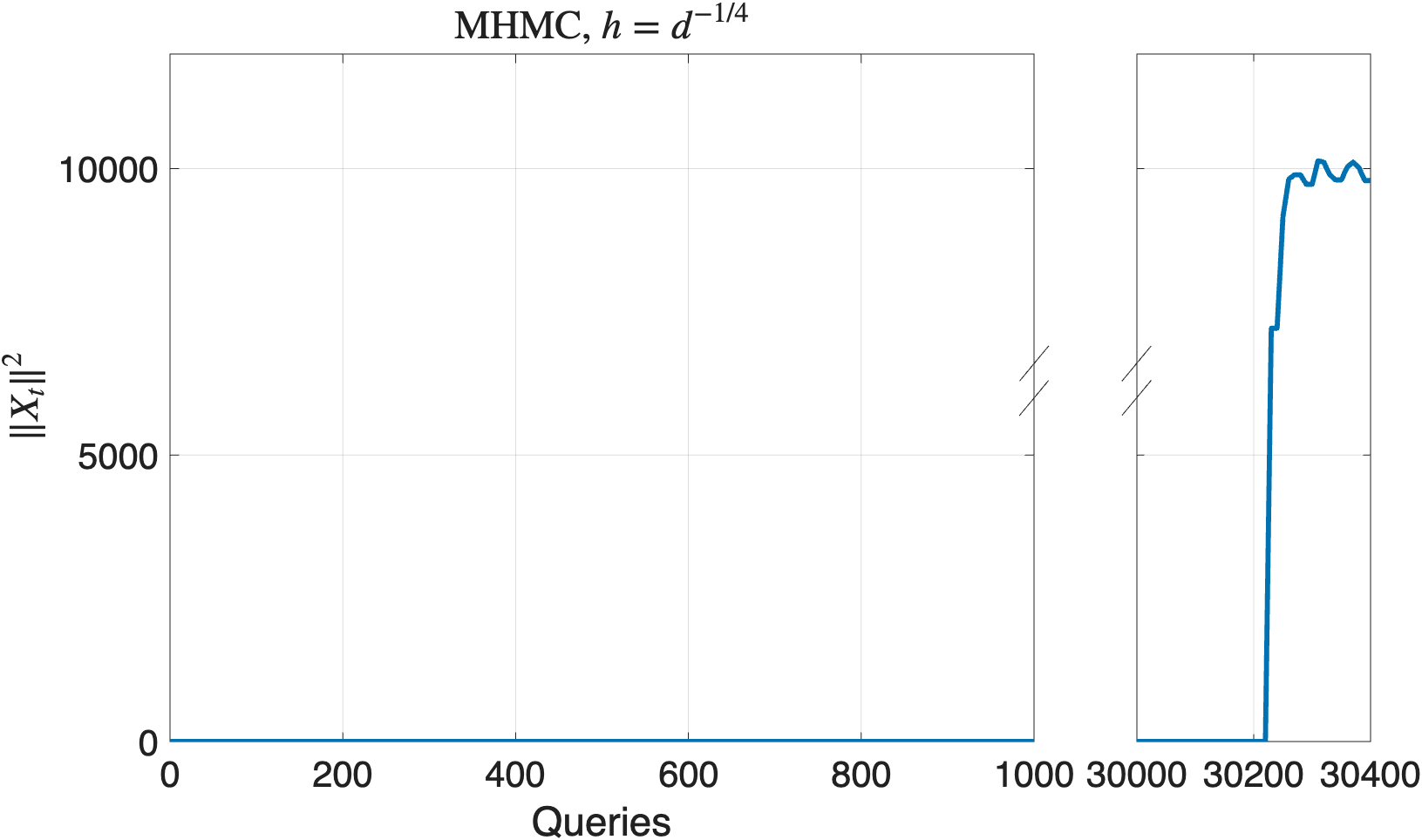}
        \caption{\footnotesize With large steps $h \asymp d^{-1/4}$, MHMC gets stuck in cold starts due to very low acceptance probability of the Metropolis filter. Here it is slow to escape the origin, let alone mix. 
        }
        \label{fig:norm-large}
    \end{subfigure}
    \hfill
    \centering
    \begin{subfigure}[b]{0.46\textwidth}
        \centering
        \includegraphics[width=0.95\textwidth]{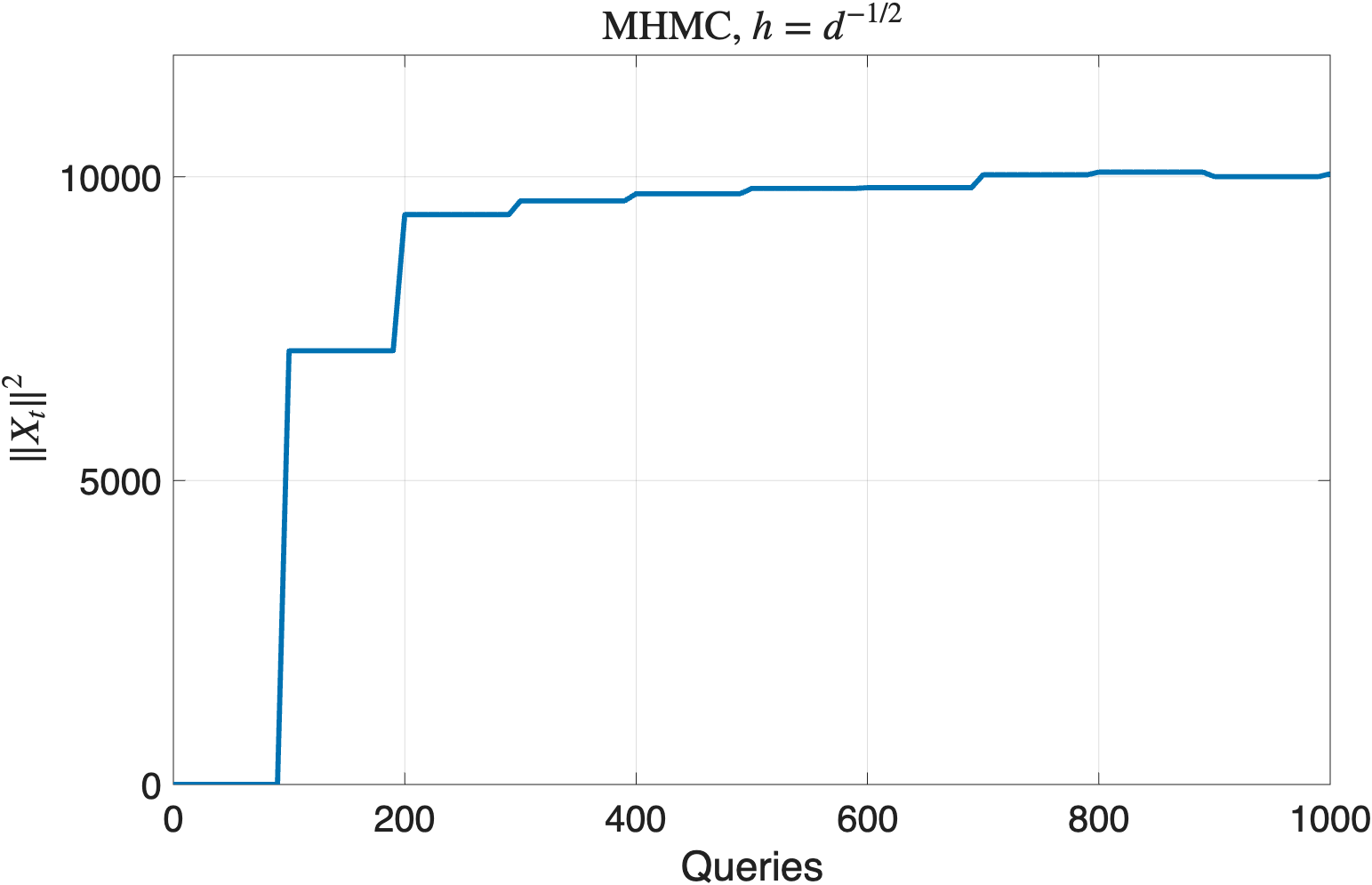}
        \caption{\footnotesize With small steps $h \asymp d^{-1/2}$, MHMC has high acceptance probability but moves slowly, requiring at least $1/h \asymp d^{1/2}$ steps to traverse the space, let alone mix.}
        \label{fig:norm-small}
    \end{subfigure}
    \caption{\footnotesize
     The convergence of MHMC is heavily dependent on the step size $h$. Large step sizes $h \asymp d^{-1/4}$ classically lead to fast convergence from a warm start, but can get stuck in cold starts due to very low acceptance probability (\textbf{left}). Small step sizes $h \asymp d^{-1/2}$ fix that issue but lead to slow movement, requiring at least $1/h \asymp d^{1/2}$ steps to traverse the space, let alone mix (\textbf{right}). Illustrated for the simple target $\pi = \cN(0, I)$ in dimension $d = 10^4$, with ``cold start'' initialization at the mode $x_0 = 0$. Reproducibility details: MHMC is repeatedly integrated for $T=1$ unit of continuous time via $1/h$ leapfrog steps of size $h$, as is standard. Similar qualitative phenomena are observed for other settings. 
    }
    \label{fig:norm}
\end{figure}

\paragraph*{Algorithmic framework for generating a warm start.} Summarizing, the convergence rate of MHMC depends critically on the initialization.
Given a warm start, MHMC can use large step sizes while ensuring large acceptance probability, leading to rapid mixing. But without a warm start, MHMC requires much smaller stepizes to ensure reasonable acceptance probability, leading to much slower mixing. This suggests a natural opportunity: can one efficiently compute a warm start for MHMC? A positive answer would finally make the elegant mathematical predictions of $d^{1/4}$ into a truly algorithmic result, by removing the conditional warm start assumption that is currently unverifiable and unimplementable in practice. 

This warm start question also arises for other Metropolized algorithms, since the acceptance probability of the Metropolis--Hastings filter can be highly sensitive to initialization~\cite{jourdain15optimal,kuntz18}. A notable example is the Metropolis-adjusted Langevin algorithm (MALA), which also exhibits a mixing time dichotomy: $d$ from a cold start versus $d^{1/2}$ from a warm start~\cite{chewi2021optimal,lee2021lower,wuschche2022minimaxmala} in the standard strongly log-concave and smooth setting, without higher-order smoothness; see the related work in \S\ref{ssec:intro:rel} for details. Motivated by this gap,~\cite{AltChe24Warm} initiated a program of computing warm starts for sampling algorithms. The main result was an algorithm that computes a warm start in $d^{1/2}$ iterations, thereby removing the complexity gap for MALA.

The purpose of this paper is to continue this program, now for MHMC. The key question can be stated precisely as follows:
\begin{align*}
	\text{\emph{Is there an algorithm that uses }} \widetilde{O}&(d^{1/4}) \text{\emph{ gradient queries}} \\ \text{\emph{ to output a measure }} \mu_0 \text{\emph{ with }} &\chi^2(\mu_0 \mmid \pi) \leq O(1)\text{\emph{?}}
\end{align*}
The basic idea is that if yes, then one should instead run a \emph{two-phase procedure}: (1) use this algorithm to compute a warm start; (2) run MHMC from that warm start. It is crucial that the warm start take only $d^{1/4}$ iterations, otherwise the cost of computing the warm start dominates the subsequent cost of running MHMC. Unfortunately this is the current situation: the fastest known algorithms take significantly longer to produce a warm start than to use one. The exception is the Gaussian setting~\cite{apers2024hamiltonian}, where a $d^{1/4}$ warm start was shown; however, this analysis was tailored for Gaussians and does not extend to a more general framework.

\par What algorithm should be used to warm start MHMC? A key insight in~\cite{AltChe24Warm,apers2024hamiltonian} is that warm starts for Metropolized algorithms can be obtained via \emph{non-Metropolized} algorithms. Indeed, the purpose of the Metropolis filter is to eliminate discretization bias so that $\pi$ is exactly stationary; however, warm starts only require getting moderately close to $\pi$ and therefore can afford some discretization bias. 
This suggests removing the Metropolis filter during the initial warm start phase, to avoid being throttled by low acceptance probability at a cold start. 
Such non-Metropolized methods are often called \textit{unadjusted} algorithms or \textit{low-accuracy} algorithms---the latter name because their iteration complexity typically scales as $\poly(1/\varepsilon)$ in the target accuracy $\varepsilon$ due to the asymptotic bias, in contrast to the $\polylog(1/\varepsilon)$ \textit{high-accuracy} dependence that Metropolis-adjusted methods can achieve once they are in their stable regime. Note that this low-accuracy scaling is not problematic since warm starts only require moderate error $\varepsilon = \Theta(1)$; what is important is avoiding the throttling of the Metropolis filter.

\par A key challenge for warm starting---whether for MHMC, MALA, or any other algorithm---is that warm starts require a strong performance metric. In particular, the initialization measure $\mu_0$ must be close to $\pi$ in the chi-squared divergence (or more generally any R\'enyi divergence $\Ren_q$ of order $q > 1$) rather than other common metrics such as total variation, Wasserstein, or KL divergence.\footnote{Briefly, this is because R\'enyi divergence closeness
ensures that if the $\pi$-average acceptance probability is of constant size, then so is the $\mu_0$-average acceptance probability. Other metrics such as total variation, Wasserstein, and KL divergence do not provide multiplicative guarantees, and thus in particular do not rule out bottleneck initializations with non-negligible $\mu_0$-probability but exponentially small $\pi$-probability; hence large $\pi$-average acceptance probability does \emph{not} imply large $\mu_0$-average acceptance probability. See~\cite[\S1.2.1]{AltChe24Warm} for a detailed discussion.} For these strong metrics, the fastest known algorithm for warm starting requires $d^{1/2}$ complexity~\cite{AltChe24Warm}; see the related work in \S\ref{ssec:intro:rel}. This $d^{1/2}$ complexity is useless for the purpose of warm starting MHMC as it falls short of the desired $d^{1/4}$ complexity.

\subsection{Contributions}\label{ssec:intro:cont}

Our main result improves the complexity of high-accuracy sampling from $d^{1/2}$ to $d^{1/4}$ under standard assumptions of strong-log-concavity and smoothness. This scaling $d^{1/4}$ has long been believed optimal for MHMC, as detailed above.

\begin{theorem}[Informal statement of Theorem~\ref{thm:main-slc}]\label{thm-intro:main}
     Consider a target distribution $\pi \propto \exp(-V)$ on $\R^d$, where $V$ is strongly convex, smooth, and has Frobenius-Lipschitz Hessian. 
     There is an algorithm that uses $O(d^{1/4} \log^2 1/\eps)$ first-order queries to produce a sample from a distribution $\mu$ where $\chi^2(\mu \mmid \pi) \leq \eps$. 
\end{theorem}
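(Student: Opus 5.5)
Our plan is the two-phase procedure described above: (1) run \emph{unadjusted} HMC (leapfrog, no Metropolis filter) with step size $h \asymp d^{-1/4}$ (up to logarithmic factors) for $\widetilde O(1)$ units of continuous time, and show that its output $\mu_0$ satisfies $\chi^2(\mu_0 \mmid \pi) \le O(1)$ (or more generally $\Ren_q(\mu_0\mmid\pi)\le O(1)$ for some $q>2$); (2) run MHMC from $\mu_0$ with the warm-start mixing guarantee of~\cite{chengatmiry23}. That result needs $O(d^{1/4}\,\polylog(1/\eps))$ gradient queries to reach $\chi^2(\mu\mmid\pi)\le\eps$, where the $\log^2$ comes from $\log(1/\eps)$ mixing iterations, each of which loses another logarithmic factor in the step size. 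We will check that the conductance/$s$-conductance argument used there accepts a warm start measured in $\chi^2$ (or $\Ren_q$) rather than in $L^\infty$, which is the standard reduction. Phase 2 then costs the claimed amount, and it remains to handle phase 1.

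For phase 1 we start from a simple initialization, e.g.\ $\mu_{\rm init}=\cN(x^\star, \beta^{-1} I)$ with $x^\star$ the minimizer of $V$ (computable to sufficient accuracy by accelerated gradient descent in $\widetilde O(\sqrt{\kappa})$ queries, which is dimension-free). This start has $\Ren_q(\mu_{\rm init}\mmid\pi)\lesssim d\log\kappa$, so it is not warm. The exact Hamiltonian flow with momentum refresh (ideal HMC) contracts in Rényi divergence under strong log-concavity. More precisely, via the shifted composition / Rényi local-error framework, $\Ren_q$ decays geometrically with rate $\Omega(1)$ per unit time, so $O(\log d)$ units of time reduce $d\log\kappa$ to $O(1)$. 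We then control the discretization. We will compare one step of leapfrog HMC with the exact flow through a shifted-composition (``local error'') argument in Rényi divergence. The main input is a bound on the one-step error of leapfrog. Its weak/mean error is $O(h^3)$ times third-derivative terms. Under the Frobenius-Lipschitz Hessian assumption, its strong error has a random part whose size is governed by $\norm{\nabla^3 V[p,p]}$, which is $O(\sqrt d)$ rather than $O(d)$ at typical momenta. Summing over the $T/h$ steps, the accumulated Rényi error should be roughly $T\,h^{4} d$, which is $O(1)$ exactly when $h\asymp d^{-1/4}$.

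The main obstacle is this last step: obtaining a Rényi (not merely KL or $W_2$) discretization bound that scales like $h^4 d$ rather than $h^2 d$. The bound must hold along the non-stationary trajectory started from a cold start, where the typical momentum and position norms are not yet those of $\pi$. We will first establish moment/concentration bounds showing that $\norm{x_k-x^\star}\lesssim \sqrt{d/\alpha}$ and $\norm{p_k}\lesssim\sqrt d$ with high probability along the unadjusted chain. These are used to show that the third-order error terms have the $\sqrt d$ scaling, rather than being bounded crudely by $\beta$-smoothness. Then we will apply a Rényi weak triangle inequality with a shift, combined with Girsanov-type change-of-measure for the Gaussian momentum refreshment, so that the mean error (order $h^3$ per step, of size $\sqrt d$) and the fluctuation error are absorbed into the contraction. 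If the $d^{1/4}$ scaling fails near the cold start, we will use a short annealing schedule: first run with a smaller step size for $O(1)$ time, which only costs $\widetilde O(d^{1/4})$ queries in total, until typical-set concentration holds, and only then apply the sharp bound.
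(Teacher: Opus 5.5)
Your overall architecture matches the paper: unadjusted HMC to reach a R\'enyi warm start, then the warm-start MHMC bound of~\cite{chengatmiry23}. Phase 1, however, has a genuine gap exactly at the step you flag.

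\textbf{The $\sqrt d$ scaling does not come from norm bounds.} High-probability bounds such as $\norm{p_k}\lesssim\sqrt d$ cannot give the $\sqrt d$ scaling of the third-order error. Under the Frobenius-Lipschitz assumption, norm information alone only yields $\norm{\nabla^3V(x)[p,p]}\le\beta_{\rm H_2}\norm p^2\asymp\beta_{\rm H_2}\,d$. That gives an accumulated error of order $h^4d^2$ and forces $h\asymp d^{-1/2}$.

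The $\sqrt d$ is a Gaussian-chaos effect (Lemma~\ref{lem:chaos}). It requires $p$ to be Gaussian and independent of $\nabla^3V(x)$, which holds under $\bs\pi$ but not along a chain started cold. The paper transfers the chaos tail from $\bs\pi$ to the iterates by a R\'enyi change of measure (Lemma~\ref{lem:high-prob-change}). This inflates the bound by $\Ren_2(\bs\mu\mmid\bs\pi)^2$, so it only helps once $\Ren_2\lesssim d^{1/2}$, whereas the initialization has $\Ren_2\asymp d$.

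The missing idea is a two-phase \emph{analysis} at a single step size $h\asymp d^{-1/4}$:
\begin{itemize}
\item The crude $\beta h^2$ local error alone bounds the discretization contribution by $\beta^2h^2qT(d+R_0)/\gamma\asymp d^{1/2}$ (Theorem~\ref{thm:coarse}). This brings $\Ren_2$ down to $O(d^{1/2})$ in $\widetilde O(1)$ time.
\item Only then is the refined bound (Theorem~\ref{thm:obabco-final-renyi}) invoked; its $R_0^2$ term is now $O(d)$.
\end{itemize}
Your fallback does not supply this. Norm concentration already holds at the Gaussian initialization, so the obstruction is the R\'enyi divergence, not the typical set. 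Also, any step size below $d^{-1/4}$ run for $O(1)$ time costs more than $\widetilde O(d^{1/4})$ queries.

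\textbf{The multi-step R\'enyi mechanism you invoke is not available.} The shifted-composition local-error framework is only known for $\KL$. The weak triangle inequality doubles the R\'enyi order at every use, so per-step errors cannot be absorbed into the contraction step by step. The paper instead uses the discrete Girsanov bound (Lemma~\ref{lem:renyi_mg}). This needs exponential moments $\E\exp(cNq^2\,\mc E_n^2/(\gamma h))$ of the squared one-step errors. Since the refined error contains $\norm{\nabla^3V[p,p]}^2\sim\norm p^4$, these moments are infinite. One must truncate with $\min\{A,B\}$ of the coarse and refined bounds, using the chaos bound only on a good event.

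\textbf{Standard leapfrog does not give $h^4 d$.} With per-step momentum noise, the one-shot coupling charges a position error $(\mc E^\msx)^2/(\gamma h^3)$ (Lemma~\ref{lem:obabco_cross}). Standard leapfrog has $\mc E^\msx\asymp\mc E^\msp\asymp d^{1/2}h^3$, so the total cost is of order $h^2d$, not $h^4d$. This is why the paper uses partial refreshment plus the corrective step~\eqref{eq:C}, which gives $\mc E^\msx\lesssim h\,\mc E^\msp$. If you intend full refreshment, the paper notes its techniques do not handle it. The R\'enyi analysis of that scheme in~\cite{bou2026tail} only reaches $d^{3/4}$, so you would need a new argument.

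\textbf{Phase 2 output metric.} \cite{chengatmiry23} gives $\TV$ mixing, not $\chi^2$. Reaching $\chi^2\le\eps$ requires the $\TV$-to-$\chi^2$ boosting of Lemma~\ref{lem:boosting}, which is why a $\Ren_3$ warm start (not merely a $\chi^2$ one) is needed.
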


Previously, the state-of-the-art result for high-accuracy log-concave sampling had complexity $\widetilde O(d^{1/2} )$~\cite{Fan23dimension, AltChe24Warm, Chen+26HighAccDiffusion}---in this paragraph, we only track the dimension dependence, as we focus on high-accuracy samplers whose dependence on $\varepsilon$ scales as $\polylog(1/\varepsilon)$. Despite extensive empirical evidence suggesting that MHMC is faster than MALA\@, MHMC has proven more challenging to analyze. MALA was recently shown to have tight scaling $O(d^{1/2})$ in the strongly log-concave and smooth setting~\cite{AltChe24Warm}. Under the stronger Frobenius-Lipschitz Hessian assumption,~\cite{chengatmiry23} showed that MHMC admits an improved $d^{1/4}$ complexity bound from a warm start, i.e., when initialized at a distribution $\mu_0$ satisfying $\chi^2(\mu_0 \mmid \pi) = O(1)$. 
However, this is a stringent requirement, e.g., the natural Gaussian initialization centered at the mode of $\pi$ is exponentially far in the relevant metric $\chi^2(\mu_0 \mmid \pi) = \exp(O(d))$~\cite{chewibook}.
Since the previous best algorithm for generating a warm start required $\widetilde O(d^{1/2})$ iterations~\cite{AltChe24Warm}, this also bottlenecked the complexity for MHMC\@.

The core result underlying Theorem~\ref{thm-intro:main} is a $d^{1/4}$ algorithm for computing warm starts. This enables us to algorithmically exploit, for the first time, the breakthrough of~\cite{chengatmiry23}. Our result applies under the same assumptions on $\pi$.

\begin{figure}[t]
    \begin{subfigure}[b]{0.32\textwidth}
        \centering
        \includegraphics[width=\textwidth]{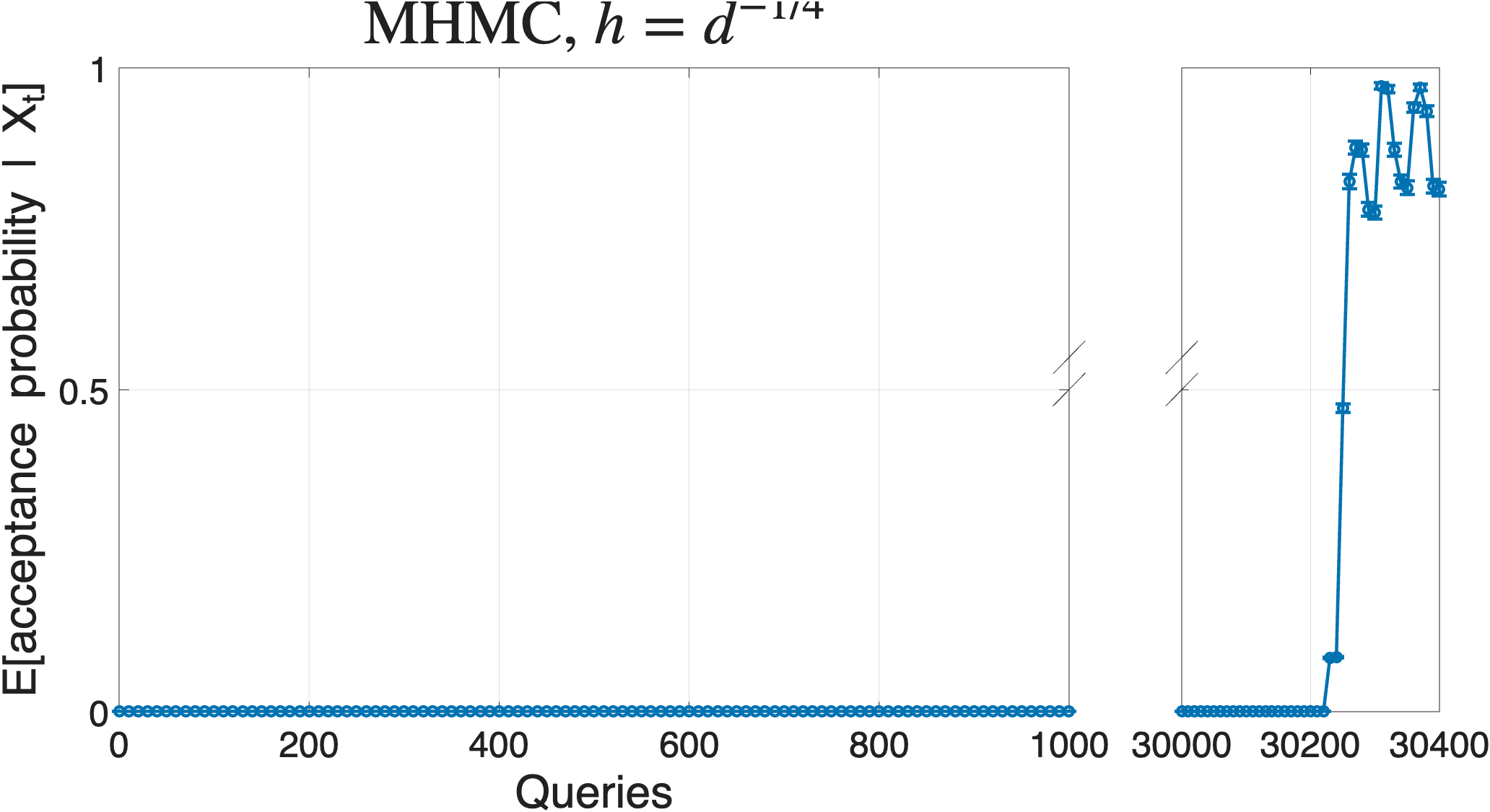}
        \caption{\footnotesize 
        MHMC with large steps.
        }
        \label{fig:prob-large}
    \end{subfigure}   
    \hfill
    \begin{subfigure}[b]{0.32\textwidth}
        \centering
        \includegraphics[width=\textwidth]{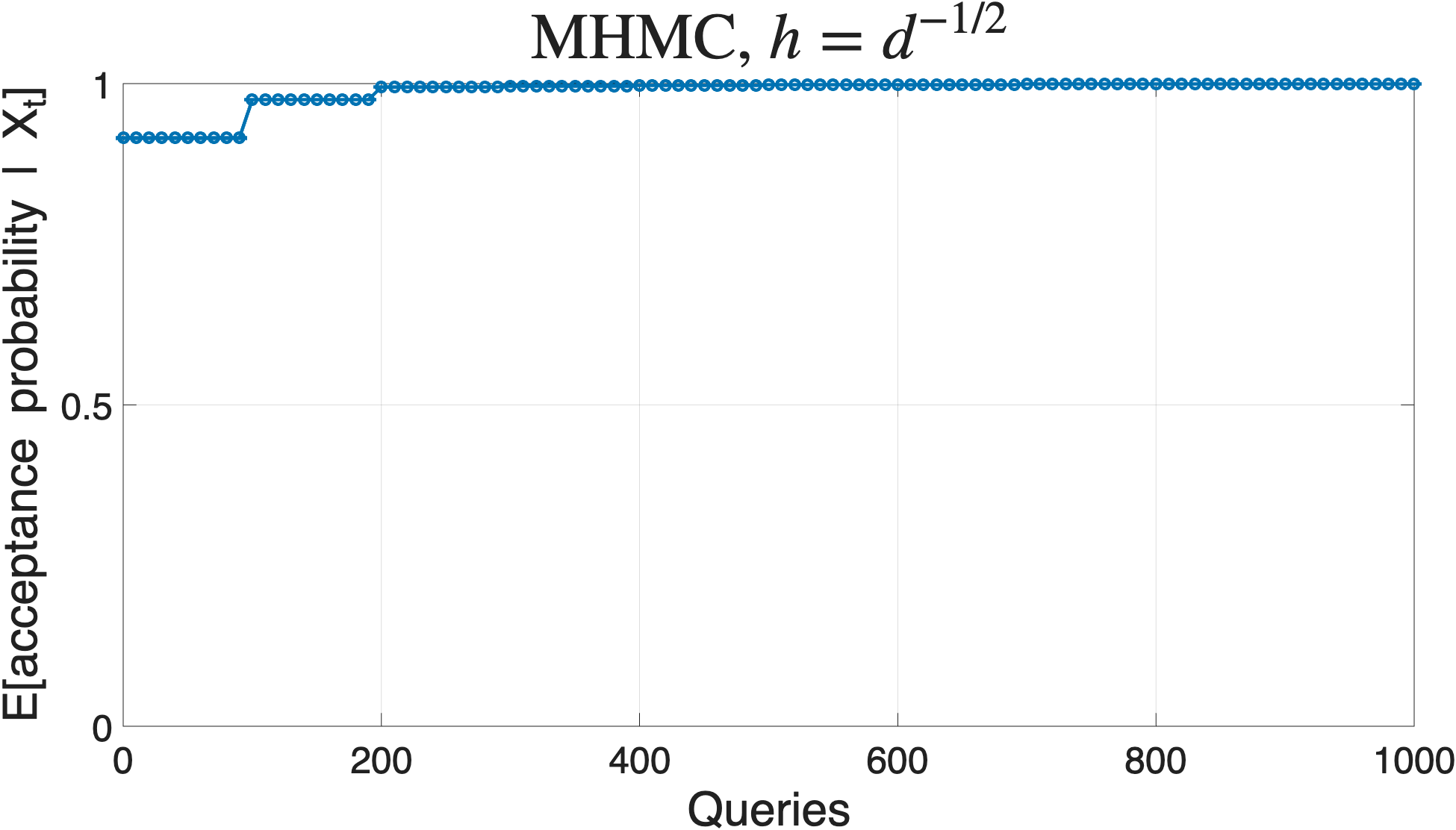}
        \caption{\footnotesize
        MHMC with small steps.
        }
        \label{fig:prob-small}
    \end{subfigure}
    \hfill
     \centering
    \begin{subfigure}[b]{0.32\textwidth}
        \centering
        \includegraphics[width=\textwidth]{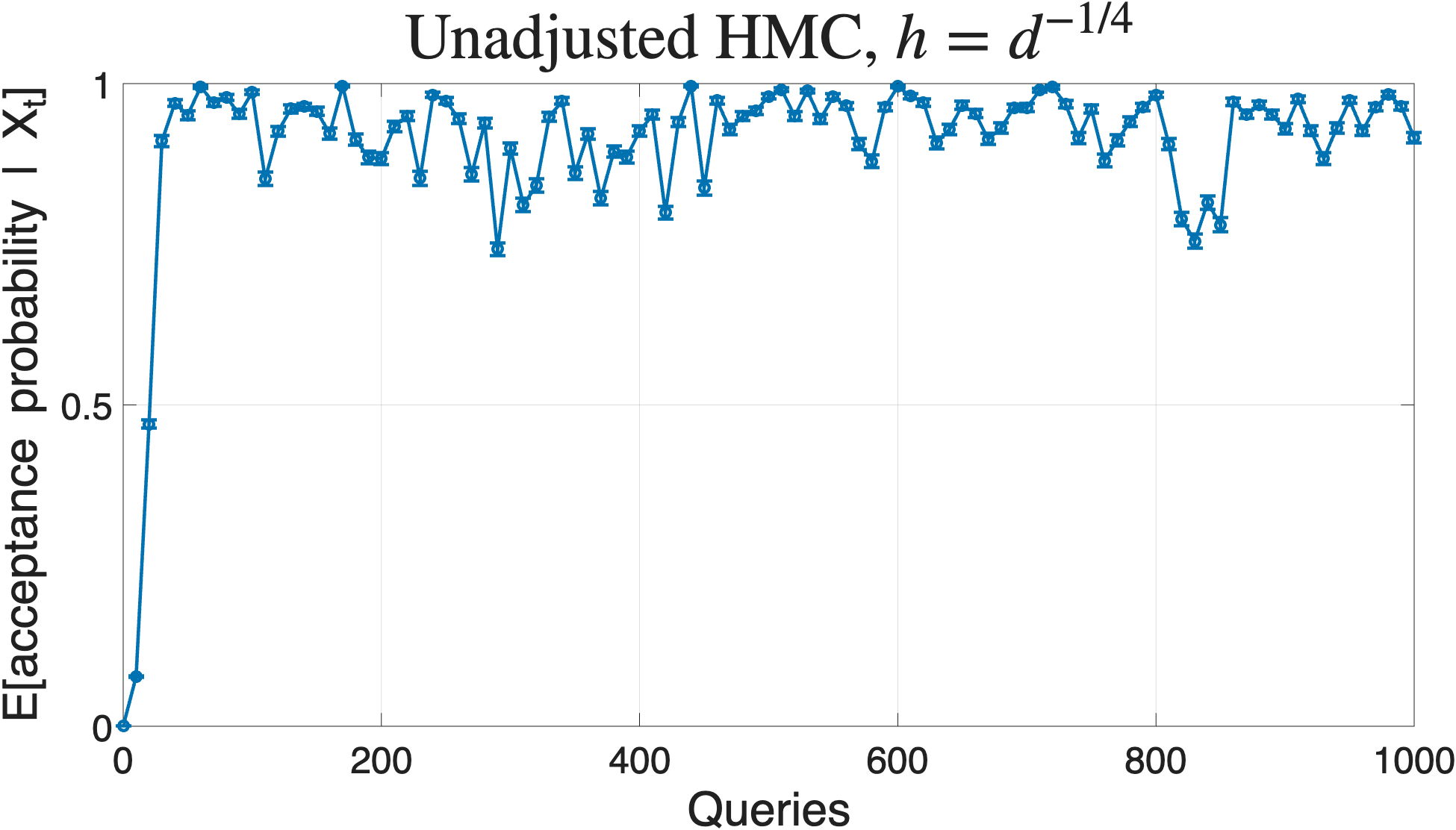}
        \caption{\footnotesize 
        \emph{Unadjusted} HMC with large steps.
        }
        \label{fig:prob-large-unadjusted}
    \end{subfigure}
    \caption{\footnotesize 
   A key algorithmic insight is that \emph{unadjusted} HMC rapidly escapes cold starts using large step sizes $h \asymp d^{-1/4}$. This algorithm quickly leads to iterates which, if used as an initialization for MHMC, would have high acceptance probability (\textbf{right}).
   This motivates our two-phase algorithmic proposal: escape the cold start via unadjusted HMC, then exploit the warm start using MHMC. Removing the Metropolis filter bypasses the issue of low acceptance probability at cold starts (\textbf{left}) without requiring the use of small step sizes which result in slow movement (\textbf{middle}).
    }
    \label{fig:accept}
\end{figure}
 
\begin{theorem}[Informal statement of Theorem~\ref{thm:main-warm}]\label{thm-intro:warm}
    Consider a target distribution $\pi \propto \exp(-V)$ on $\R^d$, where $V$ is strongly convex, smooth, and has Frobenius-Lipschitz Hessian. There is an algorithm that uses $\widetilde O(d^{1/4})$ first-order queries to produce a sample from a distribution $\mu$ where $\chi^2(\mu \mmid \pi) \leq 1$.  
\end{theorem}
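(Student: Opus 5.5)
The plan is to run \emph{unadjusted} HMC (leapfrog integrator, full momentum refreshment each outer step, integration time $T \asymp 1/\sqrt{\beta}$, step size $h \asymp d^{-1/4}$ up to logarithmic factors) from a simple initialization such as $\mu_0 = \cN(x^\star, \beta^{-1} I)$, where $x^\star$ is an approximate mode. We then show that after $N = \widetilde O(\kappa)$ outer iterations, i.e.\ $\widetilde O(d^{1/4})$ gradient queries in total, the law $\mu_N$ satisfies $\Ren_2(\mu_N \mmid \pi) \le \log 2$, equivalently $\chi^2(\mu_N\mmid\pi)\le 1$. The initial divergence is only $\Ren_\infty(\mu_0\mmid\pi) = O(d\log\kappa)$, so we need a contraction mechanism that is insensitive to the initial $\exp(O(d))$ warmness, combined with a bias bound that is only $O(1)$.

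For the contraction we use a shifted-divergence / coupling argument. Exact HMC with $T\asymp 1/\sqrt\beta$ is a $W_\infty$-type contraction under synchronous coupling of the momenta, by strong convexity, with rate $1-c/\kappa$. We combine this with a Rényi regularization step: the final outer step of HMC has Gaussian momentum, so one-step ``harnack''/shifted-composition bounds give $\Ren_q(\mu P\mmid \nu P)\lesssim W_\infty(\mu,\nu)^2\beta$, after a change of variables through the approximately identity-Jacobian position map. Concretely, I would couple the algorithm chain with the \emph{exact} HMC chain started from $\pi$, which stays at $\pi$. The $W$-contraction makes the initial $O(\sqrt{d/\alpha})$ distance decay geometrically, and the last-step regularization converts the residual distance into a Rényi bound.

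The bias term is the main obstacle. The per-step leapfrog error must be controlled in a form compatible with Rényi divergence, not merely in mean squared distance. A naive local error bound of $h^2\sqrt d$ per step would force $h\asymp d^{-1/2}$. Instead, I would use the Frobenius-Lipschitz Hessian to show that the leapfrog local error along typical trajectories is of order $h^3 d^{1/2}$ in position, with high probability under the current iterate. The key input is that $\norm{\nabla^3 V[p,p]}\lesssim L_{\rm H}\norm{p}^2$ is concentrated, of order $\sqrt d$ rather than $d$, for Gaussian momenta. Summed over $T/h$ steps this gives a global error of $h^2\sqrt d$, which is $O(1)$ when $h\asymp d^{-1/4}$. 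We then handle the error via the shifted composition rule. At each outer step we pay a Girsanov-type / Gaussian-shift cost of order $\beta\cdot(\text{error})^2$ in Rényi divergence, and the $W$-contraction prevents accumulation over the $\widetilde O(\kappa)$ steps, so the total bias is $O(1)$ after tuning constants.

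The difficulty is that the ``typical trajectory'' bound requires the iterates to remain in a region where momenta and positions behave like samples from $\pi$ (norm concentration), while we only have cold-start control. I would address this with a high-probability tail argument: show by induction that the iterates stay in a ball of radius $\widetilde O(\sqrt{d/\alpha})$ around $x^\star$ with probability $1-\exp(-\Omega(\log^2 d))$, using sub-Gaussian concentration of fresh momenta. I would then absorb the failure event with the Rényi weak triangle inequality, $\Ren_2$ controlled via $\Ren_4$ plus a small-probability correction. Finally, the algorithm is two-phase: this warm start is fed to MHMC, which yields Theorem~\ref{thm-intro:main} via~\cite{chengatmiry23}.
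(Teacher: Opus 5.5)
Your proposal has a genuine gap: it has no valid way to use the $\sqrt d$ Gaussian-chaos bound from a cold start. The bias mechanism also needs repair.

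\textbf{The chaos bound cannot be transferred from a cold start.} Confining the iterates to a ball of radius $\widetilde O(\sqrt{d/\alpha})$ controls only norms. On such a ball, Frobenius-Lipschitzness gives no better than $\norm{\nabla^3 V(x)[p,p]} \le \beta_{\eff}\,\norm{p}^2 \asymp \beta_{\eff}\, d$, which puts you back at $h \asymp d^{-1/2}$. The improvement to $\beta_{\eff}\sqrt d$ (Lemma~\ref{lem:chaos}) is a cancellation that requires $p$ to be a standard Gaussian independent of $x$. That holds at the instant of a full refresh. It fails at the subsequent $\asymp d^{1/4}$ leapfrog steps of a trajectory, where $x_t$ is correlated with the drawn momentum.

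There the only handle is a change of measure from $\bs\pi$. Because the chaos tail is only of the form $d + \log^2(1/\delta)$, the change of measure adds a penalty $\Renyi_q(\cdot \mmid \bs\pi)^2$ (Lemma~\ref{lem:high-prob-change}). At a cold start, where $\Renyi \asymp d\log\kappa$, that penalty is of order $d^2$.

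Nor can you discard a bad event of probability $e^{-\Omega(\log^2 d)}$. Discarding an event of probability $\epsilon$ inside a R\'enyi or Girsanov exponential-moment bound costs roughly $\epsilon^{1/2}$ times an exponential moment of the fallback cost. With $h \asymp d^{-1/4}$ at a cold start, that moment is $\exp(\Omega(\sqrt d))$ even for the coarse, chaos-free error. It is $\exp(\Omega(d))$ if you fall back on an a priori R\'enyi bound via the weak triangle inequality.

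This is exactly why the paper's analysis has two phases with the same algorithm and $h \asymp d^{-1/4}$:
\begin{itemize}
\item Phase~I (Theorem~\ref{thm:coarse}) uses only the chaos-free local error. Its cost involves only the sub-Gaussian quantities $\norm{P}^2$ and $\norm{\nabla V(X)}^2$, so the change of measure costs just an additive $R_0$. This yields $\frac{\beta^2 h^2 qT}{\gamma}(d+R_0)$, and with $h \asymp d^{-1/4}$ (up to $\kappa$ and log factors) the divergence drops to $\Renyi_2 \lesssim d^{1/2}$.
\item Only then does Phase~II (Theorem~\ref{thm:obabco-final-renyi}) invoke the chaos bound, where the $R_0^2 \lesssim d$ penalty is affordable.
\end{itemize}
Your argument has no mechanism for reaching this intermediate $d^{1/2}$ regime.

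\textbf{The bias mechanism does not work as stated.} The shifted composition rule (Theorem~\ref{thm:shifted_chain_rule}) takes an essential supremum of the one-step cost over the coupling. The discretization cost $\beta\norm{e(x,\xi)}^2$ is unbounded in $(x,\xi)$. Its naive exponential moment is even infinite, since the chaos term makes it behave like $\E\exp(c\,\norm{\xi}^4)$.

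The paper instead uses a discrete Girsanov martingale bound (Lemma~\ref{lem:renyi_mg}). This needs only exponential moments of the per-step costs, but errors accumulate linearly in the number of steps; a non-accumulating local-error framework is currently available only for KL. The quartic term is tamed by bounding the cost by the minimum of the coarse and refined errors, thresholding the chaos term, and using the coarse bound on the rare event. Linear accumulation is harmless, since it only shrinks $h$ by a $\kappa$-dependent factor. But ``$W$-contraction prevents accumulation'' is not an available mechanism in R\'enyi.

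\textbf{The full-refresh scheme is not covered by the paper.} The paper uses partial refreshment at every leapfrog step, plus the C correction, so that the OHO comparison and one-shot couplings act step by step. It explicitly says full refreshment is outside its techniques. The trajectory-level cross-regularity you would need therefore has to be built from scratch. In particular, for the one-step regularization of the ideal chain, the shooting-map Jacobian over time $T_0 \asymp \beta^{-1/2}$ deviates from the identity by $O(\beta T_0^2) = O(1)$. Its log-determinant is only controlled with Assumption~\ref{as:higher-reg}.
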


Interestingly, we compute this warm start using a variant\footnote{We use a slight variant of the standard leapfrog discretization of Hamiltonian's equations which involves a partial rather than full momentum refresh. This is necessary for our analysis techniques, details in \S\ref{sec:app}.} of \emph{unadjusted} HMC. 
Removing the Metropolis--Hastings filter enables using large step sizes without being throttled by low acceptance probability. This enables rapidly escaping cold starts, see Figure~\ref{fig:accept}.
Our full algorithm in Theorem~\ref{thm-intro:main} therefore has two phases: generate the warm start via unadjusted HMC; then exploit the warm start via MHMC. This theoretically inspired framework offers guidance to practical implementations of MHMC.

There is a long line of work on non-asymptotic guarantees for unadjusted HMC. Previously, the state-of-the-art complexity was $\widetilde O(d^{1/4} \poly(1/\eps))$~\cite{MangoubiVishnoi18}, but crucially only in Wasserstein or TV-type error metrics and was therefore inapplicable to warm start results which require the much stronger metric of chi-squared (or any R\'enyi divergence with R\'enyi parameter strictly larger than $1$). See \S\ref{sec:results} for details. Theorem~\ref{thm:main-warm} achieves this warm start.

\par We conclude this discussion with two remarks.

\begin{remark}[Regularity]
    For simplicity, Theorems~\ref{thm-intro:main} and~\ref{thm-intro:warm} are stated under the standard assumption of strongly log-concave $\pi$ (equivalently, strongly convex $V = -\log \pi$). This can be relaxed to weak log-concavity or standard isoperimetric inequalities on $\pi$ such as log-Sobolev inequalities or Poincar\'e inequalities. Details in \S\ref{sec:results}.
\end{remark}

\begin{remark}[Condition number]
    Modulo logarithmic factors, our MHMC runtimes scale optimally in the dimension $d$ and accuracy $\eps$. The other parameter that arises in the analysis is the condition number $\kappa$, i.e., the ratio of the largest and smallest eigenvalues of $\nabla^2 V$. The direct use of our two-phase algorithm has scaling $\kappa^{3/2}$. But a standard reduction~\cite{chen2022improved,AltChe24Warm} via a proximal sampler outer loop reduces this to linear scaling in $\kappa$, which matches the state-of-the-art results for MHMC~\cite{chenetal2020hmc,chengatmiry23}. It is an interesting question for future work whether a tightening of our analysis can directly give linear scaling in $\kappa$ without any changes to the algorithm. 
\end{remark}

\subsection{Related work}\label{ssec:intro:rel}

In addition to the literature described above, here we provide further context about related work.

\paragraph*{HMC.} Hamiltonian Monte Carlo, originally called Hybrid Monte Carlo, was first proposed in the seminal paper~\cite{duane1987hybrid} for the study of lattice models in quantum field theory. In the ensuing decades, HMC became widely popular in the statistics and data science communities; see for example the surveys and textbooks~\cite{liu2001monte,neal2011mcmc,bou2018geometric} for a historical account of HMC and a discussion of its applications.
To this day, HMC remains a popular algorithm in practice, in both its original form as well as modern variants, including adaptive parameter selection such as the No-U-Turn Sampler (NUTS)~\cite{hoffman2014no}, Riemannian variants~\cite{girolami2011riemann}, 
parallelized variants~\cite{calderhead2014general,lee2018algorithmic}, and different discretizations of Hamilton's equations that use St\"ormer--Verlet leapfrog integrator~\cite{neal2011mcmc}, partial momentum refreshments~\cite{monmarche2021high}, Runge--Kutta--Nystrom methods~\cite{bou2025randomized}, randomized integration times~\cite{bou2017randomized,apers2024hamiltonian, BouMar25HMC}, and more~\cite{neal2011mcmc,bou2018geometric}. 

\par A key question throughout this HMC literature is to understand the scaling on the dimension $d$. This is perhaps the most dominant question on the theoretical side, and on the practical side it guides implementations. For example, the popular software package PyMC scales MHMC step sizes as $h \asymp d^{-1/4}$ based on theoretical predictions about effective step sizes from a warm start~\cite{pymc-nuts-docs}. The literature on HMC is extensive; below we highlight only the most relevant results, focusing in particular on dimensional scaling.

\paragraph*{Idealized HMC.} HMC requires simulating Hamilton's equations of motion, which in general cannot be integrated in closed form and thus require numerical discretization. Idealized HMC refers to HMC dynamics assuming exact integration. Although this is not algorithmically implementable, it provides a useful baseline. Towards this end, a line of work has developed Wasserstein convergence guarantees for idealized HMC based on increasingly tighter coupling arguments. This led to convergence rates of order $\kappa^2 \log 1/\eps$~\cite{MangoubiSmith21}, $\kappa^{1.5} \log 1/\eps$~\cite{lee2018algorithmic}, and finally the tight bound $\kappa \log 1/\eps$~\cite{chenvempala2019hmc} in the setting where $\pi \propto \exp(-V)$ is $\kappa$-conditioned, i.e., $V$ is $\alpha$-strongly convex and has $\beta$-Lipschitz gradients, where $\kappa \deq \beta / \alpha$. Note that these results are dimension-independent since they do not account for the discretization error of Hamiltonian's equations, discussed next. 

\paragraph*{Metropolis-adjusted HMC (MHMC).} Numerically integrating Hamilton's equations introduces discretization bias; following the original paper~\cite{duane1987hybrid}, the standard guidance for HMC is to eliminate this bias by adding a Metropolis--Hastings filter. This ensures that the stationary distribution is exactly $\pi$. 
The downside is that the filter can lead to slowdowns if the acceptance probability is low. Understanding this issue---and how it depends on the step size scaling and the initialization warmness---has been a central topic of study for MHMC since the 1980s. As detailed earlier, it is now well understood that the complexity scales as $d^{1/4}$ from a warm start, but degrades to $d^{1/2}$ from a cold start. Such results were originally shown for the simpler proxy of non-vanishing acceptance probability~\cite{creutz88global,gupta88tuning,kennedy91acceptances,Beskos13,lee2021lower,lee25thesis}, but have recently been established for end-to-end mixing time bounds~\cite{lee2021lower, chengatmiry23, lee25thesis}. It was therefore open whether $d^{1/4}$ iteration complexity is achievable for MHMC---or for any high-accuracy sampling algorithm---since it is unknown how to compute warm starts faster than $d^{1/2}$~\cite{AltChe24Warm}. 

\paragraph*{Unadjusted HMC.} Without the Metropolis--Hastings filter, unadjusted HMC has discretization bias. This leads to so-called low-accuracy sampling, in that the resulting iteration complexity scales in the accuracy $\eps$ as $\poly(1/\eps)$ rather than the exponentially faster $\polylog(1/\eps)$ rate of MHMC. However, the removal of the filter simplifies analyses and also leads to faster convergence to moderate accuracy $\eps$ since there is no slowdown from rejected steps. 

\par Over the past decade, a number of elegant results have shown $d^{1/4} \poly(1/\eps)$ iteration complexities for unadjusted HMC in various settings---although a key limitation of all existing results is that they are restricted to Wasserstein or TV metrics, which do not provide a warm start as described above. Closest to our setting is the $d^{1/4} \poly(1/\eps)$ result of~\cite{MangoubiVishnoi18} for Wasserstein convergence of leapfrog HMC under strong log-concavity and third-order derivative bounds.~\cite{apers2024hamiltonian} show $d^{1/4} \poly(1/\eps)$ total variation convergence in order to warm start MHMC, but their results apply only to Gaussian targets $\pi$. A topical recent direction is extending beyond log-concavity; for example, recent work has investigated target distributions $\pi$ that are strongly log-concave outside of a ball~\cite{bou2020coupling,bou2023mixing,chak2025reflection} or more generally satisfy isoperimetric inequalities such as the log-Sobolev inequality~\cite{camrud2023second}, and/or have weakly interacting mean-field potentials~\cite{bou2023mixing,bou2023convergence,camrud2023second}. In the latter setting, $d^{1/4}$ scalings were shown in~\cite{bou2023convergence,camrud2023second}. Other $d^{1/4}$ scalings have also been shown for variants of unadjusted HMC using NUTS~\cite{bou2024mixing} or partial momentum refreshments~\cite{monmarche2021high}; the former result is specifically for Gaussian $\pi$ and the latter for separable $\pi$ (i.e., product distributions in an orthonormal basis). Additionally, $d^{1/4}$ scaling for Wasserstein guarantees have also been shown for algorithms different from unadjusted HMC that nevertheless leverage the same overarching principle of lifting and high-order integration~\cite{mou2021high}. However, despite these many interesting results with $d^{1/4}$ scaling in the past decade, it remains open whether these can be extended to R\'enyi guarantees---as is required for warm starting MHMC. Our main result (Theorem~\ref{thm:main-warm}) resolves this question. Note also that since our bounds hold in R\'enyi divergence, they also imply mixing in weaker metrics like Wasserstein, TV, and KL by standard comparison inequalities~\cite{chewibook}.

We pause to focus on the works which have the clearest resemblance to our analysis. These include the works of~\cite{bou2023mixing, monmarche2024entropic, chak2025reflection, gouraud2025hmc}. In particular, we highlight the short-time regularity which appears in~\cite[Theorem 2]{monmarche2024entropic} and~\cite[Lemma 15]{bou2023mixing}, which is similar to the rate obtained in our regularity bounds for short times. Furthermore, many of the error lemmas associated with the discretization are reminiscent of the bounds found in e.g.,~\cite[Section 3]{monmarche2024entropic}. These works also contain extensions of these analyses to stochastic gradients and other settings. Our main ideas are to extend these regularization results to long times using the shifted Girsanov ideas in~\cite{GuiWan12DegenBismut, scr1}, and to introduce a corrective step for the position in order to improve the entropic error, inspired by the schemes proposed in~\cite{zhang2025analysis, scr4}. See \S\ref{sec:overview} for a technical overview.

\paragraph*{MALA.} MALA faces a similar warm start issue as MHMC, as briefly mentioned above; we detail this here. Just like MHMC, MALA can use much larger step sizes from a warm start while ensuring reasonable acceptance probability, namely $h \asymp d^{-1/3}$ from a warm start~\cite{roberts1998optimal} but only $h \asymp d^{-1/2}$ from a cold start~\cite{kuntz18}, under third-order derivative assumptions. Quantitatively, mixing time analysis was carried out under weaker (second-order derivative) assumptions, leading to a similar gap in the complexities from a warm start versus a cold start, namely $O(d^{1/2} \, \polylog(1/\eps))$~\cite{chewi2021optimal} versus $\Omega(d \, \polylog(1/\eps))$~\cite{lee2021lower}. This discrepancy suggested the analogous key question for MALA: can one efficiently compute a warm start in $d^{1/2}$ iterations? This natural question was asked by several groups~\cite{chewi2021optimal, lee2021lower, luwang2022zigzag, wuschche2022minimaxmala, Che+24LMC} and was recently resolved by~\cite{AltChe24Warm} by using a different algorithm (Underdamped Langevin Monte Carlo) to warm start MALA. This led to an end-to-end $O(d^{1/2} \polylog(1/\eps))$ complexity for a high-accuracy algorithm for sampling from well-conditioned distributions that remains the state-of-the-art. The main result of this paper uses MHMC to improve this complexity to $O(d^{1/4} \polylog(1/\eps))$ under higher-order smoothness.

\paragraph*{Underdamped Langevin.} The behavior of HMC closely resembles the behavior of the underdamped Langevin dynamics, a classical SDE system that dates back to Kolmogorov~\cite{Kol34}.
The underdamped Langevin system has a rich convergence theory, based on the theory of hypocoercivity as pioneered by Villani in his monograph~\cite{villani2009hypocoercivity}.
Our analysis of HMC with partial momentum refreshments in \S\ref{sec:oho} draws upon similar intuition and recent developments in this literature; see the technical overview section \S\ref{ssec:oho_overview} for a more detailed discussion.

\paragraph*{Shifted composition.}
As explained in \S\ref{ssec:oho_overview}, our analysis builds upon certain regularity estimates for diffusion processes, known in the literature as \emph{Harnack inequalities} or \emph{reverse transport inequalities}. By now, there are various ways to establish these estimates, but most relevant for our work is the method of coupling as pioneered by F.-Y.\ Wang and his coauthors~\cite{ArnThaWan06HarnackCurvUnbdd, Wang12Coupling}. The \emph{shifted composition} sequence of works~\cite{scr1, scr2, scr3, scr4} develops an information-theoretic reformulation of this method which, crucially, enables its application to discrete-time Markov kernels, including those arising from MCMC discretizations. In \S\ref{ssec:obabco_overview}, we discuss how our analysis of HMC discretization leverages and builds upon the shifted composition framework.

\paragraph*{Independent work.} Independently, the recent paper~\cite{bou2026tail} provided a suite of results with a similar goal of showing R\'enyi bounds for HMC. They consider full momentum refreshments which our techniques cannot handle. On the other hand, their dimension dependence is at least $d^{3/4}$, which does not provide a faster warm start and in particular does not reach the $d^{1/4}$ scaling for MHMC that we obtain in Theorem~\ref{thm:main-warm}.

        \section{Background}\label{sec:prelim}

\subsection{(Metropolized) Hamiltonian Monte Carlo}\label{sec:hmc_bg}

Hamiltonian Monte Carlo (HMC) refers to a family of methods based on Hamilton's equations of motion. Given a target distribution $\pi \propto \exp(-V)$, we interpret the function $V : \R^d\to\R$ as the ``potential energy'', and we form the Hamiltonian which is the sum of the potential energy and the kinetic energy (corresponding to a particle of unit mass):
\begin{align*}
    H : \R^d\times \R^d\to\R\,, \qquad
    H(x, p) \deq V(x) + \frac{1}{2}\,\norm p^2\,.
\end{align*}
The distribution over phase space $\R^d\times \R^d$ is then taken to have density $\bs\pi \propto \exp(-H)$.
Note that the $x$-marginal of $\bs\pi$ is $\pi$, so that drawing an approximate sample from $\bs\pi$ and keeping only the $x$-component leads to an approximate sample from $\pi$.

The Hamiltonian ODE associated with $H$ is the following first-order ODE system:
\begin{align}\label{eq:ham_ode}
    \dot x_t = p_t\,, \qquad \dot p_t = -\nabla V(x_t)\,.
\end{align}
The Hamiltonian dynamics are volume-preserving and preserve $H$, so they leave $\bs\pi$ invariant; however, they are not ergodic.
Therefore, the standard algorithmic approach is to periodically refresh the momentum.
Given a time horizon $T > 0$, let $\Phi_T : \R^d\times\R^d \to\R^d\times \R^d$ denote the flow map corresponding to the ODE system~\eqref{eq:ham_ode}. That is, let $\Phi_T(x,p)$ denote the solution to~\eqref{eq:ham_ode} at time $T$, initialized at $(x_0, p_0) = (x,p)$.
Then, \emph{ideal HMC} iterates the following steps. For $n=0,1,2,\dotsc$,
\begin{itemize}
    \item \emph{Refresh} the momentum: draw $P_n \sim \cN(0, I)$.
    \item \emph{Integrate} the Hamiltonian ODE: set $(X_{n+1}, P_{n+1}') \deq \Phi_T(X_n, P_n)$.
\end{itemize}
Ideal HMC is known to converge in law to $\bs\pi$ (in its first argument) at an exponential rate, provided that $V$ is strongly convex and smooth~\cite{chenvempala2019hmc}.

In practice, the ODE system~\eqref{eq:ham_ode} cannot be exactly integrated, so we must use a numerical approximation.
Here we consider the \emph{leapfrog integrator} (sometimes called the \emph{St\"ormer--Verlet} integrator).
Given a step size $h > 0$ with $K \deq T/h \in \N$, we define $\widehat\Phi_{T,h} : \R^d\times\R^d\to\R^d \times\R^d$ via
\begin{align}\label{eq:leapfrog}
    \widehat\Phi_{T,h}(x,p) \deq (\widehat x_{Kh}, \widehat p_{Kh})\,, \qquad \begin{cases}
        (\widehat x_0,\widehat p_0) \deq (x, p)\,, \\
        \widehat p_{(k + 1/2)h} \deq \widehat p_{kh} - (h/2)\,\nabla V(\widehat x_{kh})\,, &({\rm B}) \\
        \widehat x_{(k+1)h} \deq \widehat x_{kh} + h\widehat p_{(k+1/2)h}\,, &({\rm A}) \\
        \widehat p_{(k+1)h} \deq \widehat p_{(k+1/2)h} - (h/2)\,\nabla V(\widehat x_{(k+1)h})\,. &({\rm B})
    \end{cases}
\end{align}

The use of the leapfrog integrator introduces bias; that is, replacing $\Phi_T$ with $\widehat\Phi_{T,h}$ in ideal HMC distorts the invariant measure.
One can correct for the bias by adding a Metropolis--Hastings filter, leading to the \emph{Metropolized Hamiltonian Monte Carlo} (\emph{MHMC}) algorithm.
Since the leapfrog integrator is also volume-preserving (indeed it is symplectic), it leads to the following simple form for the correction step. For $n=0,1,2,\dotsc$:
\begin{itemize}
    \item \emph{Refresh} the momentum: draw $P_n \sim \cN(0, I)$.
    \item \emph{Integrate} the Hamiltonian ODE\@: set $(X_{n+1}', P_{n+1}') \deq \widehat\Phi_{T,h}(X_n, P_n)$.
    \item \emph{Accept} the step with probability $1 \wedge \exp\{H(X_n, P_n) - H(X_{n+1}', P_{n+1}')\}$.
    If the step is accepted, we set $X_{n+1} \deq X_{n+1}'$.
    Otherwise, we set $X_{n+1} \deq X_n$.
\end{itemize}
As is standard in theoretical analysis, we use the $\frac{1}{2}$-lazy version of this chain, in which the acceptance probability is replaced by $\frac{1}{2} \wedge \frac{1}{2}\exp\{H(X_n, P_n) - H(X_{n+1}', P_{n+1}')\}$.

As displayed in~\eqref{eq:leapfrog}, the position and momentum updates are often denoted ``(A)'' and ``(B)'' respectively in the literature.
When HMC is run without applying the Metropolis--Hastings filter, it is known as \emph{unadjusted HMC}, and in that context a common alternative is to consider partial, rather than full, momentum refreshment.
Define the ``(O)'' step to be the following partial momentum refreshment $p \mapsto p^{\msf O}$ which preserves the standard Gaussian:
\begin{align}\tag{O}\label{eq:O}
    p^{\msf O} \sim \cN(e^{-\gamma h/2}\, p\,, \; (1-e^{-\gamma h})\,I)\,.
\end{align}
Here, $\gamma > 0$ denotes the \emph{friction} parameter, and we suppress its dependence in the notation.
Different ways of composing together the (A), (B), and (O) steps lead to various \emph{splitting schemes} for unadjusted HMC\@.
For example, in this paper we consider (a variant of) the popular OBABO scheme, consisting of iterating the O, B, A, B, O steps in that order, although other palindromic combinations are also possible.

\subsection{Divergences between probability measures}\label{ssec:prelim-divergences}

In this section, we briefly review divergences between probability measures which are used in this work. For further background on R\'enyi divergences see, e.g.,~\cite{van2014renyi}.

\begin{defin}[R\'enyi divergence]\label{def:renyi}
    The R\'enyi divergence of order $q > 1$ between two probability measures $\mu, \nu$ is defined as
    \begin{align*}
        \Renyi_q(\mu \mmid \nu) \deq \frac{1}{q-1} \log \int \Bigl(\frac{\D \mu}{\D \nu}\Bigr)^q \, \D \nu\,,
    \end{align*}
    if $\mu \ll \nu$, and $+ \infty$ otherwise.
    When $q=1$, we take $\Ren_q = \KL$; and when $q=2$, then $\Ren_2 = \log(1+\chi^2)$, where $\chi^2$ is the chi-squared divergence.
\end{defin}

\begin{prop}[Properties of R\'enyi divergences]\label{prop:renyi}\mbox{}
\begin{enumerate}
    \item (Monotonicity) For $q \leq q'$ and any $\mu, \nu \in \mc P(\Omega)$,
    \begin{align*}
        \Renyi_q(\mu \mmid \nu) \leq \Renyi_{q'}(\mu \mmid \nu)\,.
    \end{align*}
    \item (Data-processing inequality) For any $q \ge 1$, any Markov kernel $P$, and any $\mu,\nu \in \mc P(\Omega)$,
    \begin{align*}
        \Renyi_q(\mu P\mmid \nu P) \le \Renyi_q(\mu \mmid \nu)\,.
    \end{align*}
    \item (Weak triangle inequality) For any $q > 1$, the following holds for any triple $\mu, \nu, \pi \in \mc P(\Omega)$,
    \begin{align*}
        \Renyi_q(\mu \mmid \pi) \leq \frac{q-\nfrac{1}{2}}{q-1}\, \Renyi_{2q}(\mu \mmid \nu) + \Renyi_{2q-1}(\nu \mmid \pi)\,.
    \end{align*}
\end{enumerate}
\end{prop}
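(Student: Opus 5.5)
Each of the three properties follows from a short, standard argument, and I will prove them in the order stated, working throughout with the quantity $F_q(\mu\mmid\nu) \deq \int (\D\mu/\D\nu)^q\,\D\nu$, so that $\Renyi_q = \frac{1}{q-1}\log F_q$.

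\emph{Monotonicity.} For $1 < q \le q'$, write $\frac{1}{q-1}\log F_q = \frac{1}{q-1}\log \E_\mu[(\D\mu/\D\nu)^{q-1}]$. By Jensen (equivalently, the power-mean inequality applied to the positive random variable $Z = \D\mu/\D\nu$ under $\mu$),
\begin{align*}
\E_\mu[Z^{q-1}]^{1/(q-1)} \le \E_\mu[Z^{q'-1}]^{1/(q'-1)}\,,
\end{align*}
and taking logarithms gives the claim. The case $q = 1$ follows either by letting $q \downarrow 1$, or directly from Jensen: $\E_\mu \log Z \le \frac{1}{q'-1}\log \E_\mu Z^{q'-1}$.

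\emph{Data processing.} I will first verify absolute continuity, namely that $\mu P \ll \nu P$ whenever $\mu\ll\nu$. Let $\Gamma$ be the joint law of $(X,Y)$ with $X\sim\nu$ and $Y\mid X\sim P(X,\cdot)$. A standard computation then shows that
\begin{align*}
\frac{\D(\mu P)}{\D(\nu P)}(y) = \E_\Gamma\Bigl[\frac{\D\mu}{\D\nu}(X) \Bigm| Y=y\Bigr]\,.
\end{align*}
Since $t\mapsto t^q$ is convex for $q>1$, conditional Jensen yields $F_q(\mu P\mmid\nu P) \le F_q(\mu\mmid\nu)$, and the claim follows because $\log$ is monotone and $q-1>0$. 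For $q = 1$ the statement is the standard KL data-processing inequality, which follows in the same way via convexity of $t\log t$.

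\emph{Weak triangle inequality.} This is the only step requiring a choice, namely the choice of Hölder exponents. I write
\begin{align*}
\Bigl(\frac{\D\mu}{\D\pi}\Bigr)^q = \Bigl(\frac{\D\mu}{\D\nu}\Bigr)^q\Bigl(\frac{\D\nu}{\D\pi}\Bigr)^q\,,
\end{align*}
integrate against $\pi$, and apply Cauchy--Schwarz in $L^2(\nu)$ after rewriting the integrand as
\begin{align*}
\Bigl(\frac{\D\mu}{\D\nu}\Bigr)^q\Bigl(\frac{\D\nu}{\D\pi}\Bigr)^{q-1}\frac{\D\nu}{\D\pi}\,\D\pi = \Bigl(\frac{\D\mu}{\D\nu}\Bigr)^q\Bigl(\frac{\D\nu}{\D\pi}\Bigr)^{q-1}\D\nu\,.
\end{align*}
This gives
\begin{align*}
F_q(\mu\mmid\pi) \le \Bigl(\int \Bigl(\frac{\D\mu}{\D\nu}\Bigr)^{2q}\D\nu\Bigr)^{1/2}\Bigl(\int \Bigl(\frac{\D\nu}{\D\pi}\Bigr)^{2q-2}\D\nu\Bigr)^{1/2} = F_{2q}(\mu\mmid\nu)^{1/2}\,F_{2q-1}(\nu\mmid\pi)^{1/2}\,,
\end{align*}
where I have used that $\int (\D\nu/\D\pi)^{2q-2}\,\D\nu = \int (\D\nu/\D\pi)^{2q-1}\,\D\pi$. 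Taking logarithms and dividing by $q-1$ gives
\begin{align*}
\Renyi_q(\mu\mmid\pi) \le \frac{2q-1}{2(q-1)}\Renyi_{2q}(\mu\mmid\nu)+\frac{2q-2}{2(q-1)}\Renyi_{2q-1}(\nu\mmid\pi)\,.
\end{align*}
The first coefficient is $\frac{q-1/2}{q-1}$ and the second is exactly $1$, which is precisely the stated bound.

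The only point needing care is the bookkeeping of exponents in this last step, which the above resolves. If $\mu\not\ll\nu$ or $\nu\not\ll\pi$, the right-hand side is $+\infty$ and there is nothing to prove.
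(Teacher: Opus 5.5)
Your proof is correct in all three parts. The paper itself gives no proof of this proposition (it is stated as standard background, with a pointer to van Erven and Harremo\"es). Your arguments are the standard ones: Lyapunov/Jensen for monotonicity; conditional Jensen applied to $\frac{\D(\mu P)}{\D(\nu P)}(y) = \E[\frac{\D\mu}{\D\nu}(X)\mid Y=y]$ for data processing; and Cauchy--Schwarz in $L^2(\nu)$ for the weak triangle inequality, which is the $(2,2)$-exponent case of the usual H\"older argument. Your bookkeeping of the coefficients $\frac{q-1/2}{q-1}$ and $1$ is correct.
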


Finally, we make use of the following principle, introduced in~\cite{scr1}.

\begin{theorem}[Shifted composition rule]\label{thm:shifted_chain_rule}
    Let $\msf X$, $\msf X'$, $\msf Y$ be three random variables jointly defined on a standard probability space $\Omega$. Let $\mb P$, $\mb Q$ be two probability measures over $\Omega$, with superscripts denoting the laws of random variables under these measures. Then, for any $q \ge 1$,
	\begin{align*}
		\Renyi_q(\mb P^{\msf Y} \mmid \mb Q^{\msf Y})
		&\le \Renyi_q(\mb P^{\msf X'} \mmid \mb Q^{\msf X}) + \inf_{\gamma \in \Coup(\mb P^{\msf X}, \mb P^{\msf X'})}\; \operatorname*{\gamma-ess\,sup}_{(x,x') \in\R^d\times \R^d} \Renyi_q(\mb P^{\msf Y\mid \msf X=x} \mmid \mb Q^{\msf Y\mid \msf X=x'})\,.
	\end{align*}
\end{theorem}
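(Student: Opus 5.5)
The plan is to construct a coupling of the two conditional laws through the common variable $\msf X$, and to reduce the claim to data processing together with a convexity bound for $\exp((q-1)\Renyi_q)$ under mixtures. The case $q=1$ (KL) is handled the same way, using the chain rule and convexity directly. I will therefore concentrate on $q>1$.

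First I fix an arbitrary coupling $\gamma \in \Coup(\mb P^{\msf X}, \mb P^{\msf X'})$. I may assume the $\gamma$-essential supremum, call it $M$, is finite, since otherwise there is nothing to prove.

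The first step is to introduce the intermediate law
\begin{align*}
\widetilde{\mb P}^{\msf Y} \deq \int \mb Q^{\msf Y\mid \msf X = x'}\,\gamma(\D x,\D x')\,.
\end{align*}
Its $x'$-marginal is $\mb P^{\msf X'}$, so $\widetilde{\mb P}^{\msf Y} = \mb P^{\msf X'} K$, where $K$ denotes the kernel $x'\mapsto \mb Q^{\msf Y\mid \msf X=x'}$. Likewise $\mb Q^{\msf Y} = \mb Q^{\msf X} K$, and $\mb P^{\msf Y} = \int \mb P^{\msf Y\mid \msf X=x}\,\gamma(\D x,\D x')$ because the $x$-marginal of $\gamma$ is $\mb P^{\msf X}$.

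The second step combines the two pieces on the joint space $(x,x',y)$. Let $\Gamma_{\mb P}(\D x,\D x',\D y) = \gamma(\D x,\D x')\,\mb P^{\msf Y\mid \msf X=x}(\D y)$. For the reference measure, let $\Gamma_{\mb Q}$ be built from $\mb Q^{\msf X}$ on the $x'$ coordinate, glued with the conditional $\gamma(\D x \mid x')$, and then $K(x',\D y)$. The $y$-marginals of $\Gamma_{\mb P}$ and $\Gamma_{\mb Q}$ are $\mb P^{\msf Y}$ and $\mb Q^{\msf Y}$ respectively.

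By data processing (Proposition~\ref{prop:renyi}), $\Renyi_q(\mb P^{\msf Y}\mmid \mb Q^{\msf Y}) \le \Renyi_q(\Gamma_{\mb P}\mmid\Gamma_{\mb Q})$. The density ratio factorizes as
\begin{align*}
\frac{\D\Gamma_{\mb P}}{\D\Gamma_{\mb Q}} = \frac{\D \mb P^{\msf X'}}{\D \mb Q^{\msf X}}(x')\,\frac{\D \mb P^{\msf Y\mid \msf X=x}}{\D \mb Q^{\msf Y\mid \msf X=x'}}(y)\,.
\end{align*}
I raise this ratio to the $q$th power and integrate first in $y$. For $\gamma$-a.e.\ $(x,x')$, the inner $y$-integral is $\exp((q-1)\Renyi_q(\mb P^{\msf Y\mid x}\mmid \mb Q^{\msf Y\mid x'})) \le e^{(q-1)M}$. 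The remaining $x$-integral against $\gamma(\D x\mid x')$ is trivial. The remaining integral over $x'$ then gives $\exp((q-1)\Renyi_q(\mb P^{\msf X'}\mmid \mb Q^{\msf X}))$.

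Taking logarithms and dividing by $q-1$ yields the bound with $M$. Taking the infimum over $\gamma$ finishes the proof.

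The main obstacle is measure-theoretic bookkeeping. Disintegrating $\gamma$ along $x'$ requires a standard Borel space, which is why the theorem assumes $\Omega$ is standard. One must also check that the $\gamma$-a.e.\ bound transfers to the right null sets: the inner bound holds $\gamma$-a.e., and the reference measure integrates $x'$ against $\mb Q^{\msf X}$, but the $q$th-power integrand carries the weight $(\D\mb P^{\msf X'}/\D\mb Q^{\msf X})^q$, so the bound only needs to hold $\mb P^{\msf X'}$-a.e. Finally, absolute continuity must be verified; if it fails, the right-hand side is already infinite.
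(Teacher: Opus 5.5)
The paper does not prove this statement; it imports it from~\cite{scr1}. Your argument is correct and is essentially the standard one: you lift both measures to $(x,x',y)$ with the common kernel $\gamma(\D x\mid x')$, apply data processing, and factor $\D\Gamma_{\mb P}/\D\Gamma_{\mb Q}$, and carrying $x$ along as a coordinate does the work of the convexity-under-mixtures step. Your null-set reasoning is also right: the outer integral is against $(\D\mb P^{\msf X'}/\D\mb Q^{\msf X})^{q-1}(x')\,\gamma(\D x,\D x')\ll\gamma$, so a $\gamma$-a.e.\ bound suffices. The only thing to add is that $\gamma(\cdot\mid x')$ should be defined arbitrarily on the $\mb P^{\msf X'}$-null set where $\mb Q^{\msf X}$ may still have mass.
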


\subsection{Functional inequalities}\label{ssec:prelim-functional}

Throughout we operate under the following assumption.
\begin{assumption}\label{as:regularity}
    We assume the measure $\pi \propto \exp(-V)$ is $\beta$-log-smooth. That is, $V$ is twice continuously differentiable and $-\beta I \preceq \nabla^2 V(x) \preceq \beta I$ for all $x \in \R^d$.
\end{assumption}
 Before proceeding, we first define some useful isoperimetric inequalities.
\begin{defin}
    A measure $\pi$ satisfies a log-Sobolev inequality (LSI) with constant $C_{\msf{LSI}}$ if for all smooth $f: \R^d \to \R_{>0}$,
    \begin{align}\label{eq:lsi}\tag{$\msf{LSI}$}
        \ent_\pi f \deq \E_\pi \Bigl[f \log \frac{f}{\E_\pi f}\Bigr] \leq \frac{C_{\msf{LSI}}}{2}\, \E_\pi [\norm{\nabla f}^2]\,. 
    \end{align}
    A measure $\pi$ satisfies a weaker inequality known as a Poincar\'e inequality (PI) with constant $C_{\msf{PI}}$ if for all smooth $f: \R^d \to \R$,
    \begin{align}\label{eq:pi}\tag{$\msf{PI}$}
        \var_\pi f \deq \E_\pi [\abs{f-\E_\pi f}^2 ] \leq {C_{\msf{PI}}}\, \E_\pi [\norm{\nabla f}^2]\,. 
    \end{align}
\end{defin}
The following is a sequence of isoperimetric assumptions, from strongest to weakest, under which we will state our rates. A standard chain of implications shows that Assumption~\ref{ass:iso:slc} implies Assumption~\ref{ass:iso:lsi} implies Assumption~\ref{ass:iso:pi}. See e.g.,~\cite[Chapter 2]{chewibook} for further background. 
\begin{assumption}\label{ass:iso}
Assume, for $\alpha > 0$,
\begin{enumerate}[label=(\alph*), ref=\theassumption(\alph*), leftmargin=*]
  \item\label{ass:iso:slc} $\pi$ is $\alpha$-strongly log-concave (i.e., $\nabla^2 V \succeq \alpha I$).
  
  \item\label{ass:iso:lsi} $\pi$ satisfies~\eqref{eq:lsi} with constant $1/\alpha$.
  \item\label{ass:iso:pi} $\pi$ satisfies~\eqref{eq:pi} with constant $1/\alpha$.
\end{enumerate}
\end{assumption}
        \section{Main result: high-accuracy sampling in \texorpdfstring{$d^{1/4}$}{d\^{}(1/4)} steps}\label{sec:results}

\subsection{Formal statement of result}

We recall the following Frobenius-Lipschitz Hessian assumption, which was also used in~\cite{chengatmiry23}.

\begin{assumption}[Frobenius-Lipschitz Hessian]\label{as:higher-reg}
    Assume that $V$ is three times continuously differentiable and that $\nabla^2 V$ is Lipschitz continuous in the Frobenius norm.
    That is, there exists $\beta_\eff > 0$ such that for all $x,y\in\R^d$, $\norm{\nabla^2 V(x) - \nabla^2 V(y)}_{\rm F} \le \beta_\eff\,\norm{x-y}$.
\end{assumption}

This assumption is equivalent to a tensor norm bound on $\nabla^3 V$, namely $\norm{\nabla^3 V}_{\{1,2\},\{3\}} \le \beta_\eff$.
Recall that for a $3$-tensor $T$, we define its $\{1,2\}, \{3\}$ norm as
\begin{align*}
    \norm{T}_{\{1,2\}, \{3\}} \deq \sup\Bigl\{\sum_{i,j,k=1}^d T_{i,j,k}\, x_{i,j}\, y_k\; : \; \norm{x}_{\rm F} \leq 1,\, \norm{y} \leq 1 \Bigr\}\,.
\end{align*}
The specific consequence of Assumption~\ref{as:higher-reg} that we need is the following probabilistic tail bound for an object typically called a Gaussian chaos, whose proof we defer to Appendix~\ref{app:chaos}.

\begin{lemma}[Gaussian chaos bound]\label{lem:chaos}
    Let $T \in \R^{d\times d\times d}$ be a symmetric $3$-tensor, and let $\bar\beta \deq \norm T_{\{1,2\},\{3\}}$.
    Then, for $\xi\sim \cN(0, I)$ and for all $\delta \in (0,1/2)$, with probability at least $1-\delta$,
    \begin{align*}
        \norm{T[\xi,\xi]} \lesssim \bar\beta\,\Bigl(d^{1/2} + \log \frac{1}{\delta}\Bigr)\,.
    \end{align*}
\end{lemma}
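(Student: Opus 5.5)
We bound $\norm{T[\xi,\xi]}$ by splitting it into its mean and a fluctuation term, and then controlling the fluctuation through the concentration of a Lipschitz function on the event where $\xi$ is typical.

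\textbf{Mean.} Write $T[\xi,\xi]_k = \sum_{i,j} T_{i,j,k}\,\xi_i\xi_j$. Its mean is $m_k = \sum_i T_{i,i,k}$, i.e.\ $m$ is the contraction of $T$ with the identity. By the definition of the norm, taking $x = I/\norm{I}_{\rm F}$ gives
\begin{align*}
\norm m = \sup_{\norm y\le1}\langle T, I\otimes y\rangle \le \bar\beta\,\norm{I}_{\rm F} = \bar\beta\, d^{1/2}.
\end{align*}

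\textbf{Fluctuation.} Let $F(\xi) \deq \norm{T[\xi,\xi]-m}$. Using the variational form of the norm,
\begin{align*}
F(\xi) = \sup_{\norm y\le 1}\bigl(\langle M_y\xi,\xi\rangle - \tr M_y\bigr), \qquad (M_y)_{ij} = \sum_k T_{ijk}\,y_k .
\end{align*}
Each $M_y$ is symmetric because $T$ is symmetric, and $\norm{M_y}_{\rm op}\le\norm{M_y}_{\rm F}\le\bar\beta$.

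\emph{Expectation of $F$.} First, $\E F \le (\E F^2)^{1/2}$. Since $\var$ of a Gaussian quadratic form is $2\norm{A}_{\rm F}^2$ for symmetric $A$,
\begin{align*}
\E F^2 = \sum_k \var(\xi^\top T_{\cdot\cdot k}\,\xi) = 2\sum_k \norm{T_{\cdot\cdot k}}_{\rm F}^2 = 2\norm{T}_{\rm HS}^2 .
\end{align*}
The plan is to bound $\norm{T}_{\rm HS}^2 \le d\,\bar\beta^2$. Under the $\{1,2\},\{3\}$ norm, $T$ is an operator from $\R^d$ to $\R^{d\times d}$ (Frobenius) with operator norm $\bar\beta$. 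Its Hilbert--Schmidt norm squared is at most its rank times its operator norm squared, and the rank is at most $d$. Hence $\E F \lesssim \bar\beta\, d^{1/2}$.

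\emph{Concentration of $F$.} The gradient of $F$ is $2M_{y^*}\xi$, which has norm at most $2\norm{M_{y^*}}_{\rm op}\norm\xi \le 2\bar\beta\norm\xi$. So $F$ is not globally Lipschitz, and this is the main obstacle. The plan to handle it is as follows.
\begin{enumerate}
\item Restrict to the event $\norm\xi\le \sqrt d + \sqrt{2\log(2/\delta)}$, which has probability at least $1-\delta/2$.
\item On this ball, $F$ is $L$-Lipschitz with $L = 2\bar\beta\bigl(\sqrt d+\sqrt{2\log(2/\delta)}\bigr)$.
\item Use the standard Lipschitz extension (or the Hanson--Wright-type Gaussian concentration for suprema of quadratic forms) to get deviations of order $L\sqrt{\log(1/\delta)}$.
\end{enumerate}

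\textbf{Refinement needed.} The bound from step 3 gives $\bar\beta\sqrt{d\log(1/\delta)}$, which is too weak. To obtain the sharper $d^{1/2}+\log(1/\delta)$, I would instead invoke the known tail bound for suprema of Gaussian chaoses (e.g.\ Adamczak--Lata{\l}a / Talagrand). It states that with probability at least $1-\delta$,
\begin{align*}
\sup_y Q_y \lesssim \E\sup_y Q_y + \sqrt{\log(1/\delta)}\;\E\sup_{y}\norm{M_y\xi} + \log(1/\delta)\,\sup_y\norm{M_y}_{\rm op},
\end{align*}
where $Q_y = \langle M_y\xi,\xi\rangle - \tr M_y$. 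Two of the three terms are already controlled:
\begin{itemize}
\item $\E\sup_y Q_y = \E F \lesssim \bar\beta d^{1/2}$, from the expectation step above;
\item $\sup_y\norm{M_y}_{\rm op}\le\bar\beta$, so the last term is $\lesssim \bar\beta\log(1/\delta)$.
\end{itemize}
The middle term needs care. I would bound $\E\sup_y\norm{M_y\xi}$ by $\E\norm{T[\xi,\cdot,\cdot]}_{\rm op}$. Here $T[\xi,\cdot,\cdot] = \sum_i \xi_i T_{i\cdot\cdot}$ is a Gaussian matrix series, and I expect to show its operator norm is $\lesssim\bar\beta$ up to lower-order terms using the Frobenius structure.

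If that fails, the fallback is the cruder bound $\bar\beta\sqrt d$ for the middle term. That gives $\sqrt{d\log(1/\delta)}$, which by AM--GM is $\le d^{1/2}\cdot(\text{const}) + \log(1/\delta)$ only after rescaling, so the sharper bound is what is really needed. Combining the mean, the expectation of $F$, and the three chaos terms then gives the lemma.
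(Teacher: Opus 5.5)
Your mean bound, the identity $F(\xi)=\sup_{\norm y\le 1}(\xi^\top M_y\xi-\tr M_y)$, and $\E F\le\sqrt2\,\bar\beta d^{1/2}$ are all correct, and they match the paper's treatment of the diagonal part and of $\E Z$. The gap is the step you leave open. With the classical chaos tail bound, everything hinges on the coefficient of $\sqrt{\log(1/\delta)}$, namely $\E\sup_y\norm{M_y\xi}=\E\norm{T[\xi,\cdot,\cdot]}_{\rm op}$. You neither bound this quantity nor have a fallback that works: using $\bar\beta d^{1/2}$ gives $\bar\beta\sqrt{d\log(1/\delta)}$, which does not imply the lemma.

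Moreover, the bound $\lesssim\bar\beta$ you hope for is false. For $T_{ijk}=\bar\beta\,\one_{i=j=k}$ one has $T[\xi,\cdot,\cdot]=\bar\beta\operatorname{diag}(\xi)$, whose norm is $\asymp\bar\beta\sqrt{\log d}$ for large $d$.

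What you actually need is only $\E\norm{T[\xi,\cdot,\cdot]}_{\rm op}\lesssim\bar\beta d^{1/4}$, because $d^{1/4}\sqrt{\log(1/\delta)}\le\frac12(d^{1/2}+\log(1/\delta))$. This bound holds. Set $A_i\deq T_{i\cdot\cdot}$, which is symmetric. Then $\norm{\sum_iA_i^2}_{\rm op}=\sup_{\norm v\le1}\sum_i\norm{A_iv}^2=\sup_{\norm v\le1}\norm{T[\cdot,\cdot,v]}_{\rm F}^2\le\bar\beta^2$. From here there are two ways to finish:
\begin{itemize}
\item Noncommutative Khintchine gives $\E\norm{\sum_i\xi_iA_i}_{\rm op}\lesssim\bar\beta\sqrt{\log(2d)}$.
\item More elementarily, Wick's formula together with $\abs{\tr(A_iA_jA_iA_j)}\le\tr(A_i^2A_j^2)$ gives $\E\tr(\sum_i\xi_iA_i)^4\le 3\tr\bigl((\sum_iA_i^2)^2\bigr)\le 3d\bar\beta^4$. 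Hence $\E\norm{\sum_i\xi_iA_i}_{\rm op}\le(3d)^{1/4}\bar\beta$.
\end{itemize}
With either estimate inserted, your proof closes.

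This also explains where your Lipschitz attempt lost a factor. You bounded $\norm{M_{y^*}\xi}$ by $\bar\beta\norm\xi\approx\bar\beta d^{1/2}$, whereas $\sup_y\norm{M_y\xi}=\norm{T[\xi,\cdot,\cdot]}_{\rm op}$ is typically much smaller. If you restrict to the convex event $\{\norm{T[\xi,\cdot,\cdot]}_{\rm op}\le L\}$ instead of a Euclidean ball, you essentially recover the classical chaos bound.

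The paper avoids this random-matrix step entirely. It uses the Banach-valued Hanson--Wright inequality of Adamczak--Lata{\l}a--Meller, where the $\sqrt{\log(1/\delta)}$ coefficient is $U=\sup_{\norm x\le1}\E\norm{\sum_{i,j}T_{i,j,\bullet}x_i\xi_j}+\sup_{\norm x_{\rm F}\le1}\norm{\sum_{i,j}T_{i,j,\bullet}x_{i,j}}$. Here the supremum sits \emph{outside} the expectation, so each piece is at most $\bar\beta$ by a one-line second-moment computation. The price is an extra term $R=\E\norm{\sum_{i\neq j}T_{i,j,\bullet}G_{i,j}}\le\bar\beta d^{1/2}$. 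In short, the paper relies on a deeper, more recent inequality whose parameters are trivial to estimate. Your route uses the classical chaos bound plus an elementary fourth-moment (or Khintchine) estimate, and once the missing step is supplied it yields the same result.
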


\begin{remark}
    The conclusion of Lemma~\ref{lem:chaos}, with $T$ replaced by $\nabla^3 V(x)$, was used as an assumption in~\cite{zhang2025analysis}. By showing that the desired tail bound is a consequence of Assumption~\ref{as:higher-reg}, we resolve an open question in that paper. 
\end{remark}

With Assumption~\ref{as:higher-reg} in hand, we now state our main result. For simplicity, we make the standard assumption that $V$ is translated to have minimizer at $0$, since otherwise one can accurately compute the minimizer of $V$ in an efficient pre-processing step (e.g., using gradient descent).

\begin{theorem}[Faster high-accuracy sampling]\label{thm:main-slc}
    Under Assumptions~\ref{as:regularity},~\ref{ass:iso:slc}, and~\ref{as:higher-reg}, and assuming that $V$ is minimized at $0$, we can obtain a sample $X$ such that $\msf d(\operatorname{law}(X) \mmid \pi) \leq \varepsilon$, with
    \begin{align*}
        \msf d \in \{\msf{TV}, \sqrt{\msf{KL}}, \sqrt{\chi^2}, \sqrt{\alpha}\, W_2 \}\,,
    \end{align*}
    using at most $N$ oracle queries, where
    \begin{align*}
        N = \widetilde \Theta\Bigl((\kappa^{3/2} + \kappa^{1/2} \kappa_\eff^{1/2} + \kappa_\eff^{2/3})\, d^{1/4}  \log^{2} \frac{1}{\varepsilon}\Bigr)\,.
    \end{align*}
    Here, $\kappa \deq \beta/\alpha$ and $\kappa_\eff \deq \beta_\eff/\alpha^{3/2}$.
\end{theorem}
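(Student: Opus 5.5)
The proof is a two-phase composition: a warm start from Theorem~\ref{thm:main-warm}, followed by the warm-start mixing bound of~\cite{chengatmiry23}, with a conversion between divergences at the end.

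\textbf{Phase one: the warm start.} Since $V$ is minimized at $0$, initialize at $\mu_0 = \cN(0, \beta^{-1} I)$ in position and $\cN(0,I)$ in momentum. Under Assumption~\ref{ass:iso:slc} this satisfies $\Renyi_\infty(\mu_0 \mmid \pi) \lesssim \frac d2 \log \kappa$, so it is a legitimate cold start with only logarithmic dependence on the initial divergence. Run the unadjusted (partially refreshed, position-corrected) HMC scheme of \S\ref{sec:app} with step size $h \asymp d^{-1/4}$, suitably scaled by $\kappa$ and $\kappa_\eff$. By the informal Theorem~\ref{thm-intro:warm}, i.e.\ the formal Theorem~\ref{thm:main-warm} which I take as given, this produces $\mu$ with $\chi^2(\mu\mmid\pi)\le 1$. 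To transfer to MHMC I would ask for slightly more, namely a bound on $\Renyi_q$ for some $q>2$. Constant warmness then controls the $\mu$-average rejection probability through Hölder's inequality. This costs $\widetilde O\bigl((\kappa^{3/2}+\kappa^{1/2}\kappa_\eff^{1/2}+\kappa_\eff^{2/3})\,d^{1/4}\bigr)$ gradient queries. The $\kappa$ powers should come from the choice of $h$, which must keep the local discretization error (governed by $\beta h^2$ and by $\beta_\eff h^3 d^{1/2}$ via Lemma~\ref{lem:chaos}) below the contraction rate, together with the number of outer steps needed to contract $\log\Renyi$ from $d\log\kappa$ to $O(1)$. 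That contraction step should account for the $\log(d\kappa)$ factors hidden in $\widetilde\Theta$.

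\textbf{Phase two: MHMC from the warm start.} From $\mu$, run $\frac12$-lazy MHMC with leapfrog step $h\asymp d^{-1/4}$ (adjusted by $\kappa$ and $\kappa_\eff$) and integration time $T\asymp \beta^{-1/2}$. Invoke the warm-start mixing theorem of~\cite{chengatmiry23}, which is valid under Assumptions~\ref{as:regularity} and~\ref{as:higher-reg} together with strong log-concavity. Their bound is in total variation (or $\chi^2$ via conductance) with $\log^2(1/\eps)$-type dependence, coming from $s$-conductance arguments with warmness $M=O(1)$. The resulting gradient count is $O\bigl(\kappa\, d^{1/4}(\text{poly in }\kappa_\eff)\,\log^2\frac1\eps\bigr)$. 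I would check that this is dominated by the displayed $N$, possibly after checking exponent bookkeeping.

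\textbf{Converting to the listed divergences.} Comparison inequalities handle the rest. Closeness in $\chi^2$ implies closeness in $\TV$ and in $\sqrt{\KL}$. Talagrand's inequality under Assumption~\ref{ass:iso:slc} gives $\sqrt\alpha\,W_2 \le \sqrt{2\KL}$. If~\cite{chengatmiry23} only yields TV, I would either use their $\chi^2$/spectral-gap version or rerun phase one to a smaller accuracy. The main obstacle is the interface between the phases: the warmness notion~\cite{chengatmiry23} require, whether $\chi^2$, $\Renyi_\infty$, or $s$-conductance with a specific $M$, must match what Theorem~\ref{thm:main-warm} delivers, and if it does not, exponent bookkeeping might force using $\Renyi_q$ with larger $q$. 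Finally, the lower bound implicit in $\widetilde\Theta$ is merely a statement of the exact query count of the algorithm, since the complexity is fixed by the parameter choices, so nothing further needs proving there.
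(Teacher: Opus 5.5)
Your two-phase structure matches the paper's, and the complexity accounting is fine. The final conversion step, however, has a genuine gap.

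The warm-start result of \cite{chengatmiry23} (Lemma~\ref{lem:chen-gatmiry}) is an $s$-conductance argument and delivers only a $\TV$ bound. It gives no $\chi^2$ or spectral-gap statement at step size $h \asymp d^{-1/4}$: the low-acceptance regions that force the use of $s$-conductance also ruin ordinary conductance. So you cannot fall back on ``their $\chi^2$ version.'' Your other fallback, rerunning phase one to accuracy $\eps$, also fails. OBABCO is unadjusted, so reaching $\Renyi \le \eps^2$ costs $\poly(1/\eps)$ iterations (the proof of Theorem~\ref{thm:main-warm} carries an explicit $\eps^{-1/2}$ factor). That destroys the claimed $\log^2(1/\eps)$ dependence. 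As written, you therefore only establish the $\TV$ case of the theorem, not the $\sqrt{\chi^2}$, $\sqrt{\KL}$, or $\sqrt{\alpha}\,W_2$ cases.

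The missing ingredient is the TV-to-$\chi^2$ boosting lemma (Lemma~\ref{lem:boosting}). Since MHMC leaves $\pi$ invariant, $\int (\D\mu_N/\D\pi)^3\,\D\pi$ is non-increasing in $N$ by data processing. Writing $(f-1)^2 = \abs{f-1}^{1/2}\abs{f-1}^{3/2}$ and applying Cauchy--Schwarz then gives $\chi^2(\mu_N\mmid\pi) \lesssim \sqrt{\TV(\mu_N,\pi)\,(e^{2\Renyi_3(\mu_0\mmid\pi)}+1)}$. The paper's argument runs as follows:
\begin{itemize}
\item Ask phase one for $\Renyi_3(\mu_0\mmid\pi)\le 1$, which Theorem~\ref{thm:main-warm} provides with $q=3$.
\item Run MHMC to $\TV \le \eps^4$; this changes $\log(1/\eps)$ only by a constant factor.
\item Conclude $\chi^2 \lesssim \eps^2$. $\KL$ then follows by monotonicity of R\'enyi divergences, and $\sqrt{\alpha}\,W_2$ by Talagrand's inequality.
\end{itemize}
This is the real reason a $\Renyi_q$ warm start with $q>2$ is needed. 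Lemma~\ref{lem:chen-gatmiry} itself only requires a $\chi^2$ warm start, so the acceptance-probability transfer you mention does not need $q>2$.
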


Previously,~\cite{chengatmiry23} established a high-accuracy sampling guarantee with dimension dependence $d^{1/4}$ for the Metropolized HMC algorithm from a warm start under the same assumptions. Our main contribution is to demonstrate that a warm start can be obtained in $d^{1/4}$ steps using a variant of HMC without the use of the Metropolis--Hastings filter. This result is stated below.

\begin{theorem}[Algorithmic warm start]\label{thm:main-warm}
    Under Assumptions~\ref{as:regularity},~\ref{ass:iso:slc}, and~\ref{as:higher-reg}, and assuming that $V$ is minimized at $0$,
    for all $q = O(1)$, the OBABCO algorithm (see \S\ref{ssec:obabco}) with appropriate parameters outputs a sample $X$ satisfying $\Renyi_q(\operatorname{law}(X) \mmid \bs\pi) \le 1$ using
    \begin{align*}
        N = \widetilde{O}\Bigl(\bigl(\kappa^{3/2} + \kappa^{1/2} \kapH^{1/2}\bigr)\, d^{1/4}\Bigr)\qquad\text{gradient evaluations}\,,
    \end{align*}
    where $\kappa \deq \beta/\alpha$ and $\kapH \deq \beta_\eff/\alpha^{3/2}$.
\end{theorem}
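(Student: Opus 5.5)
The plan is a three-part argument in the shifted composition framework of Theorem~\ref{thm:shifted_chain_rule}: (i) a contraction or regularization estimate for the ideal continuous-time process, (ii) a local one-step discretization error estimate in R\'enyi divergence for the OBABCO kernel, and (iii) a combination of the two, followed by bookkeeping of parameters.

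\textbf{Reference process and regularity.} Let $\bs P$ be the kernel of one OBABCO step, i.e.\ O, B, A, B, a corrective position step C, then O. Let $\bs P^\star$ be the corresponding ideal kernel: underdamped Langevin (or idealized HMC with partial refresh) run for time $h$, which leaves $\bs\pi$ invariant. Two facts about $\bs P^\star$ are needed.
\begin{enumerate}
\item Under Assumption~\ref{ass:iso:slc}, with friction $\gamma\asymp\sqrt\beta$, the synchronous coupling contracts in a twisted Euclidean norm at rate $e^{-c\alpha h/\sqrt\beta}$ per step.
\item A short-time Harnack / reverse transport inequality, proved by a shifted Girsanov coupling in the style of \cite{GuiWan12DegenBismut,scr1}. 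It states that for initial points $\bs z,\bs z'$, after $K\asymp\sqrt\beta/\alpha$ (or $\gamma^{-1}$-scale) time one has $\Renyi_q(\delta_{\bs z}\bs P^{\star K}\mmid\delta_{\bs z'}\bs P^{\star K})\lesssim q\,\|\bs z-\bs z'\|^2/T^3$ in the position coordinate, with the usual hypoelliptic scaling.
\end{enumerate}
Fact~2 converts the initial $\exp(O(d))$-cold start, taken to be $\cN(0,\beta^{-1}I)\otimes\cN(0,I)$, into an $O(\kappa d)$-type bound. Fact~1 then allows this bound to be washed out exponentially over $\widetilde O(\kappa)$ units of time.

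\textbf{Local error.} This is the main step. I would apply Theorem~\ref{thm:shifted_chain_rule} repeatedly, in the telescoping form of \cite{scr1,scr4}. Over $N$ steps, the R\'enyi divergence between $\mu\bs P^N$ and $\bs\pi=\bs\pi\bs P^{\star N}$ is controlled by the sum of two contributions:
\begin{enumerate}
\item local weak errors, measured by the one-step R\'enyi divergence between $\bs P$ and $\bs P^\star$ from the same point, which the Gaussian noise in the O step makes finite;
\item local strong errors, measured by the mean-square position/momentum deviation, which are shifted back and paid for through the Harnack bound.
\end{enumerate}
For plain OBABO the weak error is $O(h^2\sqrt d)$-ish, which is too large. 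The C step is designed, as in \cite{zhang2025analysis}, to cancel the leading $\nabla^2V\,\xi$ term, so that the residual involves $\nabla^3V[\xi,\xi]$. Lemma~\ref{lem:chaos} bounds this residual by $\beta_\eff(\sqrt d+\log(1/\delta))$ with high probability, and its mean-square contributes $h^{6}\beta_\eff^2 d$ per step to the R\'enyi divergence (after normalization by the noise variance $\gamma h$). The strong error contributes $h^4\beta^2 d$-type terms. I would handle the high-probability versus R\'enyi issue by the standard truncation and tail argument, using subgaussian moment bounds on iterates along the trajectory; these follow from strong convexity and $V$ being minimized at $0$.

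\textbf{Assembly.} Summing the local errors over $N\asymp\widetilde O(\kappa\sqrt\beta/\alpha\cdot h^{-1})$ steps, amplified by the $1/T^3$ regularization factor, gives a total error of order $\kappa^{O(1)}(\beta^{2}h^4 d+\beta_\eff^2h^6 d\,\beta^{-1/2}+\dots)$. Requiring this to be at most $1$ forces $h\asymp d^{-1/4}\min\{\kappa^{-1/2}\beta^{-1/2},(\beta^{1/2}/\beta_\eff)^{1/2}\cdots\}$. Multiplying out gives $N=\widetilde O((\kappa^{3/2}+\kappa^{1/2}\kapH^{1/2})d^{1/4})$. Finally, I would combine the three pieces with the weak triangle inequality of Proposition~\ref{prop:renyi}, paying only constant changes in $q$ since $q=O(1)$.

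I expect the main obstacle to be the local weak error with the corrective step. The difficulty is showing that the cancellation really yields a per-step R\'enyi error of order $\beta_\eff^2 h^5 d$, rather than $\beta^2h^3 d$, \emph{uniformly} along the trajectory started from a cold start. This requires both the chaos bound and a priori moment control of the iterates.
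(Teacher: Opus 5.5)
The proposal has a genuine gap at exactly the step you single out as the main obstacle, and your suggested remedy (sub-Gaussian moment control of the iterates plus truncation) cannot close it. The bound $\norm{\nabla^3V(X)[P,P]}\lesssim\beta_\eff(d^{1/2}+\log\frac1\delta)$ of Lemma~\ref{lem:chaos} requires $P$ to be a standard Gaussian independent of $X$. That fails along the chain, because the O step only partially refreshes the momentum (its noise has variance $\approx\gamma h\ll1$). Moment bounds alone give only $\norm{\nabla^3V(X)[P,P]}\le\beta_\eff\norm{P}^2\approx\beta_\eff d$, which forces $h\lesssim d^{-1/2}$.

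The paper moves the chaos bound off stationarity using the R\'enyi change of measure of Lemma~\ref{lem:high-prob-change}. This costs an additive $\Renyi_2(\bs\mu\mmid\bs\pi)^2$, which is harmless only when $\Renyi_2\lesssim d^{1/2}$; the Gaussian cold start has $\Renyi_2\asymp d\log\kappa$. The missing idea is the two-phase analysis:
\begin{itemize}
\item Phase~I uses only the coarse, chaos-free local error of Lemma~\ref{lem:errors}. Theorem~\ref{thm:coarse} shows this costs only $\beta^2h^2T\,(d+\Renyi_2(\bs\mu_0\mmid\bs\pi))/\gamma\asymp d^{1/2}$, already with $h\asymp d^{-1/4}$ (up to $\kappa$ and log factors). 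Combined with OHO convergence, one reaches $\Renyi_2\lesssim d^{1/2}$ after $\widetilde O(\kappa^{3/2}d^{1/4})$ steps.
\item Only then does Phase~II apply the refined bound of Theorem~\ref{thm:obabco-final-renyi}, whose $R_0^2$ term is now $O(d)$.
\end{itemize}

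Second, the mechanism you propose for turning local errors into a R\'enyi bound does not exist in R\'enyi. Telescoping Theorem~\ref{thm:shifted_chain_rule} with one-step weak errors plus mean-square strong errors paid through Harnack is the local error framework of \cite{scr3,scr4}, which is available only for KL. In R\'enyi the shifted composition rule takes an essential supremum over the coupling, while the strong errors here scale with $\norm p$, $\norm{\nabla V(x)}$ and $\nabla^3V[p,p]$ and are unbounded.

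The paper never routes discretization error through the Harnack inequality. It splits $\Renyi_q(\bs\mu_0\widehat{\bs P}_h^n\mmid\bs\pi)$ by the weak triangle inequality into two pieces:
\begin{itemize}
\item OHO versus $\bs\pi$, handled by Harnack, contraction, and the convexity principle with an independent coupling;
\item OBABCO versus OHO from the same initial law, handled at the path level by the discrete Girsanov bound of Lemma~\ref{lem:renyi_mg}. Here the one-step divergence is controlled via one-shot coupling (Lemma~\ref{lem:obabco_cross}) by $\frac{q}{\gamma h}\bigl(h^{-2}\norm{\text{position error}}^2+\norm{\text{momentum error}}^2\bigr)$ plus Jacobian terms.
\end{itemize}
This demands \emph{exponential} moments of $N$ times these quantities, not mean squares. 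Since $\E\exp(c\,\norm{\nabla^3V(X)[P,P]}^2)=\infty$, one needs the device $\E\exp(C\min\{A,B\})$, which interpolates between the coarse and refined error bounds with a tuned threshold $\delta$.

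Finally, this comparison only works with OHO as the reference kernel. OBABCO and underdamped Langevin have one-step noise covariances that differ at leading order (position variance $\gamma h^3$ versus $\tfrac23\gamma h^3$), so their one-step divergence is of order $d$. OBABCO and OHO share the O steps and differ only by deterministic integrator error. Moreover, the OHO Harnack inequality is not obtained by a Girsanov shift: it needs a one-shot coupling with Jacobian bounds for inverting $\Phi_h^\msx(\bar x_0,\cdot)$, upgraded by shifted composition.
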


\begin{remark}
    \cite{zhang2025analysis} recently showed that a different algorithm (a certain discretization of the underdamped Langevin dynamics) achieves $\KL(\operatorname{law}(X) \mmid \bs\pi) \leq 1$ in $\widetilde O((\kappa^{5/4} + \kappa^{1/2} \kappa_\eff^{1/2})\,d^{1/4})$ iterations.
    However, such KL guarantees do not provide a warm start, as $\Renyi_q$ guarantees are needed for $q > 1$.
    Moreover, the techniques of~\cite{zhang2025analysis} are not easily adapted to R\'enyi guarantees because they rely on the shifted-composition local error framework in~\cite{scr4}, which is currently only known for KL analysis.
    This paper therefore takes a different approach, in terms of both the algorithm and analysis.
\end{remark}

Finally, we note that the condition number dependence in Theorem~\ref{thm:main-slc} can be improved via the proximal sampler, as described below (formal statement in Corollary~\ref{cor:prox-iso}).

\subsection{Extension to other settings via the proximal sampler}

We first pause to provide some background on the proximal sampler~\cite{leeshentian2021rgo, chen2022improved}. For step size $h > 0$, augment the target distribution $\pi(x) \propto \exp(-V(x))$ to a joint distribution $\hat{\bs \pi}$ on $\R^d \times \R^d$ with density
\begin{align*}
    \hat{\bs \pi}(x, y) \propto \exp\Bigl(- V(x) - \frac{1}{2h}\, \norm{x-y}^2 \Bigr)\,.
\end{align*}
The proximal sampler algorithm alternates sampling between two conditional distributions:
\begin{itemize}
    \item \textit{Forward step:} sample $Y_k \mid X_k \sim \pi^{Y \mid X}(\cdot \mid X_k) = \mc N(X_k, hI)$.

    \item \textit{Backward step:} sample $X_{k+1} \mid Y_k \sim \pi^{X \mid Y}(\cdot \mid Y_k)$.
\end{itemize}
While the forward step is trivial to implement, the backward step is not. It is the computation bottleneck of the algorithm and is known as the \emph{restricted Gaussian oracle} (RGO).  

A key property of the proximal sampler is that it enables improving high-accuracy sampling algorithms such as ours in Theorem~\ref{thm:main-slc}. Algorithmically, the idea is to use the high-accuracy sampling algorithm to implement the RGO step in the proximal sampler. This reduction leads to two key benefits. First, it improves the condition number dependence of the high-accuracy sampler from $\kappa^p$ to $\kappa$, for any $p \geq 1$. Second, it extends high-accuracy sampling guarantees under strong log-concavity to more general isoperimetric assumptions (log-Sobolev or Poincar\'e inequalities) which capture a wide class of non-log-concave distributions. The basic idea behind both these improvements is that---regardless of the value of $\kappa$ and whether $\pi$ satisfies isoperimetry rather than strong log-concavity---the RGO is strongly log-concave with condition number $\kappa \asymp 1$ due to the choice of step size $h = \nfrac{1}{2\beta}$. Indeed, this choice of $h$ ensures that the RGO potential has strong convexity and smoothness parameters both of order $\beta$ (meanwhile the parameter $\beta_\eff$ remains unchanged, so $\kappa_\eff$ becomes $\beta_\eff/\beta^{3/2}$). 
We refer to~\cite[Appendix D]{AltChe24Warm} for a detailed discussion. 

Applying this proximal sampler reduction to our result in Theorem~\ref{thm:main-slc} yields the following corollary. Below we assume isoperimetric inequalities~\eqref{eq:lsi} or~\eqref{eq:pi} instead of strong convexity (see \S\ref{ssec:prelim-functional} for definitions), and we use the notation $\kappa \deq \beta/\alpha$ and $\kappa_\eff \deq \beta/\alpha^{3/2}$ where $1/\alpha$ is the associated isoperimetric constant. For simplicity, we assume access to a \emph{proximal oracle} for $V$ with step size $h > 0$, which computes $\argmin_{y\in\R^d} \{V(y) + \frac{1}{2h}\, \norm{x-y}^2\}$ given a query $x \in \R^d$. This assumption is not strictly necessary and can be replaced with a more laborious argument based on the computation of an approximate minimizer, along the lines of~\cite[Appendix D]{AltChe24Warm}. We omit these details for simplicity, as it does not dominate the computational cost. 

\begin{cor}[Isoperimetric targets via the proximal sampler]\label{cor:prox-iso}
    Suppose Assumptions~\ref{as:regularity},~\ref{ass:iso:lsi}, and~\ref{as:higher-reg} hold. Then, assuming we have access to a proximal oracle for $V$ with step size $\nfrac 1{2\beta}$, we can obtain a sample $X$ such that $\msf d(\law(X) \mmid \pi) \leq \varepsilon$ for     \begin{align*}
        \msf d \in \{\msf{TV}, \sqrt{\msf{KL}}, \sqrt{\chi^2}, \sqrt{\alpha}\, W_2 \}\,,
    \end{align*}
    using at most $N$ oracle queries, where
    \begin{align}\label{eq:lsi-complexity}
        N = \widetilde \Theta\Bigl((\kappa + \kappa_\eff^{2/3})\, d^{1/4} \log^3 \frac{\Renyi_2(\mu_0 \mmid \pi)}{\varepsilon^2}\Bigr)\,.
    \end{align} 
    If instead we assume~\ref{ass:iso:pi} instead of~\ref{ass:iso:lsi}, then the bound becomes
    \begin{align*}
        N = \widetilde \Theta\Bigl((\kappa + \kappa_\eff^{2/3})\,  d^{1/4} \log^3 \frac{\chi^2(\mu_0 \mmid \pi)}{\varepsilon^2}\Bigr)\,.
    \end{align*} 
\end{cor}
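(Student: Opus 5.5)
The plan is to reduce Corollary~\ref{cor:prox-iso} to Theorem~\ref{thm:main-slc} via the standard proximal-sampler reduction of~\cite{chen2022improved,AltChe24Warm}, in three ingredients: outer convergence of the proximal sampler, inner per-step cost of the RGO, and error propagation. The key point is that with step size $h = \nfrac{1}{2\beta}$ and Assumption~\ref{as:regularity}, the RGO target
\[
\pi^{X\mid Y=y}(x) \propto \exp\bigl(-V(x) - \beta\,\norm{x-y}^2\bigr)
\]
has potential whose Hessian lies between $\beta I$ and $3\beta I$. Its condition number is therefore $O(1)$. Its third derivative equals $\nabla^3 V$, so $\beta_\eff$ is unchanged and the effective $\kappa_\eff$ for the RGO is $\beta_\eff/\beta^{3/2}$. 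Using the proximal oracle, we compute the RGO minimizer $\argmin_x\{V(x)+\beta\norm{x-y}^2\}$ and translate it to $0$, which verifies the hypotheses of Theorem~\ref{thm:main-slc} for each RGO call.

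\emph{Outer loop.} Under~\ref{ass:iso:lsi} with constant $1/\alpha$, the exact proximal sampler contracts R\'enyi divergence: $\Renyi_q(\mu_k\mmid\pi)$ decays geometrically at rate $(1+\alpha h)^{-1}$ per step, after a short initial phase for $q>1$~\cite{chen2022improved}. It therefore suffices to take $K = \widetilde O(\kappa \log(\Renyi_2(\mu_0\mmid\pi)/\eps^2))$ outer iterations. Under~\ref{ass:iso:pi}, $\chi^2$ contracts at the same rate, and the logarithm becomes $\log(\chi^2(\mu_0\mmid\pi)/\eps^2)$; the lower-order R\'enyi orders are handled as in~\cite[Appendix D]{AltChe24Warm}.

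\emph{Inner loop.} Each RGO call runs the two-phase algorithm of Theorem~\ref{thm:main-slc} with accuracy $\eps' = \eps/\poly(K)$ in $\sqrt{\chi^2}$. Its cost is
\[
\widetilde\Theta\bigl((1 + (\beta_\eff/\beta^{3/2})^{1/2} + (\beta_\eff/\beta^{3/2})^{2/3})\,d^{1/4}\log^2(1/\eps')\bigr).
\]
To bound the middle term, we observe that $\kappa\,(\beta_\eff/\beta^{3/2})^{1/2} \le \kappa + \kappa^{1/2}\cdot$ lower-order terms. More precisely, we check that $\kappa\,\beta_\eff^{2/3}/\beta \lesssim \kappa_\eff^{2/3}$ when $\beta\ge\alpha$ (indeed $\beta_\eff^{2/3}/\alpha = \kappa_\eff^{2/3}$ since $\kappa_\eff^{2/3} = \beta_\eff^{2/3}/\alpha$), and we use Young's inequality to absorb the square-root term into $\kappa + \kappa_\eff^{2/3}$. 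Multiplying by $K$ and including the $\log^2$ from the inner solver gives the claimed $\log^3$ total.

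\emph{Main obstacle: error propagation.} Inexact RGO errors must accumulate only additively. We bound the final error by the triangle inequality in TV or $\chi^2$: coupling the inexact chain with the exact chain through data processing (Proposition~\ref{prop:renyi}) gives a total error of at most $K\eps'$. For $\chi^2$ and $\Renyi_q$ this additivity is delicate. We would follow~\cite[Appendix D]{AltChe24Warm}, using the weak triangle inequality at a doubled R\'enyi order and a high-probability event on which the inner warm-start requirement holds. This is the step we expect to need the most care; the other metrics then follow from $\chi^2$ via standard comparisons, with $W_2$ controlled through Talagrand's inequality under LSI or PI.
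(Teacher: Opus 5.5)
Your proposal is correct and follows essentially the same route as the paper. You run the proximal sampler with $h=\nfrac{1}{2\beta}$ for $\asymp \kappa\log(\cdot/\varepsilon^2)$ outer steps. You implement each RGO, which has condition number $O(1)$, $\kappa_\eff$ replaced by $\beta_\eff/\beta^{3/2}$, and its mode located via the proximal oracle, with Theorem~\ref{thm:main-slc} to polynomially small accuracy. You defer the inexact-RGO error propagation to \cite[Appendix D]{AltChe24Warm}, exactly as the paper does.

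Two small cleanups are needed. Your intermediate inequality for the cross term is garbled: it is exactly $\kappa\,(\beta_\eff/\beta^{3/2})^{1/2} = \kappa^{1/4}\kappa_\eff^{1/2} \lesssim \kappa+\kappa_\eff^{2/3}$ by Young's inequality. Under the Poincar\'e assumption, the $W_2$ bound should come from the transport--variance inequality $W_2^2 \lesssim \chi^2/\alpha$, not from Talagrand's inequality, which need not hold there.
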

\begin{remark}
    Via the Bakry--\'Emery criterion, Assumption~\ref{ass:iso:slc} implies Assumption~\ref{ass:iso:lsi} with the same constant. Thus~\eqref{eq:lsi-complexity} improves upon the condition number dependence in Theorem~\ref{thm:main-slc}. 
\end{remark}

Finally, if instead of a Poincar\'e inequality we assume weak convexity of the potential (Assumption~\ref{ass:iso:slc} with $\alpha = 0$), then we can obtain a bound reminiscent of~\cite[Theorem 5.6]{AltChe24Warm} but with an additional improvement in the leading dimensional factor.

\begin{cor}[Weakly convex targets via the proximal sampler]\label{cor:prox-wc}
    Suppose Assumptions~\ref{as:regularity},~\ref{ass:iso:slc}, and~\ref{as:higher-reg} hold with $\alpha = 0$. Then, assuming we have access to a proximal oracle for $V$ with step size $\nfrac 1 {2\beta}$, we can obtain a sample $X$ such that $\msf d(\law(X) \mmid \pi) \leq \varepsilon$ for     \begin{align*}
        \msf d \in \{\msf{TV}, \sqrt{\msf{KL}}\}\,,
    \end{align*}
    using at most $N$ oracle queries, where
    \begin{align}\label{eq:wlc-complexity}
        N = \widetilde \Theta\Bigl((\beta + \beta_\eff^{2/3})\, \frac{d^{1/4}\,W_2^2(\mu_0, \pi)}{\varepsilon^2}\Bigr)\,.
    \end{align}
\end{cor}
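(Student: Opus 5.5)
The plan is to reduce Corollary~\ref{cor:prox-wc} to the strongly log-concave case, applying Theorem~\ref{thm:main-slc} inside each restricted Gaussian oracle (RGO) call of the proximal sampler. The argument has three parts.

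\textbf{Step 1: outer-loop iteration count.} Run the proximal sampler with step size $h = \nfrac{1}{2\beta}$ and initialization $\mu_0$. For weakly convex $\pi$ (i.e.\ $\alpha = 0$), I will invoke the known convergence of the ideal proximal sampler~\cite{chen2022improved}: after $k$ iterations,
\begin{align*}
    \KL(\mu_0 P^k \mmid \pi) \lesssim \frac{W_2^2(\mu_0,\pi)}{k h}\,.
\end{align*}
Choosing $k \asymp W_2^2(\mu_0,\pi)/(h\varepsilon^2) \asymp \beta\, W_2^2(\mu_0,\pi)/\varepsilon^2$ gives $\sqrt{\KL} \le \varepsilon/2$. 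By Pinsker, the same choice also gives $\TV \le \varepsilon/2$.

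\textbf{Step 2: implementing each RGO call.} Each backward step must sample from
\begin{align*}
    \pi^{X\mid Y=y} \propto \exp\Bigl(-V(x) - \beta\,\norm{x-y}^2\Bigr)\,.
\end{align*}
Since $0 \preceq \nabla^2 V \preceq \beta I$, this RGO potential is $2\beta$-strongly convex and $3\beta$-smooth, so its condition number is $\kappa \le 3/2$. Its Hessian has the same Frobenius-Lipschitz constant $\beta_\eff$, so $\kappa_\eff = \beta_\eff/(2\beta)^{3/2}$. The proximal oracle supplies the minimizer, so after translation the minimizer-at-zero hypothesis of Theorem~\ref{thm:main-slc} holds. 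Applying Theorem~\ref{thm:main-slc} with TV accuracy $\delta$ then costs
\begin{align*}
    \widetilde O\Bigl(\bigl(1 + \kappa_\eff^{1/2} + \kappa_\eff^{2/3}\bigr)\, d^{1/4}\log^2(1/\delta)\Bigr)
\end{align*}
queries per call.

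\textbf{Step 3: error accumulation and total cost.} I will use a standard TV coupling argument: if each inexact RGO call is within $\delta$ in TV of the exact conditional, then by data processing the errors add, and the final law is within $k\delta$ in TV of the ideal chain. Setting $\delta = \varepsilon/(2k)$ keeps this within $\varepsilon/2$ and only contributes logarithmic factors.

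Passing to $\sqrt{\KL}$ is less direct, because KL does not accumulate additively under inexact kernels. Here I would follow~\cite[Appendix D]{AltChe24Warm}. Theorem~\ref{thm:main-slc} controls $\chi^2$ for each RGO call, so the inexact kernel has a bounded density ratio with respect to the exact one outside an event of small probability. This allows a chain rule of the form
\begin{align*}
    \KL(\hat\mu_k \mmid \pi) \le \KL(\mu_k \mmid \pi) + \text{(accumulated per-step $\chi^2$ error)}\,,
\end{align*}
where the per-step term is driven down by taking $\delta$ polynomially small.

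Multiplying the per-call cost by $k$ gives a total of
\begin{align*}
    \widetilde O\Bigl(\beta\,\bigl(1 + \kappa_\eff^{2/3}\bigr)\, d^{1/4}\, W_2^2/\varepsilon^2\Bigr)\,.
\end{align*}
To express this in the stated form, write $\beta\,\kappa_\eff^{2/3} \asymp \beta \cdot \beta_\eff^{2/3}/\beta = \beta_\eff^{2/3}$. The $\kappa_\eff^{1/2}$ term is dominated by $1 + \kappa_\eff^{2/3}$. This yields~\eqref{eq:wlc-complexity}.

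\textbf{Main obstacle.} I expect the main difficulty to be the error propagation in Step 3, specifically obtaining the $\sqrt{\KL}$ guarantee from inexact RGO calls without losing more than logarithmic factors. The TV case is immediate. The outer-loop rate in Step 1 should be citable from~\cite{chen2022improved}, though its constants under the exact form of weak convexity assumed here need checking.
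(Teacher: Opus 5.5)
Your proposal matches the paper's proof. Both run the proximal sampler with $h = \nfrac{1}{2\beta}$ for $N_{\operatorname{prox}} \asymp \beta\, W_2^2(\mu_0,\pi)/\varepsilon^2$ iterations. Both implement each RGO (condition number $O(1)$, $\kappa_\eff \asymp \beta_\eff/\beta^{3/2}$, mode supplied by the proximal oracle) via Theorem~\ref{thm:main-slc} at cost $\widetilde O((1 + \beta_\eff^{1/2}/\beta^{3/4} + \beta_\eff^{2/3}/\beta)\, d^{1/4})$, then multiply and absorb the middle term by Young's inequality.

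One caveat concerns the $\sqrt{\KL}$ error propagation. The paper does not derive it; it takes it from the proof of \cite[Theorem 5.6]{AltChe24Warm}, which prescribes a polynomially small $\varepsilon_{\operatorname{RGO}}$ and therefore costs only logarithmic factors. You should cite that result directly rather than your heuristic $\KL(\widehat\mu_k \mmid \pi) \le \KL(\mu_k \mmid \pi) + (\text{accumulated }\chi^2\text{ error})$, which does not hold in general because KL admits no such triangle-type inequality.
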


\subsection{Proof of high-accuracy sampling guarantees using the warm start}

In this section we show how the claimed high-accuracy sampling guarantees (Theorem~\ref{thm:main-slc}, Corollary~\ref{cor:prox-iso}, and Corollary~\ref{cor:prox-wc}) follow from our algorithmic warm start result (Theorem~\ref{thm:main-warm}). The rest of the paper is then devoted to proving the latter result.

\subsubsection{Proof of Theorem~\ref{thm:main-slc}}

It is helpful to recall two preliminary facts about MHMC. We begin with the main result of~\cite{chengatmiry23} which bounds the complexity of MHMC from a warm start.\footnote{The statement of Lemma~\ref{lem:chen-gatmiry} here differs from~\cite{chengatmiry23} in two minor ways. First, the result in~\cite{chengatmiry23} is written in terms of the squared Cheeger constant of $\pi$, but that is bounded by $\alpha$ under strong log-concavity of $\pi$~\cite{milman2009role}; we state this simplified version using $\alpha$ as it suffices our purposes. Second, the result in~\cite{chengatmiry23} is stated for a $\Renyi_\infty$ warm start, but that is needed only in~\cite[Lemma 1]{chengatmiry23} to inform the choice of $s$ in the definition of $s$-conductance, and the argument of~\cite[\S 7]{chewibook} or~\cite[Theorem D.3]{AltChe24Warm} allows this to be replaced by a $\chi^2$ warm start.}

\begin{lemma}[Complexity of MHMC under a warm start]\label{lem:chen-gatmiry}
    Let $\pi$ satisfy Assumptions~\ref{as:regularity},~\ref{ass:iso:slc}, and~\ref{as:higher-reg}. Given a warm start $\bar{\mu}_0$ with $\chi^2(\bar{\mu}_0 \mmid \pi) = O(1)$, MHMC with step size $h$ and $N_1$ leapfrog steps obtains $\msf{TV}(\bar{\mu}_0 (\bs P_{N_1}^{\msf{MHMC}})^{N_2} \mmid \pi) \leq \varepsilon$ if we choose parameters
    \begin{align*}
        h^{-1} = \widetilde \Theta\Bigl((\beta^{1/2} + \beta_\eff^{1/3})\, d^{1/4} \log \frac{1}{\varepsilon}\Bigr)\,, \quad N_1 = \widetilde \Theta\Bigl( \frac{1}{(\beta^{1/2} + \beta_\eff^{1/3})\, h}\Bigr)\,, \quad N_2 = \widetilde \Theta\Bigl( (\kappa + \kappa_\eff^{2/3}) \log \frac{1}{\varepsilon}\Bigr)\,.
    \end{align*}
    The total oracle complexity is
    \begin{align*}
        N_1 \cdot N_2 \lesssim (\kappa + \kappa_\eff^{2/3})\, d^{1/4} \log^{2} \frac{1}{\varepsilon}\,.
    \end{align*}
\end{lemma}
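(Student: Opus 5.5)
The plan is to reduce the statement to the main theorem of~\cite{chengatmiry23} and make only the two modifications described in the footnote. That theorem is proved with the standard $s$-conductance machinery, so we re-run that machinery keeping track of exactly where each assumption enters.

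\textbf{Step 1 (one-step overlap).} Under Assumptions~\ref{as:regularity} and~\ref{as:higher-reg}, \cite{chengatmiry23} give a one-step overlap bound for the $\frac12$-lazy MHMC kernel $\bs P^{\msf{MHMC}}_{N_1}$. It says the following. There is a set $\Omega_s \subseteq \R^d$ with $\pi(\Omega_s)\ge 1-s$ such that for all $x,y\in\Omega_s$ with $\norm{x-y}\lesssim h N_1$ we have
\[
\TV(\delta_x \bs P^{\msf{MHMC}}_{N_1},\, \delta_y \bs P^{\msf{MHMC}}_{N_1})\le 1-c .
\]
This holds for the step size $h^{-1}\asymp(\beta^{1/2}+\beta_\eff^{1/3})\,d^{1/4}\,\polylog(1/s)$ and integration time $hN_1\asymp (\beta^{1/2}+\beta_\eff^{1/3})^{-1}$. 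We use this input as a black box. Its only dependence on the warm start is through the choice of $s$, and it is there that the Frobenius-Lipschitz Hessian (via Gaussian chaos bounds like Lemma~\ref{lem:chaos}) produces the $d^{1/4}$ scaling.

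\textbf{Step 2 (conductance via isoperimetry).} Combine the overlap bound with the isoperimetric inequality for $\pi$ restricted to $\Omega_s$. The Cheeger constant $\psi$ of $\pi$ satisfies $\psi^2\gtrsim\alpha$ under $\alpha$-strong log-concavity~\cite{milman2009role}. This gives an $s$-conductance lower bound
\[
\Phi_s\gtrsim \sqrt{\alpha}\, hN_1 \asymp \bigl(\kappa+\kappa_\eff^{2/3}\bigr)^{-1/2}.
\]
By the Lov\'asz--Simonovits theorem, $N_2\asymp \Phi_s^{-2}\log(\cdot)\asymp(\kappa+\kappa_\eff^{2/3})\log\frac1\eps$ outer iterations then suffice, provided the initial measure is warm relative to $s$. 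This recovers the stated $N_2$, and the product $N_1N_2$ is the claimed oracle complexity up to the absorbed polylogarithmic factors.

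\textbf{Step 3 (main obstacle: $\chi^2$ rather than $\Renyi_\infty$ warm start).} This is the step requiring care. Lov\'asz--Simonovits with a $\Renyi_\infty$ warmness $M$ yields $\TV\le Ms+M\exp(-\Phi_s^2 N_2/2)$. To avoid this, we follow~\cite[Theorem D.3]{AltChe24Warm} (or~\cite[\S 7]{chewibook}), which works in the $L^2(\pi)$ formulation of $s$-conductance: the spectral-profile / restricted-Dirichlet-form bound controls $\chi^2(\bar\mu_0\bs P^{N_2}\mmid\pi)$ up to an additive term of size $\sqrt{s}\,(1+\chi^2(\bar\mu_0\mmid\pi))^{1/2}$. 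The latter can also be seen by Cauchy--Schwarz, since $\bar\mu_0(\Omega_s^c)\le \sqrt{s\,(1+\chi^2(\bar\mu_0\mmid\pi))}$. With $\chi^2(\bar\mu_0\mmid\pi)=O(1)$ it therefore suffices to take $s\asymp\eps^2$, which only enters $h$ through $\polylog(1/s)=\polylog(1/\eps)$. This accounts for the extra $\log\frac1\eps$ factor in $h^{-1}$.

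The point to verify is that the overlap set $\Omega_s$ from~\cite{chengatmiry23} is defined purely in terms of $\pi$, and not in terms of the initialization. Then the substitution in Step 3 leaves Steps 1--2 untouched, and combining the three steps gives the lemma.
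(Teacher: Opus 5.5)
Your proposal matches the paper's justification, which appears only in the footnote to the lemma: invoke \cite{chengatmiry23} as a black box, lower-bound the squared Cheeger constant by $\alpha$ via \cite{milman2009role}, and replace the $\Renyi_\infty$ warm start by a $\chi^2$ one, using that warmness enters only through the choice of $s$ and that $\bar\mu_0(\Omega_s^c) \le \sqrt{s\,(1+\chi^2(\bar\mu_0 \mmid \pi))}$ by Cauchy--Schwarz. One small correction to Step 3: the $s$-conductance argument controls $\TV$ of the output, with additive error $\sqrt{s\,(1+\chi^2(\bar\mu_0\mmid\pi))}$, rather than its $\chi^2$ divergence, but a $\TV$ bound is all the lemma asserts.
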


We also recall the following lemma from~\cite[Lemma 5.2]{AltChe24Warm} which enables boosting a mixing guarantee from TV to $\chi^2$, when given a R\'enyi warm start.
\begin{lemma}[{Boosting $\msf{TV}$-to-$\chi^2$ mixing} given R\'enyi warm start]\label{lem:boosting}
    Let $P$ be any Markov chain with stationary measure $\pi$. If $\mu_N =\mu_0 P^N$, then
    \begin{align*}
        \chi^2(\mu_N \mmid \pi) \leq \sqrt{\msf{TV}(\mu_N, \pi) \cdot \bigl(\exp(2\Renyi_3(\mu_0 \mmid \pi)) + 1 \bigr)}\,.
    \end{align*}
\end{lemma}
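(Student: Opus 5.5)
The statement is Lemma~\ref{lem:boosting}: for any Markov chain $P$ with stationary measure $\pi$ and $\mu_N = \mu_0 P^N$, we want $\chi^2(\mu_N \mmid \pi) \le \sqrt{\msf{TV}(\mu_N,\pi)\,(\exp(2\Renyi_3(\mu_0\mmid\pi))+1)}$. My plan is a single Cauchy--Schwarz step that splits $\chi^2$ into a TV-controlled factor and a third-moment factor, followed by data processing to move the third moment back to time $0$.

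\textbf{Step 1: Cauchy--Schwarz.} Write $f_N \deq \D\mu_N/\D\pi$; we may assume $\Renyi_3(\mu_0\mmid\pi)<\infty$, so $f_N$ exists. Then
\begin{align*}
\chi^2(\mu_N\mmid\pi) = \int (f_N-1)^2\,\D\pi = \int \abs{f_N-1}^{1/2}\,\abs{f_N-1}^{3/2}\,\D\pi \le \Bigl(\int\abs{f_N-1}\,\D\pi\Bigr)^{1/2}\Bigl(\int \abs{f_N-1}^{3}\,\D\pi\Bigr)^{1/2}.
\end{align*}
The first factor is $\int\abs{f_N-1}\,\D\pi = 2\,\msf{TV}(\mu_N,\pi)$.

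\textbf{Step 2: bound the cubic term.} Since $f_N \ge 0$, the pointwise inequality $\abs{f-1}^3 \le f^3 + 1$ holds. To see this, note that for $f\ge 1$ we have $(f-1)^3\le f^3$, and for $0\le f\le 1$ we have $(1-f)^3\le 1$. Hence
\begin{align*}
\int\abs{f_N-1}^3\,\D\pi \le \int f_N^3\,\D\pi + 1 = \exp\bigl(2\Renyi_3(\mu_N\mmid\pi)\bigr)+1,
\end{align*}
using Definition~\ref{def:renyi} with $q=3$. Stationarity gives $\pi P^N=\pi$, so the data-processing inequality in Proposition~\ref{prop:renyi} yields $\Renyi_3(\mu_N\mmid\pi)\le\Renyi_3(\mu_0\mmid\pi)$.

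\textbf{Step 3: combine and reconcile constants.} Steps 1 and 2 together give
\begin{align*}
\chi^2(\mu_N\mmid\pi)\le\sqrt{2\,\msf{TV}(\mu_N,\pi)\,\bigl(\exp(2\Renyi_3(\mu_0\mmid\pi))+1\bigr)},
\end{align*}
which has an extra factor $\sqrt 2$ compared to the statement. I would first check which TV normalization the paper uses. If TV is defined as $\int\abs{f-1}\,\D\pi$ without the factor $\tfrac12$, the factor disappears. Otherwise I would tighten Step 2 by using $\int\abs{f-1}^3\,\D\pi \le \int f^3\,\D\pi$, or a similarly sharper bound exploiting $\int f\,\D\pi=1$.

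The main obstacle is this constant bookkeeping; the structure of the argument is routine. Since the lemma is only invoked with $O(1)$ warm starts, a constant factor is harmless in any case.
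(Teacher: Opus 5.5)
Your overall structure (Cauchy--Schwarz to split off a TV factor, then data processing to replace $\mu_N$ by $\mu_0$ in the R\'enyi term) is sound, and it gives everything the paper actually uses. The lemma enters the proof of Theorem~\ref{thm:main-slc} only in the form $\chi^2(\hat\mu\mmid\pi)\lesssim\sqrt{\msf{TV}(\hat\mu,\pi)}$. The paper itself gives no proof; it imports the lemma from the earlier Altschuler--Chewi warm-start paper.

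As a proof of the inequality \emph{as stated}, however, there is a gap, and your proposed repair fails. With the standard normalization $\msf{TV}(\mu,\pi)=\sup_A\abs{\mu(A)-\pi(A)}=\frac12\int\abs{f-1}\,\D\pi$, your argument gives the bound only with $2\,\msf{TV}$, i.e.\ a factor $\sqrt 2$ too large. Appealing to a different TV normalization is not a proof.

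Your fallback $\int\abs{f_N-1}^3\,\D\pi\le\int f_N^3\,\D\pi$ is true, but it yields $\chi^2\le\sqrt{2\,\msf{TV}\,e^{2\Renyi_3}}$. Since $e^{2\Renyi_3}\ge1$, we have $2e^{2\Renyi_3}\ge e^{2\Renyi_3}+1$, so this is weaker than the target.

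In fact no refinement of Step~2 can close the gap, because the loss occurs in Step~1. Take $f_N=M$ on a set of $\pi$-mass $1/M$ and $f_N=0$ elsewhere. Then $2\,\msf{TV}\int\abs{f_N-1}^3\,\D\pi\sim 2M^2$, while $\msf{TV}\,(\int f_N^3\,\D\pi+1)\sim M^2$ as $M\to\infty$. The symmetric Cauchy--Schwarz split is simply too lossy.

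The missing idea is to treat the two signs of $g\deq f_N-1$ asymmetrically.
\begin{itemize}
\item Since $\int g\,\D\pi=0$, we have $\int g_+\,\D\pi=\int g_-\,\D\pi=\msf{TV}(\mu_N,\pi)\eqqcolon A$.
\item Since $f_N\ge0$, we have $0\le g_-\le1$.
\item Write $B_\pm\deq\int g_\pm^2\,\D\pi$ and $C_\pm\deq\int g_\pm^3\,\D\pi$.
\item Cauchy--Schwarz applied to the positive part alone gives $B_+^2\le A\,C_+$.
\item The bound $g_-\le1$ gives $C_-\le B_-\le A$.
\end{itemize}
Hence
\begin{align*}
\chi^2(\mu_N\mmid\pi)^2=(B_++B_-)^2\le AC_++2AB_++AB_-\le A\,(3B_++3B_-+C_+-C_-)=A\,\Bigl(\int f_N^3\,\D\pi-1\Bigr)\,.
\end{align*}
The second inequality uses $C_-\le B_-$. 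The final equality is the expansion $\int(1+g)^3\,\D\pi=1+3\int g^2\,\D\pi+\int g^3\,\D\pi$, using $\int g\,\D\pi=0$.

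Your data-processing step, $\int f_N^3\,\D\pi=e^{2\Renyi_3(\mu_N\mmid\pi)}\le e^{2\Renyi_3(\mu_0\mmid\pi)}$, then gives the lemma with the stated constant. It even holds with $-1$ in place of $+1$.
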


\begin{proof}[Proof of Theorem~\ref{thm:main-slc}]
    First, apply Theorem~\ref{thm:main-warm} to obtain a warm start $\bar{\mu}_0$ satisfying $\Ren_3(\bar{\mu}_0 \mmid \pi) \leq 1$. This step has complexity
    \begin{align*}
        N_{\msf{warm}} = \widetilde O \Bigl(\bigl(\kappa^{3/2} + \kappa^{1/2} \kappa_{\eff}^{1/2}\bigr)\, d^{1/4} \Bigr)\,.
    \end{align*}
    Second, run MHMC from this warm start to obtain a sample from a distribution $\hat{\mu}$ satisfying $\TV(\hat{\mu},\pi) \leq \eps^4$.
    Since the R\'enyi warm start $\Ren_3(\bar{\mu}_0 \mmid \pi) \leq 1$ implies a $\chi^2$ warm start $\chi^2(\bar{\mu}_0 \mmid \pi) \lesssim 1$ by the monotonicity in Proposition~\ref{prop:renyi}, we can apply Lemma~\ref{lem:chen-gatmiry}. Thus this step has complexity
    \begin{align*}
        N_{\msf{MHMC}} = \widetilde O \Bigl(\bigl(\kappa + \kappa_\eff^{2/3}\bigr)\, d^{1/4} \log^2 \frac{1}{\varepsilon} \Bigr)\,.
    \end{align*} 
    The sum $N_{\msf{warm}} + N_{\msf{MHMC}}$ gives the desired complexity.

    Finally, we upgrade the mixing guarantee from $\TV$ to stronger metrics. By Lemma~\ref{lem:boosting}, $\chi^2(\hat\mu \mmid \pi) \lesssim \sqrt{\TV(\hat{\mu},\pi)} \leq \eps^2$. This implies the other guarantees by standard comparison inequalities: it implies the KL guarantee by monotonicity of the R\'enyi divergence as $q \searrow 1$ (Proposition~\ref{prop:renyi}) and the Wasserstein guarantee by Talagrand's $T_2$ inequality (see, e.g.,~\cite[Chapter 1]{chewibook}).
\end{proof}

\subsubsection{Proof of Corollaries~\ref{cor:prox-iso} and~\ref{cor:prox-wc}}

We make use of the following standard lemma about the R\'enyi divergence at a Gaussian ``cold start'' initialization (for a proof see e.g.,~\cite[\S 1.5]{chewibook}). The assumption $\nabla V(0) = 0$ is by a translation, see also the discussion preceding Theorem~\ref{thm:main-slc}.

\begin{lemma}[Initialization bound]\label{lem:initialization}
    Suppose $\pi \propto \exp(-V)$ satisfies $0 \preceq \alpha I \preceq \nabla^2 V \preceq \beta I$ and $\nabla V(0) = 0$. Then the Gaussian initialization $\widehat \mu_0 = \mc N(0, \beta^{-1} I)$ satisfies
    \begin{align*}
        \Renyi_\infty(\widehat \mu_0 \mmid \pi) \leq \frac{d}{2} \log \kappa\,.
    \end{align*}
\end{lemma}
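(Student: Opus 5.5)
The plan is to bound the $\Renyi_\infty$ divergence directly as the log of the supremum of the density ratio, and to control that ratio by pointwise two-sided Taylor bounds on $V$. By definition,
\begin{align*}
    \Renyi_\infty(\widehat\mu_0 \mmid \pi) = \log \sup_{x\in\R^d} \frac{\widehat\mu_0(x)}{\pi(x)}\,,
\end{align*}
so it suffices to show that $\widehat\mu_0(x)/\pi(x) \le \kappa^{d/2}$ for every $x$. Write
\begin{align*}
    \widehat\mu_0(x) = \Bigl(\frac{\beta}{2\pi}\Bigr)^{d/2} \exp\Bigl(-\frac{\beta}{2}\,\norm{x}^2\Bigr)\,, \qquad \pi(x) = \frac{\exp(-V(x))}{Z}\,, \qquad Z \deq \int \exp(-V)\,.
\end{align*}
We may normalize $V(0)=0$ without loss of generality, since adding a constant to $V$ changes neither $\pi$ nor the ratio.

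For the pointwise bound on $V$, use $\nabla V(0)=0$ and $\nabla^2 V\preceq \beta I$; Taylor's theorem with integral remainder gives $V(x)\le \frac{\beta}{2}\norm{x}^2$. Hence
\begin{align*}
    \exp\Bigl(-\frac{\beta}{2}\,\norm{x}^2\Bigr)\, e^{V(x)} \le 1\,.
\end{align*}

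For the normalizing constant, use $\nabla^2 V\succeq \alpha I$ in the same way to get $V(x)\ge \frac{\alpha}{2}\norm{x}^2$. Therefore
\begin{align*}
    Z \le \int \exp\Bigl(-\frac{\alpha}{2}\,\norm{x}^2\Bigr)\,\D x = \Bigl(\frac{2\pi}{\alpha}\Bigr)^{d/2}\,.
\end{align*}

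Combining the two bounds,
\begin{align*}
    \frac{\widehat\mu_0(x)}{\pi(x)} = \Bigl(\frac{\beta}{2\pi}\Bigr)^{d/2}\, Z\, \exp\Bigl(V(x) - \frac{\beta}{2}\,\norm{x}^2\Bigr) \le \Bigl(\frac{\beta}{2\pi}\Bigr)^{d/2}\Bigl(\frac{2\pi}{\alpha}\Bigr)^{d/2} = \kappa^{d/2}\,.
\end{align*}
Taking the logarithm gives $\Renyi_\infty(\widehat\mu_0 \mmid \pi) \le \frac{d}{2}\log\kappa$, as claimed.

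There is no real obstacle in this argument. The only points needing care are to apply the upper Hessian bound to the pointwise term and the lower Hessian bound to $Z$ (rather than the other way around), and to note that $\alpha>0$ is needed for $Z$ to be finite and for $\kappa$ to be defined.
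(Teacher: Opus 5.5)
Your proof is correct, and it is essentially the standard argument. The paper gives no proof of its own and instead refers to the textbook treatment (Chewi's book, \S 1.5), which proceeds the same way: it bounds the density ratio pointwise, using $V(x) - V(0) \le \frac{\beta}{2}\lVert x\rVert^2$ for the exponential factor and $V(x) - V(0) \ge \frac{\alpha}{2}\lVert x\rVert^2$ to bound the normalizing constant by $(2\pi/\alpha)^{d/2}$.
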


\begin{proof}[Proof of Corollary~\ref{cor:prox-iso}]
    We adapt the proof in~\cite[Appendix D]{AltChe24Warm} by replacing the high-accuracy sampler there (based on MALA) with the improved high-accuracy sampler developed here in Theorem~\ref{thm:main-slc} (based on HMC). Below we reserve $h \asymp \nfrac{1}{\beta}$ for the step size of the proximal sampler. 

    \paragraph{LSI case.} By the proof of~\cite[Appendix D.4]{AltChe24Warm}, it suffices to run the proximal sampler with step size $h = \nfrac{1}{2\beta}$ for 
    \begin{align*}
        N_{\operatorname{prox}} \asymp \kappa \log \frac{\Renyi_2(\mu_0^X \mmid \pi^X)}{\varepsilon^2}
    \end{align*}
    iterations and perform each RGO sampling step approximately to accuracy 
    \begin{align*}
        \varepsilon^2_{\operatorname{RGO}} \asymp \frac{\varepsilon^2}{\kappa}\,.
    \end{align*}
    The bottleneck of this proximal sampler algorithm is the implementation of the approximate RGO. We do this using the improved high-accuracy sampling algorithm in Theorem~\ref{thm:main-slc}. Note that because of our step size choice, this RGO distribution $\pi^{X \mid Y = Y_n}$ is strongly log-concave with log-concavity bounded below by $\Omega(\beta)$ and condition number $\Theta(1)$. Moreover by Lemma~\ref{lem:initialization}, given access to the proximal oracle for $V$ (which allows us to locate the mode of the RGO), the initial R\'enyi divergence is $O(d)$. Thus Theorem~\ref{thm:main-slc} shows that the complexity of each approximate RGO step is
    \begin{align}\label{eq:inner-complexity}
        N = \widetilde \Theta \Bigl(\Bigl(1 + \frac{\beta_\eff^{1/2}}{\beta^{3/4}} + \frac{\beta_{\eff}^{2/3}}{\beta} \Bigr)\,  d^{1/4} \log^2 \frac{1}{\varepsilon_{\operatorname{RGO}}}\Bigr)\,.
    \end{align}
    To conclude, multiply $N \cdot N_{\operatorname{prox}}$, and simplify by noting that $\kappa^{1/4} \kappa_\eff^{1/2} \lesssim \kappa + \kappa_\eff^{2/3}$ by Young's inequality.
    
    \paragraph{Poincar\'e case.} 
     By the proof of~\cite[Appendix D.4]{AltChe24Warm}, it suffices to run the proximal sampler with the same step size $h$ and adjusted parameters
    \begin{align*}
        N_{\operatorname{prox}} \asymp \kappa \log \frac{\chi^2(\mu_0^X \mmid \pi^X)}{\varepsilon^2}\,, \qquad \varepsilon_{\operatorname{RGO}}^2 \asymp \frac{\varepsilon^2}{\kappa}\,.
    \end{align*}
    The rest of the argument is identical.
\end{proof}

\begin{proof}[Proof of Corollary~\ref{cor:prox-wc}]
    In this weakly convex setting, the proof of~\cite[Theorem 5.6]{AltChe24Warm} shows that it suffices to run the proximal sampler with the same step size $h$ and adjusted parameters
    \begin{align*}
        N_{\operatorname{prox}} \asymp \frac{\beta W_2^2(\mu_0^X, \pi^X)}{\varepsilon^2}\,, \qquad \varepsilon^2_{\operatorname{RGO}} \lesssim \min \Bigl\{\varepsilon^4, \frac{\varepsilon^4}{\beta^2 W_2^2(\mu_0^X, \pi)}, \frac{\varepsilon^2}{\beta^{3/2} W_2^{3/2}(\mu_0^X, \pi)}, \frac{1}{\beta W_2(\mu_0^X, \pi)}\Bigr\}\,.
    \end{align*}
    The rest of the argument is identical.
\end{proof}
        \section{Technical overview and road map}\label{sec:overview}

In this section, we give a technical overview of the algorithmic warm start result (Theorem~\ref{thm:main-warm}) and a road map for the remainder of this paper. Our algorithmic warm start is produced via a variant of the OBABO scheme for unadjusted HMC (see \S\ref{sec:hmc_bg} for background), and our goal is to analyze its convergence in R\'enyi divergence.
Toward that end, our strategy is to first introduce an idealized version of these dynamics, called the OHO dynamics (described in \S\ref{ssec:oho_overview}).
Then we control the discretization error between OBABO and OHO (described in \S\ref{ssec:obabco_overview}).

\subsection{OHO dynamics (Section~\ref{sec:oho})}\label{ssec:oho_overview}

A key conceptual insight in our work is to introduce a process which we call the OHO dynamics.
Here, the ``O'' steps refer to the partial momentum refreshment steps in equation~\eqref{eq:O}, and the ``H'' step refers to integration of the ideal Hamiltonian flow for time $h$:
\begin{align}\tag{H}\label{eq:H}
    (x, p) \mapsto (x^{\msf H}, p^{\msf H}) \deq \Phi_h(x, p)\,.
\end{align}
Since the H and O steps individually leave invariant $\bs\pi$, it follows that $\bs \pi$ is also the invariant measure of the OHO dynamics.
As $h\searrow 0$ for fixed $T > 0$, the iterated process $(\text{OHO})^{\lfloor T/h\rfloor}$ is expected to converge, in a suitable sense, to the underdamped Langevin SDE run for time $T$ (see~\cite[\S 5.3]{chewibook}).

Our first goal is to establish the exponential decay of the R\'enyi divergence along these idealized dynamics. Since these dynamics are expected to behave similarly to the underdamped Langevin diffusion, we first briefly review the literature for the latter.
The importance of exponential convergence results in R\'enyi for algorithmic warm starts was identified in~\cite{AltChe24Warm}, in which this property was termed ``hyperequilibration'' (in analogy to hypercontractivity).
We remark that existing PDE techniques for establishing hypocoercivity are currently unable to show hyperequilibration; see the discussion therein for details.\footnote{In brief, hypocoercive techniques can show the exponential decay of the chi-squared divergence $\chi^2$, which is related to the R\'enyi divergence of order $2$ via $\chi^2 = \exp(\Ren_2) -1$, but this does not suffice to establish nearly dimension-free convergence, which is needed for our results. Indeed, since the initial R\'enyi divergence to $\bs\pi$ typically scales as $\widetilde\Theta(d)$, a result of the form $\chi^2(\bs\mu_t\mmid\bs \pi) \le c_0 \exp(-c_1 t)\,\chi^2(\bs\mu_0 \mmid \bs \pi)$ requires a mixing time of order $t \gg \log \chi^2(\bs\mu_0 \mmid \bs \pi) = \widetilde \Theta(d)$, whereas hyperequilibration requires the mixing time to only be of order $O(\log d)$.}
In that work, hyperequilibration for the underdamped Langevin diffusion was bypassed by directly considering the discretization. Later, it was realized that hyperequilibration follows from an inequality of the following type, which we call a \emph{Harnack inequality} (see~\cite{scr1} for further discussion):
\begin{align}\label{eq:uld_harnack}
    \Ren_q(\delta_{x,p} \bs P_T\mmid \delta_{\bar x,\bar p} \bs P_T) \le C_{\msx, T}\,\norm{x-\bar x}^2 +C_{\msp,T}\, \norm{p-\bar p}^2 \qquad\text{for all}~x,\bar x,p,\bar p \in\R^d\,,
\end{align}
for suitable constants $C_{\msx,T}, C_{\msp,T} > 0$, where $\bs P_T$ denotes the underdamped Langevin semigroup at time $T$.
Such an inequality was first established in~\cite{GuiWan12DegenBismut}, and it is readily combined with exponential decay of the $W_\infty$ distance to $\bs\pi$ along the underdamped Langevin dynamics under strong log-concavity to produce exponentially decaying R\'enyi bounds.
Recently, in our prior work~\cite{scr4}, we gave a direct proof of an inequality of the form~\eqref{eq:uld_harnack} with a constant $C_T$ decaying exponentially as $T\to\infty$, again under strong log-concavity.

To extend these results to OHO\@, our first order of business is to establish $W_\infty$ contraction under strong log-concavity.
This is done in Lemma~\ref{lem:oho_contraction} below and is inspired by the coordinate change used in the coupling analysis of the underdamped Langevin dynamics, but with suitable modifications.

With this result in hand, we turn to establishing the Harnack inequality~\eqref{eq:uld_harnack}, with $\bs P_T$ replaced by the Markov kernel corresponding to the OHO dynamics.
Since OHO no longer has the structure of an SDE\@, this precludes a straightforward extension of the proof for the underdamped Langevin dynamics.
Instead, we leverage the idea of a one-shot coupling from~\cite{roberts2002one, bou2023mixing, monmarche2024entropic, chak2025reflection, gouraud2025hmc} to establish a R\'enyi bound that holds for a single iteration.
We then use the framework of shifted composition, recently developed in the works~\cite{scr1, scr2, scr3, scr4}, to upgrade this ``short-time'' result to a ``long-time'' Harnack inequality of the form~\eqref{eq:uld_harnack}; see Theorem~\ref{thm:harnack}. Appealingly, our Harnack inequality for the OHO dynamics essentially matches the one for the underdamped Langevin dynamics (in fact, exactly so as the discretization parameter $h\searrow 0$),

\subsection{OBABCO discretization (Section~\ref{sec:app})}\label{ssec:obabco_overview}

Next, we study discretization of the OHO dynamics.
At a high level, we follow the approach of~\cite{scr3, scr4} in which local error estimates based on coupling arguments are combined with Harnack-type inequalities to obtain R\'enyi divergence bounds.

Suppose that $(X^\msOHO, P^\msOHO)$ and $(\widehat X,\widehat P)$ are the outputs of the OHO dynamics and a suitable discretization, started from the same point.
We define the local position and momentum errors to be
\begin{align*}
    \mc E^\msx(x,p) \deq \norm{\widehat X - X^\msOHO}_{L^2}\,, \qquad \mc E^\msp(x,p) \deq \norm{\widehat P - P^\msOHO}_{L^2}\,,
\end{align*}
for an appropriate choice of coupling.
Note that these errors can depend on the initial point $(x_0, p_0)$.
The standard technique of $W_2$ local error analysis (see~\cite{MilTre21StochNum}) converts these one-step estimates into multi-step bounds.
Informally,
\begin{align*}
    \mc E^\msx + \mc E^\msp \lesssim d^{1/2} h^p \qquad\implies\qquad W_2(\widehat{\bs \mu}_N, \bs\pi) \le \varepsilon~\text{in}~N \asymp (d/\varepsilon^2)^{1/(2p-2)}~\text{steps}\,.
\end{align*}
Here, $\widehat{\bs\mu}_N$ is the law of the output of the discretization after $N$ steps, and heuristically the bound above arises from summing the local error $d^{1/2} h^p$ over $N \asymp 1/h$ steps, and setting the final error to be at most $\varepsilon$.
Hence, we aim to establish such an estimate with $p = 3$ to achieve $d^{1/4}$ scaling.

In the context of discretization of the underdamped Langevin dynamics, via the technique of shifted composition,~\cite{scr4} showed (informally)
\begin{align}\label{eq:renyi_local_err}
    h^{-1} \mc E^\msx + \mc E^\msp \lesssim d^{1/2} h^p \qquad\implies\qquad \KL(\widehat{\bs \mu}_N \mmid \bs\pi) \le \varepsilon^2~\text{in}~N \asymp (d/\varepsilon^2)^{1/(2p-2)}~\text{steps}\,.
\end{align}
In particular, KL discretization seems to require a more stringent local error bound in which the position local error is multiplied by an extra factor of $h^{-1}$.
This arises due to the degenerate regularity of underdamped Langevin, but is thankfully compensated by the fact that well-designed numerical schemes often satisfy $\mc E^\msx \lesssim h\mc E^\msp$.
Unfortunately, the framework therein does not directly apply to R\'enyi divergence bounds.
In this work, we show that~\eqref{eq:renyi_local_err} essentially still holds when $\KL$ is replaced by $\Renyi_q$ by a ``discrete Girsanov'' martingale argument (Lemma~\ref{lem:renyi_mg}).\footnote{However, we note that the framework of~\cite{scr4} produces more refined bounds by taking into account the so-called weak error, whereas this is not handled by our present approach.}

Thus, it remains to find an appropriate numerical scheme. The standard leapfrog discretization OBABO (see \S\ref{sec:hmc_bg} for a definition) that is used in implementations does not satisfy the additional property that $\mc E^\msx \lesssim h\mc E^\msp$ which is needed for R\'enyi discretization. 
Indeed, one can show that $\mc E^\msp \lesssim d^{1/2} h^3 \asymp \mc E^{\msx}$ even for Gaussian targets and at stationarity (i.e., with high probability over the initial point $(x,p) \sim\bs\pi$).

Hence, we propose a novel correction step~\eqref{eq:C} which exactly corrects for this deficiency, leading to our proposed OBABCO scheme. The~\eqref{eq:C} step is constructed to match an extra order of the Taylor expansion of~\eqref{eq:H} and ensures the desired property $\mc E^\msx \lesssim h\mc E^\msp$. Moreover this extra step does not require any additional gradient evaluations and is easy to implement.

There is one further challenge: the local error bound of order $d^{1/2} h^3$ holds at stationarity, but not necessarily when far from stationarity.
This is because it crucially relies on delicate Gaussian chaos inequalities (Lemma~\ref{lem:chaos}).
To get around this, we conduct a two-phase analysis; this does not change the algorithm, only our approach to study it.
In the first phase, we show that with step size $h \asymp d^{-1/4}$ and by using a crude bound on the one-step local error (of order $d^{1/2} h^2$) but which does not require the Gaussian chaos inequality, we can reach a R\'enyi divergence of order $d^{1/2}$.
Then, in the second phase, we show that a R\'enyi divergence of order $d^{1/2}$ is enough to leverage change of measure inequalities that transfer the Gaussian chaos bounds from stationarity to the algorithm iterates.
Combining these two phases yields our algorithmic warm start (Theorem~\ref{thm:main-warm}).
        \section{Analysis of OHO}\label{sec:oho}

In this section, we carry out the analysis of the idealized OHO dynamics. See \S\ref{ssec:oho_overview} for an overview.
Recall that the~\eqref{eq:O} step refers to a partial refreshment of the momentum, and the~\eqref{eq:H} step refers to the ideal Hamiltonian flow for time $h$.
We adopt the following convenient notation. Starting from $(x_0, p_0)\in\R^d\times \R^d$, we denote the intermediate steps of the OHO dynamics via superscripts:
\begin{align}
    (X^\msO, P^\msO) &\deq (x_0,\; e^{-\gamma h/2}\, p_0 + \sqrt{1-e^{-\gamma h}}\, \bar B_1)\,,\tag{O}\\
    (X_h^{\msOH}, P_h^{\msOH}) &\deq \Phi_h(X^\msO, P^\msO)\,, \tag{H}\\
    (X^{\msOHO}, P^{\msOHO}) &\deq (X_h^{\msOH},\; e^{-\gamma h/2}\, P_h^{\msOH} + \sqrt{1-e^{-\gamma h}}\, \bar B_2) \tag{O}\,.
\end{align}
Here, $\bar B_1, \bar B_2\sim \cN(0,I_d)$ are independent. 
Note that we also add a subscript along the~\eqref{eq:H} step in order to index time, which is convenient for the proofs.

Below we first state our results about contraction (in \S\ref{ssec:oho:contraction}) and regularity (in \S\ref{ssec:harnack}) of the OHO dynamics, and then we provide all the proofs for this section (in \S\ref{ssec:oho_proofs}).

\subsection{Contraction}\label{ssec:oho:contraction}

Our first goal is to establish almost-sure contraction of the OHO dynamics, under strong convexity of $V$ and synchronous coupling.
Let $(\bar x_0,\bar p_0) \in \R^d\times\R^d$, and use bars to denote the OHO process with this initialization: $(\bar X^\msO, \bar P^\msO)$, $(\bar X_h^\msOH, \bar P_h^\msOH)$, $(\bar P^\msOHO, \bar X^\msOHO)$.
We couple this with the process $(X^\msO, P^\msO)$, $( X_h^\msOH,  P_h^\msOH)$, $( P^\msOHO,  X^\msOHO)$ from initialization $(x_0, p_0)$ by taking the same noise variables $\bar B_1$, $\bar B_2$.
We use the following shorthand notation to refer to the distances between these processes:
\begin{align*}
    \delta X^\msOHO \deq X^\msOHO - \bar X^\msOHO\,, \qquad \delta P^\msOHO \deq P^\msOHO - \bar P^\msOHO\,, \qquad \delta x_0 \deq x_0 - \bar x_0\,, \qquad \delta p_0 \deq p_0 - \bar p_0\,.
\end{align*}

\begin{lemma}[Contraction of OHO]\label{lem:oho_contraction}
    Let $0 \prec \alpha I \preceq \nabla^2 V \preceq \beta I$, $\gamma \gg \sqrt{\beta}$, and $h \ll \nfrac{\alpha^{1/2}}{\gamma^2}$, and adopt the notation above.
    Then, for some universal constant $c' > 0$, almost surely,
    \begin{align*}
        \norm{\delta X^\msOHO}^2 + \bigl\lVert \delta X^\msOHO  + \frac{2}{\gamma}\,\delta P^\msOHO\bigr\rVert^2
        &\le \exp\Bigl(-\frac{c' \alpha h}{\gamma}\Bigr)\,\Bigl( \norm{\delta x_0}^2 + \bigl\lVert \delta x_0 + \frac{2}{\gamma} \,\delta p_0\bigr\rVert^2\Bigr)\,.
    \end{align*}
\end{lemma}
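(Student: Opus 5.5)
Under synchronous coupling the noise cancels, so each O step acts on the differences deterministically: $\delta p \mapsto \eta\,\delta p$ with $\eta \deq e^{-\gamma h/2}$, and $\delta x$ is unchanged. The H step acts through the linearization along the segment between the two trajectories. Writing $\delta x_t, \delta p_t$ for the differences along the Hamiltonian flow, we have $\dot{\delta x}_t = \delta p_t$ and $\dot{\delta p}_t = -A_t\,\delta x_t$, where $A_t \deq \int_0^1 \nabla^2 V(\bar X_t^{\msOH} + s\,\delta x_t)\,\D s$ satisfies $\alpha I \preceq A_t \preceq \beta I$. The plan is to track the Lyapunov quadratic form
\[
\mc L(\delta x,\delta p) \deq \norm{\delta x}^2 + \bigl\lVert \delta x + \tfrac{2}{\gamma}\,\delta p\bigr\rVert^2 ,
\]
which mirrors the twisted coordinates $(x, x+\tfrac{2}{\gamma}p)$ used for underdamped Langevin. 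Since this norm is equivalent to $\norm{\delta x}^2 + \gamma^{-2}\norm{\delta p}^2$ up to universal constants, we can freely move between the two.

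\textbf{Step 1: the composite map to second order in $h$.} Let $z$ denote $\delta x_0 + \tfrac{2}{\gamma}\delta p_0$. After the first O step, $\delta p = \eta\,\delta p_0$. Over time $h$ the H step gives
\[
\delta x_h = \delta x_0 + h\eta\,\delta p_0 + O\bigl(h^2\beta\norm{\delta x} + h^2\beta\norm{\delta p}\bigr),
\qquad
\delta p_h = \eta\,\delta p_0 - h \bar A\,\delta x_0 + O\bigl(h^2\beta\norm{\delta p} + h^3\beta^2\norm{\delta x}\bigr),
\]
where $\bar A$ is a time average of $A_t$. These follow from Grönwall on the linear ODE, using $h\sqrt\beta \ll 1$; this holds because $h \ll \alpha^{1/2}/\gamma^2 \le \beta^{-1/2}$. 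The final O step then multiplies $\delta p_h$ by $\eta$. Expanding $\eta^2 = 1 - \gamma h + O(\gamma^2h^2)$, the first-order part of the map is
\[
\delta x' \approx \delta x_0 + h\,\delta p_0, \qquad \delta p' \approx (1-\gamma h)\,\delta p_0 - h\bar A\,\delta x_0 .
\]
This is exactly the Euler step of the linearized underdamped Langevin flow with friction $\gamma$.

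\textbf{Step 2: the first-order decrease.} Substitute $\delta p_0 = \tfrac{\gamma}{2}(z - \delta x_0)$ and compute the first variation of $\mc L$:
\[
\Delta\mc L = 2h\Bigl[\langle \delta x, \tfrac{\gamma}{2}(z-\delta x)\rangle + \langle z, \tfrac{\gamma}{2}(z-\delta x) - \tfrac{\gamma}{2}(z-\delta x) - \tfrac{2}{\gamma}\bar A\,\delta x\rangle\Bigr] + \dots
\]
This is $-h$ times a quadratic form in $(\delta x, z)$. It has diagonal terms $\gamma\norm{\delta x}^2$ and $\tfrac{4}{\gamma}\langle z, \bar A z\rangle$-type contributions once the $\gamma/2$ factors are redistributed, and cross terms of size $\gamma + \tfrac{\beta}{\gamma}$. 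The standard ULD computation (with $\gamma \gg \sqrt\beta$) shows this form dominates $\tfrac{c\alpha}{\gamma}\,\mc L$. In that computation the $\alpha$ comes from $\bar A \succeq \alpha I$, and one uses that the symmetric part of the $2\times2$ block matrix with entries in $\mathrm{span}\{I,\bar A\}$ is positive definite, which can be checked eigenvalue-by-eigenvalue of $\bar A$. I would first redo this precise constant check; if the coordinates $(x, x+\tfrac{2}{\gamma}p)$ give a degenerate cross term, I would adjust the weight on $\norm{\delta x}^2$ slightly. The statement as written uses weight $1$, so I expect it to work with $\gamma \gg \sqrt\beta$.

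\textbf{Step 3: absorbing second-order terms (main obstacle).} The second-order remainders have size
\[
O\bigl(h^2(\gamma^2 + \beta + \beta^2/\gamma^2)\bigr)\,\mc L ,
\]
noting that $\norm{\delta p}^2 \lesssim \gamma^2\mc L$. They must be at most $\tfrac{c\alpha h}{2\gamma}\,\mc L$, which requires $h\gamma^2 \ll \alpha/\gamma$, i.e.\ $h \ll \alpha/\gamma^3$. The hypothesis $h \ll \alpha^{1/2}/\gamma^2$ is weaker than this when $\gamma \gtrsim \alpha^{1/2}$. So naive bookkeeping fails, and I expect this to be the crux: the $\gamma^2h^2$ terms from the exact O steps should not be counted as errors. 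Instead I would treat the O steps exactly. I would compute $\mc L$ after the final O step as an exact quadratic form in $\eta$, which is a pure contraction of the momentum component, and apply the Taylor remainder only to the H step, where the errors are $O(h^2\beta)$. Then the requirement becomes $h\beta \ll \alpha/\gamma$. This is implied by $h \ll \alpha^{1/2}/\gamma^2$ together with $\gamma \gg \sqrt\beta$, since $\alpha^{1/2}\beta/\gamma^2 \le \alpha^{1/2} \le \alpha/\gamma \cdot (\gamma/\alpha^{1/2})$. This needs the care of the exact computation, and I would fall back to carrying $\eta$ symbolically throughout.

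Finally, $1 - \tfrac{c\alpha h}{\gamma} \le \exp(-c'\alpha h/\gamma)$ concludes the proof. Since every bound above is deterministic given the shared noise, the contraction holds almost surely.
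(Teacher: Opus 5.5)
The gap is in Step 3. The condition you end up needing, $h\beta \ll \alpha/\gamma$, is not implied by the hypotheses. From $h \ll \alpha^{1/2}/\gamma^2$ you only get $h\beta \ll \alpha^{1/2}\beta/\gamma^2$, and this exceeds $\alpha/\gamma$ by the factor $\beta/(\gamma\alpha^{1/2})$, which is of order $\kappa^{1/2}$ when $\gamma \asymp \sqrt\beta$. Your chain ends with the identity $\alpha^{1/2} = (\alpha/\gamma)\cdot(\gamma/\alpha^{1/2})$, and the factor $\gamma/\alpha^{1/2} \ge 1$ goes the wrong way. As written, you would prove the lemma only for $h \ll \alpha/\beta^{3/2}$, not $h \ll \alpha^{1/2}/\beta$, in exactly the regime used downstream (compare $h \ll 1/(\kappa^{1/2}\gamma)$ in Theorem~\ref{thm:harnack}).

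The loss comes from charging H-step remainders as $O(\beta h^2)\,\mathcal{L}$ against the gain $\alpha h\,\mathcal{L}/\gamma$. For example, along the slow direction $\delta p = 0$ the composite map produces the positive term $\frac{4h^2}{\gamma^2}\norm{\bar A\,\delta x}^2$. This is a harmless fraction $\beta h/\gamma$ of the first-order gain $\frac{4h}{\gamma}\langle \delta x, \bar A\,\delta x\rangle$. It is not harmless once you bound it by $\frac{\beta^2 h^2}{\gamma^2}\norm{\delta x}^2$ and compare with $\frac{\alpha h}{\gamma}\norm{\delta x}^2$.

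Separately, the O step is not a contraction of $\mathcal{L}$. In the coordinates $(\delta x, \delta z)$ it acts as $\delta z \mapsto \eta\,\delta z + (1-\eta)\,\delta x$, which increases $\mathcal{L}$ when $\delta z = 0$. So the mismatch between the O-scaling and the fixed weight $2/\gamma$ must be quantified, not dismissed; it is precisely what the hypothesis $h \ll \alpha^{1/2}/\gamma^2$ controls.

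The missing idea is to avoid Taylor remainders entirely. Absorb both O steps into a time-dependent weight and differentiate along the exact H flow. Set $c_t \deq e^{\gamma(h/2 - t)}$, $\delta Z_t \deq \delta X_t^{\msOH} + \frac{2c_t}{\gamma}\,\delta P_t^{\msOH}$, and $\mathcal{L}_t \deq \norm{\delta X_t^{\msOH}}^2 + \norm{\delta Z_t}^2$. Since $\delta P^{\msO} = e^{-\gamma h/2}\,\delta p_0$, you get $\mathcal{L}_0 = \mathcal{L}(\delta x_0, \delta p_0)$ and $\mathcal{L}_h = \mathcal{L}(\delta X^{\msOHO}, \delta P^{\msOHO})$. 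Using $\dot c_t = -\gamma c_t$, one gets exactly
\begin{align*}
    \tfrac12\,\partial_t \mathcal{L}_t = -\begin{bmatrix} \delta X_t^{\msOH} \\ \delta Z_t \end{bmatrix}^\top \begin{bmatrix} \frac{\gamma}{2c_t} & -\frac{\gamma}{2} + \frac{c_t H_t}{\gamma} \\ -\frac{\gamma}{2} + \frac{c_t H_t}{\gamma} & \gamma\bigl(1 - \frac{1}{2c_t}\bigr) \end{bmatrix} \begin{bmatrix} \delta X_t^{\msOH} \\ \delta Z_t \end{bmatrix}\,,
\end{align*}
where $H_t$ is the averaged Hessian with $\alpha I \preceq H_t \preceq \beta I$. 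No $\beta h^2$ remainders appear.

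For each eigenvalue $\lambda$ of $H_t$, the smallest eigenvalue of the corresponding $2\times 2$ block is $\frac{\gamma}{2}\bigl[1 - \sqrt{(1 - 2c_t\lambda/\gamma^2)^2 + (1 - 1/c_t)^2}\bigr]$. So the O-induced perturbation enters only squared, through $(1-1/c_t)^2 \lesssim \gamma^2 h^2$. The requirement $\gamma^2 h^2 \ll \alpha/\gamma^2$ is precisely $h \ll \alpha^{1/2}/\gamma^2$. It yields $\lambda_{\min} \gtrsim \alpha/\gamma$, hence $\mathcal{L}_h \le e^{-c'\alpha h/\gamma}\,\mathcal{L}_0$.

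Two smaller points on Step 2. With $c_t \equiv 1$ the matrix above is the ULD matrix, whose smallest eigenvalue is $\lambda/\gamma$ once $\gamma^2 \ge 2\beta$, so no reweighting is needed. Also, your $\delta z$-increment there should be $h\,[-\frac{\gamma}{2}(z - \delta x) - \frac{2}{\gamma}\bar A\,\delta x]$, not $-\frac{2h}{\gamma}\bar A\,\delta x$.
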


\begin{remark}[Coordinate system]
    Lemma~\ref{lem:oho_contraction} shows that OHO contracts in the coordinate system $(x, x+\frac{2}{\gamma}\,p)$, which is commonly used in the analysis of the underdamped Langevin diffusion~\cite[\S 5.3]{chewibook}.
    Since it can be checked that $ \tfrac{1}{3}(\norm x^2 + \tfrac{4}{\gamma^2}\,\norm p^2) \le \norm x^2 + \| x + \tfrac{2}{\gamma}\,p\|^2 \le 3\,(\norm x^2 + \tfrac{4}{\gamma^2}\,\norm p^2)$, 
    Lemma~\ref{lem:oho_contraction} implies the same contraction in the coordinate system $(x, \frac{2}{\gamma}\,p)$ up to a constant pre-factor.
\end{remark}

\begin{remark}[Comparison with the literature]
    Lemma~\ref{lem:oho_contraction} is similar to contraction results for the underdamped Langevin diffusion~\cite[\S 5.3]{chewibook}, ideal HMC~\cite{chenvempala2019hmc}, and discretizations such as OBABO (e.g.,~\cite{LeiPauWha24Kinetic}).
    For example, Lemma~\ref{lem:oho_contraction} uses the same coordinate system and a similar proof as the analysis for ULD, and achieves exponential convergence with the same $\alpha/\gamma$ rate.
    In fact, Lemma~\ref{lem:oho_contraction} recovers the contraction for ULD by letting $h\searrow 0$.
    However, these prior results do not cover the OHO dynamics.
\end{remark}

\subsection{Harnack inequalities}\label{ssec:harnack}

Let $\bs P_h$ denote the Markov kernel corresponding to one iteration of OHO\@.
The main results of this section are the following Harnack inequalities\footnote{The terminology alludes to the fact that the following inequalities are dual to a certain family of functional inequalities; see~\cite{scr1} for further discussion.} for the OHO dynamics.
We prove one result under the assumption $0 \preceq \nabla^2 V \preceq \beta I$, and we show a second improved result under the following higher-order smoothness assumption.

\begin{assumption}[Operator-Lipschitz Hessian]\label{as:hess-lip}
    For all $x_0, \bar x_0 \in \R^d$ and some constant $\betH \geq 0$,
    \begin{align*}
        \norm{\nabla^2 V(x_0) - \nabla^2 V(\bar x_0)}_{\operatorname{op}} \leq \betH\, \norm{x_0 - \bar x_0}\,.
    \end{align*}
\end{assumption}

As for underdamped Langevin~\cite{scr4}, the strength of the Harnack inequalities of OHO are governed by the parameter
\begin{align}\label{eq:omeg}
    \omega \deq
        \alpha/\gamma\,.
\end{align}

\begin{theorem}[Harnack inequality for OHO]\label{thm:harnack}
     Suppose that $0 \prec \alpha I \preceq \nabla^2 V \preceq \beta I$ (strongly convex and smooth potential) and $\gamma \geq \sqrt{32 \beta}$ (high friction). Let $\kappa := \beta/ \alpha$ denote the condition number and let $T \deq Nh$ denote the total elapsed time. Then for any step size $h \ll \frac{1}{\beta^{1/2} q^{1/2}} \wedge  \frac{1}{\kappa^{1/2}\gamma}$,
    \begin{align*}
        \Renyi_q(\delta_{x_0, p_0} \bs P_{h}^N \mmid \delta_{\bar x_0, \bar p_0} \bs P_h^N) \leq q C(\alpha, \beta, \gamma, T)\, \Bigl\{ \norm{x_0 - \bar x_0}^2 + \frac{1}{\gamma_0^2}\, \norm{p_0 - \bar p_0}^2\Bigr\} + \beta^2 dh^3 q T\,,
    \end{align*}
    where
    \begin{align*}
        C(\alpha, \beta, \gamma, T) \lesssim \frac{1}{\gamma}\, \Bigl(\frac{\omega}{\exp(c\omega T) - 1} \Bigr)^3 + \frac{\gamma \omega}{\exp(c\omega T) - 1}\,, \qquad \gamma_0 \gtrsim \gamma + \frac{\one_{T \leq 1/\abs{\omega}}}{T}\,,
    \end{align*}
    and $c > 0$ is a universal constant.
    
    Moreover, if Assumption~\ref{as:hess-lip} also holds and $h \lesssim \nfrac{1}{\betH^{2/7} \gamma^{1/7} d^{1/7}}$, then the final term is not present:
    \begin{align*}
        \Renyi_q(\delta_{x_0, p_0} \bs P_{h}^N \mmid \delta_{\bar x_0, \bar p_0} \bs P_h^N) \leq q C(\alpha, \beta, \gamma, T)\, \Bigl\{ \norm{x_0 - \bar x_0}^2 + \frac{1}{\gamma_0^2}\, \norm{p_0 - \bar p_0}^2\Bigr\}\,.
    \end{align*}
\end{theorem}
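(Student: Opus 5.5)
The plan is the shifted-composition strategy of~\cite{scr1,scr4}: prove a one-step R\'enyi bound via a one-shot coupling, then chain $N$ steps with Theorem~\ref{thm:shifted_chain_rule}, contracting the residual gaps with Lemma~\ref{lem:oho_contraction}.

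\textbf{One-step bound.} Start one OHO step from $(x,p)$ and $(\bar x,\bar p)$. Only the first O step is randomized, and its noise $\sqrt{1-e^{-\gamma h}}\,\bar B_1$ has variance $\asymp \gamma h$ per coordinate. In the bar process we shift this noise by a vector $\sqrt{1-e^{-\gamma h}}\,\eta$, chosen so that after the Hamiltonian flow $\Phi_h$ the two processes land on a prescribed target $(x',p')$ rather than on the synchronously coupled point. The cost in R\'enyi divergence is the Gaussian shift cost $\tfrac{q}{2}\norm{\eta}^2$. The second O step and the data-processing inequality then propagate the bound to the full step.

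Since $\Phi_h(x,p)\approx(x+hp-\tfrac{h^2}{2}\nabla V(x),\,p-h\nabla V(x))$, a momentum perturbation $\Delta p$ produces a position change $h\,\Delta p + O(\beta h^3)\Delta p$. So to close a position gap $u$ within one step we need $\Delta p\approx u/h$, at cost $\approx q\norm{u}^2/(\gamma h^3)$. A momentum gap can be absorbed directly at cost $\approx q\norm{\Delta p}^2/(\gamma h)$.

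The exact invertibility of $\eta\mapsto\Phi_h$ is handled by the implicit function theorem, using $h\ll\beta^{-1/2}$. Note, however, that the map $\eta\mapsto\Phi_h(x,p+c\eta)$ is only affine when $V$ is quadratic. In general its Jacobian depends on $\eta$, so a deterministic shift yields a sub-Gaussian tilt rather than an exact Gaussian shift. I would handle this with a discrete Girsanov / change-of-variables argument: the log-Jacobian and nonlinearity terms contribute an additive error. Under only $\nabla^2V\preceq\beta I$ I expect this error to be $\lesssim q\beta^2 d h^4$ per step, which sums to the term $\beta^2 d h^3 qT$ over $N=T/h$ steps. Under Assumption~\ref{as:hess-lip} the Jacobian varies by $\betH\norm{\cdot}$, and the condition $h\lesssim(\betH^{2/7}\gamma^{1/7}d^{1/7})^{-1}$ should be exactly what makes this error negligible relative to the main term.

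\textbf{Long-time chaining.} Iterating Theorem~\ref{thm:shifted_chain_rule}, at step $k$ the auxiliary process is shifted toward the other process by a fraction $\lambda_k$ of the current gap. Between shifts the gap evolves under the synchronous coupling, which contracts at rate $e^{-c\alpha h/\gamma}$ in the twisted coordinates $(x,x+\tfrac{2}{\gamma}p)$ by Lemma~\ref{lem:oho_contraction}; this needs $\gamma\gg\sqrt\beta$ and $h\ll\alpha^{1/2}/\gamma^2$, which the hypothesis $h\ll\kappa^{-1/2}\gamma^{-1}$ ensures. The R\'enyi bound becomes $q\sum_k\bigl(\tfrac{\norm{u_k}^2}{\gamma h^3}+\tfrac{\norm{v_k}^2}{\gamma h}\bigr)$, subject to the shifts closing the total gap by time $T$.

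A naive allocation that removes the position gap in one step is far too costly because of the $h^{-3}$ factor. Instead one should spread the shifts mainly on momentum: small momentum shifts accumulate into position displacement through the transport term $\dot x=p$, mirroring the control-theoretic proof of the Harnack inequality for the kinetic Langevin diffusion. The choice I would make is the discrete analogue of the optimal control for ULD from~\cite{scr4}: a momentum control $v_k\propto h\,(a+b\,kh)\,e^{-c\omega(T-kh)}$, with $a,b$ solving a $2\times2$ linear system so that the position and momentum gaps both vanish at time $T$.

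Evaluating $\sum_k \norm{v_k}^2/(\gamma h)$ for this control gives the controllability Gramian scaling: $T^{-3}$ in position and $T^{-1}$ in momentum for $T\lesssim1/\omega$, with exponential decay in $\omega T$ beyond. This reproduces $C(\alpha,\beta,\gamma,T)$, with the $\frac1\gamma(\cdot)^3$ and $\gamma(\cdot)$ terms, and the weight $\gamma_0^{-2}$ on momentum. Because the discrete sums approximate the ULD integrals, the constants match up to absolute factors.

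\textbf{Main obstacle.} I expect the hardest step to be the one-step bound with a non-affine flow. Specifically, one must show that the Jacobian/nonlinearity correction in the R\'enyi (not KL) bound is of order $q\beta^2dh^4$ uniformly in the initial point, and that it vanishes under the Lipschitz-Hessian assumption. My first attempt would be a direct change-of-variables with an exact Gaussian density ratio, controlling $\log\det$ of the Jacobian by its trace expansion and bounding its exponential moments with Gaussian concentration, using $h\ll(\beta q)^{-1/2}$ to keep the $q$-th moment finite.
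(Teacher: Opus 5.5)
\textbf{The gap is in your long-time chaining.} Your control $v_k\propto h\,(a+b\,kh)\,e^{-c\omega(T-kh)}$, with $a,b$ obtained from a $2\times 2$ linear system, presupposes that the gap between the auxiliary and original processes follows a known linear recursion. That holds only for quadratic $V$. Under synchronous coupling the gap obeys $\partial_t\delta x=\delta p$, $\partial_t\delta p=-H_t\,\delta x$, where $H_t$ is the Hessian averaged along the two random, gap-dependent trajectories. So no choice of $a,b$ fixed in advance drives the gap to zero. Lemma~\ref{lem:oho_contraction} only bounds a norm of the gap, which you cannot invert to solve for a control. Whatever residual remains at step $N-1$ must be removed in the last step, and Theorem~\ref{thm:regularity} charges this at $q\,\norm{\delta x_{N-1}}^2/(\gamma h^3)$. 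That blows up as $h\searrow 0$ unless the residual is smaller than the initial gap by a factor of about $(h/T)^{3/2}$.

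Switching to the feedback version you also mention (shift by a fraction $\lambda_k$ of the current gap) does not fix this if all you have is Lemma~\ref{lem:oho_contraction}. A momentum shift leaves $\norm{\delta x}$ untouched, so shift-then-contract only yields the $e^{-c\omega T}$ decay of the unshifted dynamics. You get no gain from the shift strength.

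The missing ingredient is the paper's Lemma~\ref{lem:shift-contraction}. The paper, reusing \cite{scr4}, takes the feedback shift $P^\sh_{kh}=P^\aux_{kh}-(1-e^{-\upeta^\msp_k})(P^\aux_{kh}-P_{kh})-\upeta^\msx_k(X^\aux_{kh}-X_{kh})$. Here $\eta^\msp_t=c_0\omega/(\exp(\omega(Nh-t+Ah))-1)$, and crucially $\eta^\msx_t=\gamma_t\eta^\msp_t/2$ with effective friction $\gamma_t=\gamma+\eta^\msp_t$. A time-varying Lyapunov functional in the coordinates $(\delta x,\,\delta x+\tfrac{2}{\gamma_t}\delta p)$ then shows that shift-plus-OHO contracts at rate $\omega+\eta^\msp_t$; in other words, the shift acts as extra friction. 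Because $\int\eta^\msp_t\,\D t$ diverges logarithmically at the terminal time, $(\mathtt d^\aux_{N-1})^2$ is smaller than $(\mathtt d^\aux_0)^2$ by roughly $(h/T)^{c'c_0}$ when $T\lesssim 1/\omega$. With $c'c_0\ge 3$, the final-step cost $(\mathtt d^\aux_{N-1})^2/(\gamma h^3)$ is therefore at most of order $(\mathtt d^\aux_0)^2/(\gamma T^3)$. The intermediate costs $q\sum_n h\,\gamma_{nh}^2(\eta^\msp_{nh})^2(\mathtt d^\aux_n)^2$ are summed into $C(\alpha,\beta,\gamma,T)$ by the recursion of \cite{scr4}.

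\textbf{A smaller slip in your one-step bound.} Contrary to your remark, the second O step also carries fresh noise, and you must use it. A single $d$-dimensional shift of the first O noise cannot hit a target $(x',p')\in\R^{2d}$. After O and H, the two laws are supported on different $d$-dimensional graphs over the position variable, hence mutually singular, so data processing through the second O step gives nothing. The paper shifts $\bar B_1$ to match the position after H, then shifts $\bar B_2$ to match the momentum after the final O. The second shift costs another $q\,\norm{x-\bar x}^2/(\gamma h^3)$, so the one-step bound you stated still survives.
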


Theorem~\ref{thm:harnack} provides detailed information about the long-term regularity and convergence of the OHO dynamics. In the sequel, we only need the following consequence: when $\alpha > 0$ and $\gamma \ge \sqrt{32\beta}$, the constant $C(\alpha,\beta,\gamma, T)$ decays at the rate $\exp(-\Omega(\omega T))$ as $T\to\infty$, yielding rapid R\'enyi convergence of OHO to the stationary measure $\bs\pi$. As discussed in \S\ref{ssec:oho_overview}, this avoids the use of hypocoercive PDE techniques, which are currently not known to imply exponential decay in R\'enyi divergence even for the simpler underdamped Langevin dynamics.

\par Theorem~\ref{thm:harnack} focuses on the setting of strongly convex potentials ($\alpha > 0$) and high friction ($\gamma \gtrsim \sqrt{\beta})$ since this is the setting of relevance for our application to sampling algorithms in \S\ref{sec:app}. However, the Harnack inequalities in Theorem~\ref{thm:harnack} extend more generally without these assumptions. We detail this extension in \S\ref{sec:app-oho}.

\begin{remark}\label{rem:harnack-recover}
    The R\'enyi bounds in Theorem~\ref{thm:harnack} match the analogous bounds for underdamped Langevin in~\cite[Theorem 3.2]{scr4} up to absolute constants, except for an additional additive term $\beta^2 d h^3 q T$ without Assumption~\ref{as:hess-lip}.
    We do not believe this term is fundamental.
    For example, applying Theorem~\ref{thm:harnack} for one iteration and then Lemma~\ref{lem:oho_contraction} for $N-1$ iterations yields the alternative bound
    \begin{align*}
        \Renyi_q(\delta_{x_0, p_0} \bs P_{h}^N \mmid \delta_{\bar x_0, \bar p_0} \bs P_h^N) \lesssim q \,\Bigl(\frac{\norm{x_0 - \bar x_0}^2}{\gamma h^3} + \frac{\norm{p_0 - \bar p_0}^2}{\gamma h} \Bigr)\exp\bigl(-\Omega(\omega T)\bigr) + \beta^2 dh^4\,,
    \end{align*}
    for which the additive term remains bounded as $T\to\infty$ (rather than growing at the rate $\Omega(T)$). This is perhaps an artefact of our one-shot coupling proof (Theorem~\ref{thm:regularity} below).
    Nevertheless, we note that (1) the bound in Theorem~\ref{thm:harnack} recovers the one for underdamped Langevin as $h\searrow 0$, and (2) the additive term is not the bottleneck for our eventual applications.
\end{remark}

To prove Theorem~\ref{thm:harnack}, we build upon a line of work developing the \emph{shifted composition} framework~\cite{scr1, scr2, scr3, scr4}, which in turn was inspired by the coupling-based analysis of~\cite{ArnThaWan06HarnackCurvUnbdd, Wang12Coupling}.
At a high level, this framework seeks to ``upgrade'' a contraction inequality for a Markov process---such as the one that we established for OHO in Lemma~\ref{lem:oho_contraction}---to a R\'enyi divergence bound.
The blueprint for the framework is as follows:
\begin{enumerate}
    \item Prove a \emph{one-step} regularity bound, i.e., establish Theorem~\ref{thm:harnack} for $N=1$.
    \item Construct a suitable \emph{auxiliary process} whose law starts at $\delta_{\bar x_0,\bar p_0}$ and ends at $\delta_{x_0,p_0} \bs P_h^N$.
    Then, to bound the R\'enyi divergence between $\delta_{x_0,p_0}\bs P_h^N$ and $\delta_{\bar x_0,\bar p_0} \bs P_h^N$, it suffices to bound the R\'enyi divergence along the paths corresponding to the auxiliary process and the second process.
    \item Show that the auxiliary process and the first process contract together; in particular, show that the law of the auxiliary process at the terminal time is indeed $\delta_{x_0,p_0} \bs P_h^N$.
    \item Using the shifted composition rule (Theorem~\ref{thm:shifted_chain_rule}) and the one-step regularity bound from Step 1, bound the R\'enyi divergence along the paths in terms of distances between the auxiliary process and the first process, which are controlled by Step 3.
\end{enumerate}
We now briefly comment on the application of this framework to the OHO dynamics, deferring the details to \S\ref{ssec:oho_proofs}.
Step 1 is accomplished by a one-shot coupling argument, similar to the works of~\cite{roberts2002one, bou2023mixing, monmarche2024entropic, chak2025reflection, gouraud2025hmc}. Once the right auxiliary process has been constructed, Step 3 is shown by a (significantly) more involved version of Lemma~\ref{lem:oho_contraction}.

Finally, Steps 2 (construction of the auxiliary process) and 4 (application of shifted composition) are non-trivial, since the OHO dynamics are degenerate in the same way as the underdamped Langevin diffusion.
However, here we can reuse the machinery of~\cite{scr4} which tackled the underdamped Langevin case, and so we focus on Steps 1 and 3.

\subsection{Proofs}\label{ssec:oho_proofs}

\subsubsection{Proof of Lemma~\ref{lem:oho_contraction}}

\begin{proof}[Proof of Lemma~\ref{lem:oho_contraction}]
Consider the twisted norm
\begin{align*}
    \mc L(\delta x_0, \delta p_0) \deq \norm{\delta x_0}^2 + \norm{\delta x_0 + \frac{2}{\gamma}\, \delta p_0}^2\,.
\end{align*}
Due to the synchronous coupling, the second~\eqref{eq:O} step produces
\begin{align*}
    \mc L(\delta X^\msOHO, \delta P^\msOHO) = \mc L(\delta X^{\msOH}_h, e^{-\frac{\gamma h}{2}}\, \delta P^{\msOH}_h)\,.
\end{align*}
We seek to show that for an appropriate $c < 1$,
\begin{align*}
    \norm{\delta X^{\msOH}_h}^2 + \bigl\lVert \delta X^{\msOH}_h + \frac{2}{\gamma}\, e^{-\frac{\gamma h}{2}}\, \delta P^{\msOH}_h\bigr\rVert^2
    &\overset{\text{?}}{\leq} c\,\Bigl( \norm{\delta x_0}^2 + \bigl\lVert\delta x_0 + \frac{2}{\gamma}\, \delta p_0\bigr\rVert^2\Bigr) \\
    &=c\,\Bigl( \norm{\delta X^\msO}^2 + \bigl\lVert\delta X^\msO + \frac{2}{\gamma}\,e^{\frac{\gamma h}{2}}\, \delta P^\msO\bigr\rVert^2\Bigr)\,.
\end{align*}
It suffices therefore to show that the time-varying Lyapunov function
\begin{align*}
    \mc L_t( \delta X_t^{\msOH}, \delta P_t^{\msOH}) \deq \norm{\delta X_t^{\msOH}}^2 + \bigl\lVert \delta X_t^{\msOH} + \frac{2}{\gamma}\, e^{\gamma (\frac{h}{2} - t)}\, \delta P_t^{\msOH}\bigr\rVert^2
\end{align*}
decays exponentially in time along the Hamiltonian flow~\eqref{eq:H}. 
Indeed, note that this interpolates between the desired quantities at times $t=0$ and $t=h$. Differentiating this function and defining $\delta Z_t^{\msOH} \deq \delta X_t^{\msOH} + \frac{2}{\gamma}\, e^{\gamma (\frac{h}{2} - t)}\, \delta P_t^{\msOH}$, we have
\begin{align*}
    \partial_t \delta X_t^{\msOH} &= \delta P_t^{\msOH} = \frac{\gamma}{2}\, e^{-\gamma (\frac{h}{2} - t)}\, (\delta Z_t^{\msOH} - \delta X_t^{\msOH})\,, \\[0.25em]
    \partial_t \delta Z_t^{\msOH} &= \delta P_t^{\msOH} - \frac{2}{\gamma}\, e^{\gamma (\frac{h}{2} - t)}\, H_t\, \delta X_t^{\msOH} - 2e^{\gamma (\frac{h}{2} - t)}\, \delta P_t^{\msOH} \\[0.25em]
    &= \bigl(\frac{\gamma}{2}\, e^{-\gamma (\frac{h}{2} - t)} - \gamma\bigr)\,(\delta Z_t^{\msOH} -\delta X_t^{\msOH}) - \frac{2}{\gamma}\, e^{\gamma (\frac{h}{2} - t)}\, H_t\, \delta X_t^{\msOH}\,,
\end{align*}
where here $H_t$ is the integrated Hessian $\int_0^1 \nabla^2 V((1-s)X_t^\msOH + s \bar X_t^{\msOH}) \, \D s$ which obeys the bounds $\alpha I \preceq H_t \preceq \beta I$ for all $t \in [0,h]$.
Thus, if we take the time-derivative of the Lyapunov function, we find, letting $c_t \deq  e^{\gamma (\frac{h}{2} - t)}$ for $t \in [0, h]$,
\begin{align*}
    \frac{1}{2}\,\partial_t 
     \mc L_t( \delta X_t^{\msOH}, \delta P_t^{\msOH})
    = -\begin{bmatrix}
        \delta X_t^\msOH \\[0.25em]
        \delta Z_t^\msOH
    \end{bmatrix}^\top \underbrace{\begin{bmatrix}
        \frac{\gamma}{2 c_t} & -\frac{\gamma}{2} + \frac{c_t H_t}{\gamma} \\[0.5em]
        -\frac{\gamma}{2} + \frac{c_t H_t}{\gamma} & \gamma\,(1-\frac{1}{2 c_t})
    \end{bmatrix}}_{\eqqcolon M_t}
    \begin{bmatrix}
        \delta X_t^\msOH \\[0.25em]
        \delta Z_t^\msOH
    \end{bmatrix}\,.
\end{align*}
Note that when $c_t \equiv 1$, this is precisely the matrix which appears in the standard proof of contraction for the underdamped Langevin dynamics, which corresponds to the infinitesimal version of this calculation (this only occurs if $h \searrow 0$).

Since $H_t$ is positive semidefinite, it can be diagonalized, hence $M_t$ can be block-diagonalized into $2 \times 2$ block matrices. By the formula for the minimum eigenvalue of a $2 \times 2$ matrix, we have
\begin{align*}
    \lambda_{\min}(M_t) = \frac{\gamma}{2} \cdot \min_{\lambda \in \text{spec}(H_t)}\biggl\{ 1 - \sqrt{\Bigl(1 - \frac{2c_t \lambda}{\gamma^2 }\Bigr)^2 + \Bigl(1-\frac{1}{c_t}\Bigr)^2}\biggr\}\,.
\end{align*}
We now bound the terms in the square root. Observe that 
\begin{align*}
    \Bigl(1 - \frac{2c_t \lambda}{\gamma^2 }\Bigr)^2
    = \Bigl( 1 - \Theta\Bigl( \frac{\lambda}{\gamma^2} \Bigr) \Bigr)^2
    = 1 - \Theta\Bigl( \frac{\lambda}{\gamma^2} \Bigr) \,,
\end{align*}
where the first step is because $h \leq 1/\gamma$ implies $c_t \asymp 1$, and the second step is because $\lambda / \gamma^2 \ll \lambda / \beta \leq 1$. Also observe that 
\begin{align*}
    \Bigl( 1 - \frac{1}{c_t} \Bigl)^2
   \leq \gamma^2 h^2 \ll \frac{\alpha}{\gamma^2} \,,
\end{align*}
where the first step is because $h \leq 1/\gamma$, and the second step is because $h \ll \sqrt{\alpha} / \gamma^2$. By combining the above three displays, applying the Taylor expansion of $1 - \sqrt{1 - \delta} \asymp \delta$ for small $\delta > 0$, and using the fact that $H_t$ has eigenvalues in $[\alpha,\beta]$, we conclude
\begin{align}
\lambda_{\min}(M_t)
&= \frac{\gamma}{2} \cdot \min_{\lambda \in \text{spec}(H_t)}\bigl\{ 1 - \sqrt{1 - \Theta(\lambda/\gamma^2)}\bigr\}
\asymp \gamma \cdot \min_{\lambda \in \text{spec}(H_t)} \frac{\lambda}{\gamma^2} = \frac{\alpha}{\gamma}\,. \qedhere
\end{align}
\end{proof}

\subsubsection{First-order approximation of the Hamiltonian flow}

We first record the following useful helper results.

\begin{lemma}[First-order approximation]\label{lem:hmc-expansion}
Let $(x_t, p_t)_{t\ge 0}$, $(\bar x_t, \bar p_t)_{t\ge 0}$ coevolve along the Hamiltonian dynamics~\eqref{eq:ham_ode} with step size $h \lesssim \nfrac{1}{\beta^{1/2}}$. Define $H_t \deq \int_0^1 \nabla^2 V((1-s) \,x_t + s \,\bar x_t) \, \D s$ and assume that $-\beta I \preceq \nabla^2 V \preceq \beta I$. Denoting $\delta x_t \deq x_t - \bar x_t$ and $\delta p_t \deq p_t - \bar p_t$, then
\begin{align*}
    \delta x_h = \delta x_0 + h \,\delta p_0 + R_x\,, \qquad \delta p_h = \delta p_0 - \Bigl(\int_0^h H_s \, \D s\Bigr)\, \delta x_0 + R_p\,,
\end{align*}
where
\begin{align*}
    \norm{R_x} &\lesssim \beta h^2\, \norm{\delta x_0} + \beta h^3\, \norm{\delta p_0}\,, \\
    \norm{R_p} &\lesssim \beta h^2\, \norm{\delta p_0} + \beta^2 h^3\, \norm{\delta x_0}\,.
\end{align*}
\end{lemma}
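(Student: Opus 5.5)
The plan is to linearize the difference dynamics and run a Grönwall-type bootstrap over the short interval $[0,h]$. Subtracting the two copies of~\eqref{eq:ham_ode} and using the fundamental theorem of calculus along the segment between $x_t$ and $\bar x_t$ gives the exact linear system
\begin{align*}
    \partial_t \delta x_t = \delta p_t\,, \qquad \partial_t \delta p_t = -H_t\, \delta x_t\,,
\end{align*}
where $\norm{H_t}_{\op} \le \beta$ because $-\beta I \preceq \nabla^2 V \preceq \beta I$. Since everything is linear in $(\delta x, \delta p)$ with a time-dependent bounded coefficient, no further regularity of $V$ is needed.

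\textbf{Step 1 (a priori bound).} Set $u_t \deq \norm{\delta x_t} + \beta^{-1/2}\norm{\delta p_t}$. Then $\partial_t u_t \le \beta^{1/2} u_t$, so Grönwall gives $u_t \le e^{\beta^{1/2} t} u_0 \lesssim u_0$ for $t \le h \lesssim \beta^{-1/2}$. This crude bound is too weak on its own, since it mixes the two scales. Refine it as follows:
\begin{align*}
    \sup_{t\le h}\norm{\delta p_t - \delta p_0} \le \beta h \sup_{t \le h}\norm{\delta x_t}\,, \qquad \sup_{t\le h}\norm{\delta x_t - \delta x_0} \le h\sup_{t \le h}\norm{\delta p_t}\,.
\end{align*}
Substituting one into the other and absorbing (possible because $\beta h^2 \ll 1$) yields
\begin{align*}
    \sup_{t\le h}\norm{\delta x_t} \lesssim \norm{\delta x_0} + h\norm{\delta p_0}\,, \qquad \sup_{t\le h}\norm{\delta p_t} \lesssim \norm{\delta p_0} + \beta h\norm{\delta x_0}\,.
\end{align*}

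\textbf{Step 2 (remainder for position).} Write $\delta x_h = \delta x_0 + h\,\delta p_0 + \int_0^h (\delta p_s - \delta p_0)\,\D s$, so $R_x = \int_0^h (\delta p_s - \delta p_0)\,\D s$. Since $\delta p_s - \delta p_0 = -\int_0^s H_r \,\delta x_r\,\D r$, Step 1 gives
\begin{align*}
    \norm{R_x} \le \tfrac{1}{2}\, \beta h^2 \sup_{t \le h}\norm{\delta x_t} \lesssim \beta h^2\norm{\delta x_0} + \beta h^3\norm{\delta p_0}\,.
\end{align*}

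\textbf{Step 3 (remainder for momentum).} Write
\begin{align*}
    \delta p_h = \delta p_0 - \Bigl(\int_0^h H_s\,\D s\Bigr)\,\delta x_0 - \int_0^h H_s\,(\delta x_s - \delta x_0)\,\D s\,,
\end{align*}
so $R_p = -\int_0^h H_s\,(\delta x_s - \delta x_0)\,\D s$. We have $\norm{\delta x_s - \delta x_0} \le s \sup_t \norm{\delta p_t} \lesssim s\,(\norm{\delta p_0} + \beta h\norm{\delta x_0})$, hence
\begin{align*}
    \norm{R_p} \lesssim \beta h^2\norm{\delta p_0} + \beta^2 h^3\norm{\delta x_0}\,.
\end{align*}
Keeping the integrated Hessian $\int_0^h H_s\,\D s$ exactly, rather than freezing it at $H_0$, is what keeps the $\delta x_0$ remainder at order $\beta^2 h^3$.

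The only delicate point is the absorption in Step 1, which must keep the $h$ and $\beta h$ weights separated rather than using the symmetric norm $u_t$. Once that is done, the rest is bookkeeping.
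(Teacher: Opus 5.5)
Your proof is correct and follows essentially the paper's route: integral representations of $\delta x$ and $\delta p$, an a priori bound $\sup_{s\le h}\norm{\delta x_s}\lesssim \norm{\delta x_0}+h\norm{\delta p_0}$, then substitution. Your Step~3 grouping into $\delta x_s-\delta x_0$ is also slightly tidier than the paper's bookkeeping of the $\delta p_0$ term, which it writes as $h(\int_0^h H_s\,\D s)\,\delta p_0$ instead of $(\int_0^h sH_s\,\D s)\,\delta p_0$, though the order of the bound is unchanged. The only real difference is how you get the a priori bound: the paper applies its second-order comparison lemma (Lemma~\ref{lem:diff-compare}) to $\norm{\delta x_t}\le \norm{\delta x_0}+t\norm{\delta p_0}+\beta\int_0^t (t-s)\norm{\delta x_s}\,\D s$, obtaining $\norm{\delta x_s}\le(\norm{\delta x_0}+s\norm{\delta p_0})\,e^{\beta^{1/2}s}$ for any constant in $h\lesssim\beta^{-1/2}$, whereas your absorption step needs $\beta h^2$ below a fixed threshold such as $1/2$ (harmless, since the paper always uses $h\ll\beta^{-1/2}$).
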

\begin{proof}
    Let us start with the error bound for $x$. Solving Hamilton's equations gives
    \begin{align}\label{eq:ham-ev-x}
        \delta x_h 
        &= \delta x_0 + h \, \delta p_0 - \int_0^h \int_0^t \nabla V(x_s)\, \D s \, \D t = \delta x_0 + h\, \delta p_0 - \int_0^h (h-s)\, H_s\, \delta x_s \, \D s\,.
    \end{align}
    Via a differential comparison argument (Lemma~\ref{lem:diff-compare}) combined with the spectral bound $-\beta I \preceq H_s \preceq \beta I$ and step size assumption $h \lesssim \nfrac{1}{\beta^{1/2}}$, we have for $s \leq h$ that
    \begin{align*}
        \norm{\delta x_s} \leq (\norm{\delta x_0} + s\,\norm{\delta p_0})\,e^{\beta^{1/2} h} \lesssim \norm{\delta x_0} + s\, \norm{\delta p_0}\,. 
    \end{align*}
    Substituting this into~\eqref{eq:ham-ev-x} gives the bound on $\norm{R_x}$.

    Next we establish the error bound for $p$. By the equation for $\delta x_s$ above,
    \begin{align*}
        \delta p_h = \delta p_0 - \int_0^h H_s\, \delta x_s \, \D s = \delta p_0 - \bigl(\int_0^h H_s \, \D s\bigr) \,\delta x_0 - h\, \bigl(\int_0^h H_s \, \D s \bigr)\, \delta p_0 - \int_0^h H_s\,R_x(s) \, \D s\,.
    \end{align*}
    Using the bound on $\norm{R_x}$ gives the proper bound for the remainder $\norm{R_p}$ as well.
\end{proof}

\begin{lemma}[Differential comparison lemma]\label{lem:diff-compare}
    Let $b,c \geq 0$. Suppose $y: [0, h] \to \R_+$ is continuous and satisfies for all $t \in [0, h]$,
    \begin{align}\label{eq:growth-bound}
        y(t)-y(0) \leq bt + c \int_0^t (t-s)\, y(s) \, \D s\,.
    \end{align}
    Then, for all $t \in [0, h]$,
    \begin{align*}
        y(t) \leq (y(0)+bt)\, e^{{\sqrt{c} t}}\,.
    \end{align*}
\end{lemma}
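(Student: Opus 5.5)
The approach is a Gronwall-type argument adapted to the second-order integral inequality~\eqref{eq:growth-bound}. The idea is to compare $y$ with the solution of the linear ODE $z'' = c\,z$, with data $z(0) = y(0)$ and $z'(0) = b'$ for a suitable $b' \ge b$. That solution is
\[
z(t) = y(0)\cosh(\sqrt c\, t) + \frac{b'}{\sqrt c}\sinh(\sqrt c\, t),
\]
and it satisfies the integral equation with equality, that is, $z(t) - z(0) = b't + c\int_0^t (t-s)\, z(s)\,\D s$. The elementary bounds $\cosh(u)\le e^{u}$ and $\sinh(u)/u \le e^{u}$ then give $z(t) \le (y(0) + b't)\,e^{\sqrt c\, t}$. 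Sending $b' \downarrow b$ yields the claimed bound on $y$.

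The key step is a comparison principle. Fix $\epsilon > 0$ and take $z_\epsilon$ with $z_\epsilon(0) = y(0) + \epsilon$ and $z_\epsilon'(0) = b + \epsilon$. Let $t^* \deq \inf\{t \in [0,h] : y(t) \ge z_\epsilon(t)\}$. At time $0$ we have $y(0) < z_\epsilon(0)$, so by continuity $t^* > 0$. On $[0, t^*]$ we have $y \le z_\epsilon$, and since $c \ge 0$ and the kernel $(t-s)$ is nonnegative,
\[
y(t^*) \le y(0) + bt^* + c\int_0^{t^*}(t^*-s)\, z_\epsilon(s)\,\D s = z_\epsilon(t^*) - \epsilon - \epsilon t^* < z_\epsilon(t^*).
\]
This contradicts the definition of $t^*$, which would force $y(t^*) = z_\epsilon(t^*)$ by continuity. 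Hence no such $t^*$ exists, so $y < z_\epsilon$ on all of $[0,h]$. Letting $\epsilon \to 0$ gives $y \le z_0$, where $z_0$ is the solution with data $(y(0), b)$. Combined with the exponential bound on $z_0$ above, this proves the lemma.

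One thing to check is that the elementary inequality $\sinh(u) \le u\,e^{u}$ holds. It follows from $\sinh(u) = \int_0^u \cosh(v)\,\D v \le u\cosh(u) \le u\,e^{u}$. The case $c = 0$ is immediate, since then the hypothesis reads $y(t) \le y(0) + bt$ directly.

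I expect essentially no obstacle here. The only delicate point is the strictness of the comparison, which is handled by the $\epsilon$-perturbation of both initial data. This is needed because the argument requires a strict inequality at the first crossing time.
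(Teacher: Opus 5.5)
Your proof is correct and follows the paper's route: compare $y$ with the solution $z$ of $z''=cz$, $z(0)=y(0)$, $z'(0)=b$, then bound $\cosh$ and $\sinh$ by exponentials. The only difference is how you establish $y\le z$: you use an $\epsilon$-perturbed comparison function and a first-crossing-time contradiction, whereas the paper shows $w=y-z\le 0$ by a contraction estimate on successive intervals of length less than $c^{-1/2}$; both arguments rely only on the nonnegativity of the kernel $c\,(t-s)$.
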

\begin{proof}
    Let $z(t) \deq y(0) + bt + c \int_0^t (t-s)\, z(s) \, \D s$ for $t \geq 0$. To compute $z(t)$, note that it solves the ODE $z'' = cz$ with boundary conditions $z(0) = y(0)$ and $z'(0) = b$. Thus
    \begin{align*}
        z(t) = y(0) \cosh(\sqrt{c} t) + \frac{b}{\sqrt{c}} \sinh(\sqrt{c} t) \leq (y(0)+bt)\, e^{\sqrt{c} t}\,.
    \end{align*}
    Above, the last step uses the elementary inequalities $\cosh x \leq e^x$ and $\sinh x \leq xe^x$, valid for $x \geq 0$.

    \par It therefore suffices to show that $w \deq y - z$ satisfies $w(t) \leq 0$ for all $t \in [0,h]$. This is straightforward if $c = 0$; henceforth assume $c > 0$. Suppose, for the sake of contradiction, that $M_0 \deq \max_{t \in [0, t_0]} w(t) > 0$ for $t_0 < c^{-1/2}$. Then for all $t \in [0, t_0]$, $w(t) \leq c \int_0^t (t-s) w(s)  \, \D s \leq c t_0^2 M_0$. Maximizing the left hand side over $t \in [0,t_0]$ and dividing through by $M_0$, we have $1 \leq c t_0^2$, which is a contradiction. Hence $w(t) \leq 0$ for $t \in [0,t_0]$. Now repeat the argument on the next interval $t \in [t_0,2t_0]$, the only extra step being that $c \int_0^t (t-s)w(s) \, \D s \leq c \int_{t_0}^t (t-s)w(s) \, \D s$ since we have just shown that $w \leq 0$ on $[0,t_0]$. Iterating this argument allows us to conclude that $w \leq 0$ on the whole interval $[0,h]$, as desired.
\end{proof}

\subsubsection{One-shot coupling}

The goal of this subsection is to prove the following regularity result for one iteration of OHO\@.

\begin{theorem}[Short-time regularity for OHO]\label{thm:regularity}
    Assume that $-\beta I \preceq \nabla^2 V \preceq \beta I$.
    Then, if $h \ll \nfrac{1}{\beta^{1/2} q^{1/2}} \wedge \nfrac{1}{\gamma}$,
    \begin{align*}
        \Renyi_q(\delta_{(x_0, p_0)} \bs P_h \mmid \delta_{(\bar x_0, \bar p_0)} \bs P_h) \lesssim \frac{q\,\norm{\bar x_0 - x_0}^2}{\gamma h^3} + \frac{q\,\norm{\bar p_0 - p_0}^2}{\gamma h} + \beta^2 d h^4 q\,.
    \end{align*}
    Moreover, if Assumption~\ref{as:hess-lip} holds, then
    \begin{align*}
        \Renyi_q(\delta_{(x_0, p_0)} \bs P_h \mmid \delta_{(\bar x_0, \bar p_0)} \bs P_h) \lesssim q\,\norm{\bar x_0 - x_0}^2\,\Bigl(\frac{1}{\gamma h^3} + \betH^2 h^4 d \Bigr)+ \frac{q\,\norm{\bar p_0 - p_0}^2}{\gamma h}\,.
    \end{align*}
    In particular, if $h \lesssim \nfrac{1}{\betH^{2/7} \gamma^{1/7} d^{1/7}}$, then the second factor in the $x$ term absorbs into the first, yielding
        \begin{align*}
        \Renyi_q(\delta_{(x_0, p_0)} \bs P_h \mmid \delta_{(\bar x_0, \bar p_0)} \bs P_h) \lesssim  \frac{q\, \norm{x_0 - \bar x_0}^2}{\gamma h^3} + \frac{q\, \norm{\bar p_0 - p_0}^2}{\gamma h}\,.
    \end{align*}
\end{theorem}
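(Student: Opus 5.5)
The plan is a one-shot coupling over the two Gaussian noises $\bar B_1,\bar B_2$ of a single OHO step. Write the output of $\delta_{(x_0,p_0)}\bs P_h$ as a deterministic function $F_{x_0,p_0}(\bar B_1,\bar B_2)$. I will construct a map $(b_1,b_2)\mapsto(b_1',b_2')$ such that
\[
F_{\bar x_0,\bar p_0}(b_1',b_2') = F_{x_0,p_0}(b_1,b_2)
\]
exactly. By the data-processing inequality (Proposition~\ref{prop:renyi}), $\Renyi_q(\delta_{(x_0,p_0)}\bs P_h \mmid \delta_{(\bar x_0,\bar p_0)}\bs P_h)$ is then at most the R\'enyi divergence between the law of $(b_1',b_2')$ with $(b_1,b_2)\sim\cN(0,I_{2d})$ and the standard Gaussian. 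That divergence can be computed by a change of variables. The matching is done in two stages.

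\emph{Stage 1 (positions).} Choose $b_1'$ so that the positions after the \eqref{eq:H} step coincide. Since $x_0$ is fixed, the position after \eqref{eq:H} is a function $G_{x_0}(P^\msO)$ of the refreshed momentum. By Lemma~\ref{lem:hmc-expansion} applied with $\delta x_0=0$, the map $P\mapsto G_{x_0}(P)$ equals $x_0+hP$ plus a perturbation whose Jacobian is $O(\beta h^3)$. Hence for $h\ll\beta^{-1/2}$ it is a diffeomorphism with Jacobian $h(I+E_{x_0}(P))$, where $\norm{E}_{\op}\lesssim\beta h^2$. We solve $G_{\bar x_0}(\bar P^\msO)=G_{x_0}(P^\msO)$ for $\bar P^\msO$. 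Again by Lemma~\ref{lem:hmc-expansion}, the solution satisfies
\[
\bar P^\msO - P^\msO = \frac{\delta x_0}{h} + O\bigl(\beta h\,\norm{\delta x_0}\bigr),
\]
uniformly in $P^\msO$. Writing $P^\msO=e^{-\gamma h/2}p_0+\sqrt{1-e^{-\gamma h}}\,b_1$ and using $1-e^{-\gamma h}\asymp\gamma h$ (valid since $h\ll 1/\gamma$), the induced shift satisfies
\[
\norm{b_1'-b_1}^2 \lesssim \frac{\norm{\delta x_0}^2}{\gamma h^3}+\frac{\norm{\delta p_0}^2}{\gamma h}.
\]

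\emph{Stage 2 (momenta).} Given the matched positions, the two momenta after \eqref{eq:H} differ by a deterministic function of $b_1$. Lemma~\ref{lem:hmc-expansion} (run between the two coupled trajectories) shows this difference is $O(\norm{\delta x_0}/h+\norm{\delta p_0})$. We set $b_2'=b_2+e^{-\gamma h/2}\,\delta P^\msOH_h/\sqrt{1-e^{-\gamma h}}$. Because this shift in $b_2$ does not depend on $b_2$, conditionally on $b_1$ it is a pure Gaussian translation. Its R\'enyi cost is $\tfrac q2\norm{\cdot}^2$, which is again bounded by the same two terms $q\norm{\delta x_0}^2/(\gamma h^3)+q\norm{\delta p_0}^2/(\gamma h)$. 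The costs of the two stages then combine by the R\'enyi chain rule, taking a supremum over $b_1$ for the conditional term.

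The main obstacle is Stage 1: the map $b_1\mapsto b_1'$ is nonlinear, so it is not a pure translation. Its density ratio is
\[
\exp\Bigl(-\tfrac12\norm{b_1'}^2+\tfrac12\norm{b_1}^2\Bigr)\,\bigl|\det Db_1'/Db_1\bigr|^{-1}.
\]
The Gaussian part is handled by bounding the shift and using Cauchy--Schwarz / H\"older to separate the linear term $\langle b_1,\text{shift}\rangle$; the shift's dependence on $b_1$ is Lipschitz with constant $O(\beta h^2)$. This step is where the condition $h\ll(\beta q)^{-1/2}$ should enter, ensuring integrability of the $q$-th moment. The determinant is $\det\bigl((I+E_{\bar x_0})^{-1}(I+E_{x_0})\bigr)$, and its fluctuation must be controlled through $\norm{E_{x_0}-E_{\bar x_0}}_{\rm F}^2$. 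Without Assumption~\ref{as:hess-lip}, I would bound each of $E_{x_0},E_{\bar x_0}$ crudely by $\beta h^2$ in operator norm and hence by $\beta^2h^4d$ in squared Frobenius norm. Expanding $\log\det(I+A)=\tr A+O(\norm A_{\rm F}^2)$, the centered trace part, viewed as a function of the Gaussian $b_1$, is Lipschitz-controlled, giving the additive $\beta^2dh^4q$. Under Assumption~\ref{as:hess-lip}, the Hessians along the two trajectories (which share the same positions at time $h$ and differ by $O(\norm{\delta x_0})$ along the path) differ by $\betH\norm{\delta x_0}$ in operator norm. This gives $\norm{E_{x_0}-E_{\bar x_0}}_{\rm F}^2\lesssim\betH^2h^4d\,\norm{\delta x_0}^2$ and hence the refined term $q\betH^2h^4d\,\norm{\delta x_0}^2$. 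The final absorption claim is immediate: $\betH^2h^4d\le 1/(\gamma h^3)$ exactly when $h\le(\betH^{2}\gamma d)^{-1/7}$.
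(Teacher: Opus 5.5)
Your coupling is the paper's. Stage 1 is exactly the choice $P^{\rm target}=\Psi_h(\bar x_0,\Phi_h^{\mathsf x}(x_0,P^{\mathsf O}))$, Stage 2 is the same Gaussian translation of $b_2$, and your shift estimate $\norm{b_1'-b_1}^2\lesssim \norm{\delta x_0}^2/(\gamma h^3)+\norm{\delta p_0}^2/(\gamma h)$ and your Jacobian estimates are the ones used there.

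The gap is in the change of variables for $b_1$, which is the heart of the proof. Write $F: b_1\mapsto b_1'$ and $M\deq \nabla F^{-1}-I$, so $\norm{M}_{\op}\lesssim \beta h^2$. The $q$-th moment of the density ratio contains two \emph{first-order} quantities:
\begin{itemize}
\item the cross term $\langle y, y-F^{-1}(y)\rangle$, whose Gaussian mean is $-\E\operatorname{tr}M$ by Gaussian integration by parts;
\item the log-determinant $\log\det(I+M)=\operatorname{tr}M+O(\norm{M}_{\rm F}^2)$.
\end{itemize}
Each is of size up to $d\norm{M}_{\op}\asymp\beta h^2 d$, and both can genuinely be that large when $\delta x_0$ is large. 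Unless they cancel exactly, you only get $\Renyi_q\lesssim q\beta h^2 d$, which is first order rather than the claimed $q\beta^2h^4 d$; at $h\asymp d^{-1/4}$ this is $d^{1/2}$ versus $O(1)$. Your Lipschitz bound $\abs{\langle b_1,r(b_1)\rangle}\le\beta h^2\norm{b_1}^2$ on the nonconstant part $r$ of the shift produces exactly this bad bound.

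If you split the two terms by H\"older, you must first match their means via Stein's identity, and then control the fluctuations of $b_1\mapsto\operatorname{tr}M(b_1)$. Your claim that these are ``Lipschitz-controlled'' does not hold. $\operatorname{tr}M$ is built from $\nabla^2 V$ along the two trajectories, so its Lipschitz constant in $b_1$ is governed by $\nabla^3 V$, which the first statement does not assume. Under Assumption~\ref{as:hess-lip} the gradient involves $\nabla\Delta V$, which is only bounded by $d\betH$ and carries no factor $\norm{\delta x_0}$. So the refined, $\norm{\delta x_0}^2$-proportional bound would not come out either.

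The missing idea is how the paper avoids any fluctuation analysis, by cancelling both first-order terms exactly.
\begin{enumerate}
\item Write $\E_\Gamma(\D F_\#\Gamma/\D\Gamma)^q$ as the expectation of the $(q-\frac12)$-th power times the $\frac12$-th power, and expand the $(q-\frac12)$-th power explicitly.
\item Apply Cauchy--Schwarz so that one factor is exactly $\D(F_q)_\#\Gamma/\D\Gamma$, where $F_q^{-1}\deq(2q-1)F^{-1}-(2q-2)\mathrm{id}$. Its expectation is $1$, and it absorbs the entire cross term $(2q-1)\langle B_1,B_1-\bar B_1\rangle$ together with $\det\nabla F_q^{-1}=\det(I+(2q-1)M)$.
\item What remains, under $\bar B_1\sim\Gamma$ and $B_1=F(\bar B_1)$, is the square root of
\[
\E_{\mathbf{P}}\Bigl[\exp\bigl(O(q^2)\,\norm{B_1-\bar B_1}^2\bigr)\,\frac{\det^{2q-1}(I+M)}{\det(I+(2q-1)M)}\Bigr]\,.
\]
\item By Lemma~\ref{lem:log-det}, the $\operatorname{tr}M$ terms cancel pointwise in the determinant ratio. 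This leaves $\exp(O(q^2\norm{M}_{\rm F}^2))\le\exp(O(q^2\lambda_{\max}^2 d))$ deterministically, and your uniform shift estimate bounds the exponential factor deterministically.
\end{enumerate}
The condition $h\ll(\beta q)^{-1/2}$ is what guarantees $(2q-1)\norm{M}_{\op}\le\frac12$, so that $F_q$ is a diffeomorphism and the log-det expansion applies. With this step inserted, your estimates $\lambda_{\max}\lesssim\beta h^2$ and $\lambda_{\max}\lesssim\betH h^2\norm{\delta x_0}$ give both claimed bounds. The final absorption under $h\lesssim(\betH^2\gamma d)^{-1/7}$ is then as you say.
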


To prove these results, we make use of the one-shot coupling method, similar to the works of~\cite{bou2023mixing, monmarche2024entropic, chak2025reflection, gouraud2025hmc}. This technique originates in earlier works~\cite{roberts2002one, madras2010quantitative}. In particular, we highlight the short-time regularity which appears in~\cite[Theorem 2]{monmarche2024entropic} and~\cite[Lemma 15]{bou2023mixing}.
We proceed to describe the strategy.

Recall that $\Phi_t(\cdot, \cdot)$ denotes the Hamiltonian flow map $(x_0, p_0) \mapsto (x_t,p_t)$ for the Hamiltonian dynamics in~\eqref{eq:ham_ode}.
Let $\Phi_t^{\msx}, \Phi_t^{\msp}$ denote the respective components of this map.
Then, for two independent standard Gaussian random variables $\bar B_1$, $\bar B_2$, we can write
\begin{align*}
    \bar X^\msOHO_h &= \Phi_h^{\msx}\bigl(\bar x_0,\, e^{-\gamma h/2} \bar p_0 + \sqrt{1-e^{-\gamma h}} \bar B_1\bigr)\,, \\
    \bar P^\msOHO_h &= e^{-\gamma h/2} \Phi_h^{\msp}\bigl(\bar x_0,\, e^{-\gamma h/2} \bar p_0 + \sqrt{1-e^{-\gamma h}} \bar B_1\bigr) + \sqrt{1-e^{-\gamma h}} \bar B_2\,.
\end{align*}
See the top process in Figure~\ref{fig:one-shot}.
Similarly, $(X^\msOHO, P^\msOHO)$ admits an analogous representation in terms of the coupled noise $\bar{B_1},\bar{B_2}$, but with $(x_0, p_0)$ in place of $(\bar x_0, \bar p_0)$:
\begin{align}\label{eq:shift-constraint-orig}
\begin{aligned}
    X^\msOHO_h &= \Phi_h^{\msx}\bigl(x_0,\, e^{-\gamma h/2} p_0 + \sqrt{1-e^{-\gamma h}} \bar B_1\bigr)\,, \\
    P^\msOHO_h &= e^{-\gamma h/2} \Phi_h^{\msp}\bigl(x_0,\, e^{-\gamma h/2} p_0 + \sqrt{1-e^{-\gamma h}} \bar B_1\bigr) + \sqrt{1-e^{-\gamma h}} \bar B_2\,.
\end{aligned}
\end{align}
See the bottom process in Figure~\ref{fig:one-shot}. In addition to this direct representation of $(X^\msOHO_h, P^\msOHO_h)$, we also consider an auxiliary process which starts from $(\bar x_0, \bar p_0)$ rather than $(x_0, p_0)$ and hits $(X^\msOHO_h, P^\msOHO_h)$ by using \emph{auxiliary} noise increments $B_1,B_2$:
\begin{align}\label{eq:shift-constraint}
\begin{aligned}
    X^\msOHO_h &= \Phi_h^{\msx}\bigl(\bar x_0,\, e^{-\gamma h/2} \bar p_0 + \sqrt{1-e^{-\gamma h}} B_1\bigr)\,, \\
    P^\msOHO_h &= e^{-\gamma h/2} \Phi_h^{\msp}\bigl(\bar x_0,\, e^{-\gamma h/2} \bar p_0 + \sqrt{1-e^{-\gamma h}} B_1\bigr) + \sqrt{1-e^{-\gamma h}} B_2\,.
\end{aligned}
\end{align}
See the diagonal blue process in Figure~\ref{fig:one-shot}. If there is a unique solution $(B_1,B_2)$, then it is a deterministic function of $(x_0, p_0, \bar x_0, \bar p_0, \bar B_1, \bar B_2)$.

\begin{figure}
    \centering
    \includegraphics[width=0.9\linewidth]{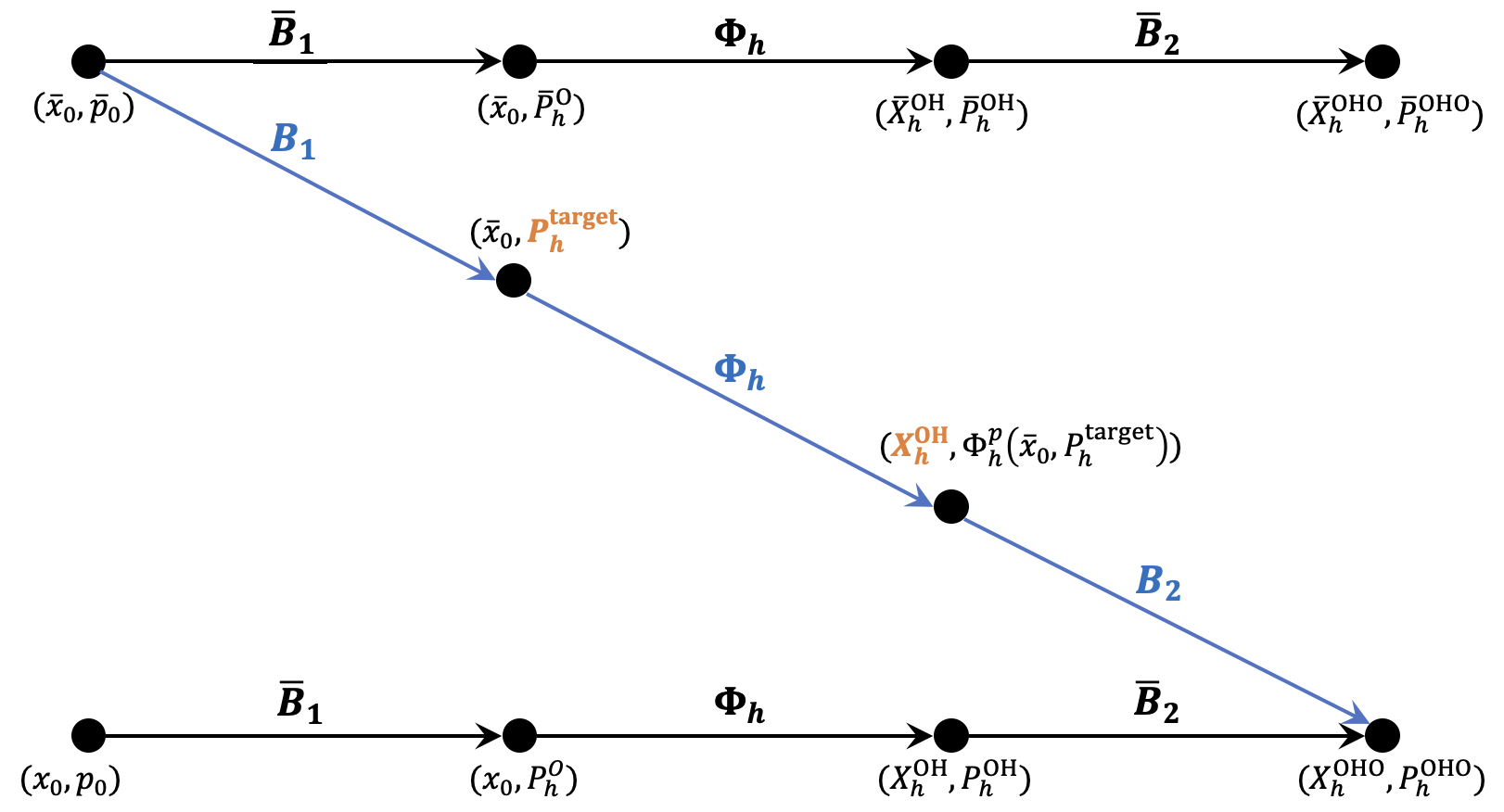}
    \caption{\footnotesize Schematic diagram of the one-shot coupling in the proof of Theorem~\ref{thm:regularity}. Top and bottom (in black): the two OHO processes $(\bar{x}_0,\bar{p}_0) \mapsto (\bar{X}_h^\msOHO,\bar{P}_h^{\msOHO})$ and $(x_0,p_0) \mapsto (X_h^\msOHO,P_h^{\msOHO})$ are coupled to use the same Gaussian noise increments $\bar{B}_1, \bar{B}_2$. Diagonal (in blue): different noise increments $B_1,B_2$ enable an auxiliary OHO process $(\bar{x}_0,\bar{p}_0) \mapsto (X_h^\msOHO,P_h^{\msOHO})$ that starts from one process and ends at the other. To achieve this interpolation, the auxiliary noise increments $B_1,B_2$ are uniquely determined as a function of $x_0,p_0,\bar{x}_0,\bar{p}_0,\bar{B_1},\bar{B_2}$. Orange: $B_1$ in the first ``O'' step is uniquely determined so that the resulting momentum $P_h^{\target}$ will enable the correct $x$-coordinate $X_h^\msOH$ after the ``H'' step. $B_2$ is then uniquely determined so that the $p$-coordinate $P_h^\msOHO$ will match after the final ``O'' step. 
    }
    \label{fig:one-shot}
\end{figure}

This expresses $(\bar X_h^\msOHO, \bar P_h^\msOHO) = \mc T(\bar B_1,\bar B_2)$ for some transformation $\mc T$, and $(X_h^\msOHO, P_h^\msOHO) = \mc T(B_1,B_2)$ for the \emph{same} transformation $\mc T$.
Thus, by the data-processing inequality,
\begin{align}\label{eq-pf:thm-regularity:postprocessing}
    \Renyi_q(\delta_{(x_0, p_0)} \bs P_h \mmid \delta_{(\bar x_0, \bar p_0)} \bs P_h)
    \le \Renyi_q(\law(B_1,B_2) \mmid \law(\bar B_1,\bar B_2))\,.
\end{align}
To bound the right-hand side, we use the change-of-variables formula, which requires understanding the mapping $(\bar B_1,\bar B_2) \mapsto (B_1,B_2)$ (with $x_0$, $p_0$, $\bar x_0$, $\bar p_0$ fixed). We do this below.

\begin{proof}[Proof of Theorem~\ref{thm:regularity}]
\textbf{Step 1: computing the mapping $(\bar B_1,\bar B_2) \mapsto (B_1,B_2)$.} First we solve for $B_1$ by ensuring that $X^\msOHO_h$ agrees in the two representations~\eqref{eq:shift-constraint-orig} and~\eqref{eq:shift-constraint}. This amounts to
\begin{align*}
    \Phi_h^{\msx}\bigl(\bar x_0,\, e^{-\gamma h/2} \bar p_0 + \sqrt{1-e^{-\gamma h}} B_1\bigr)
    =
    \Phi_h^{\msx}\bigl(x_0,\, e^{-\gamma h/2} p_0 + \sqrt{1-e^{-\gamma h}} \bar B_1\bigr)
    \,.
\end{align*}
To solve for $B_1$, let $\Psi_h(\bar x_0, x)$ denote the unique solution (guaranteed for $h \ll \nfrac{1}{\beta^{1/2}}$, see the comment after Lemma~\ref{lem:inv_fn}) to the equation $\Phi_h^{\msx}(\bar x_0, \Psi_h(\bar x_0, x)) = x$. In words, $\Psi_h(\bar x_0, x)$ is the momentum needed to hit $x$ after time $h$ starting from $x_0$. Furthermore, $\Psi_h, \Phi_h^\msx$ will both be diffeomorphisms whose derivatives we will calculate explicitly. Then the above equation is equivalent to
\begin{align}\nonumber
    e^{-\gamma h/2} \bar p_0 + \sqrt{1-e^{-\gamma h}} B_1
    =
    P_h^\target 
    \deq \Psi_h(\bar x_0, \Phi_h^{\msx}(x_0, P_h^\msO))
    \,,
\end{align}
where we use the shorthand $P_h^\msO \deq e^{-\gamma h/2} p_0 + \sqrt{1-e^{-\gamma h}} \bar B_1$.
Solving for $B_1$ gives
\begin{align}\label{eq-pf:thm-regularity:B1}
    B_1 = \frac{P_h^\target - e^{-\gamma h/2} \bar p_0}{\sqrt{1-e^{-\gamma h}} }\,.
\end{align}
\par Next we solve for $B_2$ by ensuring that $P^\msOHO_h$ agrees in the two representations~\eqref{eq:shift-constraint-orig} and~\eqref{eq:shift-constraint}.
This amounts to
\begin{align*}
    e^{-\gamma h/2} \Phi_h^{\msp}\bigl(\bar x_0,\, P_h^\target \bigr) + \sqrt{1-e^{-\gamma h}} B_2
    =
    e^{-\gamma h/2} \Phi_h^{\msp}\bigl(x_0,\, P_h^\msO \bigr) + \sqrt{1-e^{-\gamma h}} \bar B_2
    \,.
\end{align*}
Solving for $B_2$ gives
\begin{align}\label{eq-pf:thm-regularity:B2}
    B_2
    =
   \bar B_2 +  
   \frac{e^{-\gamma h/2}}{\sqrt{1-e^{-\gamma h}} }\, \bigl( 
   \Phi_h^{\msp}\bigl(x_0,\, P_h^\msO \bigr) - \Phi_h^{\msp}\bigl(\bar x_0,\, P_h^\target \bigr)\bigr)
    \,.
\end{align}

\par \textbf{Step 2: bounding the R\'enyi divergence in terms of Radon--Nikodym derivatives.}
By~\eqref{eq-pf:thm-regularity:B1} and~\eqref{eq-pf:thm-regularity:B2}, the mapping $(\bar B_1, \bar B_2) \mapsto (B_1, B_2)$ is a diffeomorphism over $\R^d \times \R^d$. Let $F := F_{x_0, p_0, \bar x_0, \bar p_0}$ denote the function sending $\bar{B_1}$ to $B_1$. Then since $\bar B_1, \bar B_2$ are independent standard Gaussians, the Radon--Nikodym derivative decomposes as 
\begin{align}
    \frac{\D \law(B_1, B_2)}{\D \law(\bar B_1, \bar B_2)}
    =
    \frac{\D \law(B_1)}{\D \Gamma} 
    \cdot 
    \frac{\D \law(B_2 \mid B_1)}{\D \Gamma} 
   =
    \frac{\D F_{\#} \Gamma}{\D \Gamma} 
    \cdot 
    \frac{\D \law(B_2 \mid B_1)}{\D \Gamma} \,,
\end{align}
where $\Gamma$ denotes the standard Gaussian distribution. Hence by~\eqref{eq-pf:thm-regularity:postprocessing}, we can bound the desired R\'enyi divergence by
\begin{align}\label{eq-pf:thm-regularity:renyi}
    \Renyi_q(\delta_{(\bar x_0, \bar p_0)} \bs P_h  \mmid \delta_{(x_0, p_0)} \bs P_h) 
    &\leq \Renyi_q(\law(B_1,B_2) \mmid \law(\bar B_1,\bar B_2)) \nonumber
    \\ &= \frac{1}{q-1} \log  \E\Bigl[\Bigl(\frac{\D  F_\# \Gamma}{\D \Gamma}(\bar B_1) \Bigr)^q \,\E \Bigl[\Bigl(\frac{\D \operatorname{law}(B_2 \mid \bar B_1)}{\D \Gamma}\Bigr)^q \Bigm\vert \bar B_1 \Bigr]\Bigr]
\end{align}

\par \textbf{Step 3: Radon--Nikodym derivative between $B_1$ and $\bar B_1$.} Let us focus on the first term in~\eqref{eq-pf:thm-regularity:renyi}. Express the Radon--Nikodym between $B_1$ and $\bar B_1$ as
\begin{align*}
    \frac{\D F_{\#} \Gamma}{\D \Gamma}(b) &= \exp\Bigl(-\frac{1}{2}\, \norm{F^{-1}(b)}^2 + \frac{1}{2}\, \norm{b}^2 \Bigr)\, \abs{\det \nabla F^{-1}(b)} \\
    &= \exp\Bigl(-\frac{1}{2}\, \norm{F^{-1}(b) - b}^2 + \langle b, b - F^{-1}(b)\rangle \Bigr)\, \bigl\lvert\frac{1}{\det \nabla F \circ F^{-1}(b)}\bigr\rvert\,.
\end{align*}
As we later see, the transform $F$ is invertible, and we will give precise bounds on the magnitude of its determinant. We evaluate, for $q \geq 2$,
\begin{align*}
    &\E_\Gamma \Bigl(\frac{\D F_{\#} \Gamma}{\D  \Gamma} \Bigr)^q = \E_{\Gamma}\Bigl[\Bigl(\frac{\D F_{\#} \Gamma}{\D  \Gamma} \Bigr)^{q-\frac{1}{2}} \cdot \Bigl(\frac{\D  F_{\#} \Gamma}{\D  \Gamma} \Bigr)^{\frac{1}{2}}\Bigr] \\
    &\qquad = \E_{\mathbf Q} \Bigl[\exp\Bigl(-\frac{q-1/2}{2}\, \norm{\bar B_1 - B_1}^2 + (q-1/2)\, \langle B_1, B_1 - \bar B_1\rangle \Bigr) \, {\operatorname{det}^{q-1/2} \nabla F^{-1}(B_1)} \cdot \Bigl(\frac{\D  F_{\#} \Gamma}{\D \Gamma} \Bigr)^{\frac{1}{2}}\Bigr]\,,
\end{align*}
where, under $\mathbf Q$, $B_1$ is Gaussian and $\bar B_1 = F^{-1}(B_1)$. 
Now, let $F_q^{-1} \deq (2q-1)\,F^{-1} - (2q-2)\id$. If we can bound $(2q-1)\, \norm{\nabla F^{-1} - I}_{\op} \leq \frac{1}{2}$ uniformly, then $F_q$ corresponds to a well-defined invertible function, so this definition is valid. 
By Cauchy--Schwarz,
\begin{align*}
    \E_\Gamma \Bigl(\frac{\D F_{\#} \Gamma}{\D \Gamma} \Bigr)^q
    &\le \E_{\mathbf Q}\Bigl[\exp\Bigl(-\frac{(2q-1)^2}{2}\,\norm{\bar B_1-B_1}^2 + (2q-1)\,\langle B_1, B_1-\bar B_1\rangle\Bigr) \det \nabla F_q^{-1}(B_1)\Bigr]^{1/2} \\
    &\qquad{}\times \E_{\mathbf P}\Bigl[\exp\Bigl(\bigl(-\frac{2q-1}{2} + \frac{(2q-1)^2}{2}\bigr)\, \norm{B_1 - \bar B_1}^2\Bigr)\,\frac{\det^{2q-1} \nabla F^{-1}(B_1)}{\det\nabla F_q^{-1}(B_1)}\Bigr]^{1/2}\,.
\end{align*}
Under $\mathbf P$, $\bar B_1$ is Gaussian and $B_1 = F(\bar B_1)$. Moreover, the first term is $\sqrt{\E_\Gamma \frac{\D  (F_q)_{\#} \Gamma}{\D \Gamma}} = 1$.

Now, suppose temporarily that we could ensure $\lambda_{\max} \deq \norm{\nabla F^{-1}(\cdot) - I}_{\operatorname{op}} \ll \nfrac{1}{q}$.
In fact, by~\eqref{eq-pf:thm-regularity:B1},
\begin{align*}
    \lambda_{\max} \le \sup_{p\in\R^d}{\bigl\lVert\bigl(\nabla_p[\Psi_h(\bar x_0, \Phi_h^{\msx}(x_0, p))]\bigr)^{-1} - I\bigr\rVert_{\rm op}}\,.
\end{align*}
To bound this, first introduce the related quantity
\begin{align*}
    \bar \lambda_{\max} \le \sup_{p\in\R^d}{\norm{\nabla_p[\Psi_h(\bar x_0, \Phi_h^{\msx}(x_0, p))] - I}_{\rm op}}\,,
\end{align*}
and we bound this in Lemma~\ref{lem:hmc-distortion} and Lemma~\ref{lem:hmc-distortion-II} below, with and without Assumption~\ref{as:hess-lip}. The maximum eigenvalue is bounded according to the two Jacobian bounds as
\begin{align}\label{eq:eig-bounds}
    \bar \lambda_{\max} &\lesssim  \begin{cases}
        \beta h^2\,, & \text{without Hessian Lipschitzness}\,, \\
        \betH h^2\, \norm{x_0 - \bar x_0}\,, & \text{with Hessian Lipschitzness (Assumption~\ref{as:hess-lip})}\,.
    \end{cases}
\end{align}
Assume that $\bar \lambda_{\max} \ll \nfrac{1}{q}$ and $q \geq 2$. Then, we obtain $\lambda_{\max} \ll \nfrac{1}{q}$ as well, and furthermore that $\lambda_{\max} \lesssim \bar \lambda_{\max}$, since the relation
\begin{align*}
    \norm{A^{-1} - I}_{\op} \leq \frac{\norm{A-I}_{\op}}{1-\norm{A-I}_{\op}}
\end{align*}
always holds so long as $\norm{A-I}_{\op} < 1$. We note the subtle point that the condition $\bar \lambda_{\max} \ll \nfrac{1}{q}$ holds so long as $h \ll \nfrac{1}{\beta^{1/2} q^{1/2}}$, as we can always use the first bound in~\eqref{eq:eig-bounds}. Under these conditions, invoking Lemma~\ref{lem:log-det},
\begin{align}
    \E_\Gamma \Bigl(\frac{\D F_{\#} \Gamma}{\D \Gamma} \Bigr)^q
    &\le \E_{\mathbf P}\Bigl[\exp\bigl(-\log \det(I + (2q-1)\,(\nabla F^{-1}-I)) + (2q-1) \log\det(I + \nabla F^{-1} - I) \bigr) \nonumber\\
    &\hspace{50pt} \times\exp\Bigl(\frac{(2q-1)^2}{2}\norm{B_1 - \bar B_1}^2\Bigr) \Bigr]^{1/2} \nonumber \\
    &\le \exp(O(q^2 \lambda_{\max}^2 d))\, \E_{\mb P}\Bigl[\exp\Bigl(\frac{(2q-1)^2}{2}\norm{B_1 - \bar B_1}^2\Bigr)\Bigr]^{1/2} \,.\label{eq:radon-nikodym-final}
\end{align}

Now, note that for $h \lesssim \nfrac{1}{\gamma}$,
\begin{align*}
    \norm{\bar B_1 - B_1}^2 = \frac{1}{1-e^{-\gamma h}}\, \norm{P_h^\target - \bar P^\msO_h}^2 \asymp \frac{1}{\gamma h}\, \norm{P_h^\target - \bar P_h^\msO}^2\,.
\end{align*}
We split the bound for this into two parts. First, we consider
\begin{align*}
    \norm{P_h^\target - \bar P^\msO} \leq \norm{P_h^\target - P_h^\msO} + \norm{P_h^\msO - \bar P_h^\msO} = \norm{P_h^\target - P_h^\msO} + e^{-\frac{\gamma h}{2}}\,\norm{p_0 - \bar p_0}\,,
\end{align*}
where the first term measures the deviation of $\Psi_h(\bar x_0, \Phi_h^{\msx}(x_0, \cdot))$ from identity, and the latter follows from the expression for the Ornstein--Uhlenbeck semigroup under synchronous coupling. For the first term, we note that since $\Phi_h^{\msx}(x_0, P^\msO) = \Phi_h^{\msx}(\bar x_0, P^\target)$,
 Lemma~\ref{lem:hmc-expansion} gives
\begin{align*}
    P_h^\msO - P_h^\target = -\frac{1}{h}\,(x_0 - \bar x_0) + \frac{R_x}{h}\,,
\end{align*}
where $\norm{R_x} \lesssim \beta h^3\, \norm{P_h^\msO - P_h^\target } + \beta h^2\, \norm{x_0-\bar x_0}$. This implies that, if $h \ll \nfrac{1}{\beta^{1/2}}$,
\begin{align}\label{eq:shooting-diff}
    \norm{P_h^\msO - P_h^\target} \lesssim \frac{1}{h}\,\norm{x_0 - \bar x_0}\,.
\end{align}
Putting this all together gives the resultant bound
\begin{align*}
     \norm{P_h^\target - \bar P_h^\msO}^2 \lesssim {\norm{p_0 - \bar p_0}^2} + \frac{\norm{x_0- \bar x_0}^2}{h^2}\,.
\end{align*}
Therefore,
\begin{align*}
    \log \E_\Gamma \Bigl(\frac{\D  F_{\#} \Gamma}{\D \Gamma} \Bigr)^q
    &\lesssim q^2\,\Bigl[ \lambda_{\max}^2 d + \frac{\norm{x_0-\bar x_0}^2}{\gamma h^3} + \frac{\norm{p_0-\bar p_0}^2}{\gamma h}\Bigr]\,.
\end{align*}

\par \textbf{Step 4: conditional Radon--Nikodym derivative between $B_2$ and $\bar B_2$.} We now focus on the second term in~\eqref{eq-pf:thm-regularity:renyi}. By~\eqref{eq-pf:thm-regularity:B2}, $B_2 - \bar B_2$ is constant conditional on $\bar B_1$ (i.e., it is not a function of $\bar B_2$).
Thus by a standard computation of the divergence between two Gaussians with the same covariance,
\begin{align*}
    \E_{\mb P} \Bigl[\Bigl(\frac{\D \operatorname{law}(B_2 \mid \bar B_1)}{\D \Gamma}\Bigr)^q \Bigm\vert \bar B_1 \Bigr]
    &= \E_{\mb P}\Bigl[\exp\Bigl(\frac{q\,(q-1)}{2}\,\norm{B_2-\bar B_2}^2\Bigr) \Bigm\vert \bar B_1 \Bigr]\,.
\end{align*}
Now by~\eqref{eq-pf:thm-regularity:B2}, we have for $h \lesssim \nfrac{1}{\gamma}$ that
\begin{align*}
     \norm{B_2 - \bar B_2}^2 \lesssim \frac{1}{\gamma h}\, \norm{\Phi_h^{\msp}(x_0, P^\msO) - \Phi_h^{\msp}(\bar x_0, P^\target)}^2\,.
\end{align*}
Via Lemma~\ref{lem:hmc-expansion}, we can bound this by
\begin{align*}
    \norm{\Phi_h^{\msp}(x_0, P^\msO) - \Phi_h^{\msp}(\bar x_0, P^\target)}^2 \lesssim \norm{P_h^\msO - P^\target}^2 + \beta^2 h^2\, \norm{x_0 - \bar x_0}^2 \lesssim \frac{1}{h^2}\, \norm{x_0 - \bar x_0}^2\,,
\end{align*}
for $h \lesssim \nfrac{1}{\beta^{1/2}}$, where the last inequality is via~\eqref{eq:shooting-diff}. This follows regardless of the value of $\bar B_1$. It is clear that the exponential moments of this can be absorbed into the other terms.

\par \textbf{Step 5: putting the bounds together.} Combining~\eqref{eq-pf:thm-regularity:renyi} with the bounds in steps 3 and 4 (to respectively bound the two integrated terms), we conclude the desired bound
\begin{align*}
    \Renyi_q(\delta_{(\bar x_0, \bar p_0)} \bs P_h  \mmid \delta_{(x_0, p_0)} \bs P_h) 
    &\leq \frac{1}{q-1} \log  \E_{\mb P}\Bigl[\Bigl(\frac{\D  F_\# \Gamma}{\D \Gamma}(\bar B_1) \Bigr)^q \,\E_{\mb P} \Bigl[\Bigl(\frac{\D \operatorname{law}(B_2 \mid \bar B_1)}{\D \Gamma}\Bigr)^q \Bigm\vert \bar B_1 \Bigr]\Bigr] \\
    &\lesssim q\,\Bigl(\lambda_{\max}^2 d  + \frac{\norm{x_0 - \bar x_0}^2}{\gamma h^3} + \frac{\norm{p_0 - \bar p_0}^2}{\gamma h}\Bigr)\,. \qedhere
\end{align*}
\end{proof}

\subsubsection{Jacobian bounds}

Here we establish the Jacobian bounds used in the preceding subsection. In words, these results quantify the stability of the map $\bar p_0 \mapsto p_0$ that ensures the same $x$-coordinate after running the Hamiltonian flow $\Phi_h^{\msx}(\bar x_0, \bar p_0) = \Phi_h^{\msx} (x_0, p_0)$ for time $h$ from different initializations $\bar{x_0}, x_0$. This type of bound is essential for one-shot coupling arguments~\cite[Lemma 26]{bou2023mixing}, although the particular statement will vary based on the assumptions.

\paragraph{Bound without Hessian Lipschitzness.}
\begin{lemma}[Jacobian bound]\label{lem:hmc-distortion}
    If $h \lesssim \nfrac{1}{\beta^{1/2}}$, then for any $p_0 \in \R^d$ and pair $(x_0, \bar x_0) \in \R^d \times \R^d$,
    \begin{align*}
        \norm{\nabla_{p_0}[\Psi_h(\bar x_0, \Phi_h^{\msx}(x_0, p_0))] - I} \lesssim \beta h^2\,.
    \end{align*}
\end{lemma}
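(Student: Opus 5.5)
By the chain rule, $\nabla_{p_0}[\Psi_h(\bar x_0, \Phi_h^{\msx}(x_0, p_0))] = (\partial_p \Phi_h^{\msx}(\bar x_0, \bar p))^{-1}\, \partial_p \Phi_h^{\msx}(x_0, p_0)$, where $\bar p \deq \Psi_h(\bar x_0, \Phi_h^{\msx}(x_0,p_0))$. Here we use the implicit function theorem applied to $\Phi_h^{\msx}(\bar x_0, \Psi_h(\bar x_0, x)) = x$, which gives $\nabla_x \Psi_h(\bar x_0,x) = (\partial_p\Phi_h^{\msx}(\bar x_0, \Psi_h(\bar x_0,x)))^{-1}$. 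So the plan is to show that each position-momentum Jacobian $J(x,p) \deq \partial_p \Phi_h^{\msx}(x,p)$ is close to $hI$, and then compare the two.

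\textbf{Step 1: the variational equation.} Differentiating~\eqref{eq:ham_ode} in the initial momentum $p$ gives the linearized flow
\[
\dot{\mathcal X}_t = \mathcal P_t,\qquad \dot{\mathcal P}_t = -\nabla^2 V(x_t)\,\mathcal X_t,\qquad \mathcal X_0 = 0,\ \mathcal P_0 = I,
\]
with $J = \mathcal X_h$. This is exactly the setting of Lemma~\ref{lem:hmc-expansion}, applied columnwise with $\delta x_0 = 0$, $\delta p_0 = e_i$, and $H_s$ replaced by $\nabla^2 V(x_s)$. Equivalently, one can take the infinitesimal limit of that lemma. It yields
\[
\mathcal X_h = hI + R,\qquad \norm{R}_{\op} \lesssim \beta h^3 .
\]
Concretely, $R = -\int_0^h (h-s)\, \nabla^2 V(x_s)\, \mathcal X_s\, \D s$, and Lemma~\ref{lem:diff-compare} gives $\norm{\mathcal X_s}_{\op} \lesssim s$ for $h \lesssim \beta^{-1/2}$. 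This gives $J(x,p) = h(I + E(x,p))$ with $\norm{E}_{\op} \lesssim \beta h^2$, uniformly in $(x,p)$, using only $-\beta I \preceq \nabla^2 V \preceq \beta I$.

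\textbf{Step 2: invertibility of $\Psi_h$.} For $h \ll \beta^{-1/2}$ we have $\norm{E}_{\op} \le 1/2$. Then $p \mapsto \Phi_h^{\msx}(\bar x_0,p)/h$ is a perturbation of the identity by a map with Lipschitz constant at most $1/2$. Hence it is a global diffeomorphism, by the Hadamard/contraction argument. So $\Psi_h$ is well defined and the chain rule above is valid.

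\textbf{Step 3: combining.} The Jacobian equals
\[
(I+\bar E)^{-1}(I+E) = I + (I+\bar E)^{-1}(E - \bar E),
\]
whose deviation from $I$ in operator norm is at most $2(\norm{E}_{\op} + \norm{\bar E}_{\op}) \lesssim \beta h^2$. The factors of $h$ cancel exactly. This is the key point: we never need $x_0$ close to $\bar x_0$, since each Jacobian is individually $h(I + O(\beta h^2))$.

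The only mildly delicate step is Step 1, namely justifying the Grönwall-type bound $\norm{\mathcal X_s}_{\op} \lesssim s$ via Lemma~\ref{lem:diff-compare}. This follows directly by applying that lemma to $y(t) = \norm{\mathcal X_t v}$ for a unit vector $v$, with $b = 1$ and $c = \beta$.
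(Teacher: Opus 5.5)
Your proposal is correct and follows essentially the same route as the paper. Both arguments write $\partial_p \Phi_h^{\msx} = hI + O(\beta h^3)$ via the variational (Jacobi) equation and a Gr\"onwall-type bound, express $\nabla_x \Psi_h$ through the implicit function theorem as the inverse of that Jacobian evaluated at $(\bar x_0, \bar p)$, and multiply to get $I + O(\beta h^2)$ uniformly in $x_0, \bar x_0, p_0$. The differences are cosmetic: you bound $\norm{\mathcal X_s}_{\op}$ via Lemma~\ref{lem:diff-compare} rather than applying Gr\"onwall to $S_t - tI$, and your global invertibility argument is slightly more careful than the paper's appeal to Lemma~\ref{lem:inv_fn}, because you normalize by $h$ and require $\norm{\partial_p \Phi_h^{\msx}/h - I}_{\op} \le 1/2$.
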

\begin{proof}
    Fix $\bar x_0, x_0 \in\R^d$. Define
    \begin{align*}
        J(\cdot) \deq \nabla_2 \Phi_h^{\msx}(x_0, \cdot)\,, \qquad\text{and}\qquad K(\cdot) \deq \nabla_2 \Psi_h(\bar x_0, \cdot) = [(\nabla_2 \Phi_h^{\msx})(\bar x_0, \Psi_h(\bar x_0, \cdot))]^{-1}\,,
    \end{align*}
    where $\nabla_2$ denotes the gradient w.r.t.\ the second argument, and the second identity is via the implicit function theorem.

    If we define $x_t \deq \Phi_t^\msx(x_0, p_0)$ and $S_t \deq \nabla_{p_0} \Phi_t^{\msx}(x_0, p_0)$, then $S_t$ solves
    \begin{align}\label{eq:sde-stability-autonomous}
        \partial_t^2 S_t + \nabla^2 V(x_t)\, S_t = 0\,, \qquad S_0 = 0\,, \qquad \partial_t S_t\big|_{t = 0} = I\,, 
    \end{align}
    and so
    \begin{align*}
        S_t = tI - \int_0^t (t-s)\, \nabla^2 V(x_s)\, S_s \, \D s\,.
    \end{align*}
    If $\bar S_t \deq S_t - t I$, then
    \begin{align*}
        \bar S_t = -\int_0^t (t-s)\, \nabla^2 V(x_s)\, [s I + \bar S_s] \, \D s\,.
    \end{align*}
    This implies
    \begin{align*}
        \norm{\bar S_t}_{\rm op} \leq \beta \int_0^t (t-s)\, s \, \D s + \beta \int_0^t (t-s)\, \norm{\bar S_s}_{\rm op} \, \D s\,.
    \end{align*}
    Gr\"onwall's inequality then gives
    \begin{align*}
        \norm{\bar S_t}_{\rm op} \leq \frac{\beta t^3}{6}\, e^{\frac{\beta t^2}{2}}\,.
    \end{align*}
    This implies at $t = h$ that $\norm{J(\cdot) - hI}_{\rm op} \lesssim \beta h^3$ for $h \lesssim \nfrac{1}{\beta^{1/2}}$.
    This also implies that for some matrix $S$ with $\norm{S}_{\rm op} \lesssim \beta h^3$,
    \begin{align*}
        K(\cdot) = h^{-1}\, (I + h^{-1} S)^{-1}\,,
    \end{align*}
    so that
    \begin{align*}
        \norm{K(\cdot) - h^{-1}I}_{\rm op} \leq \frac{\norm{h^{-1} S}_{\rm op}}{h\,(1-\norm{h^{-1} S}_{\rm op})} \lesssim \beta h\,,
    \end{align*}
    so long as $h \lesssim \nfrac{1}{\beta^{1/2}}$. Now, we have
    \begin{align*}
        K(\Phi_h^{\msx}(x_0, \cdot))\, J(\cdot) = (h^{-1} I + O(\beta h)) \cdot (hI + O(\beta h^3)) = I + O(\beta h^2)\,,
    \end{align*}
    where the $O(\cdot)$-notation means there is a remainder term with operator norm bounded (up to absolute constants) by the indicated quantity.
    This bounds the desired term, by the chain rule.
\end{proof}

We also pause to justify the existence of $\Psi_h$.

\begin{lemma}[Inverse function theorem]\label{lem:inv_fn}
    Suppose that a map $F : \R^d\to\R^d$ satisfies the condition $\norm{\nabla F - cI}_{\rm op} < 1$, for some $c > 0$.
    Then, $F$ is invertible.
\end{lemma}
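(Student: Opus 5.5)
The plan is to reduce to the case $c = 1$ and then apply the Banach fixed-point theorem. For $c$ small relative to the deviation this cannot work as stated: e.g.\ $\norm{\nabla F - cI}_{\rm op} < 1$ permits $\nabla F = 0$ when $c < 1$. So I read the hypothesis as a contraction condition after rescaling, namely $\norm{\nabla F - cI}_{\rm op} \le \theta c$ for some $\theta < 1$, or equivalently $\norm{c^{-1}\nabla F - I}_{\rm op} < 1$ uniformly with a uniform margin. This is how the lemma is used for $\Psi_h$. There, Lemma~\ref{lem:hmc-distortion} (via the estimate on $S_t$) gives $\norm{\nabla_2 \Phi^{\msx}_h(\bar x_0,\cdot) - hI}_{\rm op} \lesssim \beta h^3 \ll h$ for $h \ll \beta^{-1/2}$, so $c = h$ and the relative deviation is $O(\beta h^2) < 1$.

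\textbf{Step 1 (injectivity).} Set $G \deq c^{-1}F$, so that $\norm{\nabla G - I}_{\rm op} \le \theta < 1$ everywhere. For $x \ne y$, write
\begin{align*}
G(x) - G(y) = (x-y) + \int_0^1 (\nabla G - I)(y + s(x-y))\,(x-y)\,\D s\,.
\end{align*}
This gives $\norm{G(x)-G(y)} \ge (1-\theta)\,\norm{x-y} > 0$, so $G$ (and hence $F$) is injective.

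\textbf{Step 2 (surjectivity).} Fix a target $z$ and define $T_z(x) \deq x - G(x) + c^{-1}z$. Then $\nabla T_z = I - \nabla G$ has operator norm at most $\theta$, so $T_z$ is a $\theta$-contraction on the complete space $\R^d$ by the mean value inequality along segments. Banach's fixed-point theorem yields a unique $x$ with $T_z(x) = x$, that is, $F(x) = z$. Hence $F$ is a bijection.

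\textbf{Step 3 (smoothness of the inverse).} Since $\nabla F = c\,(I + (\nabla G - I))$ is invertible with inverse norm at most $c^{-1}(1-\theta)^{-1}$ by the Neumann series, the local inverse function theorem shows that $F^{-1}$ is $C^1$, so $F$ is a diffeomorphism. This is what the change-of-variables step in Theorem~\ref{thm:regularity} requires.

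The main obstacle is pinning down the normalization in the hypothesis, since the literal condition with an arbitrary $c>0$ is too weak. Once the contraction is stated relative to $c$, every step is routine. I would also check that the uniform bound holds globally in $p$ and not only locally, which Lemma~\ref{lem:hmc-distortion} supplies because its Gr\"onwall bound uses only $\norm{\nabla^2 V}_{\rm op} \le \beta$.
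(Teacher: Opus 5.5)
Your proof is correct, and your change to the hypothesis is a necessary correction. As literally stated, the lemma is false for $c<1$, as your constant-map example shows. It also fails for $c=1$: a pointwise strict bound without a uniform margin gives injectivity but not surjectivity, since $F=\arctan$ on $\R$ has $\lvert F'-1\rvert<1$ everywhere but range $(-\pi/2,\pi/2)$. Only for $c>1$ does the literal condition force a margin, namely $\theta=1/c$.

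The paper applies the lemma with $c=h\ll 1$, exactly the regime where the literal version fails. So the normalized hypothesis $\norm{c^{-1}\nabla F-I}_{\rm op}\le\theta<1$ is the statement actually needed. Lemma~\ref{lem:hmc-distortion} supplies it globally in $p$, with $\theta\lesssim\beta h^2$, and your Steps 1--3 prove it.

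The paper disposes of the lemma with a one-line appeal to a textbook inverse function theorem. Your argument is self-contained and makes the global ingredient explicit. A local inverse function theorem alone yields only local diffeomorphisms. Defining $\Psi_h(\bar x_0,\cdot)$ on all of $\R^d$ needs global bijectivity, which you get from Banach's fixed-point theorem on all of $\R^d$ using the uniform margin. Equivalently, Step 1 makes $F$ proper, so Hadamard's global theorem would also apply.

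Your route also gives quantitative bounds:
\begin{align*}
\norm{F(x)-F(y)}\ge c(1-\theta)\norm{x-y}\,, \qquad \norm{(\nabla F)^{-1}}_{\rm op}\le c^{-1}(1-\theta)^{-1}\,.
\end{align*}
These match the bounds $\norm{K}_{\rm op}\lesssim h^{-1}$ used afterwards. The same normalized statement, with $c=1$ and $\theta=\frac12$, also justifies the invertibility of $F_q$ in the proof of Theorem~\ref{thm:regularity}.

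One minor point: Step 1 is not needed for bijectivity, since uniqueness of the fixed point in Step 2 already gives injectivity. It is, however, what provides the lower Lipschitz bound.
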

\begin{proof}
    This follows from the inverse function theorem, see~\cite[Proposition 3.2]{Tay23PDEI}.
\end{proof}

In the proof above, we have shown that for every $x_0\in\R^d$, $\Phi_h^\msx(x_0,\cdot)$ satisfies the condition of this lemma as long as $h \ll \nfrac{1}{\beta^{1/2}}$, hence $\Psi_h$ is well-defined.

\paragraph{Bound with Hessian Lipschitzness.}
\begin{lemma}[Jacobian bound under higher-order smoothness]\label{lem:hmc-distortion-II}
    Suppose Assumption~\ref{as:hess-lip} holds.     If $h \lesssim \nfrac{1}{\beta^{1/2}}$, then for any $p_0 \in \R^d$ and pair $(x_0, \bar x_0) \in \R^d \times \R^d$,
    \begin{align*}
        \norm{\nabla_{p_0} [\Psi_h(\bar x_0, \Phi_h^{\msx}(x_0, p_0))] - I} \lesssim \betH h^2\, \norm{\bar x_0 - x_0}\,.
    \end{align*}
\end{lemma}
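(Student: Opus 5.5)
We reuse the notation from the proof of Lemma~\ref{lem:hmc-distortion}: $J(p) \deq \nabla_2\Phi_h^\msx(x_0,p)$ and $K(y) \deq \nabla_2\Psi_h(\bar x_0,y) = [\nabla_2\Phi_h^\msx(\bar x_0,\Psi_h(\bar x_0,y))]^{-1}$. Set $\bar p \deq \Psi_h(\bar x_0,\Phi_h^\msx(x_0,p_0))$. By the chain rule,
\begin{align*}
\nabla_{p_0}[\Psi_h(\bar x_0,\Phi_h^\msx(x_0,p_0))] = \bar J^{-1} J\,, \qquad J \deq \nabla_2\Phi_h^\msx(x_0,p_0)\,, \quad \bar J \deq \nabla_2\Phi_h^\msx(\bar x_0,\bar p)\,.
\end{align*}
The two trajectories $x_t \deq \Phi_t^\msx(x_0,p_0)$ and $\bar x_t \deq \Phi_t^\msx(\bar x_0,\bar p)$ are constructed to meet at time $h$, so $x_h = \bar x_h$. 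The plan is to show two things: $\norm{J-\bar J}_{\rm op} \lesssim \betH h^3\,\norm{x_0-\bar x_0}$, and $\norm{\bar J^{-1}}_{\rm op}\lesssim h^{-1}$. The second already follows from Lemma~\ref{lem:hmc-distortion}, which gives $\norm{\bar J - hI}_{\rm op}\lesssim \beta h^3$. Combining them, $\bar J^{-1}J - I = \bar J^{-1}(J-\bar J)$ has operator norm $\lesssim \betH h^2\norm{x_0-\bar x_0}$, which is the claim.

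\textbf{Step 1: the trajectories stay uniformly close.} We need $\sup_{t\le h}\norm{x_t-\bar x_t}\lesssim \norm{x_0-\bar x_0}$. Lemma~\ref{lem:hmc-expansion} controls $\delta x_t$ in terms of $\norm{\delta x_0} + t\,\norm{\delta p_0}$. From~\eqref{eq:shooting-diff} (equivalently, by applying Lemma~\ref{lem:hmc-expansion} with endpoint constraint $\delta x_h=0$) we have $\norm{p_0-\bar p}\lesssim h^{-1}\norm{x_0-\bar x_0}$. Plugging this in gives $\norm{\delta x_t}\lesssim \norm{\delta x_0} + t\,h^{-1}\norm{\delta x_0}\lesssim\norm{\delta x_0}$ for $t\le h$, using the differential comparison Lemma~\ref{lem:diff-compare}.

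\textbf{Step 2: compare the variational equations.} Let $S_t$ and $\bar S_t$ solve~\eqref{eq:sde-stability-autonomous} along $x_t$ and $\bar x_t$ respectively, and set $D_t\deq S_t-\bar S_t$. Then
\begin{align*}
\partial_t^2 D_t + \nabla^2V(x_t)\,D_t = -[\nabla^2V(x_t)-\nabla^2V(\bar x_t)]\,\bar S_t\,, \qquad D_0=0\,,\quad \partial_t D_t|_{t=0}=0\,,
\end{align*}
so that
\begin{align*}
D_t = -\int_0^t (t-s)\,\bigl\{\nabla^2V(x_s)\,D_s + [\nabla^2V(x_s)-\nabla^2V(\bar x_s)]\,\bar S_s\bigr\}\,\D s\,.
\end{align*}
We bound the forcing term using Assumption~\ref{as:hess-lip}, Step 1, and $\norm{\bar S_s}_{\rm op}\lesssim s$ (from the proof of Lemma~\ref{lem:hmc-distortion}); this gives a forcing of size $\lesssim \betH\norm{\delta x_0}\,s$. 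Grönwall's inequality, or Lemma~\ref{lem:diff-compare} applied to $\norm{D_t}_{\rm op}$, then yields $\norm{D_h}_{\rm op}\lesssim \betH h^3\norm{\delta x_0}\,e^{O(\beta h^2)}$. Since $J-\bar J = D_h$, this is the first of the two desired bounds.

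The main subtlety is Step 1. The endpoint momentum $\bar p$ is determined implicitly, so closeness of the two trajectories is not automatic: the momentum gap is of order $\norm{\delta x_0}/h$, which is large. It is nevertheless harmless, because it acts only over a time of length $h$. We must make sure the bound is applied uniformly in $t\in[0,h]$, not merely at the endpoint. The remaining steps are routine.
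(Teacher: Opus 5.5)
Your proposal is correct and follows essentially the same route as the paper. You write the Jacobian as $\bar J^{-1}J$ via the chain rule and the implicit-function identity for $K$. You bound $\norm{J-\bar J}_{\rm op}\lesssim \betH h^3\norm{x_0-\bar x_0}$ by differencing the variational equations, using Assumption~\ref{as:hess-lip} and the shooting bound $\norm{p_0-\bar p}\lesssim h^{-1}\norm{x_0-\bar x_0}$. You then use $\norm{\bar J^{-1}}_{\rm op}\lesssim h^{-1}$ from Lemma~\ref{lem:hmc-distortion}. The only cosmetic difference is ordering: the paper carries a $t^4\norm{p_0-\bar p_0}$ term through Gr\"onwall and substitutes the shooting bound afterwards, whereas you substitute it first to get uniform closeness of the two trajectories.
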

\begin{proof}
    Recall from Lemma~\ref{lem:hmc-distortion} the definitions of $J(\cdot)$ and $K(\cdot)$, and furthermore that
    \begin{align*}
        K(\Phi_h^{\msx}(x_0, \cdot))\, J(\cdot) - I = K(\Phi_h^{\msx}(x_0, \cdot))\, \bigl(J(\cdot) - (\nabla_2 \Phi_h^{\msx})(\bar x_0, \Psi_h(\bar x_0, \Phi_h^{\msx}(x_0, \cdot)))\bigr)\,,
    \end{align*}
    by the second identity for $K(\cdot)$. Then, note that
    \begin{align*}
        J(\cdot) - (\nabla_2 \Phi_h^{\msx})(\bar x_0, \Psi_h(\bar x_0, \Phi_h^{\msx}(x_0, \cdot))) = \nabla_2 \Phi_h^{\msx}(x_0, \cdot) - (\nabla_2 \Phi_h^{\msx})(\bar x_0, \Psi_h(\bar x_0, \Phi_h^{\msx}(x_0, \cdot)))\,.
    \end{align*}
    Recalling the representation for these Jacobians derived in Lemma~\ref{lem:hmc-distortion}, we obtain for $x_s(p) \deq \Phi_s^\msx(x_0, p)$, $\bar x_s(p) \deq \Phi_s^\msx(\bar x_0, p)$ the two $x$-marginals of trajectories following the Hamiltonian flow~\eqref{eq:ham_ode} started at $(x_0, p)$ and $(\bar x_0, p)$ respectively,
    \begin{align*}
        \norm{\nabla_2 \Phi_h^{\msx}(x_0, p_0) - \nabla_2 \Phi_h^{\msx}(\bar x_0, \bar p_0)}_{\rm op}
        &= \norm{\nabla x_h(p_0) - \nabla \bar x_h(\bar p_0)}_{\rm op} \\
        &\le \int_0^h (h-s)\, \norm{\nabla^2 V(\bar x_s(\bar p_0)) - \nabla^2 V(x_s(p_0))}_{\rm op}\, \norm{\nabla x_s(p_0)}_{\rm op}\, \D s \\
        &\qquad + \int_0^h (h-s)\, \norm{\nabla^2 V(\bar x_s(\bar p_0))}_{\rm op}\, \norm{\nabla x_s(p_0) - \nabla \bar x_s(\bar p_0)}_{\rm op}\, \D s\,.
    \end{align*}
    The operator norm bounds derived in Lemma~\ref{lem:hmc-distortion} give $\norm{\nabla x_s(p_0) - sI}_{\rm op} \lesssim \beta h^3$. 
    Then, using Gr\"onwall's inequality, together with Lemma~\ref{lem:hmc-expansion} for $h \lesssim \nfrac{1}{\beta^{1/2}}$,
    \begin{align*}
        \norm{\nabla x_t(p_0) - \nabla \bar x_t(\bar p_0)}_{\rm op} &\lesssim \betH \int_0^t (t-s)\, s\,(\norm{x_0 - \bar x_0} + s\,\norm{p_0 - \bar p_0}) \, \D s \\
        &\qquad + \beta \int_0^t (t-s)\, \norm{\nabla x_s(p_0) - \nabla \bar x_s(\bar p_0)}_{\rm op} \, \D s \\
        &\lesssim \betH\,(t^3\,\norm{x_0 - \bar x_0} + t^4\, \norm{p_0 - \bar p_0})\,.
    \end{align*}
    Since we want $\nabla_2 \Phi_h^{\msx}(x_0, \cdot) - (\nabla_2 \Phi_h^{\msx})(\bar x_0, \Psi_h(\bar x_0, \Phi_h^{\msx}(x_0, \cdot)))$, we note that $\norm{\Psi_h(\bar x_0, \Phi_h^{\msx}(x_0, \cdot)) - \operatorname{id}}_{\sup} \lesssim \frac{1}{h}\, \norm{x_0 - \bar x_0}$ from Lemma~\ref{lem:hmc-expansion}, so that the $\norm{p_0 - \bar p_0}$ term above is of the same order as the $\norm{x_0 - \bar x_0}$ term. Lastly, we can bound the norm of $K(\cdot)$ by $h^{-1}$ as in Lemma~\ref{lem:hmc-distortion}. This all implies the result.
\end{proof}

\subsubsection{Proof of Theorem~\ref{thm:harnack}}

Next, we use the shifted composition framework~\cite{scr1, scr2, scr3, scr4} to boost the one-step coupling bound (Theorem~\ref{thm:regularity}) into a long-time regularity result (Theorem~\ref{thm:harnack}).
In particular, since the OHO dynamics exhibit hypocoercive convergence similarly to the underdamped Langevin diffusion (ULD), we leverage the framework developed for ULD in~\cite{scr4}.

Recall that our goal is to bound $\Ren_q(\delta_{x_0,p_0} \bs P_h^N \mmid \delta_{\bar x_0,\bar p_0}\bs P_h^N)$, where $\bs P_h$ is the OHO kernel. Let $(X_{nh}, P_{nh})_{n\ge 0}$ denote the iterates of OHO started at $(x_0, p_0)$, so that $(X_{nh}, P_{nh}) \sim \delta_{x_0,p_0} \bs P_h^n$.
Similarly, let $(\bar X_{nh}, \bar P_{nh})_{n\ge 0}$ denote the corresponding OHO iterates, but started at $(\bar x_0,\bar p_0)$.
We couple these two processes synchronously by using the same Gaussians during the~\eqref{eq:O} steps.

\begin{figure}
    \centering
    \includegraphics[width=0.97\linewidth]{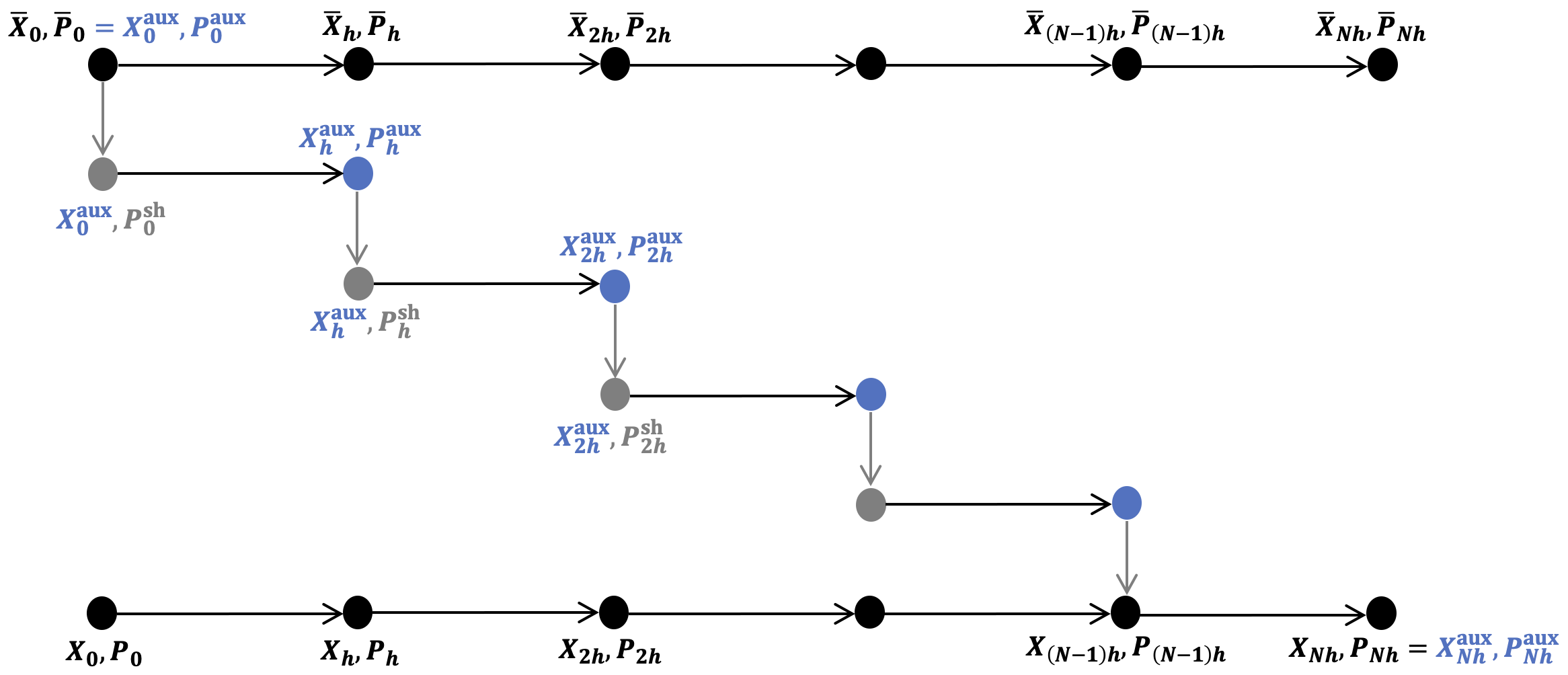}
    \caption{\footnotesize Schematic diagram for the proof of Theorem~\ref{thm:harnack}. Top and bottom (in black): the two processes $(\bar X_{kh}, \bar P_{kh})$ and $(X_{kh},P_{kh})$ that we seek to compare. Each horizontal arrow is an OHO step. Diagonal (in blue): the auxiliary process $(X_{kh}^{\aux},P_{kh}^{\aux})$ we construct to apply the shifted composition rule. The auxiliary process interpolates between the original two processes at initialization (top left) and termination (bottom right). The auxiliary process updates by first shifting the momentum (grey, downwards arrow) and then performing an OHO step (black, right arrow).}
    \label{fig:aux}
\end{figure}

To apply the shifted composition rule, we define an auxiliary third process $(X_{nh}^\aux, P_{nh}^\aux)_{n \ge 0}$ that interpolates between the two processes at initialization $(X_{0}^\aux, P_{0}^\aux) = (\bar X_{0}, \bar P_{0})$ and termination $(X_{Nh}^\aux, P_{Nh}^\aux) = (X_{Nh}, P_{Nh})$. See Figure~\ref{fig:aux}. The termination condition ensures that
\begin{align*}
    \Ren_q(\delta_{x_0,p_0} \bs P_h^N \mmid \delta_{\bar x_0,\bar p_0}\bs P_h^N)
    &= \Ren_q(\law(X_{Nh}^\aux, P_{Nh}^\aux) \mmid \law(\bar X_{Nh}, \bar P_{Nh}))\,.
\end{align*}
To bound the right-hand side, we use the shifted composition rule (Theorem~\ref{thm:shifted_chain_rule}), for which we need the one-shot bound in Theorem~\ref{thm:regularity}.
Then, the crux of the argument amounts to showing contraction between $(X_{nh}^\aux, P_{nh}^\aux)_{n\ge 0}$ and $(X_{nh}, P_{nh})_{n\ge 0}$, which ends up being a more difficult version of OHO contraction (Lemma~\ref{lem:oho_contraction}) due to the additional dynamics introduced in the auxiliary process to ensure the termination criterion. 

We describe the iterative construction of the auxiliary process.
First, update $P_{kh}^\aux$ to $P_{kh}^\sh$ as
\begin{align*}
    P_{kh}^\sh \deq P_{kh}^\aux - (1-e^{-\upeta_k^{\msp}})\, (P_{kh}^\aux - P_{kh}) -\upeta_k^{\msx}\, (X_{kh}^\aux - X_{kh})\,,
\end{align*}
for two sequences $(\upeta_k^{\msx})_{k \in [N]}$, $(\upeta_k^{\msp})_{k \in [N]}$. We define these as
\begin{align*}
    \upeta_k^{\msx} \deq \int_{kh}^{(k+1)h} \eta_t^{\msx} \, \D t\,, \quad \upeta_k^{\msp} \deq \int_{kh}^{(k+1)h} \eta_t^{\msp} \, \D t \,, \quad \eta_t^{\msx} \deq \frac{\gamma_t \eta_t^{\msp}}{2}\,, \quad  \eta_t^{\msp} \deq \frac{c_0 \omega}{\exp(\omega(Nh-t + Ah)) - 1}\,,
\end{align*}
for some large absolute constant $A$. Here, $\omega$ is defined in~\eqref{eq:omeg}, and $\gamma_t \deq \gamma + \eta_t^\msp$ is the ``effective friction'' as introduced in~\cite{scr4}.

Then, produce $(X_{(k+1)h}^\aux, P_{(k+1)h}^\aux)$ by running OHO with $(X_{kh}^\aux, P_{kh}^\sh)$ as a starting point.
For the final iteration, we simply set $(X_{Nh}^\aux, P_{Nh}^\aux) \deq (X_{Nh}, P_{Nh})$ by fiat.

We first prove the following lemma, which establishes contraction of the auxiliary process and the original process, where the distance is measured in a time-varying coordinate system making use of the effective friction:
\begin{align*}
    (\mathtt d_n^\aux)^2
    &\deq\norm{X_{nh}- X_{nh}^{\aux}}^2 + \bigl\lVert X_{nh} - X_{nh}^{\aux} + \frac{2}{\gamma_{nh}}\, (P_{nh} - P_{nh}^\aux)\bigr\rVert^2\,.
\end{align*}

\begin{lemma}[Auxiliary distance recursion]\label{lem:shift-contraction}
    Suppose that $0 \prec \alpha I \preceq \nabla^2 V \preceq \beta I$ (strongly convex and smooth potential) and $\gamma \geq \sqrt{32 \beta}$ (high friction). Then for any step size $h \ll \nfrac{1}{\kappa^{1/2} \gamma}$, 
    we have the following contraction for some absolute constant $c' > 0$:
    \begin{align*}
       (\mathtt d_{(k+1)}^\aux)^2
        &\le \exp\Bigl(- c' \int_{kh}^{(k+1)h}(\omega_+ + \eta_t^{\msp}) \, \D t\Bigr)\, (\mathtt d_k^\aux)^2\,.
    \end{align*}
\end{lemma}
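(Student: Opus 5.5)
We will follow the proof of Lemma~\ref{lem:oho_contraction}, tracking the extra shift step and the time-varying effective friction $\gamma_t = \gamma + \eta^{\msp}_t$. Since the auxiliary and original processes share the Gaussian noise in both \eqref{eq:O} steps, we work with the differences $\delta X \deq X - X^\aux$ and $\delta P \deq P - P^\aux$. One auxiliary step then consists of three deterministic maps on $(\delta X,\delta P)$:
\begin{enumerate}
    \item the shift, $\delta P \mapsto e^{-\upeta_k^\msp}\,\delta P + \upeta_k^\msx\, \delta X$ with $\delta X$ unchanged;
    \item a momentum damping by $e^{-\gamma h/2}$;
    \item the linearized Hamiltonian flow $\partial_t \delta X = \delta P$, $\partial_t \delta P = -H_t\,\delta X$ with $\alpha I\preceq H_t\preceq \beta I$, followed by another damping by $e^{-\gamma h/2}$.
\end{enumerate}

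The plan is to merge these maps into a single continuous-time interpolation over $t\in[kh,(k+1)h]$. Concretely, we realize the shift and the $e^{-\gamma h/2}$ dampings as flows of the ODE $\partial_t \delta P = -\eta_t^\msp\,\delta P + \eta_t^\msx\,\delta X$ on $[kh,(k+1)h]$, with the damping $e^{-\gamma h/2}$ encoded through a time-varying factor $c_t = e^{\gamma(h/2 - (t-kh))}$ as in Lemma~\ref{lem:oho_contraction}. This flow equals the discrete shift exactly, since $\delta X$ is frozen during it and the integrals of $\eta$ are $\upeta$.

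We then run a Lyapunov function $\mc L_t = \norm{\delta X}^2 + \norm{\delta X + \tfrac{2}{\gamma_t}\, c_t\,\delta P}^2$ through the shift flow and the Hamiltonian flow. It should interpolate to $(\mathtt d^\aux_k)^2$ at $t=kh$ and to $(\mathtt d^\aux_{k+1})^2$ at $t=(k+1)h$, up to replacing $\gamma_{(k+1)h}$ by $\gamma_t$ at the endpoints. For the shift flow, the choice $\eta^\msx_t = \gamma_t\eta^\msp_t/2$ is exactly what makes the shift act in the $(x, x+\tfrac{2}{\gamma_t}p)$ coordinates as a contraction of the $z$-coordinate toward $x$. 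This is the same mechanism as the effective-friction computation for ULD in \cite{scr4}, and it should give $\partial_t \mc L_t \le -c\,\eta_t^\msp\,\mc L_t$. For the Hamiltonian flow, we repeat the $2\times2$ block eigenvalue computation of Lemma~\ref{lem:oho_contraction}, with $\gamma$ replaced by $\gamma_t\ge\gamma\ge\sqrt{32\beta}$. The minimal eigenvalue is $\gtrsim \lambda/\gamma_t$ plus a positive $\gamma_t$-dependent term.

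The main obstacle is twofold. First, the derivative term $\partial_t(2/\gamma_t)$ has sign $\le 0$, because $\eta^\msp_t$ increases toward time $Nh$, and it must be absorbed. Second, the cross terms between the shift and the flow must not spoil the contraction when $\eta_t^\msp\gg\omega$. For the first, we would use $|\dot\gamma_t| = |\dot\eta^\msp_t| \lesssim \omega\,\eta^\msp_t + (\eta^\msp_t)^2/c_0$, which follows from the explicit form of $\eta^\msp$ as $c_0\omega/(e^{\omega s}-1)$. With $c_0$ small, this is dominated by the $-c\,\eta^\msp_t$ gain from the shift. For the second, we use $h\ll 1/(\kappa^{1/2}\gamma)$ and the fact that $\eta^\msp_t h \lesssim 1/A$ (because of the offset $Ah$), so that $c_t\asymp1$ and the perturbation $(1-1/c_t)^2 \le \gamma^2h^2\ll \alpha/\gamma^2$ as before. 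We expect the resulting rate to be $\asymp \alpha/\gamma_t + \eta^\msp_t$. Since $\alpha/\gamma_t \gtrsim \omega\wedge(\alpha/\eta_t^\msp)$, this is bounded below by a constant times $\omega_+ + \eta^\msp_t$. Integrating over $[kh,(k+1)h]$ and exponentiating then yields the claimed recursion.
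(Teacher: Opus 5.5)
There is a genuine gap in where the $\eta_t^{\msp}$ part of the rate comes from. On the differences, the shift is $(\delta X,\delta P)\mapsto(\delta X,\,e^{-\upeta_k^{\msp}}\delta P-\upeta_k^{\msx}\,\delta X)$. The $\upeta_k^{\msx}$ term carries a minus sign in your convention as well; with your plus sign, the shift flow drives $\delta X+\frac{2}{\gamma}\delta P$ toward $2\,\delta X$ rather than toward $0$.

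More fundamentally, this map is block-triangular with the identity acting on $\delta X$. It therefore has eigenvalue $1$ and is not a strict contraction in any norm. Concretely, along your shift flow with $\delta X$ frozen, $\partial_t\norm{\delta X}^2=0$. At any point with $\delta X+\frac{2c_t}{\gamma_t}\delta P=0$ you get $\partial_t\mc L_t=0$ while $\mc L_t=\norm{\delta X}^2>0$, so the claimed bound $\partial_t\mc L_t\le -c\,\eta_t^{\msp}\,\mc L_t$ is false.

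Your Hamiltonian step is credited only with the rate $\alpha/\gamma_t$, so the $\eta_t^{\msp}$ term in the exponent is never produced. That term is exactly what the proof of Theorem~\ref{thm:harnack} needs to control $\sum_n h\,\gamma_{nh}^2(\eta_{nh}^{\msp})^2(\mathtt d_n^{\aux})^2$ and the terminal term.

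Even the Hamiltonian step fails as set up. With $\frac{2}{\gamma_t}$ in $\mc L_t$ but $c_t$ decaying only at rate $\gamma$, the $(2,2)$ entry of the $2\times 2$ matrix along the flow is $\gamma+\dot\gamma_t/\gamma_t-\gamma_t/(2c_t)$. This is negative once $\eta_t^{\msp}\gg\gamma$ and $c_0$ is large, and $\eta_t^{\msp}$ can be as large as about $c_0/(Ah)$ near the terminal time. So that step need not contract either.

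If instead you run the shift simultaneously with the Hamiltonian flow, the endpoint is no longer the auxiliary iterate. The discrete shift is applied before $\Phi_h$ and does not commute with it.

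The missing idea is the paper's conjugation. The actual trajectory on $[kh,(k+1)h]$ is pure Hamiltonian flow started from the post-shift, post-(O) points. Both (O) steps and the shift are pushed into the coefficients of $\mc L_t=\norm{\delta X}^2+\norm{a_t\,\delta X+\frac{2c_t}{\gamma_t}\,\delta P}^2$, where:
\begin{itemize}
\item $c_t=e^{-\gamma h/2+\int_t^{(k+1)h}\gamma_s\,\D s}$, so $\dot c_t=-\gamma_t c_t$ absorbs the $e^{-\upeta_k^{\msp}}$ damping as well;
\item $a_t=1+\frac{2}{\gamma_t}\,e^{\upeta_k^{\msp}}\int_t^{(k+1)h}\eta_s^{\msx}\,\D s$, which absorbs the $-\upeta_k^{\msx}\,\delta X$ term.
\end{itemize}
These choices make $\mc L_{kh}=(\mathtt d_k^{\aux})^2$ and $\mc L_{(k+1)h}=(\mathtt d_{k+1}^{\aux})^2$ exact identities.

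The $\eta_t^{\msp}$ gain then enters the off-diagonal entry $b_t^\lambda$ through $-\dot a_t/2\approx\eta_t^{\msp}/2$. There it is coupled to the mixing $\partial_t\delta X=\frac{\gamma_t}{2c_t}(\delta Z-a_t\,\delta X)$ of the flow. This coupling is what makes $\delta X$ itself contract at rate $\eta_t^{\msp}$, and it yields $\lambda_{\min}\gtrsim b_t^\lambda\gtrsim\omega_++\eta_t^{\msp}$.

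Finally, $\dot\gamma_t=\omega\eta_t^{\msp}+(\eta_t^{\msp})^2/c_0$ is absorbed by taking $c_0$ large, not small, with $c_0/A\ll 1$ to keep $\eta_t^{\msp}h$ small.
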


\begin{proof}
Adapting the proof of Lemma~\ref{lem:oho_contraction}, we use the following time-varying Lyapunov functional, defined for $t \in [kh, (k+1)h)$ as
\begin{align*}
    \mc L_t( \delta X_t^{\aux, \msOH}, \delta P_t^{\aux, \msOH}) \deq \norm{\delta X_t^{\aux,\msOH}}^2 + \norm{a_t \delta X_t^{\aux,\msOH} + \frac{2c_t}{\gamma_t} \delta P_t^{\aux,\msOH} }^2\,.
\end{align*}
Here we use the shorthand 
\begin{align*}
    a_t \deq 1 + \frac{2}{\gamma_t}\, e^{\upeta_k^\msp}\int_{t}^{(k+1)h}\eta_s^{\msx} \, \D s\,, \qquad c_t \deq e^{-\frac{\gamma h}{2} + \int_t^{(k+1)h} \gamma_s \, \D s}\,.
\end{align*}
We also use the shorthand $\delta X_t^{\aux, \msOH} \deq X_t^{\aux, \msOH} - X_t^\msOH$ and similarly for $\delta P_t^{\aux, \msOH}$, where the two processes $(X_t^\msOH, P_t^\msOH)_{kh \le t < (k+1)h}$ and $(X_t^{\aux,\msOH}, P_t^{\aux,\msOH})_{kh\le t < (k+1)h}$ both evolve according to the Hamiltonian flow~\eqref{eq:ham_ode} but from respective initializations $(X_{kh}^\msO, P_{kh}^\msO)$ and $(X_{kh}^{\aux,\msO}, P_{kh}^{\sh,\msO})$.

It is convenient to express this Lyapunov functional via a change of coordinates. Define  
\begin{align*}
    Z_t^{\msOH} \deq a_t \, X_t^{\msOH} + \frac{2c_t}{\gamma_t} P_t^{\msOH}\,,
\end{align*}
and similarly for $Z_t^{\aux,\msOH}$ and $\delta Z_t^{\aux,\msOH}$.

    We now aim to show that the Lyapunov functional decays in time.
    Differentiating,
    \begin{align*}
        \partial_t \delta X_t^{\aux, \msOH} &= \delta P_t^{\aux, \msOH} = \frac{\gamma_t}{2c_t}\, \bigl(\delta Z_t^{\aux,\msOH} - a_t \delta X_t^{\aux,\msOH} \bigr)\,, \\
        \partial_t \delta P_t^{\aux, \msOH} &= -H_t\, \delta X_t^{\aux, \msOH}\,, \\
        \partial_t \delta Z_t^{\aux, \msOH} &= \frac{\gamma_t a_t}{2c_t}\, \bigl(\delta Z_t^{\aux,\msOH} - a_t \delta X_t^{\aux,\msOH} \bigr)  + \dot{a}_t \delta X_t^{\aux, \msOH}- \frac{2c_t}{\gamma_t} H_t\, \delta X_t^{\aux, \msOH} \\
        &\qquad\qquad+ \Bigl(\frac{2\dot{c}_t}{\gamma_t} - \frac{2  \dot{\gamma}_t c_t}{\gamma_t^2} \Bigr)\, \frac{\gamma_t}{2c_t}\, \bigl(\delta Z_t^{\aux, \msOH} - a_t \delta X_t^{\aux, \msOH} \bigr)\,.
    \end{align*}
    for a suitable averaged Hessian $H_t$ as in the proof of Lemma~\ref{lem:oho_contraction}. A calculation gives
    \begin{align*}
        \dot{c}_t = -c_t\gamma_t\,, \qquad \dot{a}_t = -\frac{2}{\gamma_t}\, e^{\upeta_k^\msp}\,\Bigl(\eta_t^{\msx} + \frac{\dot{\gamma}_t}{\gamma_t} \int_t^{(k+1)h} \eta_s^{\msx} \, \D s \Bigr)\,, \qquad \dot{\gamma}_t = \omega \eta_t^{\msp} + \frac{(\eta_t^{\msp})^2}{c_0}\,.
    \end{align*}
    Thus
    \begin{align*}
        &\partial_t \mc L_t( \delta X_t^{\aux, \msOH}, \delta P_t^{\aux, \msOH}) =  -2\begin{bmatrix}
        \delta X_t^{\aux, \msOH} \\
        \delta Z_t^{\aux, \msOH}
    \end{bmatrix}^\top \begin{bmatrix}
        \frac{\gamma_t a_t}{2c_t} & \frac{\gamma_t a_t^2}{4c_t} - \frac{\gamma_t}{4c_t} - \frac{\dot{a}_t}{2} - \frac{\gamma_t a_t}{2} - \frac{\dot{\gamma}_t a_t}{2\gamma_t} + \frac{c_t H_t}{\gamma_t}  \\[0.25em]
        * & \gamma_t -\frac{\gamma_t a_t}{2c_t} + \frac{\dot{\gamma}_t}{\gamma_t}
    \end{bmatrix}
    \begin{bmatrix}
        \delta X_t^{\aux, \msOH} \\
        \delta Z_t^{\aux, \msOH}
    \end{bmatrix}\,.
    \end{align*} 
    By the same diagonalization argument as in the proof of Lemma~\ref{lem:oho_contraction}, and by discarding the quantity $\dot \gamma_t / \gamma_t = \partial_t \log \gamma_t \geq 0$ which is non-negative since $\gamma_t$ is increasing, it therefore suffices to lower bound the minimal eigenvalue of the $2 \times 2$ matrix
    \begin{align*}
        \mc M^\lambda_t \deq \begin{bmatrix}
        \frac{\gamma_t a_t}{2c_t} & b_t^\lambda - \frac{\gamma_t}{2} \\[0.25em]
        * & \gamma_t -\frac{\gamma_t a_t}{2c_t}
    \end{bmatrix}\,,
    \end{align*}
    where we define
    \begin{align*}
        b_t^\lambda = \frac{\gamma_t}{4c_t}\,(a_t^2 - 1) - \frac{\dot{a}_t}{2} - \frac{\gamma_t\,(a_t-1)}{2} - \frac{\dot{\gamma}_t a_t}{2\gamma_t} + \frac{c_t \lambda}{\gamma_t}
    \end{align*}
   for $\lambda \in [\alpha,\beta]$. By an explicit calculation, the minimal eigenvalue of this $2\times2$ matrix is
    \begin{align*}
        \frac{\gamma_t}{2} - \sqrt{\frac{\gamma_t^2}{4}\,\bigl(\frac{a_t}{c_t} - 1 \bigr)^2 + \bigl(b_t^\lambda - \frac{\gamma_t}{2}\bigr)^2}\,.
    \end{align*}
    Now, it remains to show the following properties:
    \begin{align*}
        \omega_+ + \eta_t^\msp \lesssim b_t^\lambda \le  \frac{3\gamma_t}{4}\,, \qquad \frac{\gamma_t^2}{4} \,\Bigl(\frac{a_t}{c_t} - 1\Bigr)^2 \ll (\omega_+ + \eta_t^{\msp})\, \gamma_t\,,
    \end{align*}
    where $\ll$ means that the LHS is bounded by a sufficiently small implied constant.
    Indeed, if this holds, then this allows us to bound the minimum eigenvalue as follows:
    \begin{align*}
        \lambda_{\min}(\mc M_t^\lambda)
        &= \frac{\gamma_t}{2}\,\Bigl[1 - \sqrt{1 - \frac{4b_t^\lambda}{\gamma_t} + \frac{4(b_t^\lambda)^2}{\gamma_t^2} +\,\bigl(\frac{a_t}{c_t} - 1\bigr)^2}\Bigr]
        \ge \frac{\gamma_t}{2}\,\Bigl[1 - \sqrt{1 - \frac{b_t^\lambda}{\gamma_t} + o\bigl(\frac{b_t^\lambda}{\gamma_t}\bigr)}\Bigr]
        \gtrsim b_t^\lambda
        \gtrsim \omega_+ + \eta_t^\msp\,.
    \end{align*}
    This then produces the desired contraction.
    We now verify the two conditions above.
    In the following, we note that if $c_0/A \ll 1$ and $h \ll \nfrac{1}{\gamma} \wedge \nfrac{1}{A\abs{\omega}}$, then we can ensure $\eta_t^\msp h \le c$ for an absolute constant $c > 0$, which we can take to be as small as we like.
    In particular, we can take $a_t = 1 + O(\eta_t^\msp h) \le 1 + \frac{1}{40}$ and $c_t \in [1-\frac{1}{40}, 1+\frac{1}{40}]$, say. We organize the argument into three subclaims. 
    \begin{enumerate}[label=(\roman*)]
        \item Claim: $b_t^\lambda \leq \frac{3\gamma_t}{4}$. We bound each term in $b_t^{\lambda}$. The terms $-\frac{\gamma_t\,(a_t - 1)}{2} - \frac{\dot\gamma_t a_t}{2\gamma_t}$ are clearly strictly negative. Next, by the above bound on $c_t$ and the high-friction assumption $\gamma \ge \sqrt{32\beta}$, we have $c_t \lambda \leq \frac{\gamma^2}{12}$, hence $\frac{c_t \lambda}{\gamma_t} \leq \frac{\gamma_t}{12}$.
         Next, $\frac{\gamma_t}{4c_t}\, (a_t^2 - 1) \leq \frac{\gamma_t}{12}$. Lastly, as $\frac{\dot{\gamma}_t}{\gamma_t} \lesssim \eta_t^{\msp}$ and $h\eta_t^{\msp} \leq c \ll 1$, for small enough $c$ we have $-\dot{a}_t \leq \frac{8}{7}\, (\eta_t^{\msp} + O(h\, (\eta_t^{\msp})^2)) \leq \frac{7}{6}\, \eta_t^{\msp}$.
        Combining these bounds yields the claim.
  
       \item Claim: $\frac{\gamma_t}{4}\, \bigl(\frac{a_t}{c_t} - 1\bigr)^2 \ll \omega_+ + \eta_t^{\msp}$. Linearizing for $h \lesssim \nfrac{1}{\gamma}$, $a_t = 1 + O(\eta_t^{\msp} h)$, while $c_t^{-1} = 1 + O(\gamma_t h)$. Thus, we require
        \begin{align}\label{eq:gam_bd_blah}
            \gamma_t^3 h^2 \ll \omega_+ + \eta_t^{\msp}\,.
        \end{align}
        This is guaranteed by the step size assumption $h \ll \nfrac{1}{\kappa^{1/2} \gamma}$.
        
        \item Claim: $b_t^\lambda \gtrsim \omega_+ + \eta_t^{\msp}$. Discard the $\dot{\gamma}_t/{\gamma_t}$ term in $\dot{a}_t$. Now, we have
        \begin{align*}
            \frac{\gamma_t}{4c_t}\, (a_t^2 - 1) - \frac{\gamma_t\, (a_t  -1)}{2} &= \frac{\gamma_t}{4} \cdot \frac{4}{\gamma_t^2} \cdot e^{2\upeta_k^\msp}\, \Bigl(\int_t^{(k+1)h} \eta_s^{\msx} \, \D s\Bigr)^2 + \frac{\gamma_t\, (a_t^2 - 1)}{4}\, \Bigl(\frac{1}{c_t} - 1 \Bigr)
            \\
            &\gtrsim -\gamma_t \eta_t^{\msp} h \cdot \gamma_t h \gtrsim -\gamma_t^3 h^2\,.
        \end{align*}
        This is handled by~\eqref{eq:gam_bd_blah} above.
        As for the remaining terms, we want to show
        \begin{align*}
            \frac{e^{\upeta_k^\msp} \gamma_t \eta_t^{\msp}}{2} - \frac{\omega \eta_t^{\msp} a_t}{2} - \frac{(\eta_t^{\msp})^2 a_t}{2c_0} + c_t \alpha \gtrsim (\omega_+ + \eta_t^{\msp})\,\gamma_t\,.
        \end{align*}
        The left-hand side gives, bounding $a_t \leq \frac{6}{5}$,
        \begin{align*}
            \frac{e^{{\upeta_k^\msp}} \gamma_t \eta_t^{\msp}}{2} - \frac{\omega \eta_t^{\msp} a_t}{2} - \frac{(\eta_t^{\msp})^2 a_t}{2c_0} + c_t \alpha &\geq c_t \alpha + \Bigl(\frac{\gamma}{2} + \bigl(\frac{1}{2}-\frac{3}{5c_0}\bigr)\, \eta_t^{\msp} - \frac{3 \omega}{5}  \Bigr)\,\eta_t^{\msp} \\
            &\geq c_t \alpha + \frac{1}{4}\,(\gamma + \eta_t^{\msp} )\,\eta_t^{\msp} = c_t \alpha + \frac{\gamma_t \eta_t^{\msp}}{4} \,.
        \end{align*}
        Take $c_0$ large enough that $\frac{1}{2}-\frac{3}{5c_0} \ge \frac{1}{4}$ and absorb the $\omega$ term into the $\gamma$ term as $\gamma^2 \geq 6\alpha$. This shows the claim.

    \end{enumerate}
    
Having shown the decay of the Lyapunov functional $\mc L_t$, we now show that this implies the desired contraction of $\mathtt d^\aux$ after one iteration. 
Indeed, we have for some constant $c' > 0$
\begin{align}
    \mc L_{(k+1)h}\left(
    \delta X_{(k+1)h}^{\aux, \msOH}, \delta P_{(k+1)h}^{\aux, \msOH}
    \right) \leq \underset{\eqqcolon L < 1}{\underbrace{\exp\Bigl(-c' \int_{kh}^{(k+1)h} (\omega_+ + \eta_t^\msp) \, \D t\Bigr)} }\,\, \mc L_{kh}(\delta X_{kh}^{\aux, \msOH}, \delta P_{kh}^{\aux, \msOH})\,. 
    \label{eq-pf:contract-lyapunov}
\end{align}
This lets us establish the desired contraction:
\begin{align*}
    &\norm{\delta X_{(k+1)h}^{\aux,\msOHO}}^2 + \bigl\lVert \delta X_{(k+1)h}^{\aux,\msOHO} + \frac{2}{\gamma_{(k+1)h}}\,\delta P_{(k+1)h}^{\aux,\msOHO}\bigr\rVert^2 \\
    &\qquad = \norm{\delta X_{(k+1)h}^{\aux,\msOH}}^2 + \bigl\lVert \delta X_{(k+1)h}^{\aux,\msOH} + \frac{2}{\gamma_{(k+1)h}}\,e^{-\gamma h/2}\,\delta P_{(k+1)h}^{\aux,\msOH}\bigr\rVert^2 \\[0.25em]
    &\qquad \le L\,\Bigl[\norm{\delta X_{kh}^{\sh,\msO}}^2 + \Bigr\rVert \Bigl(1 + \frac{2}{\gamma_{kh}}\,e^{\upeta_k^\msp}\,\upeta_k^\msx\Bigr)\,\delta X_{kh}^{\sh,\msO} + \frac{2}{\gamma_{kh}} \,e^{-\gamma h/2 + \int_{kh}^{(k+1)h}\gamma_s\,\D s}\,\delta P_{kh}^{\sh,\msO} \Bigr\rVert^2\Bigr] \\[0.25em]
    &\qquad = L\,\Bigl[\norm{\delta X_{kh}^{\sh}}^2 + \Bigr\rVert \Bigl(1 + \frac{2}{\gamma_{kh}}\,e^{\upeta_k^\msp}\,\upeta_k^\msx\Bigr)\,\delta X_{kh}^{\sh} + \frac{2}{\gamma_{kh}} \,e^{-\gamma h + \int_{kh}^{(k+1)h}\gamma_s\,\D s}\,\delta P_{kh}^{\sh} \Bigr\rVert^2\Bigr] \\[0.25em]
    &\qquad = L\,\Bigl[\norm{\delta X_{kh}^{\aux}}^2 + \Bigr\rVert \Bigl(1 + \frac{2}{\gamma_{kh}}\,e^{\upeta_k^\msp}\,\upeta_k^\msx\Bigr)\,\delta X_{kh}^{\aux} + \frac{2}{\gamma_{kh}} \,e^{\upeta_k^\msp}\,\bigl(e^{-\upeta_k^\msp}\,\delta P_{kh}^{\aux} - \upeta_k^\msx\,\delta X_{kh}^\aux\bigr) \Bigr\rVert^2\Bigr] \\[0.25em]
    &\qquad = L\,\Bigl[\norm{\delta X_{kh}^{\aux}}^2 + \bigr\rVert \delta X_{kh}^{\aux} + \frac{2}{\gamma_{kh}} \,\delta P_{kh}^{\aux} \bigr\rVert^2\Bigr]\,.
\end{align*}
Above, the first and third step are by definition of the~\eqref{eq:O} process, the fourth step is by definition of the shift process, and the final step is by simplifying. The key second step (the only inequality) is by the decay shown in~\eqref{eq-pf:contract-lyapunov}. 
\end{proof}

Lemma~\ref{lem:shift-contraction} gives the contraction across iterations. Introduce the distances
    \begin{align*}
        d_n^\msx = \norm{X_{nh} - X_{nh}^\aux}_{L^\infty}\,, \qquad d_{nh}^\msp =\norm{P_{nh} - P_{nh}^\aux}_{L^\infty}\,.
    \end{align*}
    Recalling the identities from~\cite[Lemma 4.9]{scr4},
    \begin{align*}
        \mathtt d_n^\aux \asymp d_n^\msx + \frac{1}{\gamma_{nh}}\, d_n^\msp\,.
    \end{align*}
    Now, we combine this with Theorem~\ref{thm:regularity} to prove Theorem~\ref{thm:harnack}.

\begin{proof}[{Proof of Theorem~\ref{thm:harnack}}]
    We use the following notation:
    \begin{align*}
        \bs\mu_{nh} \deq \law(X_{nh}, P_{nh})\,, \qquad \bar{\bs \mu}_{nh} \deq \law(\bar X_{nh}, \bar P_{nh})\,, \qquad \bs\mu_{nh}^\aux \deq \law(X_{nh}^\aux, P_{nh}^\aux)\,.
    \end{align*}
    We split the proof into two cases.
    
    \textbf{Without Hessian Lipschitzness.}
    We handle the final step via a separate application of the shifted composition rule (Theorem~\ref{thm:shifted_chain_rule}). By Theorem~\ref{thm:regularity}, this yields
    \begin{align*}
        \Renyi_q(\bs \mu_{Nh} \mmid \bar{\bs \mu}_{Nh}) - \Renyi_q(\bs \mu_{(N-1)h}^\aux \mmid \bar{\bs \mu}_{(N-1)h}) &\lesssim  \frac{q\,(d_{N-1}^\msx)^2}{\gamma h^3} + \frac{q\,(d_{N-1}^\msp)^2}{\gamma h} + \beta^2 d h^4 q \\
        &\lesssim \frac{q\,(\mathtt d_{N-1}^\aux)^2}{\gamma h^3} + \beta^2 d h^4 q\,.
    \end{align*}
    Now, for the prior iterations, noting that we add a shift $-(1-e^{-\upeta_n^\msp})\, d_n^\msp - \upeta_n^\msx\, d_n^\msx$ to the momentum coordinate in each iteration and $\upeta_n^\msp \lesssim 1$, then Theorem~\ref{thm:regularity} and the shifted composition rule (Theorem~\ref{thm:shifted_chain_rule}) together give
    \begin{align*}
        \Renyi_q(\bs \mu_{(N-1)h}^\aux \mmid \bar{\bs \mu}_{(N-1)h}) &\lesssim \frac{q}{h} \sum_{n=0}^{N-2} (\upeta_n^\msx d_n^\msx + \upeta_n^\msp d_n^\msp)^2 + \beta^2 d h^4 qN \\
        &\lesssim q\sum_{n=0}^{N-2} h\gamma_{nh}^2\, (\eta_{nh}^\msp)^2\, (\mathtt d_n^\aux)^2 + \beta^2 dh^4 qN\,.
    \end{align*}
    The additional regularity terms must simply be summed along the flow. Now, note that the distance recursion, which relates ${\mathtt d}_n^\aux$ to its initial value, is precisely that of~\cite[Section 4]{scr4}, up to absolute constants. This implies that
    \begin{align}\label{eq:aux-dist-sum}
        \sum_{n=0}^{N-2} h\gamma_{nh}^2\, (\eta_{nh}^\msp)^2\, (\mathtt d_n^\aux)^2 \lesssim C(\alpha, \beta, \gamma, Nh)\, \Bigl\{\norm{x_0 - \bar x_0}^2 + \frac{1}{\gamma_0^2}\, \norm{p_0 - \bar p_0}^2 \Bigr\}\,.
    \end{align}
    Secondly, as in the proof of~\cite[Theorem 4.1]{scr4}, it was shown that an analogous term to $\frac{(\mathtt d_{N-1}^\aux)^2}{\gamma h^3}$ is bounded by the same right hand side in~\eqref{eq:aux-dist-sum} if $c'c_0 \geq 3$. Our shift sequence and recursion also validate the hypotheses of that theorem, and so a similar bound is satisfied (up to absolute constants). Putting this all together gives the theorem statement.

    \textbf{With Hessian Lipschitzness.} This is a straightforward adaptation of the previous argument; with $h \lesssim \nfrac{1}{\betH^{2/7} \gamma^{1/7} d^{1/7}}$, we have an identical regularity condition with the exception of the additive term $\beta^2 d h^4$ not being present. This yields the same bound but without the additive term.
\end{proof}
		\section{Discretization analysis}\label{sec:app}

In this section we establish our algorithmic warm start result (Theorem~\ref{thm:main-warm}). Below we first state our results about the proposed discretization scheme (in \S\ref{ssec:obabco}) and the two-stage analysis (in \S\ref{ssec:two-stage}), and then we provide all proofs of this section (in \S\ref{ssec:6:proofs}). See \S\ref{ssec:obabco_overview} for a high-level overview.

\subsection{OBABCO}\label{ssec:obabco}

Our method of producing a warm start is based on a variant of the standard OBABO splitting scheme (see the preliminaries section \S\ref{sec:hmc_bg} for background on OBABO).
As discussed below, OBABO does not admit the requisite error bounds, so we introduce a final corrective step~\eqref{eq:C}, leading to the proposed OBABCO scheme.

Formally, OBABCO replaces the~\eqref{eq:H} step in OHO by
\begin{align}
    P^{\msOB} &\deq P^\msO - \frac{h}{2}\, \nabla V(X^\msO) \tag{B}\label{eq:b-splitting}\,, \\
    X^{\msOBA} &\deq X^\msO + h P^{\msOB}\,, \tag{A}\label{eq:a-splitting}\\
    P^{\msOBAB} &\deq P^{\msOB} - \frac{h}{2}\, \nabla V(X^{\msOBA})\,, \tag{B}\, \\
    X^{\msOBABC} &\deq X^\msO + \frac{h}{3}\,(P^\msOBAB + 2 P^{\msOB}) + \frac{h^2}{6}\, \nabla V(X^\msO) \tag{C}\label{eq:C}\,.
\end{align}
Our final iterate will be $(X^\msOBABC, P^\msOBAB)$, to which we then apply~\eqref{eq:O} in the momentum coordinate.

The justification for this scheme is based on the computation of the strong errors below.

\begin{lemma}[Strong errors]\label{lem:errors}
    By coupling the Gaussian noise in the~\eqref{eq:O} steps, we have almost surely
    \begin{align*}
        h^{-1}\,\norm{X^{\msOBABCO} - X^{\msOHO}} + \norm{P^{\msOBABCO} - P^{\msOHO}}\lesssim \beta h^2\,(\norm{p} + h\, \norm{\nabla V(x)} + \sqrt{\gamma dh})\,.
    \end{align*}
    If additionally $V$ is three times continuous differentiable, then
    \begin{align*}
        h^{-1}\,&\norm{X^{\msOBABCO} - X^{\msOHO}} + \norm{P^{\msOBABCO} - P^{\msOHO}} \\ &\lesssim \beta h^3\,(\norm{\nabla V(x)} + \beta h\, \norm{p} + \beta h \sqrt{\gamma d h}) \\
        &\qquad +  h^2 \int_0^h \bigl(\norm{\nabla^3 V(x+tP^{\msOB})[P^{\msOB}, P^{\msOB}]} + \norm{\nabla^3 V(X_t^\msOH)[P_t^\msOH, P_t^\msOH]} \bigr) \, \D t\,.
    \end{align*}
\end{lemma}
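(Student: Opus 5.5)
Since both schemes begin with the same \eqref{eq:O} step (synchronous coupling) and the final \eqref{eq:O} step is a contraction in momentum by the factor $e^{-\gamma h/2}\le 1$ with identical noise, it suffices to compare the deterministic maps from $(X^\msO,P^\msO)=(x,P^\msO)$: the exact flow $\Phi_h(x,P^\msO)=(X_h^\msOH,P_h^\msOH)$ versus the BABC output $(X^\msOBABC,P^\msOBAB)$. I will write $p'\deq P^\msO$ and note $\norm{p'}\le \norm p+\sqrt{1-e^{-\gamma h}}\,\norm{\bar B_1}$. The almost-sure statement with $\sqrt{\gamma dh}$ is then read with $\norm{\bar B_1}\lesssim\sqrt d$; alternatively one keeps $\norm{p'}$ and bounds it afterwards. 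So the core task is a local Taylor comparison.

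\textbf{Exact flow expansion.} I would write the flow in integral form:
\[
X_h^\msOH = x + hp' - \int_0^h (h-t)\,\nabla V(X_t^\msOH)\,\D t,\qquad P_h^\msOH = p' - \int_0^h \nabla V(X_t^\msOH)\,\D t.
\]
Using Lemma~\ref{lem:hmc-expansion}-type Grönwall bounds (valid for $h\lesssim \beta^{-1/2}$), I have $\norm{X_t^\msOH-x}\lesssim t\norm{p'}+t^2\norm{\nabla V(x)}$ and $\norm{P_t^\msOH-p'}\lesssim t\,(\norm{\nabla V(x)}+\beta t\norm{p'})$.

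\textbf{Scheme expansion.} Unfolding B, A, B gives
\[
P^\msOBAB = p' - \tfrac h2\bigl(\nabla V(x)+\nabla V(x+hP^\msOB)\bigr).
\]
For C, substituting gives
\[
X^\msOBABC = x + hp' - \tfrac{h^2}{2}\nabla V(x) - \tfrac{h^2}{6}\bigl(\nabla V(x+hP^\msOB)-\nabla V(x)\bigr).
\]
This is exactly designed to match $-\int_0^h(h-t)\nabla V(X_t)\,\D t \approx -\tfrac{h^2}{2}\nabla V(x)-\tfrac{h^3}{6}\nabla^2V(x)p'$.

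\textbf{First (second-order) bound.} Only $\beta$-Lipschitzness of $\nabla V$ is used. For the momentum, the trapezoid rule versus $\int_0^h\nabla V(X_t)\,\D t$ has error at most $\beta h\sup_t\norm{X_t-\text{linear interpolant}}\lesssim \beta h^2(\norm{p'}+h\norm{\nabla V(x)})$. For the position, each of the two differences is $\lesssim \beta h^3(\norm{p'}+h\norm{\nabla V(x)})$, giving $h^{-1}\cdot$ (that) of the same order. Together with the bound on $\norm{p'}$, this yields the first display.

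\textbf{Third-order bound.} This is the main obstacle, because the error has to carry only the $\nabla^3V$ chaos terms and no $\beta$-scaled $\norm{p'}^2$ terms. I would write
\[
\nabla V(x+hP^\msOB)-\nabla V(x)= h\nabla^2V(x)P^\msOB + \int_0^h (h-t)\,\nabla^3V(x+tP^\msOB)[P^\msOB,P^\msOB]\,\D t.
\]
Similarly, I would apply Taylor to $t\mapsto \nabla V(X_t^\msOH)$ around $0$. Its second derivative is $\nabla^3V(X_t)[P_t,P_t]-\nabla^2V(X_t)\nabla V(X_t)$, so the second term contributes $\beta\,\norm{\nabla V}$-type errors, bounded using $\norm{\nabla V(X_t)}\le\norm{\nabla V(x)}+\beta\norm{X_t-x}$. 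The leading terms then cancel. The trapezoid and Simpson-like C weights agree with exact integration on affine integrands: $\tfrac h2(f_0+f_h)$ matches $\int_0^h f$, and $\tfrac{h^2}{2}f_0+\tfrac{h^2}{6}(f_h-f_0)$ matches $\int_0^h(h-t)f$. The residual first-order mismatch $h\nabla^2V(x)(P^\msOB-p') = -\tfrac{h^2}{2}\nabla^2V(x)\nabla V(x)$ is $O(\beta h^2\norm{\nabla V(x)})$ inside a term with a prefactor $h$. What remains are exactly the two $\nabla^3V$ integrals, with weights $\lesssim h$ (momentum) or $h^2$ (position), and $\beta h^3(\norm{\nabla V(x)}+\beta h\norm{p'})$ terms. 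The delicate bookkeeping is checking that the weights yield $h^2\int_0^h$ in both normalized errors.
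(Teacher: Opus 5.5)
Your proposal is correct and is essentially the paper's proof. Both arguments expand the exact flow in iterated-integral form and the BABC output via a Taylor expansion of $\nabla V$ along $x+tP^{\msOB}$, and observe that the terms $\tfrac{h^2}{2}\nabla V(x)+\tfrac{h^3}{6}\nabla^2V(x)P^{\msO}$ (resp.\ $h\nabla V(x)+\tfrac{h^2}{2}\nabla^2V(x)P^{\msO}$) cancel. They then bound the leftover $\nabla^2V\,\nabla V$ pieces by Lemma~\ref{lem:ham-flow-grow}, keep the two $\nabla^3V[P,P]$ integrals, and finish with the contraction of the last \eqref{eq:O} step.

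Two small remarks. First, your caveat that $\norm{P^{\msO}}\lesssim\norm{p}+\sqrt{\gamma dh}$ holds only with high probability, not almost surely, is accurate; the paper uses this bound in the same informal way. Second, in the crude momentum bound the trapezoid rule is not exact for a merely Lipschitz integrand, so your ``linear interpolant'' justification as written is incomplete. The clean fix is to compare both $\int_0^h\nabla V(X_t^{\msOH})\,\D t$ and $\tfrac h2(\nabla V(x)+\nabla V(x+hP^{\msOB}))$ to $h\nabla V(x)$, which gives your stated $\beta h^2(\norm{p'}+h\norm{\nabla V(x)})$ directly.
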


To see why this computation leads us to expect that OBABCO can produce a warm start in $d^{1/4}$ steps, note that under Assumption~\ref{as:higher-reg}, for $(X,P) \sim \bs\pi$, $\norm{\nabla^3 V(X)[P, P]} \lesssim d^{1/2}$ with high probability (see Lemma~\ref{lem:chaos}).
Thus, at stationarity, the strong error scales as $O(d^{1/2} h^3)$, and if we sum up this error over $O(1/h)$ iterations, it leads to a cumulative error scaling as $O(d^{1/2} h^2)$.
Choosing step size $h \asymp \nfrac{1}{d^{1/4}}$ leads to a final error of at most $O(1)$, which should be sufficient for a warm start, and leads to a complexity of roughly $\nfrac{1}{h} \asymp d^{1/4}$ iterations. 

There are two main obstacles with this informal reasoning:
\begin{enumerate}
    \item On its own, the strong error would only control the $L^2$ distance between the trajectories of OBABCO and OHO\@, whereas for a warm start we require \emph{R\'enyi} bounds.
    \item The calculation above only holds at stationarity, but we will need strong error bounds even when the algorithm starts quite far away from stationarity.
\end{enumerate}
The first obstacle will be handled by extending the ideas of \S\ref{ssec:harnack}, that is, via one-shot coupling and a ``discrete Girsanov'' argument.
As for the second obstacle, we apply change of measure of arguments, as described in the next subsection.
This is the reason why we need the subtle high-probability bound appearing in Lemma~\ref{lem:chaos}.

\begin{remark}[Necessity of the~\eqref{eq:C} step]
    We introduce the correction step~\eqref{eq:C} so that in the strong error, the position error is a factor $h$ smaller than the momentum error. This is required later for our R\'enyi analysis to obtain a final iteration complexity of $d^{1/4}$. It is unclear if this algorithmic modification is strictly necessary.~\cite{monmarche2021high} provides a $d^{1/4}$ result without a correction step, but only in the weaker metric of $W_2$ and under the additional assumption of a separable target. 
\end{remark}

\subsection{Two-stage analysis}\label{ssec:two-stage}

Our proof relies on a two-stage analysis, which we now motivate. The refined bounds in our proof will rely on the higher-order Taylor expansions in Lemma~\ref{lem:errors}, which will require us to evaluate the expectations of quantities of the form $\exp(C\, \norm{\nabla^3 V(X)[P, P]}^2)$ (conditioned on a good event) for some constant $C$, with $(X, P) \sim \bs \mu_0 \bs P_h^n$, where $\bs P_h$ is the OHO kernel. Under the stationary measure $\bs \pi$, $X$ and $P$ are independent, and we can directly appeal to Lemma~\ref{lem:chaos} in order to deduce high probability bounds of the correct order.

However, since we need the expectations under $\bs \mu_0 \bs P_h^n$ instead, it is imperative that these high probability bounds also hold for that measure. To do this, we want to apply Lemma~\ref{lem:high-prob-change} which bounds the expectations under a change of measure. However, this only produces bounds of the correct order when $\Renyi_2(\bs \mu_0 \mmid \bs \pi) \lesssim d^{1/2}$, which does not occur at initialization as we expect this quantity to be of order $d$ (Lemma~\ref{lem:initialization}). Thus we conduct a two-phase analysis. 

In the first phase, we show that
by using the lower-order Taylor expansion in Lemma~\ref{lem:errors} (which does not use Lipschitzness of the Hessian), we can reach R\'enyi divergence of order $d^{1/2}$. Crucially, this holds even if we take a step size of order $d^{-1/4}$.
Then, in the second phase, we use the change-of-measure principle (Lemma~\ref{lem:high-prob-change}) to drive the R\'enyi divergence to constant order, again with step size of order $d^{-1/4}$.

For the first phase, we establish the following bound.
Below, we let $\widehat{\bs P}_h$ denote the OBABCO kernel (and recall that $\bs P_h$ denotes the OHO kernel).

\begin{theorem}[Coarse R\'enyi bound]\label{thm:coarse}
    Assume only that $-\beta I \preceq \nabla^2 V \preceq \beta I$, and that $h \ll \nfrac{1}{\beta^{1/2} q^{1/2}} \wedge \nfrac{\gamma^{1/2}}{\beta q T^{1/2}} \wedge \nfrac{1}{\gamma}$.
    Then, for $T \deq Nh$,
\begin{align*}
    \Renyi_q(\bs\mu_0 \widehat{\bs P}_h^N \mmid \bs\mu_0 \bs P_h^N) \lesssim \frac{\beta^2 h^2 q T}{\gamma}\, \bigl(d + \Renyi_2(\bs \mu_0 \mmid \bs \pi)\bigr)\,.
\end{align*}
\end{theorem}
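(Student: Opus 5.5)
We compare the OBABCO chain and the OHO chain, both started from $\bs\mu_0$, through a ``discrete Girsanov'' construction on path space. The two kernels share the first and last \eqref{eq:O} steps; they differ only in that OHO uses the exact flow $\Phi_h$ where OBABCO uses the BABC map. We run both chains with the same noises $(\bar B_1^{(n)},\bar B_2^{(n)})_{n<N}$. We then build an auxiliary OHO process whose noises are shifted at each step so that its iterates coincide with the OBABCO iterates. The shift is obtained exactly as in Step 1 of the proof of Theorem~\ref{thm:regularity}. Starting from the current common point $(x,p)$, choose $B_1$ so that $\Phi_h^{\msx}$ hits $X^{\msOBABC}$; this uses $\Psi_h$, which is well defined for $h\ll\beta^{-1/2}$ by Lemma~\ref{lem:inv_fn}. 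Then choose $B_2$ so that the momentum matches after the final \eqref{eq:O}.

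By \eqref{eq:shooting-diff} and Lemma~\ref{lem:hmc-expansion} applied with equal starting positions, the shifts satisfy
\[
\norm{B_1-\bar B_1}\lesssim (\gamma h)^{-1/2}\bigl(h^{-1}\norm{X^{\msOBABCO}-X^{\msOHO}}+\norm{P^{\msOBABCO}-P^{\msOHO}}\bigr),
\]
and $\norm{B_2-\bar B_2}$ obeys the same bound. The first part of Lemma~\ref{lem:errors} then gives, at each step started from $(x,p)$,
\[
\norm{B_i-\bar B_i}^2\lesssim \frac{\beta^2h^4}{\gamma h}\,\bigl(\norm p^2+h^2\norm{\nabla V(x)}^2+\gamma dh\bigr).
\]
The map $\bar B_1\mapsto B_1$ is a genuine (nonlinear) change of variables. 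Its Jacobian deviation $\norm{\nabla F^{-1}-I}_{\op}$ is controlled by Lemma~\ref{lem:hmc-distortion}, together with a similar bound for the BABC map, which is an $O(\beta h^2)$ perturbation of the identity in $p$. Handling this exactly as in Step 3 of Theorem~\ref{thm:regularity}, with Lemma~\ref{lem:log-det}, the determinant terms contribute $q^2\beta^2h^4 d$ per step. By data processing, $\Renyi_q(\bs\mu_0\widehat{\bs P}_h^N\mmid\bs\mu_0\bs P_h^N)$ is bounded by the R\'enyi divergence between the laws of the full noise sequences (with $\bs\mu_0$ shared).

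The key step is a martingale/induction bound on this path-space R\'enyi divergence, presumably the content of Lemma~\ref{lem:renyi_mg}. Conditioning successively and using Cauchy--Schwarz (or H\"older with exponent doubling) to separate the step-$n$ factor from the past yields
\[
\Renyi_q\lesssim q\,\sum_n\Bigl(\beta^2h^4d+\tfrac1q\log\E\exp\bigl(Cq^2\norm{B^{(n)}-\bar B^{(n)}}^2\bigr)\Bigr),
\]
where the expectation is over the auxiliary (equivalently, OBABCO) iterate at step $n$. Inserting the shift bound reduces the problem to exponential moments of the form $\E\exp\bigl(Cq^2\beta^2h^3\gamma^{-1}(\norm{P_n}^2+h^2\norm{\nabla V(X_n)}^2)\bigr)$, with $\norm{\nabla V(X_n)}\le\beta\norm{X_n}$.

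The main obstacle is controlling these exponential moments under the law of the iterates rather than under $\bs\pi$. I would transfer them to $\bs\pi$ using the Donsker--Varadhan change of measure,
\[
\log\E_{\mu}e^{f}\le \Renyi_2(\mu\mmid\bs\pi)+\tfrac12\log\E_{\bs\pi}e^{2f}.
\]
The data-processing inequality gives $\Renyi_2(\bs\mu_0\bs P_h^n\mmid\bs\pi)\le\Renyi_2(\bs\mu_0\mmid\bs\pi)$ for the OHO iterates, since OHO leaves $\bs\pi$ invariant. Under $\bs\pi$, $\norm P^2$ is $\chi^2_d$ and $\norm X^2$ is sub-exponential (by LSI/strong log-concavity), so for coefficient $\lambda\ll1$ we have $\log\E_{\bs\pi}e^{\lambda\norm P^2}\lesssim\lambda d$. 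The change of measure costs a multiplicative prefactor; to keep it linear rather than additive per step, I would apply it with a rescaled $f$, using the Jensen bound $\log\E e^{\epsilon f}\le\epsilon\log\E e^{f}$ for $\epsilon\le1$. This gives per-step cost $\lambda(d+\Renyi_2(\bs\mu_0\mmid\bs\pi))$ with $\lambda\asymp q\beta^2h^3/\gamma$. That is valid precisely when $h\ll\gamma^{1/2}/(\beta qT^{1/2})$, which is where the middle step-size condition enters; the stated condition is $T$-dependent, which suggests the authors' change of measure is applied once to the whole trajectory rather than per step. The moments are taken under the auxiliary law, which is not OHO. I would handle this self-referentially: bound $\Renyi_2$ of the OBABCO iterates by the induction hypothesis plus the OHO divergence via the weak triangle inequality, absorbing constants since the target bound is $O(1)$ under the step-size condition. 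Alternatively, one can use a H\"older split to move expectations onto the OHO-coupled path. Summing $N=T/h$ steps gives $\beta^2h^2qT\gamma^{-1}(d+\Renyi_2(\bs\mu_0\mmid\bs\pi))$, and the determinant terms $q\beta^2h^3Td$ are dominated by this when $\gamma h\le1$.
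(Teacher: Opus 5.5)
Your shifted-noise construction, the shift bounds via \eqref{eq:shooting-diff} and Lemma~\ref{lem:errors}, and the $q^2\beta^2h^4d$ Jacobian cost all match the paper's one-step argument (Lemma~\ref{lem:obabco_cross}). The gap is the path-space step.

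\textbf{The decomposition is false.} Your bound $\Renyi_q\lesssim q\sum_n\bigl(\beta^2h^4d+\frac1q\log\E\exp(Cq^2\norm{B^{(n)}-\bar B^{(n)}}^2)\bigr)$ has an $N$-independent coefficient in each exponential. It does not follow from successive conditioning. Each Cauchy--Schwarz that separates $\E[L_n^q\mid\mc F_{n-1}]$ from the past doubles the R\'enyi order of the remaining path, to order $q2^N$ after $N$ steps. Moreover, that conditional factor is an unbounded function of the state, strongly correlated across steps (in OHO, $\norm{P_{nh}}$ and $\nabla V(X_{nh})$ change slowly). Here is a counterexample. Take a chain carrying a common $Z\sim\cN(0,1)$, with the same initial law under both measures, and step-$n$ noise $\cN(aZ,1)$ under $\bs\mu$ versus $\cN(0,1)$ under $\bs\nu$. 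Then $\exp((q-1)\Renyi_q(\bs\mu\mmid\bs\nu))=\E\exp(\tfrac{q(q-1)}{2}Na^2Z^2)$, which is infinite once $q(q-1)Na^2\ge1$. Your right-hand side is only about $Nq^2a^2$.

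\textbf{The correct decomposition.} Lemma~\ref{lem:renyi_mg} uses the supermartingale $\prod_k L_k^{2q}/\E[L_k^{2q}\mid\mc F_{k-1}]$, one Cauchy--Schwarz, and AM--GM over the $N$ correlated factors. This gives $\exp(2(q-1)\Renyi_q)\le\frac1N\sum_n\E_{\bs\nu}\exp\bigl((2q-1)N\,\Renyi_{2q}(\text{one step from }X_{n-1})\bigr)$, so the one-step divergence enters multiplied by $N$. With your shift bounds, the exponent becomes $\frac{Cq^2\beta^2h^2T}{\gamma}(\norm{P_{nh}}^2+h^2\norm{\nabla V(X_{nh})}^2)+C\beta^2dh^3q^2T$. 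The hypothesis $h\ll\gamma^{1/2}/(\beta qT^{1/2})$ is exactly what makes this coefficient small enough for the Gaussian-type moment to be finite. That is where $T$ enters, not a change of measure applied once to the whole trajectory. Your claim that $\lambda\asymp q\beta^2h^3/\gamma\ll1$ holds ``precisely when'' the $T$-dependent condition holds is inconsistent with your own per-step coefficient.

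\textbf{Further corrections.} First, in Lemma~\ref{lem:renyi_mg} the expectation is under the reference law $\bs\nu$. That is, it is over the OHO iterates $(X_{nh},P_{nh})\sim\bs\mu_0\bs P_h^n$, not the auxiliary/OBABCO iterates. Since OHO leaves $\bs\pi$ invariant, Cauchy--Schwarz and data processing give $\E\exp(c(\norm{P_{nh}}-\sqrt d)^2)\le\exp(\bar C(1+\Renyi_2(\bs\mu_0\mmid\bs\pi)))$ directly. Your self-referential induction is therefore unnecessary. Run naively through the weak triangle inequality, it would also raise the order of the OHO-to-$\bs\pi$ term, yielding $\Renyi_3(\bs\mu_0\mmid\bs\pi)$ instead of $\Renyi_2$.

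Second, the theorem assumes only $-\beta I\preceq\nabla^2V\preceq\beta I$. Neither LSI/strong log-concavity nor $\nabla V(0)=0$ is available, and passing through $\norm{\nabla V(X)}\le\beta\norm X$ can lose everything. For example, $V(x)=\sqrt{1+\norm x^2}$ satisfies $0\preceq\nabla^2V\preceq I$ and has bounded gradient, yet $\E_\pi e^{\lambda\norm X^2}=\infty$ for every $\lambda>0$. Use instead that $\nabla V$ is $\sqrt\beta$-sub-Gaussian under $\pi$ from smoothness alone (Lemma~\ref{lem:subgsn-score}).

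With these fixes, your Jensen rescaling of the change of measure (which is what the paper does) bounds each term of the average by $\exp\bigl(\frac{\bar C\beta^2h^2q^2T}{\gamma}(d+\Renyi_2(\bs\mu_0\mmid\bs\pi))\bigr)$. The $\beta^2dh^3q^2T$ term is absorbed using $\gamma h\le1$, which yields the stated bound.
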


\begin{remark}
    Since this result does not use the higher-order Taylor expansion, it also holds for OBABO without the~\eqref{eq:C} step.
    Indeed, it is easy to check that OBABO satisfies the first inequality in Lemma~\ref{lem:errors}.
    When combined with Theorem~\ref{thm:harnack} and the weak triangle inequality for R\'enyi divergences (Proposition~\ref{prop:renyi}), this implies that OBABO with $\gamma \asymp \sqrt \beta$ and initialized at a measure $\bs\mu_0$ with $\Renyi_2(\bs\mu_0\mmid\bs\pi) = \widetilde O(d)$ can reach $\Ren_2(\bs\mu_0\widehat{\bs P}_h^N \mmid \bs\pi) \le \varepsilon^2$ in $\widetilde O(\kappa^{3/2} d^{1/2}/\varepsilon)$ iterations.

    In comparison, under similar assumptions, the result of~\cite[Proposition 11]{monmarche2021high} implies an iteration complexity of roughly $\widetilde O(\kappa^2 d^{1/2}/\varepsilon)$ to reach $\varepsilon$ error in Wasserstein distance, or $\widetilde O(\kappa^2 d/\varepsilon)$ to reach $\varepsilon$ error in total variation distance.
    Hence, we see that our result yields guarantees in a stronger metric and with smaller iteration complexity.
\end{remark}

Next, we state our refined R\'enyi bound.

\begin{theorem}[Refined R\'enyi bound]\label{thm:obabco-final-renyi}
    Assume that $-\beta I \preceq \nabla^2 V \preceq \beta I$ and that Assumption~\ref{as:higher-reg} holds.
    Also, suppose that the following step size condition holds:
    \begin{align}\label{eq:big_stepsize_cond}
        h \ll \frac{1}{\gamma} \wedge \frac{\gamma}{\beta q} \wedge \frac{1}{\beta^{1/2} q^{1/2}} \wedge \frac{\gamma^{1/4}}{(\beta^{3/4} +\beta_\eff^{1/2})\, q^{1/2} T^{1/4}} \wedge \frac{1}{\beta_\eff^{2/7} \gamma^{1/7} d^{1/7}} \wedge  \frac{1}{\beta^{1/2} d^{1/4}} \wedge \frac{\gamma^{1/2}}{\beta d^{1/4} q T^{1/2}}\,.
    \end{align}
    Then, it holds that
    \begin{align*}
        \Renyi_q(\bs\mu_0 \widehat{\bs P}_h^N \mmid \bs\mu_0 \bs P_h^N)
        &\lesssim \frac{\beta^3 h^4 q T}{\gamma}\,(d+R_0) + \frac{\beta_\eff^2 h^4 qT}{\gamma}\,\Bigl(d+R_0^2 + \log^2 \frac{\gamma}{\beta_\eff^2 dh^4 q^2 T}\Bigr)\,,
    \end{align*}
    where $R_0 \deq \Renyi_2(\bs\mu_0\mmid\bs\pi)$.
\end{theorem}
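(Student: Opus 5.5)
The plan is to compare the two chains through a path-space ``discrete Girsanov'' argument, the same one the overview calls Lemma~\ref{lem:renyi_mg} and that presumably also drives Theorem~\ref{thm:coarse}, but now feeding in the refined strong error of Lemma~\ref{lem:errors}. Concretely, I would run OBABCO and OHO from the same initial point $(X_0,P_0)\sim\bs\mu_0$ with synchronously coupled noise, and at each step use the one-shot coupling of Theorem~\ref{thm:regularity} to absorb the local discrepancy into a shift of the Gaussian increments $\bar B_1,\bar B_2$. The position error is absorbed into $B_1$ at cost $\norm{\delta x}^2/(\gamma h^3)$, and the momentum error into $B_2$ at cost $\norm{\delta p}^2/(\gamma h)$. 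Since Lemma~\ref{lem:errors} gives $h^{-1}\norm{\delta x}+\norm{\delta p}\lesssim \mc E_n$, each step contributes a drift of squared size $\mc E_n^2/(\gamma h)$. A martingale exponential-moment argument (Hölder across steps, conditioning successively) then gives a bound of the form
\begin{align*}
\Renyi_q(\bs\mu_0\widehat{\bs P}_h^N\mmid\bs\mu_0\bs P_h^N)\lesssim \frac1{q}\log \E\exp\Bigl(Cq^2\sum_{n<N}\frac{\mc E_n^2}{\gamma h}\Bigr)\,.
\end{align*}
Here the expectation is along the OHO trajectory; a Hölder splitting with the reference measure doubles $q$, which only affects constants. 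The Jacobian/log-det term from Lemma~\ref{lem:hmc-distortion-II} is controlled by the condition $h\ll \beta_\eff^{-2/7}\gamma^{-1/7}d^{-1/7}$, since $\betH\le\beta_\eff$.

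Next I would plug in the second bound of Lemma~\ref{lem:errors}. The lower-order part $\beta h^3(\norm{\nabla V}+\beta h\norm p+\beta h\sqrt{\gamma dh})$ has squares bounded by $\beta^2h^6(\beta\norm x^2+\beta^2h^2\norm p^2+\dots)$. Summing over $N=T/h$ steps and dividing by $\gamma h$ gives terms of order $\beta^3h^4T(\norm x^2+\dots)/\gamma$. The exponential moments of $\norm{X_n}^2$ and $\norm{P_n}^2$ under $\bs\mu_0\bs P_h^n$ would then be controlled by change of measure to $\bs\pi$: for any $f$, $\log\E_{\mu}e^{f}\le \Renyi_2(\mu\mmid\bs\pi)+\tfrac12\log\E_{\bs\pi}e^{2f}$, using data processing to bound $\Renyi_2(\bs\mu_0\bs P_h^n\mmid\bs\pi)\le R_0$. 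Under $\bs\pi$, sub-Gaussian concentration gives $\log\E e^{\lambda\norm X^2}\lesssim \lambda d$ for $\lambda\lesssim\alpha$-scale. The step size conditions $h\ll\gamma^{1/2}/(\beta d^{1/4}qT^{1/2})$ and $h\ll\gamma^{1/4}/(\beta^{3/4}q^{1/2}T^{1/4})$ are exactly what make the coefficient $q^2\beta^3h^4T/\gamma$ small enough for this moment to be finite. The result is the first term $\beta^3h^4qT(d+R_0)/\gamma$. I would handle the sum over $n$ via Jensen/Hölder in time, replacing $\exp(\sum_n)$ by an average of $\exp(N\cdot)$.

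The main obstacle is the third-derivative term $h^2\int_0^h\norm{\nabla^3V(\cdot)[P,P]}\,\D t$, whose square sums to $\beta_\eff^2h^4T\cdot\norm{\nabla^3V[P,P]}^2/(\gamma\beta_\eff^2)$ per unit time. Under $\bs\pi$, Lemma~\ref{lem:chaos} gives only a sub-exponential tail $\bar\beta(d^{1/2}+\log(1/\delta))$ rather than a sub-Gaussian one. So $\E_{\bs\pi}\exp(\lambda\norm{T[\xi,\xi]}^2)$ is infinite for every $\lambda>0$, and change of measure cannot be applied directly. I would first try truncation. Define a good event $G$ on which $\norm{\nabla^3V[P,P]}\le\beta_\eff(d^{1/2}+L)$ along the trajectory, for both intermediate arguments $x+tP^{\msOB}$ and $X_t^{\msOH}$. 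These are handled by using the Frobenius–Lipschitz bound to compare with $\nabla^3V$ at the grid points, together with $\norm{P^{\msOB}-P^{\msO}}\lesssim h\norm{\nabla V}$. On $G$, the contribution is $q\beta_\eff^2h^4T(d+L^2)/\gamma$. The probability of $G^c$ under $\bs\mu_0\bs P_h^n$ is transferred from $\bs\pi$ by the Rényi change of measure, $\Pr_\mu(G^c)\le (\Pr_{\bs\pi}(G^c)e^{R_0})^{1/2}$. This forces $L\gtrsim R_0+\log(N/\delta)$, producing the $R_0^2$ term. The bad event then enters the Rényi divergence additively through the weak triangle inequality (Proposition~\ref{prop:renyi}), or by conditioning the Girsanov bound on $G$. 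Choosing $\delta$ to balance this yields the $\log^2\bigl(\gamma/(\beta_\eff^2dh^4q^2T)\bigr)$ term. I expect most of the care to go into making this truncation compatible with the martingale structure, which I would do by stopping the drift once the trajectory leaves $G$.
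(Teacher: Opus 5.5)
Your overall architecture (martingale/Girsanov decomposition, per-step one-shot coupling, change of measure to $\bs\pi$, truncation of the chaos term, balancing $\delta$) is the paper's. However, the truncation step has a genuine gap: you never say what controls the local error on the bad event $G^c$. For $q>1$, a small $\Pr(G^c)$ does not make that event's contribution small. Restricting the Girsanov bound to $G$ says nothing about $\int_{G^c}(\D\bs\mu/\D\bs\nu)^q\,\D\bs\nu$. Stopping the drift only relocates the problem: the weak triangle inequality then needs $\Renyi_{2q}$ between OBABCO and the stopped process, which is exactly the post-exit OBABCO-versus-OHO divergence, where the refined error is unavailable.

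The missing idea is that the \emph{coarse} estimate of Lemma~\ref{lem:errors} also holds almost surely. So, pointwise, $\frac{q^2T}{\gamma h^2}\mc E_n^2\lesssim\min\{A,B_1+B_2\}$, where $A\asymp\frac{\beta^2h^2q^2T}{\gamma}(\norm{P}^2+h^2\norm{\nabla V(X)}^2+\gamma dh)$. This $A$ is quadratic in sub-Gaussian quantities, so it has finite exponential moments after change of measure, as in Theorem~\ref{thm:coarse}. Then, separately for each $n$ inside the average of Lemma~\ref{lem:renyi_mg} (no stopping time or union bound over $n$ is needed),
\[
\E e^{C\min\{A,B_2\}} \le e^{CK} + \sqrt{\E e^{2CA}\,\Pr(B_2>K)} \le e^{b} + \sqrt{\delta}\,e^{a}\,.
\]
Here $b\asymp\frac{\beta_\eff^2h^4q^2T}{\gamma}(d+R_0^2+\log^2\frac1\delta)$ comes from Lemma~\ref{lem:stoc-proc-bound}, and $a\asymp\frac{\beta^2h^2q^2T}{\gamma}(d+R_0)$ comes from the coarse moment bound. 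Choosing $\sqrt\delta\asymp\frac{\beta_\eff^2dh^4q^2T}{\gamma}e^{-a}$ gives $\log\frac1\delta\approx a+\log\frac{\gamma}{\beta_\eff^2dh^4q^2T}$. The condition $h\ll\gamma^{1/2}/(\beta d^{1/4}qT^{1/2})$ exists precisely to force $a\lesssim d^{1/2}+R_0$, so that $\log^2\frac1\delta\lesssim d+R_0^2+\log^2\frac{\gamma}{\beta_\eff^2dh^4q^2T}$. It is not needed for the $\beta^3$-moment, contrary to what you write; that moment only uses $h\ll\gamma^{1/4}/(\beta^{3/4}q^{1/2}T^{1/4})$.

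Two further steps fail as written.

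\textbf{Intermediate evaluation points.} The Frobenius--Lipschitz assumption only bounds $\nabla^3V$; it gives no continuity of $\nabla^3V$. So you cannot compare $\nabla^3V(X^{\msO}+tP^{\msOB})$ or $\nabla^3V(X^{\msOH}_t)$ with values at grid points. Since these evaluation points depend on the momentum, Lemma~\ref{lem:chaos} cannot be applied conditionally either. The paper instead transfers the chaos tail directly at the intermediate states. The O step and the exact flow preserve $\bs\pi$, so $\Renyi_2(\bs\mu\bs P^{\msOH}_t\mmid\bs\pi)\le\Renyi_2(\bs\mu\mmid\bs\pi)$. For the map $T_t$, Lemma~\ref{lem:perturb-renyi} costs an extra $\beta dh^2$, whence the condition $h\ll\beta^{-1/2}d^{-1/4}$. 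Lemmas~\ref{lem:high-prob-change} and~\ref{lem:tail-lp-equivalence} then yield the $R_0^2$ term.

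\textbf{The Jacobian.} The Jacobian that enters is that of $B_1\mapsto\widehat B_1$, i.e., of $\Psi_h(x_0,\widehat\Phi_h^{\msx}(x_0,\cdot))$ (Lemma~\ref{lem:obabo-distortion}). It is not Lemma~\ref{lem:hmc-distortion-II}, which is trivial when both chains start at the same point. Besides the $\betH$ piece that $h\ll\beta_\eff^{-2/7}\gamma^{-1/7}d^{-1/7}$ absorbs, it contains $h^5\int_0^h\norm{\nabla^3V(\cdot)\,P}_{\rm F}^2\,\D t$. This term must be kept in Frobenius norm: the operator-norm route costs $\beta^2dh^3qT$, which is unbounded at $h\asymp d^{-1/4}$. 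It is then controlled separately, via $\norm{\nabla^3V(x)\,P}_{\rm F}\le\beta_\eff\norm{P}$ and change of measure. This is the paper's Term~II and the source of the $\beta_\eff^{1/2}$ in the fourth step-size condition.

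Finally, bound $\norm{\nabla V(X)}^2$ through the sub-Gaussian score (Lemma~\ref{lem:subgsn-score}), not through $\norm{X}^2$. The latter needs strong log-concavity, which is not assumed here, and loses a factor of $\kappa$.
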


To interpret these results, we focus on the dependence on the dimension.
Note that due to the presence of the $R_0^2$ term, Theorem~\ref{thm:obabco-final-renyi} only yields an iteration complexity bound scaling as $d^{1/4}$ if $R_0 \lesssim d^{1/2}$.
However, Theorem~\ref{thm:coarse} shows that with $h \asymp d^{-1/4}$ we can indeed reach a R\'enyi divergence of order $d^{1/2}$.
Therefore, by combining together Theorems~\ref{thm:coarse} and~\ref{thm:obabco-final-renyi} in a two-phase analysis, we can establish our main result (Theorem~\ref{thm:main-warm}).
We give a formal proof in \S\ref{sssec:main_warm_pf}.
	    \subsection{Proofs}\label{ssec:6:proofs}

\subsubsection{Proof of Lemma~\ref{lem:errors}}

\begin{proof}[Proof of Lemma~\ref{lem:errors}]
    We note that since $\D X_t^\msOH = P_t^\msOH$ and $\D P_t^\msOH = -\nabla V(X_t^\msOH)$,
    \begin{align}\label{eq:OH-coarse-taylor}
    \begin{aligned}
        X_h^\msOH = X^\msO + hP^\msO - \frac{h^2}{2}\, \nabla V(X^\msO) - \int_0^h \int_0^{t_1} \int_0^{t_2} \nabla^2 V(X_{t_3}^\msOH)\, P_{t_3}^\msOH \, \D t_3 \, \D t_2 \, \D t_1\,.
    \end{aligned}
    \end{align}
    On the other hand, via a Taylor expansion between $X^\msO$ and $X^\msOBA$,
    \begin{align}\label{eq:obabc-expansion}
    \begin{aligned}
        X^\msOBABC &= X^\msO + h P^\msO - \Bigl(\frac{h^2}{3}\, \nabla V(X^\msO) + \frac{h^2}{6}\, \nabla V(X^{\msOBA}) \Bigr)\\
        &= X^\msO + h P^\msO - \frac{h^2}{2}\, \nabla V(X^\msO) - \frac{h^2}{6} \int_0^h \nabla^2 V(X^{\msO} + t P^{\msOB})\, P^{\msOB} \, \D t\,.
    \end{aligned}
    \end{align}
    Thus, bounding $\norm{\nabla^2 V}_{\rm op} \leq \beta$,
    \begin{align}\label{eq:obabo-x-loose}
        \norm{X^\msOBABC - X_h^\msOH} \lesssim \beta h^3\, \Bigl(\norm{P^{\msOB}} + \sup_{t \in [0, h]}\norm{P_t^\msOH} \Bigr) \lesssim \beta h^3\, \bigl(\norm{P^\msO} + h\, \norm{\nabla V(X^\msO)}\bigr)\,,
    \end{align}
    for $h \ll \nfrac{1}{\beta^{1/2}}$, where we use Lemma~\ref{lem:ham-flow-grow} to control $\norm{P^\msOH_t}$.
    Also, note that $X^\msOBABC = X^\msOBABCO$ and $X_h^\msOH = X_h^\msOHO$.

    Under Hessian smoothness, we expand these each one step further, giving
    \begin{align}\label{eq:oh-expansion-2}
    \begin{aligned}
        X_h^\msOH &= X^\msO + hP^\msO - \frac{h^2}{2}\, \nabla V(X^\msO) - \frac{h^3}{6}\, \nabla^2 V(X^\msO)\, P^\msO \\
        &\qquad-  \int_0^h \int_0^{t_1} \int_0^{t_2} \int_0^{t_3} \Bigl[\nabla^3 V(X_{t_4}^\msOH) [P_{t_4}^\msOH, P_{t_4}^\msOH] - \nabla^2 V(X_{t_4}^\msOH)\, \nabla V(X_{t_4}^\msOH) \Bigr]\, \D t_4 \, \D t_3 \, \D t_2 \, \D t_1\,.
    \end{aligned}
    \end{align}
    On the other hand, we have
    \begin{align*}
        \nabla^2 V(X^\msO + tP^\msOB)\, P^\msOB = \Bigl(\nabla^2 V(X^\msO) + \int_0^t \nabla^3 V(X^\msO + s P^\msOB)\,P^\msOB \, \D s \Bigr)\,P^\msOB\,,
    \end{align*}
    and when multiplying against $\nabla^2 V(X^\msO)$, we also use $P^\msOB = P^\msO - \frac{h}{2} \,\nabla V(X^\msO)$, which yields
    \begin{align}\label{eq:obabc-expansion-2}
    \begin{aligned}
        X^\msOBABC &=  X^\msO + h P^\msO - \frac{h^2}{2}\, \nabla V(X^\msO) - \frac{h^2}{6} \int_0^h \nabla^2 V(X^{\msO} + t P^{\msOB})\, P^{\msOB} \, \D t \\[0.25em]
        &= X^\msO + hP^\msO - \frac{h^2}{2}\, \nabla V(X^\msO) - \frac{h^3}{6}\, \nabla^2 V(X^\msO)\, P^\msO  \\
        &\qquad - \frac{h^2}{6} \int_0^h \Bigl[\int_0^t \nabla^3 V(X^\msO + s P^{\msOB})   [P^{\msOB}, P^{\msOB}] \, \D s - \frac{h}{2}\, \nabla^2 V(X^\msO)\, \nabla V(X^\msO) \Bigr]\, \D t \,. 
    \end{aligned}
    \end{align}
    Thus, evaluating their difference, we have
    \begin{align}\label{eq:obabco-x-tight}
    \begin{aligned}
        \norm{X^\msOBABC - X_h^\msOH} &\lesssim h^3 \int_0^h \Bigl(\norm{\nabla^3 V(X^\msO + t P^{\msOB})   [P^{\msOB}, P^{\msOB}]} + \norm{\nabla^3 V(X_{t}^\msOH) [P_{t}^\msOH, P_{t}^\msOH]}\Bigr) \, \D t\\
        &\qquad+ \beta h^4\, \Bigl(\norm{\nabla V(X^\msO)} + \sup_{t \in [0, h]}\norm{\nabla V(X_t^\msOH)}\Bigr)\,,
    \end{aligned}
    \end{align}
    and we use Lemma~\ref{lem:ham-flow-grow} again to bound the second line by $\beta h^4\, (\norm{\nabla V(X^\msO)} + \beta h\, \norm{P^\msO})$.
    
    The momenta follow a similar expansion. Namely, under standard smoothness, we expand again
    \begin{align*}
        P_h^\msOH = P^\msO - h\, \nabla V(X^\msO) - \int_0^h \int_0^{t_1} \nabla^2 V(X_{t_2}^\msOH)\, P_{t_2}^\msOH \, \D t_2 \, \D t_1\,,
    \end{align*}
    and
    \begin{align}\label{eq:obab-momentum-expansion}
        P^\msOBAB &= P^\msO - \frac{h}{2}\, \bigl(\nabla V(X^\msO) + \nabla V(X^{\msOBA}) \bigr) \\
        &= P^\msO - h\, \nabla V(X^\msO) - \frac{h}{2} \int_0^h \nabla^2 V(X^\msO + t P^{\msOB})\, P^{\msOB} \, \D t\,.
    \end{align}
    Thus, via similar analysis as the position, we have
    \begin{align}\label{eq:momentum-obab-nosmooth}
        \norm{P^\msOBAB - P_h^\msOH} \lesssim \beta h^2\, \bigl(\norm{P^\msO} + h\, \norm{\nabla V(X^\msO)}\bigr)\,.
    \end{align}
    Under additional smoothness, we expand
    \begin{align*}
        P_h^\msOH &= P^\msO - h\,\nabla V(X^\msO) - \frac{h^2}{2}\, \nabla V(X^\msO)\, P^\msO \\
        &\qquad - \int_0^h \int_0^{t_1} \int_0^{t_2} \Bigl[\nabla^3 V(X_{t_3}^\msOH)[P_{t_3}^\msOH, P_{t_3}^\msOH] - \nabla^2 V(X_{t_3}^\msOH)\, \nabla V(X_{t_3}^\msOH) \Bigr] \, \D t_3 \, \D t_2 \, \D t_1\,.
    \end{align*}
    On the other hand, expanding the momentum in a similar way as we did the position,
    \begin{align*}
        P^\msOBAB &= P^\msO - h\,\nabla V(X^\msO) - \frac{h^2}{2}\, \nabla^2 V(X^\msO)\, P^\msO  \\
        &\qquad -\frac{h}{2} \int_0^h \Bigl[ \int_0^t \nabla^3 V(X^\msO + sP^{\msOB})[P^{\msOB}, P^{\msOB}] \, \D s - \frac{h}{2}\, \nabla^2 V(X^\msO)\, \nabla V(X^\msO)\Bigr] \, \D t\,.
    \end{align*}
    Thus, the error is now
    \begin{align}\label{eq:momentum-obab-smooth}
    \begin{aligned}
        \norm{P^\msOBAB - P_h^\msOH} &\lesssim h^2\int_0^h\Bigl(\norm{\nabla^3 V(X^\msO + t P^{\msOB})   [P^{\msOB}, P^{\msOB}]} + \norm{\nabla^3 V(X_{t}^\msOH) [P_{t}^\msOH, P_{t}^\msOH]}\Bigr) \, \D t\\
        &\qquad+ \beta h^3\, \bigl(\norm{\nabla V(X^\msO)} + \beta h\, \norm{P^\msO}\bigr)\,.
    \end{aligned}
    \end{align}
    Also, note that
    \begin{align*}
        \norm{P^\msOBABCO - P_h^\msOHO}
        &\le e^{-\gamma h/2}\,\norm{P^\msOBAB - P_h^\msOH}
        \le \norm{P^\msOBAB - P_h^\msOH}\,.
    \end{align*}
    
    In all cases, we further bound $\norm{P^\msO} \lesssim \norm{p_0}+ \sqrt{\gamma dh}$.
\end{proof}

\subsubsection{Cross-regularity for OBABCO}\label{ssscn:cross-reg-overview}

We follow a similar framework as that in Theorem~\ref{thm:regularity}, only that instead of running OHO from two different points $(x_0, p_0)$ and $(\bar x_0, \bar p_0)$, we run OHO and its approximation OBABCO from the same point $(x_0, p_0)$. 

Begin with the bound for a single iteration. Again, introduce $\widehat B_1, B_1, \widehat B_2, B_2$, and now define $\widehat \Phi_h^\msx(x_0, p_0)$ to be the position-coordinate output of the BABC procedure with initial point $(x_0, p_0)$.
Then, for two independent standard Gaussian random variables $B_1$, $B_2$, we can write
\begin{align*}
    X^\msOBABCO &= \widehat \Phi_h^{\msx}\bigl(x_0,\, e^{-\gamma h/2} p_0 + \sqrt{1-e^{-\gamma h}} B_1\bigr)\,, \\
    P^\msOBABCO &= e^{-\gamma h/2}\, \widehat \Phi_h^{\msp}\bigl(x_0,\, e^{-\gamma h/2} p_0 + \sqrt{1-e^{-\gamma h}} B_1\bigr) + \sqrt{1-e^{-\gamma h}} B_2\,.
\end{align*}
Similarly, $(X^\msOHO, P^\msOHO)$ admits an analogous representation, but with $\Phi_h^\msx, \Phi_h^\msp$ in place of $\widehat \Phi_h^\msx, \widehat \Phi_h^\msp$.
However, instead of directly using this representation, we solve for $\widehat B_1$, $\widehat B_2$ such that
\begin{align*}
\begin{aligned}
    X^\msOBABCO &= \Phi_h^{\msx}\bigl(x_0,\, e^{-\gamma h/2} p_0 + \sqrt{1-e^{-\gamma h}} \widehat B_1\bigr)\,, \\
    P^\msOBABCO &= e^{-\gamma h/2}\, \Phi_h^{\msp}\bigl(x_0,\, e^{-\gamma h/2} p_0 + \sqrt{1-e^{-\gamma h}} \widehat B_1\bigr) + \sqrt{1-e^{-\gamma h}} \widehat B_2\,.
\end{aligned}
\end{align*}
If there is a unique solution $(\widehat B_1,\widehat B_2)$, then it is a deterministic function of $(x_0, p_0, B_1, B_2)$.

This expresses $(X^\msOHO, P^\msOHO) = \mc T(B_1,B_2)$ for a transformation $\mc T$, and $(X^\msOBABCO, P^\msOBABCO) = \mc T(\widehat B_1,\widehat B_2)$ for the \emph{same} transformation $\mc T$.
Then, by the data-processing inequality,
\begin{align*}
    \Renyi_q(\delta_{(x_0, p_0)} \widehat{\bs P}_h \mmid \delta_{(x_0, p_0)} \bs P_h)
    \le \Renyi_q(\law(\widehat B_1,\widehat B_2) \mmid \law(B_1,B_2))\,.
\end{align*}
Here, $\widehat{\bs P}_h$ denotes the Markov kernel for an iteration of OBABCO\@.
To bound the right-hand side, we use the Jacobian change of variables formula, which requires an understanding of the mapping $(B_1,B_2) \mapsto (\widehat B_1,\widehat B_2)$ (with $x_0$, $p_0$ held fixed).
We prove the following lemma.

\begin{lemma}[OBABCO cross-regularity]\label{lem:obabco_cross}
    Assume that $-\beta I \preceq \nabla^2 V \preceq \beta I$ and that $h \ll \nfrac{1}{\beta^{1/2} q^{1/2}} \wedge \nfrac{1}{\gamma}$.
    Then, it holds that
    \begin{align*}
        &\exp\bigl((q-1)\,\Renyi_q(\delta_{(x_0, p_0)} \widehat{\bs P}_h \mmid \delta_{(x_0, p_0)} \bs P_h)\bigr) \\
        &\qquad \le \E\exp\Bigl(O\Bigl(\frac{q^2}{\gamma h}\,\bigl(\frac{1}{h^2}\,\norm{\widehat \Phi_h^\msx(x_0, P^\msO) - \Phi_h^\msx(x_0, P^\msO)}^2 + \norm{\widehat \Phi_h^\msp(x_0, P^\msO) - \Phi_h^\msp(x_0, P^\msO)}^2\bigr) + \Delta\Bigr)\Bigr)\,,
    \end{align*}
    where $\Delta \lesssim \beta^2 dh^4 q^2$, and under Assumption~\ref{as:hess-lip},
    \begin{align*}
        \Delta
        &\lesssim \beta^3 dh^6 q^3 + h^5 q^2 \int_0^h \bigl(\norm{\nabla^3 V(x_0 + t P^\msOB)\, P^\msOB}^2_{\rm F} + \norm{\nabla^3 V(X_t^\msOH)\, P_t^\msOH}^2_{\rm F}\bigr) \, \D t \\[0.25em]
        &\qquad + \betH^2 d h^6 q^2 \,\frac{\norm{\widehat \Phi_h^\msx(x_0, P^\msO) - \Phi_h^\msx(x_0, P^\msO)}^2}{h^2}\,.
    \end{align*}
\end{lemma}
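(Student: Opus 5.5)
The plan is to mirror the proof of Theorem~\ref{thm:regularity} step by step, with the pair (OHO from $(x_0,p_0)$, OHO from $(\bar x_0,\bar p_0)$) replaced by the pair (OHO, OBABCO) from the same point $(x_0,p_0)$.

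\textbf{Solving for the shifted noises.} First I compute the map $(B_1,B_2)\mapsto(\widehat B_1,\widehat B_2)$ explicitly. Write $P^\msO = e^{-\gamma h/2}p_0+\sqrt{1-e^{-\gamma h}}\,B_1$. Matching positions forces
\begin{align*}
e^{-\gamma h/2}p_0+\sqrt{1-e^{-\gamma h}}\,\widehat B_1 = \widehat P^{\target} \deq \Psi_h\bigl(x_0,\widehat\Phi_h^\msx(x_0,P^\msO)\bigr)\,,
\end{align*}
so $\widehat B_1 = G(B_1)$ with $G(b) = b+(1-e^{-\gamma h})^{-1/2}\bigl(\widehat P^\target - P^\msO\bigr)$. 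Matching momenta then gives
\begin{align*}
\widehat B_2 = B_2 + \frac{e^{-\gamma h/2}}{\sqrt{1-e^{-\gamma h}}}\,\bigl(\widehat\Phi_h^\msp(x_0,P^\msO)-\Phi_h^\msp(x_0,\widehat P^\target)\bigr)\,,
\end{align*}
which is a translation of $B_2$ depending only on $B_1$.

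\textbf{Bounding the shifts.} Lemma~\ref{lem:hmc-expansion}, applied to two Hamiltonian trajectories from the same $x_0$ (via the Jacobian bound of Lemma~\ref{lem:hmc-distortion}), gives
\begin{align*}
\norm{\widehat P^\target-P^\msO}\lesssim h^{-1}\norm{\widehat\Phi_h^\msx-\Phi_h^\msx}\,.
\end{align*}
Combining this with Lemma~\ref{lem:hmc-expansion} once more, $\norm{\Phi_h^\msp(x_0,\widehat P^\target)-\Phi_h^\msp(x_0,P^\msO)}\lesssim h^{-1}\norm{\widehat\Phi^\msx_h-\Phi^\msx_h}$. Hence both $\norm{\widehat B_1-B_1}^2$ and $\norm{\widehat B_2-B_2}^2$ are $\lesssim (\gamma h)^{-1}\bigl(h^{-2}\norm{\widehat\Phi^\msx_h-\Phi^\msx_h}^2+\norm{\widehat\Phi^\msp_h-\Phi^\msp_h}^2\bigr)$, evaluated at $P^\msO$. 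These are exactly the quantities in the exponent of the claimed bound.

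\textbf{Change of measure.} I then repeat Steps 2--5 of the proof of Theorem~\ref{thm:regularity}: factor the Radon--Nikodym derivative into a $B_1$ part and a conditional $B_2$ part, and use the Cauchy--Schwarz trick with $G_q^{-1}=(2q-1)G^{-1}-(2q-2)\id$ and Lemma~\ref{lem:log-det}. The $B_2$ factor is a Gaussian shift, so it contributes $\exp(O(q^2\norm{\widehat B_2-B_2}^2))$. The $B_1$ factor contributes $\exp\bigl(O(q^2\norm{\widehat B_1-B_1}^2)+O(q^2 d\,\lambda^2)\bigr)$, where $\lambda=\sup\norm{\nabla G-I}_{\rm op}$. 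The one change is that I keep the expectation rather than taking a supremum over the noise, since the shift now depends on $P^\msO$; this produces the stated $\E\exp(\cdot)$ form.

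\textbf{Jacobian term (main obstacle).} The main obstacle is bounding the Jacobian term $\Delta = q^2\norm{\nabla G-I}_{\rm F}^2$. I would keep the Frobenius norm rather than $d\cdot\mathrm{op}^2$, since the refined bound has a Frobenius structure. Note that
\begin{align*}
\nabla G-I = (1-e^{-\gamma h})^{-1/2}\,\sqrt{1-e^{-\gamma h}}\,\bigl(\nabla_p\widehat P^\target-I\bigr) = \nabla_p\widehat P^\target - I\,.
\end{align*}
By the chain rule this equals $K\,\nabla_2\widehat\Phi^\msx_h - I$, where $K=[\nabla_2\Phi^\msx_h(x_0,\widehat P^\target)]^{-1}\approx h^{-1}I$.

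Without Assumption~\ref{as:hess-lip}, both $\nabla_2\widehat\Phi^\msx_h$ and $\nabla_2\Phi^\msx_h$ are $hI+O(\beta h^3)$. For the BABC map this follows by direct differentiation of \eqref{eq:C}: we get $hI - \tfrac{h^3}{6}\nabla^2V(x_0+hP^\msOB)+\dots$. This gives $\lambda\lesssim\beta h^2$ and $\Delta\lesssim\beta^2dh^4q^2$.

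Under Assumption~\ref{as:hess-lip}, I write
\begin{align*}
\nabla_2\widehat\Phi^\msx_h(P^\msO)-\nabla_2\Phi^\msx_h(\widehat P^\target) = \bigl[\nabla_2\widehat\Phi^\msx_h(P^\msO)-\nabla_2\Phi^\msx_h(P^\msO)\bigr]+\bigl[\nabla_2\Phi^\msx_h(P^\msO)-\nabla_2\Phi^\msx_h(\widehat P^\target)\bigr]\,.
\end{align*}
The first bracket: both terms equal $hI-\tfrac{h^3}{6}\nabla^2V(x_0)+\dots$. Their difference is controlled, after a Taylor step as in Lemma~\ref{lem:errors}, by $h^3\int_0^h\bigl(\norm{\nabla^3V(\cdot)P}_{\rm F}+\cdots\bigr)$ plus $\beta^2h^5$-type terms; squaring and multiplying by $q^2h^{-2}$ gives the $h^5q^2\int\norm{\nabla^3V\,P}_{\rm F}^2$ and $\beta^3dh^6q^3$ contributions (the latter after using $h\lesssim (\beta q)^{-1/2}$). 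The second bracket is handled exactly as in Lemma~\ref{lem:hmc-distortion-II}, with momentum displacement $\norm{\widehat P^\target-P^\msO}\lesssim h^{-1}\norm{\widehat\Phi^\msx-\Phi^\msx}$. This yields operator norm $\lesssim\betH h^3\cdot h^{-1}\norm{\widehat\Phi^\msx-\Phi^\msx}$, hence the $\betH^2dh^6q^2\,h^{-2}\norm{\cdot}^2$ term.

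The delicate point is tracking the $\nabla^3V$ contraction in Frobenius norm through the variational equation without losing to operator-norm bounds.
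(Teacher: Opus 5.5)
Your proposal is correct and follows the paper's proof essentially step for step. It uses the same shifted noises built from $\widehat P^{\target}$, the same $h^{-1}$ bounds on both shifts, and the Cauchy--Schwarz/log-det change of measure kept inside an expectation. The Jacobian comparison is also the paper's: your split through $\nabla_2\Phi_h^\msx(x_0,P^\msO)$ is equivalent to the argument in the paper's OBABCO Jacobian lemma.

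One small attribution point concerns trading $d\lambda^2$ for $\norm{\nabla G - I}_{\rm F}^2$. The paper does this by expanding the log-det to third order, and its $\beta^3 d h^6 q^3$ term is the cubic remainder $O(q^3\lambda_{\max}^3 d)$ with $\lambda_{\max}\lesssim\beta h^2$; your $\beta^4 d h^8 q^2$ term is dominated by it. Your cleaner claim $\Delta\lesssim q^2\norm{\nabla G-I}_{\rm F}^2$ also holds directly, via $\abs{\tr(M^k)}\le\norm{M}_{\rm op}^{k-2}\norm{M}_{\rm F}^2$, so the conclusion is unaffected either way.
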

\begin{proof}
Consider the control
\begin{align*}
    \widehat P^\target = \Psi_h(x_0, \widehat \Phi_h^\msx(x_0, P^\msO))\,,
\end{align*}
where $P^\msO = e^{-\gamma h/2} p_0 + \sqrt{1-e^{-\gamma h}} B_1$ as before.
As before, we have
\begin{align}\label{eq:obabo-brownian-transform}
    \sqrt{1-e^{-\gamma h}} \widehat B_1 = \widehat P^\target - e^{-\gamma h/2} p_0\,.
\end{align}
We also again define
\begin{align}\label{eq:obabo-brownian-transform-ii}
    \sqrt{1-e^{-\gamma h}} \widehat B_2 = e^{-\gamma h/2}\, \widehat \Phi_h^\msp(x_0, P^\msO) - e^{-\gamma h/2}\, \Phi_h^\msp(x_0, \widehat P^\target) + \sqrt{1-e^{-\gamma h}} B_2\,.
\end{align}
Following the strategy of Theorem~\ref{thm:regularity}, we consider the R\'enyi divergence between $\widehat B_1$ and $B_1$, and then the conditional R\'enyi divergence between $\widehat B_2$ and $B_2$ given $B_1$.

\textbf{First Brownian motion.}
Let $F$ be the map taking $B_1 \mapsto \widehat B_1$ given $(x_0, p_0)$. The same calculation as~\eqref{eq:radon-nikodym-final}, but expanding the $\log \det$ terms to one higher order, means it will be sufficient to bound
\begin{align}\label{eq:refined-radon-nikodym}
\begin{aligned}
    \E_\upgamma \Bigl(\frac{F_{\#} \upgamma}{\upgamma} \Bigr)^q  
    &\leq \E\bigl[\exp\bigl((2q-1)^2\, \norm{\widehat B_1 - B_1}^2 \bigr)\bigr]^{1/4}  \exp(O(q^3 \lambda_{\max}^3 d))\\
    &\qquad \times \E\bigl[\exp\bigl((2q-1)^2\,\norm{(\nabla F^{-1})(B_1) - I}_{\rm F}^2\bigr)\bigr]^{1/4}\,,
\end{aligned}
\end{align}
where we apply Cauchy--Schwarz and the determinant-trace expansion in Lemma~\ref{lem:log-det}. Lemma~\ref{lem:obabo-distortion} below tells us that $\lambda_{\max} \lesssim \beta h^2$, and so this expansion is valid so long as $h \ll \nfrac{1}{\beta^{1/2} q^{1/2}}$. The expectations are taken with $B_1$ standard Gaussian. This representation, where we take the Frobenius norm in the Jacobian term, will be needed in the Hessian smooth case. On the other hand, when we do not assume Hessian smoothness, we can additionally bound this squared Frobenius norm by $\lambda_{\max}^2 d$, noting that the eigenvalues of $\nabla F^{-1} - I$ and $\nabla F - I$ are interchangeable in magnitude when $\lambda_{\max} \ll 1$. The bound for $q^3 \lambda_{\max}^3 d$ combined with the Frobenius norm term follow from Lemma~\ref{lem:obabo-distortion} in either the standard or Hessian Lipschitz cases. We note in particular that $(\nabla F)(B_1)$ is precisely $\nabla_p \Psi_h(x_0, \widehat \Phi_h^\msx(x_0, P^\msO))$ found in Lemma~\ref{lem:obabo-distortion} after substituting the definition of $F$.

Now, we note from~\eqref{eq:obabo-brownian-transform} that
\begin{align}\label{eq:b-bhat-diff}
    \norm{\widehat B_1 - B_1}^2 = \frac{1}{1-e^{-\gamma h}}\, \norm{\widehat P^\target - P^\msO}^2 \asymp \frac{1}{\gamma h}\, \norm{\widehat P^\target - P^\msO}^2\,.
\end{align}
We also have from our previous calculations in Lemma~\ref{lem:hmc-expansion}, when $h \lesssim \nfrac{1}{\beta^{1/2}}$,
\begin{align*}
    \Phi_h^\msx(x_0, p_0) - \Phi_h^\msx(x_0, \bar p_0) = h\, \delta p_0 + O(\beta h^3\, \norm{\delta p_0})\,.
\end{align*}
Now,
\begin{align*}
    0 = \widehat \Phi_h^\msx(x_0, P^\msO) - \Phi_h^\msx(x_0, \widehat P^\target) = \widehat \Phi_h^\msx(x_0, P^\msO) - \Phi_h^\msx(x_0, P^\msO) + \Phi_h^\msx(x_0, P^\msO) - \Phi_h^\msx(x_0, \widehat P^\target)\,.
\end{align*}
Rearranging and using the prior inequality, we have
\begin{align}\label{eq:p-target-soln}
    \norm{\widehat P^\target - P^\msO} \lesssim \frac{1}{h}\, \bigl\lVert\widehat \Phi_h^\msx(x_0, P^\msO) - \Phi_h^\msx(x_0, P^\msO)\bigr\rVert\,.
\end{align}

\textbf{Second Brownian motion.}
Note that $\widehat B_2, B_2$ will be chosen to ensure matching of the momenta. Just as we reasoned in the proof of Theorem~\ref{thm:regularity}, we have
\begin{align*}
    \E \Bigl[\Bigl(\frac{\D \operatorname{law}(\widehat B_2 \mid B_1)}{\D \upgamma}\Bigr)^q \Bigm\vert B_1 \Bigr] &= \E \Bigl[\exp\Bigl(\frac{q\,(q-1)}{2}\, \norm{\widehat B_2 - B_2}^2\Bigr) \Bigm\vert B_1\Bigr]\,.
\end{align*}
This transformation is linear and so no Jacobians arise. Now, we can bound for $h \lesssim \nfrac{1}{\gamma}$
\begin{align}\label{eq:b2-expansion}
\begin{aligned}
     \norm{\widehat B_2 - B_2}^2 &\lesssim \frac{1}{\gamma h}\, \norm{\widehat\Phi_h^{\msp}(x_0, P^\msO) - \Phi_h^{\msp}(x_0, \widehat P^\target)}^2 \\
     &\lesssim \frac{1}{\gamma h}\, \bigl(\norm{\Phi_h^{\msp}(x_0, P^\msO) - \Phi_h^{\msp}(x_0, \widehat P^\target)}^2 + \norm{\widehat \Phi_h^{\msp}(x_0, P^\msO) - \Phi_h^{\msp}(x_0, P^\msO)}^2 \bigr) \,.
\end{aligned}
\end{align}
We bound these terms separately. The first term is bounded via Lemma~\ref{lem:hmc-expansion} as
\begin{align*}
    \frac{1}{\gamma h} \,\norm{\Phi_h^{\msp}(x_0, P^\msO) - \Phi_h^{\msp}(x_0, \widehat P^\target)}^2 \lesssim \frac{1}{\gamma h}\, \norm{P^\msO - \widehat P^\target}^2\,.
\end{align*}
This term will always be bounded as in~\eqref{eq:b-bhat-diff}, and so will only contribute to the same order as terms in $\widehat B_1 - B_1$. Now, we will always use the bound from~\eqref{eq:p-target-soln} that
\begin{align*}
    \norm{\widehat P^\target - P^\msO}^2 \lesssim \frac{1}{h^2}\, \norm{\widehat \Phi_h^\msx(x_0, P^\msO) - \Phi_h^\msx(x_0, P^\msO)}^2\,.
\end{align*}
On the other hand, for the second term in~\eqref{eq:b2-expansion}, note that in both~\eqref{eq:momentum-obab-nosmooth} or~\eqref{eq:momentum-obab-smooth}, there is a loss of a factor of $h$ compared to the position expansions in~\eqref{eq:obabo-x-loose} and~\eqref{eq:obabco-x-tight}. However, when bounding $B_1 - \widehat B_1$, we lost a factor of $h$ on the position expansions in~\eqref{eq:p-target-soln}, as seen above. Thus, the bounds on these terms will contribute the exact same terms as found when bounding $\widehat B_1 - B_1$. So we can effectively absorb the contribution of the $\widehat B_2 - B_2$ terms in our analysis.
\end{proof}

\subsubsection{Discrete Girsanov argument}\label{ssscn:discrete-girsanov}

Next, we pass from the one-step bound in Lemma~\ref{lem:obabco_cross} to a multi-step bound.
Unlike the KL divergence, the R\'enyi divergence of order larger than $1$ does not satisfy a chain rule, so we use the following martingale decomposition, which can be viewed as a discrete-time analogue of Girsanov's theorem in stochastic calculus.

\begin{lemma}[R\'enyi martingale decomposition]\label{lem:renyi_mg}
    Let $\bs\mu$, $\bs\nu$ be two possible joint distributions for a Markov chain $(X_0,X_1,\dotsc,X_N)$.
    Then,
    \begin{align*}
        \Renyi_q(\bs \mu \mmid \bs \nu)
        \le \frac{1}{2\,(q-1)} \log \frac{1}{N+1} \,\Bigl[&\sum_{n=1}^N \E_{\bs\nu} \exp\bigl((2q-1)\,(N+1)\,\Renyi_{2q}(\bs\mu_{n\mid n-1}(\cdot\mid X_{n-1}) \mmid \bs \nu_{n\mid n-1}(\cdot\mid X_{n-1})\bigr) \\
        &\qquad + \exp\bigl((2q-1)\,(N+1)\,\Renyi_{2q}(\bs\mu_{0}(\cdot) \mmid \bs \nu_{0}(\cdot)\bigr)\Bigr]\,.
    \end{align*}
    When $\bs\mu_0 = \bs\nu_0$, this simplifies to
    \begin{align*}
        \Renyi_q(\bs \mu \mmid \bs \nu)
        &\le \frac{1}{2\,(q-1)} \log \frac{1}{N} \sum_{n=1}^N \E_{\bs\nu} \exp\bigl((2q-1)\,N\,\Renyi_{2q}(\bs\mu_{n\mid n-1}(\cdot\mid X_{n-1}) \mmid \bs \nu_{n\mid n-1}(\cdot\mid X_{n-1})\bigr)\,.
    \end{align*}
\end{lemma}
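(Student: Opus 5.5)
The plan is to prove the bound through an exact likelihood-ratio factorization and a single Cauchy--Schwarz step, with no chain rule. The order $2q$ in the one-step divergences comes directly from Cauchy--Schwarz. The inflated exponent $(2q-1)(N+1)$ comes only from an AM--GM step at the end, so we never pass through one-step divergences of higher order.

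\textbf{Step 1: factorize the likelihood ratio.} Assume $\bs\mu \ll \bs\nu$; otherwise some conditional or initial divergence is infinite and there is nothing to prove. By the Markov property, the path likelihood ratio factorizes as
\begin{align*}
L \deq \frac{\D\bs\mu}{\D\bs\nu}(X_0,\dotsc,X_N) = \prod_{k=0}^N L_k\,,
\end{align*}
with
\begin{align*}
L_0 \deq \frac{\D\bs\mu_0}{\D\bs\nu_0}(X_0)\,, \qquad L_k \deq \frac{\D\bs\mu_{k\mid k-1}(\cdot\mid X_{k-1})}{\D\bs\nu_{k\mid k-1}(\cdot\mid X_{k-1})}(X_k) \quad (k \ge 1)\,.
\end{align*}

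\textbf{Step 2: identify the conditional moments.} Define
\begin{align*}
a_0 \deq \E_{\bs\nu}[L_0^{2q}]\,, \qquad a_k \deq \E_{\bs\nu}\bigl[L_k^{2q} \bigm\vert X_0,\dotsc,X_{k-1}\bigr] \quad (k \ge 1)\,.
\end{align*}
Under $\bs\nu$, the conditional law of $X_k$ given the past depends only on $X_{k-1}$. Hence, by Definition~\ref{def:renyi},
\begin{align*}
a_k = \exp\bigl((2q-1)\,\Renyi_{2q}(\bs\mu_{k\mid k-1}(\cdot\mid X_{k-1}) \mmid \bs\nu_{k\mid k-1}(\cdot\mid X_{k-1}))\bigr)\,,
\end{align*}
so $a_k$ is a function of $X_{k-1}$ alone. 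Likewise $a_0 = \exp((2q-1)\,\Renyi_{2q}(\bs\mu_0\mmid\bs\nu_0))$.

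\textbf{Step 3: a mean-one martingale and Cauchy--Schwarz.} Set
\begin{align*}
M \deq \prod_{k=0}^N \frac{L_k^{2q}}{a_k}\,.
\end{align*}
Each factor has conditional $\bs\nu$-expectation $1$ given the past. Conditioning successively on $X_0,\dotsc,X_{N-1}$ and peeling off the last factor each time gives $\E_{\bs\nu} M = 1$. Truncation or monotone convergence handles integrability, and the bound is trivial if some $a_k$ is infinite on a set of positive $\bs\nu$-probability. Now write
\begin{align*}
L^q = M^{1/2}\,\Bigl(\prod_{k=0}^N a_k\Bigr)^{1/2}\,.
\end{align*}
Cauchy--Schwarz then gives
\begin{align*}
\E_{\bs\nu} L^q \le (\E_{\bs\nu} M)^{1/2}\,\Bigl(\E_{\bs\nu}\prod_{k=0}^N a_k\Bigr)^{1/2} = \Bigl(\E_{\bs\nu}\prod_{k=0}^N a_k\Bigr)^{1/2}\,.
\end{align*}

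\textbf{Step 4: AM--GM and conclusion.} Apply AM--GM to the $N+1$ nonnegative numbers $a_k^{N+1}$:
\begin{align*}
\prod_{k=0}^N a_k \le \frac{1}{N+1}\sum_{k=0}^N a_k^{N+1}\,.
\end{align*}
Here $a_k^{N+1} = \exp\bigl((2q-1)(N+1)\,\Renyi_{2q}(\cdots)\bigr)$, which is exactly the integrand in the statement, and its $\bs\nu$-expectation depends only on the law of $X_{k-1}$. Taking $\frac{1}{q-1}\log$ of $\E_{\bs\nu}L^q$ yields $\Renyi_q(\bs\mu\mmid\bs\nu)$. The square root produces the prefactor $\frac{1}{2(q-1)}$, which gives the first display.

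When $\bs\mu_0 = \bs\nu_0$, we have $L_0 = 1$ and $a_0 = 1$. So the product in Step 3 runs only over $k = 1,\dotsc,N$, and AM--GM is applied with $N$ terms, giving the simplified display.

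The only delicate point is Step 3. One must check that $a_k$ is measurable with respect to the past, which the Markov property provides, so that $M$ genuinely has mean one; the rest is bookkeeping.
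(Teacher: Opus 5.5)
Your proof is correct and follows the paper's argument. Like the paper, you factorize the path likelihood ratio via the Markov property, normalize each factor by its conditional $2q$-th moment to get a mean-one (super)martingale, apply Cauchy--Schwarz, and finish with AM--GM over the $N+1$ (or $N$) factors. Your normalization of the initial factor by $a_0 = \E_{\bs\nu}[L_0^{2q}]$ and the exact tower-property computation $\E_{\bs\nu} M = 1$ are, if anything, slightly cleaner than the paper's convention $R_{q,0} = 1$ combined with the supermartingale remark.
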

\begin{proof}
    By the Markov property, the Radon--Nikodym derivative factorizes:
    \begin{align*}
        \frac{\D\bs\mu}{\D\bs\nu}(x_0,\dotsc,x_N)
        &= \underbrace{\frac{\D\bs\mu_0}{\D\bs\nu_0}(x_0)}_{\eqqcolon L_0}\, \prod_{n=1}^{N} \underbrace{\frac{\D\bs\mu_{n\mid n-1}}{\D\bs\nu_{n\mid n-1}}(x_{n}\mid x_{n-1})}_{\eqqcolon L_n}\,.
    \end{align*}
    Let $\ms F_n$ denote the $\sigma$-algebra generated by $X_0,X_1,\dotsc,X_n$.
    Then, $\{\prod_{k=0}^n L_k, \ms F_n\}_{n=0}^N$ is a $\bs\nu$-martingale.
    Also, for $q > 1$, $n \geq 1$, denote $R_{q,n} \deq \E_{\bs\nu}[L_n^q \mid \ms F_{n-1}]$, with $R_{q,0} \deq 1$.
    Then, it is straightforward to see that
    \begin{align*}
        \Bigl\{\prod_{k=0}^n \frac{L_k^q}{R_{q,k}}\,,\;\ms F_n\Bigr\}_{n=0}^N \qquad\text{is a non-negative local}~\bs\nu\text{-martingale}\,,
    \end{align*}
    hence it is a $\bs\nu$-supermartingale.
    By Cauchy--Schwarz,
    \begin{align*}
        \exp\bigl((q-1)\,\Renyi_q(\bs\mu\mmid \bs\nu)\bigr)
        &= \E_{\bs\nu} \prod_{n=0}^N L_n^q
        \le \Bigl\{ \Bigl(\E_{\bs \nu} \prod_{n=0}^N \frac{L_n^{2q}}{R_{2q,n}}\Bigr)\,\Bigl(\E_{\bs\nu} \prod_{n=0}^N R_{2q,n}\Bigr)\Bigr\}^{1/2}
        \le \Bigl\{ \E_{\bs\nu} \prod_{n=0}^N R_{2q,n}\Bigr\}^{1/2}\,.
    \end{align*}
    The proof is concluded via the AM--GM inequality.
\end{proof}

Let $(X_{nh}, P_{nh})_{n\in\N}$ denote the iterates of OHO started at $\bs\mu_0$, so that $P_{nh}^\msO$ denotes the application of~\eqref{eq:O} to $P_{nh}$.
From Lemma~\ref{lem:obabco_cross}, we now obtain the following bound.

\begin{cor}[OBABCO multi-step R\'enyi bound]\label{cor:obabco_multi_step}
    Assume that $-\beta I \preceq \nabla^2 V \preceq \beta I$ and that $h \ll \nfrac{1}{\beta^{1/2} q^{1/2}} \wedge \nfrac{1}{\gamma}$.
    Then, it holds that
    \begin{align*}
        &\exp\bigl(2\,(q-1)\,\Renyi_q(\bs\mu_0 \widehat{\bs P}_h^N \mmid \bs\mu_0 \bs P_h^N)\bigr) \\
        &\quad \le \frac{1}{N} \sum_{n=0}^{N-1} \E\exp\Bigl(O\Bigl(\frac{q^2 N}{\gamma h}\,\bigl(\frac{\norm{\widehat \Phi_h^\msx(X_{nh}, P_{nh}^\msO) - \Phi_h^\msx(X_{nh}, P_{nh}^\msO)}^2}{h^2} + \norm{\widehat \Phi_h^\msp(X_{nh}, P_{nh}^\msO) - \Phi_h^\msp(X_{nh}, P_{nh}^\msO)}^2\bigr)\Bigr) \\
        &\qquad\qquad\qquad{} \times \exp(O(N\Delta_n))\,,
    \end{align*}
    where $\Delta_n \lesssim \beta^2 dh^4 q^2$ and, under Assumption~\ref{as:hess-lip},
    \begin{align*}
        \Delta_n
        &\lesssim \beta^3 dh^6 q^3 + h^5 q^2 \int_0^h \bigl(\norm{\nabla^3 V(X_{nh} + t P_{nh}^\msOB)\, P_{nh}^\msOB}^2_{\rm F} + \norm{\nabla^3 V(X_{nh,t}^\msOH)\, P_{nh,t}^\msOH}^2_{\rm F}\bigr) \, \D t \\[0.25em]
        &\qquad + \betH^2 d h^6 q^2 \,\frac{\norm{\widehat \Phi_h^\msx(X_{nh}, P_{nh}^\msO) - \Phi_h^\msx(X_{nh}, P_{nh}^\msO)}^2}{h^2}\,.
    \end{align*}
\end{cor}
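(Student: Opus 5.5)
The plan is to combine the Rényi martingale decomposition (Lemma~\ref{lem:renyi_mg}) with the one-step cross-regularity bound (Lemma~\ref{lem:obabco_cross}). View the OBABCO and OHO trajectories started from $\bs\mu_0$ as two joint laws $\bs\mu,\bs\nu$ of the Markov chain $((X_{0},P_{0}),\dots,(X_{Nh},P_{Nh}))$, with the same initial law $\bs\mu_0=\bs\nu_0$. The transition kernels are $\widehat{\bs P}_h$ and $\bs P_h$ respectively. By data processing (Proposition~\ref{prop:renyi}), the divergence between the time-$N$ marginals is at most the divergence between the path laws. The simplified form of Lemma~\ref{lem:renyi_mg} then gives
\begin{align*}
    \exp\bigl(2(q-1)\,\Renyi_q(\bs\mu_0\widehat{\bs P}_h^N\mmid\bs\mu_0\bs P_h^N)\bigr)\le \frac1N\sum_{n=1}^{N}\E_{\bs\nu}\exp\bigl((2q-1)N\,\Renyi_{2q}(\delta_{(X_{(n-1)h},P_{(n-1)h})}\widehat{\bs P}_h\mmid \delta_{(X_{(n-1)h},P_{(n-1)h})}\bs P_h)\bigr)\,.
\end{align*}
The expectation here is under the OHO chain, which is exactly the process $(X_{nh},P_{nh})$ in the statement. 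Reindexing $n-1\mapsto n$ gives the sum over $n=0,\dots,N-1$.

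Next, for each fixed starting point I bound the inner Rényi divergence of order $2q$ using Lemma~\ref{lem:obabco_cross} applied with $2q$ in place of $q$. The step size condition $h\ll \beta^{-1/2}q^{-1/2}\wedge\gamma^{-1}$ is unchanged up to constants. The lemma does not give the divergence itself but an exponential moment:
\begin{align*}
    (2q-1)\,\Renyi_{2q}\le \log\E_{B_1}\exp\Bigl(O\bigl(\tfrac{q^2}{\gamma h}\,\mathcal{D}_n\bigr)+O(\Delta_n)\Bigr)\,.
\end{align*}
Here $\mathcal{D}_n$ denotes the bracketed squared local errors, and the expectation is over the fresh $\msf O$-noise $B_1$ defining $P^{\msf O}_{nh}$. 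The constants change only by absolute factors because $2q\asymp q$.

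The main obstacle is that multiplying by $N$ leaves $\exp(N\log\E_{B_1}\exp(\cdot))=(\E_{B_1}e^{Y})^N$. I need to turn this into $\E e^{NY}$ with the same random variable inside. I would use Jensen/Hölder in the form $(\E e^{Y})^N\le \E e^{NY}$, which holds since $x\mapsto x^N$ is convex, or equivalently by monotonicity of $L^p$ norms. Then I take the outer expectation over $(X_{nh},P_{nh})$ under $\bs\nu$. The combined expectation over $(X_{nh},P_{nh},B_1)$ is exactly the expectation over the OHO iterate together with its $\msf O$-refreshed momentum $P^{\msf O}_{nh}$.

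Finally, I absorb constants. The factor $2(q-1)$ on the left, and the $N$ multiplying $O(q^2\mathcal D_n/(\gamma h))$ and $O(\Delta_n)$, are placed inside the $O(\cdot)$ in the exponent. The $\Delta_n$ bounds carry over from Lemma~\ref{lem:obabco_cross} after substituting $x_0=X_{nh}$, $P^{\msf O}=P^{\msf O}_{nh}$, with the $2q$ versus $q$ difference absorbed into constants. The intermediate quantities $P^{\msf{OB}}_{nh}$ and the trajectory $X^{\msf{OH}}_{nh,t}$, $P^{\msf{OH}}_{nh,t}$ correspondingly become functions of the $n$-th OHO iterate.
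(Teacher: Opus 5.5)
Your proposal is correct and follows the same route as the paper, which obtains the corollary by feeding the one-step bound of Lemma~\ref{lem:obabco_cross} (at order $2q$) into the martingale decomposition of Lemma~\ref{lem:renyi_mg} with $\bs\mu_0=\bs\nu_0$. The paper leaves implicit the data-processing step from path laws to time-$N$ marginals and the Jensen step $(\E e^{Y})^N\le \E e^{NY}$, which moves the factor $N$ inside the expectation over the fresh $\msf O$-noise; you supply both. The factor $2(q-1)$ on the left already matches Lemma~\ref{lem:renyi_mg} exactly, so it does not need to be absorbed into the $O(\cdot)$.
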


\subsubsection{Phase I: Coarse R\'enyi bound}

In this section and the next, $C, \bar C > 0$ will denote absolute constants whose value may change from line to line.

\begin{proof}[Proof of Theorem~\ref{thm:coarse}]
    It remains to bound the one-step error and to plug this into Corollary~\ref{cor:obabco_multi_step}.

First, by Lemma~\ref{lem:errors},
\begin{align*}
    &\frac{1}{h^2}\,\norm{\widehat \Phi_h^\msx(X_{nh}, P_{nh}^\msO) - \Phi_h^\msx(X_{nh}, P_{nh}^\msO)}^2 + \norm{\widehat \Phi_h^\msp(X_{nh}, P_{nh}^\msO) - \Phi_h^\msp(X_{nh}, P_{nh}^\msO)}^2 \\
    &\qquad{} \lesssim \beta^2 h^4\, \bigl(\norm{P_{nh}}^2 + h^2\, \norm{\nabla V(X_{nh})}^2 + \gamma dh\bigr)\,.
\end{align*}
Hence, Corollary~\ref{cor:obabco_multi_step} yields
\begin{align*}
    &\exp\bigl(2\,(q-1)\,\Renyi_q(\bs\mu_0\widehat{\bs P}_h^N \mmid \bs\mu_0 \bs P_h^N)\bigr) \\
    &\qquad \le \frac{1}{N} \sum_{n=0}^{N-1} \E\exp\Bigl(\frac{C\beta^2 h^2 q^2 T}{\gamma}\,\bigl(\norm{P_{nh}}^2 + h^2\,\norm{\nabla V(X_{nh})}^2\bigr) + C\beta^2 dh^3 q^2 T\Bigr)
\end{align*}
for $T \deq Nh$.
Next, for $P \sim \cN(0, I)$, by Cauchy--Schwarz and the data-processing inequality,
\begin{align*}
    \E\exp\bigl(C\,(\norm{P_{nh}}-\sqrt d)^2\bigr)
    &\le \Bigl\{\bigl(1+\chi^2(\bs\mu_0 \bs P_h^n\mmid \bs \pi)\bigr)\,\E\exp\bigl(2C\,(\norm{P}-\sqrt d)^2\bigr)\Bigr\}^{1/2} \\
    &\le \exp\bigl(\bar C\,(1 + \Renyi_2(\bs\mu_0 \mmid \bs \pi))\bigr)\,,
\end{align*}
provided that $C$ is a sufficiently small absolute constant.
It readily yields
\begin{align*}
    \E\exp\Bigl(\frac{C\beta^2 h^2 q^2 T}{\gamma}\,\norm{P_{nh}}^2\Bigr)
    &\le \exp\Bigl(\frac{\bar C\beta^2 h^2 q^2 T}{\gamma}\,(d + \Renyi_2(\bs\mu_0\mmid \bs\pi))\Bigr)
\end{align*}
provided $h \ll \nfrac{\gamma^{1/2}}{\beta q T^{1/2}}$.
A similar argument for the $\norm{\nabla V(X_{nh})}$ term, using Lemma~\ref{lem:subgsn-score} instead, yields the conclusion of the theorem.
\end{proof}

\subsubsection{Phase II: Refined R\'enyi bound}

\begin{proof}[Proof of Theorem~\ref{thm:obabco-final-renyi}]
    Assume that $h \lesssim \nfrac{1}{\betH^{2/7}\gamma^{1/7} d^{1/7}}$ (recall that we have already met this condition in Theorem~\ref{thm:harnack}).
    By Corollary~\ref{cor:obabco_multi_step} (absorbing some terms via our condition on $h$),
    \begin{align*}
        &\exp\bigl((2q-1)\,\Renyi_q(\bs\mu_0 \widehat{\bs P}_h^N \mmid \bs\mu_0 \bs P_h^N)\bigr)
        \le \frac{1}{N} \sum_{n=0}^{N-1} \E\underbrace{\exp\Bigl(\frac{Cq^2 T}{\gamma h^2}\,\mc E_n^2 +  C\beta^3 dh^5 q^3 T \Bigr)}_{\rm I} \\
        &\qquad{} \times \underbrace{\exp\Bigl(\frac{Ch^3 q^2 T}{\gamma} \int_0^h \bigl(\norm{\nabla^3 V(X_{nh} + t P_{nh}^\msOB)\, P_{nh}^\msOB}^2_{\rm F} + \norm{\nabla^3 V(X_{nh,t}^\msOH)\, P_{nh,t}^\msOH}^2_{\rm F}\bigr) \, \D t\Bigr)}_{\rm II}\,,
    \end{align*}
    where
    \begin{align*}
        \mc E_n^2
        &\deq \frac{\norm{\widehat \Phi_h^\msx(X_{nh}, P_{nh}^\msO) - \Phi_h^\msx(X_{nh}, P_{nh}^\msO)}^2}{h^2} + \norm{\widehat \Phi_h^\msp(X_{nh}, P_{nh}^\msO) - \Phi_h^\msp(X_{nh}, P_{nh}^\msO)}^2
    \end{align*}
    is the squared local error at iteration $n$.
    We use Cauchy--Schwarz repeatedly to decouple the terms in the exponential, allowing us to bound them separately.
    
    For simplicity, we write down the bound for $n= 0$ and omit the iteration number subscript, and note that the bounds for $n \ge 1$ look identical, only with the initial point $(X_{(n-1)h}, P_{(n-1)h})$ drawn from the appropriate distribution ${\bs \mu}_{(n-1)h}$.

    \paragraph{Term II\@.}
    We begin by handling the easier term II\@.
    Consider
    \begin{align*}
        \E\exp\Bigl(\frac{Ch^3 q^2 T}{\gamma} \int_0^h \norm{\nabla^3 V(X_t^\msOH)\,P_t^\msOH}_{\rm F}^2\,\D t\Bigr)
        &\le \E\exp\Bigl(\frac{C\beta_\eff^2 h^4 q^2 T}{\gamma}\sup_{t\in [0,h]}{\norm{P_t^\msOH}^2}\Bigr)\,\D t \\[0.25em]
        &\le \E\exp\Bigl(\frac{\bar C\beta_\eff^2 h^4 q^2 T}{\gamma}\,\bigl(\norm{P^\msO}^2 + h^2\,\norm{\nabla V(X)}^2\bigr)\Bigr)\,,
    \end{align*}
    where we used Lemma~\ref{lem:ham-flow-grow}.
    Now, by the same change of measure argument as in the proof of Theorem~\ref{thm:coarse}, it follows that for $h \ll \nfrac{\gamma^{1/4}}{\beta_\eff^{1/2} q^{1/2} T^{1/4}}$,
    \begin{align*}
        \E\exp\Bigl(\frac{Ch^3 q^2 T}{\gamma} \int_0^h \norm{\nabla^3 V(X_t^\msOH)\,P_t^\msOH}_{\rm F}^2\,\D t\Bigr)
        &\le \exp\Bigl(\frac{\bar C\beta_\eff^2 h^4 q^2 T}{\gamma}\,\bigl(d+\Renyi_2(\bs\mu_0\mmid\bs\pi)\bigr)\Bigr)\,.
    \end{align*}
    The other term involving $\norm{\nabla^3 V(X + tP^\msOB)\,P^\msOB}_{\rm F}$ is completely analogous.

\paragraph{Term I\@.}
For this term, we would like to substitute in the refined bound from Lemma~\ref{lem:errors}, but this would lead to terms such as
\begin{align*}
    \E\exp\Bigl(\frac{C h^3 q^2 T}{\gamma} \int_0^h \norm{\nabla^3 V(X_t^\msOH)[P_t^\msOH,P_t^\msOH]}^2\,\D t\Bigr)\,.
\end{align*}
Under our assumptions, this term resembles $\E\exp(\lambda\,\norm{P}^4)$ for some $\lambda > 0$, which is infinite as $P$ is sub-Gaussian at best.
Thus, we use the following strategy:
\begin{align*}
    \E\exp\Bigl(\frac{Cq^2 T}{\gamma h^2}\,\mc E^2\Bigr)
    &\le \E\exp(C\min\{A,B\})\,,
\end{align*}
where $A$, $B$ are produced by the coarse and refined bounds in Lemma~\ref{lem:errors}:
\begin{align*}
    A
    &\deq \frac{\beta^2 h^2 q^2 T}{\gamma}\,\bigl(\norm P^2 + h^2\,\norm{\nabla V(X)}^2 + \gamma dh\bigr)\,, \\[0.25em]
    B
    &\deq \underbrace{\frac{\beta^2 h^4 q^2 T}{\gamma}\,\bigl(\norm{\nabla V(X)}^2 + \beta^2 h^2\,\norm P^2 + \beta^2 \gamma dh^{3}\bigr)}_{\eqqcolon B_1} \\
    &\qquad{} + \underbrace{\frac{h^3 q^2 T}{\gamma} \int_0^h\bigl(\norm{\nabla^3 V(X+tP^{\msOB})[P^{\msOB}, P^{\msOB}]}^2 + \norm{\nabla^3 V(X_t^\msOH)[P_t^\msOH, P_t^\msOH]}^2 \bigr) \, \D t}_{\eqqcolon B_2}\,.
\end{align*}
Moreover,
\begin{align*}
    \E\exp(C\min\{A,B\}) \le \E\exp(CB_1 + C\min\{A, B_2\})
\end{align*}
and the same change of measure arguments as above yield
\begin{align*}
    \E\exp(CB_1) \le \exp\Bigl(\frac{\bar C\beta^3 h^4 q^2 T}{\gamma}\,\bigl(d+\Renyi_2(\bs\mu_0\mmid\bs\pi)\bigr)\Bigr)\,,
\end{align*}
provided $h \ll \nfrac{\gamma^{1/4}}{\beta^{3/4} q^{1/2} T^{1/4}}$. 
The additional term $\exp(C\beta^3 dh^5 q^3 T)$ can be absorbed into this provided that $h \ll \nfrac{1}{\gamma q}$.

We will control the other term via
\begin{align*}
    \E\exp(C\min\{A,B_2\})
    &\le \E[\exp(CK)\one_{B_2 \ge K} + \exp(CA) \one_{B_2 \le K}] \\
    &\le \exp(CK) + \sqrt{\E\exp(2CA)\,\Pr(B_2 > K)}\,,
\end{align*}
for an appropriately chosen threshold $K > 0$.

By the proof of Theorem~\ref{thm:coarse}, we have
\begin{align*}
    \E\exp(2CA)
    \le \exp\Bigl(\frac{\bar C\beta^2 h^2 q^2 T}{\gamma}\,\bigl(d+\Renyi_2(\bs\mu_0\mmid\bs\pi)\bigr)\Bigr)
\end{align*}
for $h \ll \nfrac{\gamma^{1/2}}{\beta q T^{1/2}}$.
Moreover, in Lemma~\ref{lem:stoc-proc-bound} below, we establish the following high-probability bounds: for every $\delta \in (0,\frac{1}{2})$, with probability at least $1-\delta$,
\begin{align*}
    &\int_0^h\bigl(\norm{\nabla^3 V(X+tP^{\msOB})[P^{\msOB}, P^{\msOB}]}^2 + \norm{\nabla^3 V(X_t^\msOH)[P_t^\msOH, P_t^\msOH]}^2 \bigr) \, \D t \\
    &\qquad \lesssim \beta_\eff^2 h\,\bigl(d + \Renyi_2(\bs\mu_0\mmid\bs\pi)^2 + \log^2\frac{1}{\delta}\bigr)\,,
\end{align*}
provided $h \ll \nfrac{1}{\beta^{1/2} d^{1/4}}$.

This shows that
\begin{align*}
    \E\exp(C\min\{A,B_2\})
    &\le \exp\Bigl(\frac{\bar C\beta_\eff^2 h^4 q^2 T}{\gamma}\,\bigl(d + \Renyi_2(\bs\mu_0\mmid\bs\pi)^2 + \log^2\frac{1}{\delta}\bigr)\Bigr) \\
    &\qquad{} + \sqrt\delta \exp\Bigl(\frac{\bar C\beta^2 h^2 q^2 T}{\gamma}\,\bigl(d+\Renyi_2(\bs\mu_0\mmid\bs\pi)\bigr)\Bigr) \\
    &\eqqcolon e^b + \sqrt\delta e^a\,.
\end{align*}
Using, for $b > 0$,
\begin{align*}
    e^b + \sqrt\delta e^a
    &\le e^b\,(1+\sqrt\delta e^a)
    \le \exp(b + \sqrt\delta e^a)\,,
\end{align*}
we arrive at
\begin{align*}
    &\E\exp(C\min\{A,B_2\}) \\
    &\qquad \le \exp\biggl[\frac{\bar C\beta_\eff^2 h^4 q^2 T}{\gamma}\,\bigl(d + \Renyi_2(\bs\mu_0\mmid\bs\pi)^2 + \log^2\frac{1}{\delta}\bigr) + \sqrt\delta \exp\Bigl(\frac{\bar C\beta^2 h^2 q^2 T}{\gamma}\,\bigl(d+\Renyi_2(\bs\mu_0\mmid\bs\pi)\bigr)\Bigr)\biggr]\,.
\end{align*}
To (approximately) balance the terms, set
\begin{align*}
    \sqrt\delta = \frac{\bar C\beta_\eff^2 d h^4 q^2 T}{\gamma} \exp\Bigl(-\frac{\bar C\beta^2 h^2 q^2 T}{\gamma}\,\bigl(d+\Renyi_2(\bs\mu_0\mmid\bs\pi)\bigr)\Bigr)\,.
\end{align*}
This yields
\begin{align*}
    &\E\exp(C\min\{A,B_2\}) \\
    &\qquad \le \exp\Bigl[\frac{\bar C\beta_\eff^2 h^4 q^2 T}{\gamma}\,\Bigl(d + \Renyi_2(\bs\mu_0\mmid\bs\pi)^2 + \Bigl(\frac{\beta^2 h^2 q^2 T}{\gamma}\,\bigl(d+\Renyi_2(\bs\mu_0\mmid\bs\pi)\bigr) + \log\frac{\gamma}{\beta_\eff^2 dh^4 q^2 T}\Bigr)^2\Bigr)\Bigr] \\
    &\qquad \le \exp\Bigl[\frac{\bar C\beta_\eff^2 h^4 q^2 T}{\gamma}\,\Bigl(d + \Renyi_2(\bs\mu_0\mmid\bs\pi)^2 + \log^2\frac{\gamma}{\beta_\eff^2 dh^4 q^2 T}\Bigr)\Bigr] \\
\end{align*}
provided $h \lesssim \nfrac{\gamma^{1/2}}{\beta d^{1/4} qT^{1/2}}$.
Collecting together all of the terms finishes the proof.
\end{proof}

\subsubsection{Helper lemmas}

Finally, in our proof, we made use of the following Jacobian bounds.

\begin{lemma}[OBABCO Jacobian bounds]\label{lem:obabo-distortion}
    Assume that $-\beta I \preceq \nabla^2 V \preceq \beta I$ and $h \lesssim \nfrac{1}{\beta^{1/2}}$.
    Without Hessian Lipschitzness, we have
    \begin{align*}
        \sup_{x_0\in\R^d}{\norm{\nabla_{p_0} [\Psi_h(x_0, \widehat \Phi_h^\msx(x_0, p_0))] - I}_{\rm op}} \lesssim \beta h^2\,.
    \end{align*}
    Next, let $P^\msO$ denote the result of applying~\eqref{eq:O} to $p_0$, and let $(X_t^\msOH, P_t^\msOH)_{t\ge 0}$ denote the Hamiltonian flow~\eqref{eq:ham_ode} started from $(x_0, P^\msO)$.
    Let $\widehat P^\target \deq \Psi_h(x_0, \widehat \Phi_h^\msx(x_0, P^\msO))$.
    Then, under Assumption~\ref{as:hess-lip},
    \begin{align*}
        \norm{\nabla_p[\Psi_h(x_0, \widehat \Phi_h^\msx(x_0, \cdot))] \mid_{P^\msO} - I}_{\rm F}^2
        &\lesssim h^5 \int_0^h \bigl(\norm{\nabla^3 V(x_0 + t P^\msOB)\, P^\msOB}^2_{\rm F} + \norm{\nabla^3 V(X_t^\msOH)\, P_t^\msOH}^2_{\rm F}\bigr) \, \D t \\
        &\qquad + \betH^2 d h^6 \,\norm{P^\msO - \widehat P^\target}^2 + \beta^4 d h^8\,.
    \end{align*}
\end{lemma}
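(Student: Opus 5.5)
We mirror the proofs of Lemma~\ref{lem:hmc-distortion} and Lemma~\ref{lem:hmc-distortion-II}. Write the map via the chain rule as
\begin{align*}
\nabla_{p}[\Psi_h(x_0,\widehat\Phi_h^\msx(x_0,p))] = K(\widehat\Phi_h^\msx(x_0,p))\,\widehat J(p)\,,
\end{align*}
where $\widehat J \deq \nabla_p\widehat\Phi_h^\msx(x_0,\cdot)$ and $K(y) = [\nabla_2\Phi_h^\msx(x_0,\Psi_h(x_0,y))]^{-1}$. By the proof of Lemma~\ref{lem:hmc-distortion}, $\norm{K - h^{-1}I}_{\rm op}\lesssim \beta h$. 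Consequently the quantity to control is
\begin{align*}
K\,\bigl(\widehat J(p) - J(\widehat P^\target)\bigr)\,, \qquad J \deq \nabla_2\Phi_h^\msx(x_0,\cdot)\,,
\end{align*}
and since $\norm{K}_{\rm op}\lesssim h^{-1}$, it suffices to bound $\widehat J(p)-J(\widehat P^\target)$ by $h$ times the target bound.

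\textbf{Explicit BABC Jacobian.} From~\eqref{eq:obabc-expansion},
\begin{align*}
\widehat\Phi_h^\msx(x_0,p) = x_0 + hp - \tfrac{h^2}{2}\nabla V(x_0) - \tfrac{h^2}{6}\bigl(\nabla V(x_0+hp^{\msOB}) - \nabla V(x_0)\bigr)\,,
\end{align*}
where $p^{\msOB} = p - \tfrac h2\nabla V(x_0)$. Differentiating in $p$ gives
\begin{align*}
\widehat J(p) = hI - \tfrac{h^3}{6}\nabla^2 V(x_0+hp^{\msOB})\,.
\end{align*}
Hence $\norm{\widehat J - hI}_{\rm op}\lesssim\beta h^3$. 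Combined with $\norm{J-hI}_{\rm op}\lesssim\beta h^3$ from Lemma~\ref{lem:hmc-distortion}, this gives the operator bound $\beta h^2$ immediately, uniformly in $x_0$.

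\textbf{Frobenius bound.} Here we need the two Jacobians to agree to higher order. Write $J(\widehat P^\target) = S_h$ along the flow from $(x_0,\widehat P^\target)$, and expand $S_h = hI - \int_0^h (h-s)s\,\nabla^2 V(x_s)\,\D s + O(\beta^2h^5)$. The $O(\beta^2 h^5)$ remainder in operator norm contributes, after multiplying by $K$ and squaring in Frobenius norm, the term $\beta^4 d h^8$. Since $\int_0^h(h-s)s\,\D s = h^3/6$, the leading terms match up to
\begin{align*}
\tfrac{h^3}{6}\nabla^2V(x_0+hp^{\msOB}) - \int_0^h (h-s)s\,\nabla^2 V(\widehat x_s)\,\D s\,,
\end{align*}
where $\widehat x_s$ is the flow from $(x_0,\widehat P^\target)$.

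We split this difference in two ways. First, insert the flow from $(x_0,P^\msO)$, namely $X_s^\msOH$. The Hessian Lipschitz bound together with Lemma~\ref{lem:hmc-expansion} gives $\norm{\widehat x_s - X_s^\msOH}\lesssim s\norm{P^\msO-\widehat P^\target}$. The resulting operator-norm error is $\betH h^4\norm{P^\msO-\widehat P^\target}$, which becomes $\betH^2dh^6\norm{\cdot}^2$ after dividing by $h$, squaring, and converting to Frobenius norm via the factor $d$. Second, write both $\nabla^2V(x_0+hP^\msOB)$ and $\nabla^2V(X^\msOH_s)$ as $\nabla^2V(x_0)$ plus integrals of $\nabla^3V$ along the respective paths. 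The $\nabla^2V(x_0)$ parts cancel exactly because the weights match. The leftovers are bounded in Frobenius norm by $h^3\int_0^h \norm{\nabla^3V(\cdot)P}_{\rm F}\,\D t$ for the two paths. After dividing by $h$ and applying Cauchy--Schwarz in $t$, each is at most $h^5\int_0^h\norm{\cdot}_{\rm F}^2\,\D t$.

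The main obstacle is this last bookkeeping step. The integrand for the BABC path is evaluated at $x_0+tP^{\msOB}$ with $P^\msOB$ built from $P^\msO$, but we must make sure the Jacobian is evaluated at the right point ($P^\msO$ versus $\widehat P^\target$) so the paths match the statement. We would also have to check that the $\nabla V(x_0)$ shift inside $P^\msOB$ only produces $\beta^2$-type remainders absorbed into $\beta^4dh^8$.
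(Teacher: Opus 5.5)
Your proposal is correct and follows the paper's proof essentially step for step. The paper uses the same ingredients: the factorization $K(\widehat\Phi_h^\msx(x_0,P^\msO))\,\bigl(\widehat J(P^\msO)-\nabla_2\Phi_h^\msx(x_0,\widehat P^\target)\bigr)$ with $\norm{K}_{\rm op}\lesssim h^{-1}$, the explicit BABC Jacobian, the $O(\beta^2h^5)$ expansion of the flow Jacobian, the Hessian-Lipschitz swap from the $\widehat P^\target$-trajectory to $X_t^\msOH$, and the cancellation of the $\nabla^2V(x_0)$ terms via $\int_0^h(h-t)\,t\,\D t = h^3/6$.

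The bookkeeping concerns you flag resolve on their own. $K$ evaluated at $\widehat\Phi_h^\msx(x_0,P^\msO)$ is by definition $[\nabla_2\Phi_h^\msx(x_0,\widehat P^\target)]^{-1}$, so each Jacobian is already at the right point. Expanding $\nabla^2V(x_0+hP^\msOB)$ along $t\mapsto x_0+tP^\msOB$ gives exactly the statement's integrand, with no extra remainder coming from the $\nabla V(x_0)$ shift inside $P^\msOB$.
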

\begin{proof}
Define $K(\cdot) \deq \nabla_2 \Psi_h(x_0, \cdot)$ and $\widehat J(\cdot) \deq \nabla_2 \widehat \Phi_h^x(x_0, \cdot)$. We have already obtained bounds for the eigenvalues of $K$ in Lemma~\ref{lem:hmc-distortion} --- $\norm{K(\cdot) - h^{-1} I}_{\rm op} \lesssim \beta h$. 

Without Hessian smoothness, we can directly compute $\nabla_2 \widehat \Phi_h^\msx(x_0, \cdot)$ via~\eqref{eq:obabc-expansion} to obtain
\begin{align}\label{eq:jacob-bound}
    \nabla_2 \widehat \Phi_h^\msx(x_0, \cdot) = h I - \frac{h^3}{6}\, \nabla^2 V\Bigl(x_0 + h\,\bigl(\cdot - \frac{h}{2}\, \nabla V(x_0)\bigr)\Bigr) \,.
\end{align}
For $h \lesssim \nfrac{1}{\beta^{1/2}}$, this gives the first bound.

Under Hessian Lipschitzness, it is helpful to consider the specific point $P^\msO$ of instantiation, which will be useful when applying this lemma. Let $\widehat J$ correspond to the matrix $\widehat J(P^\msO) = \nabla_2 \widehat \Phi_h^\msx(x_0, P^\msO)$. Let $ X_t^\target \deq \Phi_t^\msx(x_0, \widehat P^\target) = x_t(\widehat P^\target)$. Using the expansion of the Hamiltonian flow, we write
\begin{align*}
    \nabla_2 \Phi_h^\msx(x_0, \widehat P^\target) - hI = -\int_0^h \int_0^{t_1} \nabla^2 V(X_{t_2}^\target)\, \nabla x_{t_2}(\widehat P^\target)\, \D t_2 \, \D t_1\,.
\end{align*}
Secondly, note that $\norm{\nabla x_t(\widehat P^\target) - tI}_{\rm op} \lesssim \beta h^3$ for $h \lesssim \nfrac{1}{\beta^{1/2}}$, so that this can be written as
\begin{align*}
    \nabla_2 \Phi_h^\msx(x_0, \widehat P^\target) - h I &= -\int_0^h \int_0^{t_1} t_2\, \nabla^2 V(X_{t_2}^\target) \, \D t_2 \, \D t_1 + O(\beta^2 h^5) \\
    &=-\int_0^h (h-t)\, t\, \nabla^2 V(X_t^\target) \, \D t + O(\beta^2 h^5)\,.
\end{align*}
It follows at last from~\eqref{eq:jacob-bound} that
\begin{align*}
    \widehat J(P^\msO) - \nabla_2 \Phi_h^\msx(x_0, \widehat P^\target) & = \int_0^h (h-t)\, t\, [\nabla^2 V(X^\target_t) - \nabla^2 V(x_0 + hP^\msOB)] \, \D t + O(\beta^2 h^5) \\
    &= \int_0^h (h-t)\, t\, [\nabla^2 V(X_t^\msOH) - \nabla^2 V(x_0 + h P^\msOB)] \, \D t \\
    &\qquad + O(\betH h^4\,\norm{P^\msO - \widehat P^\target}+ \beta^2 h^5)\,.
\end{align*}
Here, we use the Hessian Lipschitz condition and the fact that
\begin{align*}
    \norm{X_t^{\msOH} - X_t^\target} \lesssim h\, \norm{P^\msO - \widehat P^\target}\,,
\end{align*}
from Lemma~\ref{lem:hmc-expansion}.

Now, we can write the two components in the first term as, respectively,
\begin{align*}
    \Bigl\lVert\int_0^h (h-t)\, t\, [\nabla^2 V(X_t^\msOH) - \nabla^2 V(x_0)] \, \D t\Bigr\rVert^2_{\rm F} \lesssim h^7 \int_0^h \norm{\nabla^3 V(X_t^\msOH)\, P_t^\msOH}^2_{\rm F} \, \D t\,,
\end{align*}
while the other can be bounded as
\begin{align*}
    \Bigl\lVert \int_0^h (h-t)\, t\,[\nabla^2 V(x_0 + h P^\msOB)-\nabla^2 V(x_0)] \, \D t\Bigr\rVert^2_{\rm F} \lesssim h^7 \int_0^h \norm{\nabla^3 V(x_0 + t P^\msOB)\, P^\msOB}^2_{\rm F} \, \D t\,.
\end{align*}
We again combine this with the operator norm bounds $\norm{K}_{\op} \lesssim h^{-1}$ to conclude the bound under Hessian Lipschitzness; namely, in line with Lemma~\ref{lem:hmc-distortion-II}, we have that
\begin{align*}
    K(\widehat \Phi_h^\msx(x_0, P^\msO))= \bigl[(\nabla_2 \Phi_h^\msx)(x_0, \widehat P^\target)\bigr]^{-1}\,.
\end{align*}
As a result,
\begin{align*}
    \norm{K(\widehat \Phi_h^\msx(x_0, P^\msO))\,\widehat J(P^\msO) - I}_{\rm F}^2 \leq \norm{K(\widehat \Phi_h^\msx(x_0, P^\msO))}_{\op}^2\, \norm{\widehat J(P^\msO) - \nabla_p \Phi_h^\msx(x_0 \widehat P^\target)}_{\rm F}^2\,. 
\end{align*}
Combining the bounds yields the result.
\end{proof}

We will need an auxiliary lemma to handle some expectations even under the stationary measure.
\begin{lemma}\label{lem:perturb-renyi}
    Consider the transform $T_t: (x, p) \mapsto (x + t\,(p - \frac{h}{2}\, \nabla V(x))\,,\; p - \frac{h}{2}\, \nabla V(x))$ from the definition of the $\mathrm{OBABCO}$ algorithm. Then, we have for $t \in [0, h]$ and $h \ll \nfrac{1}{\beta^{1/2} q^{1/2}}$,
    \begin{align*}
        \Renyi_q((T_t)_{\#} \bs \pi \mmid \bs \pi) &\lesssim \beta dh^2 q\,.
    \end{align*}    
    A similar bound holds when pushing forward via $\bar T_t: (x, p) \mapsto (x + tp, p)$.
\end{lemma}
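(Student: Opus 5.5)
The plan is to compute the R\'enyi divergence of the pushforward directly, via the change-of-variables formula. We use that $\bs\pi(x,p)\propto \exp(-V(x)-\frac12\norm p^2)$ and that $T_t$ is a diffeomorphism. Write $T_t = S_t\circ B$, where
\[
B(x,p)=\bigl(x,\,p-\tfrac h2\nabla V(x)\bigr)\,, \qquad S_t(x,p)=(x+tp,\,p)\,.
\]
Both maps are shears, so both have Jacobian determinant $1$: $\nabla B$ is block lower triangular with identity diagonal blocks, and $\nabla S_t$ is block upper triangular with identity diagonal blocks. Hence the density of $(T_t)_\#\bs\pi$ at $(y,r)=T_t(x,p)$ equals $\bs\pi(x,p)$. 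Therefore
\[
\frac{\D (T_t)_\#\bs\pi}{\D\bs\pi}(T_t(x,p)) = \exp\bigl(H(T_t(x,p))-H(x,p)\bigr)\,,
\]
and so
\[
\exp\bigl((q-1)\Renyi_q\bigr)=\E_{\bs\pi}\exp\bigl((q-1)\,\Delta H\bigr)\,, \qquad \Delta H\deq H\circ T_t - H\,.
\]
It remains to control the exponential moments of the energy error $\Delta H$ under $\bs\pi$.

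Next, we expand $\Delta H$. With $p'=p-\frac h2\nabla V(x)$,
\[
\tfrac12\norm{p'}^2-\tfrac12\norm p^2 = -\tfrac h2\langle p,\nabla V(x)\rangle+\tfrac{h^2}8\norm{\nabla V(x)}^2\,.
\]
By Taylor's theorem,
\[
V(x+tp')-V(x)=t\langle\nabla V(x),p'\rangle+O(\beta t^2\norm{p'}^2)\,.
\]
The two linear terms combine to $(t-\frac h2)\langle\nabla V(x),p\rangle$, with an $O(h^2\norm{\nabla V}^2)$ correction. So
\[
\abs{\Delta H}\lesssim h\,\abs{\langle \nabla V(x),p\rangle}+h^2\norm{\nabla V(x)}^2+\beta h^2\norm p^2\,.
\]
Under $\bs\pi$, $p\sim\cN(0,I)$ is independent of $x$, so $\langle\nabla V(x),p\rangle$ is conditionally $\cN(0,\norm{\nabla V(x)}^2)$. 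Integrating out $p$ first gives the factor $\exp(O(q^2h^2\norm{\nabla V(x)}^2))$. For the quadratic term, $\E\exp(c q\beta h^2\norm p^2)\le\exp(O(q\beta h^2 d))$ as long as $q\beta h^2\ll1$, which is exactly the step size condition $h\ll\beta^{-1/2}q^{-1/2}$.

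The remaining factor is $\E_\pi\exp(O(q^2h^2)\norm{\nabla V(x)}^2)$. To bound it, we invoke the sub-Gaussianity of the score under $\pi$ (Lemma~\ref{lem:subgsn-score}, already used in the proof of Theorem~\ref{thm:coarse}). This gives roughly $\E_\pi\exp(\lambda\norm{\nabla V}^2)\le\exp(O(\lambda\beta d))$ for $\lambda\lesssim 1/\beta$. With $\lambda\asymp q^2h^2$, the bound contributes $O(q^2h^2\beta d)$ to $(q-1)\Renyi_q$, i.e.\ $O(q\beta dh^2)$ to $\Renyi_q$. Cauchy--Schwarz is used to separate the factors, at the cost of doubling constants. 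This step is the main obstacle: it needs $q^2h^2\lesssim1/\beta$, which is slightly stronger than the stated condition. I would first try to avoid it by centering, noting that $\E_\pi\norm{\nabla V}^2=\E_\pi\Delta V\le\beta d$. Alternatively, I would use a more careful Gaussian computation in $p$, namely completing the square in $p$ exactly instead of bounding $\abs{\langle\nabla V,p\rangle}$. If neither works, I would accept the condition up to constants, since only $q=O(1)$ is needed downstream.

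For $\bar T_t(x,p)=(x+tp,p)$, the same argument applies with no $B$ step. Here $\Delta H=t\langle\nabla V(x),p\rangle+O(\beta t^2\norm p^2)$, and the identical estimates give $\Renyi_q\lesssim\beta dh^2q$.
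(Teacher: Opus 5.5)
Your argument is correct and is essentially the paper's proof. Both use the unit-determinant shear, write the density ratio as the exponential of an energy difference, Taylor-expand, integrate out $p$ as a Gaussian, and finish with sub-Gaussianity of the score. The only difference is bookkeeping: the paper writes the ratio at $(x',p')$ via $T_t^{-1}$ with exponent $q$, where the $-\frac{h^2}{8}\norm{\nabla V}^2$ term has a favourable sign and is dropped; your equivalent forward form with exponent $q-1$ keeps that term, harmlessly.

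The obstacle you flag is genuine, and neither centering nor exact completion of the square removes it. The paper completes the square exactly and still implicitly needs $\beta h^2q^2\lesssim 1$ to control $\E_{\bs\pi}\exp\bigl(\frac{(t-h/2)^2q^2}{2(1-\beta h^2q)}\norm{\nabla V(X)}^2\bigr)$. The condition is in fact necessary. Take $V=\frac{\beta}{2}\norm{x}^2$, $t=0$, and the coordinates $(\beta^{1/2}x,p)$. There $\bs\pi=\cN(0,I)$ and $(T_0)_\#\bs\pi=\cN(0,\Sigma)$ with $\lambda_{\max}(\Sigma)\ge 1+\frac12 h\beta^{1/2}$. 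Hence $\Renyi_q=\infty$ once $h\beta^{1/2}\ge \frac{2}{q-1}$, which the stated condition $h\ll\beta^{-1/2}q^{-1/2}$ permits for large $q$.

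Your fallback, proving the bound under $h\ll\beta^{-1/2}q^{-1}$, is therefore the correct form of the lemma. It is also all that is needed, since the only place the lemma is used (the proof of Lemma~\ref{lem:stoc-proc-bound}) takes $q=3$.
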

\begin{proof}
    The transform $T_t$ produces the Jacobian
    \begin{align*}
        J \deq \nabla T_t(x, p) = \begin{bmatrix}
            I - \frac{ht}{2}\, \nabla^2 V(x) & tI \\[0.25em]
            -\frac{h}{2}\, \nabla^2 V(x) & I
        \end{bmatrix}\,,
    \end{align*}
    which has $\det(J) = \det(J_{1,1} - J_{1,2} J_{2,1}) = 1$.\footnote{This is not a coincidence; the leapfrog integrator is symplectic.} Furthermore, the inverse of this transform is $T_t^{-1}(x', p') = (x'-tp',\, p' + \frac{h}{2}\, \nabla V(x' -tp'))$. As a result, we have the exact formula
    \begin{align*}
        \frac{\D (T_t)_\# {\bs \pi}}{\D {\bs \pi}}(x', p') = \exp\Bigl(V(x') - V(x'-tp') + \frac{1}{2}\,\bigl(\norm{p'}^2 - \norm{p' + \tfrac{h}{2}\, \nabla V(x'-tp')}^2\bigr) \Bigr)\,.
    \end{align*}
    Expanding, we find
    \begin{align*}
        V(x') - V(x'-tp') = \int_0^t \langle\nabla V(x' - sp'), p'\rangle \, \D s = t \,\langle\nabla V(x'), p'\rangle -\int_0^t (t-s)\, \langle p', \nabla^2 V(x'-sp')\, p'\rangle \, \D s\,.
    \end{align*}
    As a result, it follows that
    \begin{align*}
        &V(x') - V(x'-tp')+ \frac{1}{2}\,\bigl(\norm{p'}^2 - \norm{p' + \tfrac{h}{2}\, \nabla V(x'-tp')}^2\bigr) \\
        &\qquad= \bigl(t-\frac{h}{2}\bigr)\, \langle\nabla V(x'), p'\rangle + \frac{1}{2} \int_0^{t} (h-2\,(t-s)) \,\langle p', \nabla^2 V(x'-sp'), p'\rangle \, \D s - \frac{h^2}{8}\, \norm{\nabla V(x'-tp')}^2\,.
    \end{align*}
    Drop the final term. Note that the eigenvalues of the matrix in the quadratic form are bounded, so we can bound, using that the momentum and position coordinates are independent at stationarity and that $P$ is a standard Gaussian,
    \begin{align*}
        \E_{\bs \pi} \Bigl(\frac{\D (T_t)_\# {\bs \pi}}{\D {\bs \pi}}\Bigr)^q &\leq \E_{\bs \pi}\Bigl[\exp\Bigl(\bigl(t-\frac{h}{2}\bigr)\,q \,\langle\nabla V(X), P\rangle + \frac{\beta h^2q}{2}\, \norm{P}^2\Bigr)\Bigr] \\
        &= (1-\beta h^2 q)^{-d/2}\, \E_{\bs \pi}\Bigl[\exp\Bigl(\tfrac{(t-\frac{h}{2})^2\,q^2}{2\,(1-\beta h^2 q)}\, \norm{\nabla V(X)}^2\Bigr) \Bigr] \,.
    \end{align*}
    Recall that $\norm{\nabla V(X)} \lesssim \sqrt{\beta d} + \sqrt{\beta \log \frac{1}{\delta}}$ with probability $1-\delta$ under $\bs \pi$ (Lemma~\ref{lem:subgsn-score}). As a result, taking logarithms, we find that for $h \ll \nfrac{1}{\beta^{1/2} q^{1/2}}$,
    \begin{align*}
        \Renyi_{q}((T_t)_{\#} \bs \pi \mmid \bs \pi) &\lesssim \frac{d}{q-1} \log \frac{1}{1-\beta h^2 q} + \beta d h^2 q\lesssim \beta d h^2 q\,.
    \end{align*}
    This establishes the first bound.

    As for the second transform, the Jacobian again has unit determinant (it is upper triangular), while
    \begin{align*}
        \frac{\D(\bar T_t)_\# \bs \pi}{\D \bs \pi}(x_0, p_0) = \exp\bigl(V(x_0) - V(x_0-tp_0)\bigr) \leq \exp\bigl(t \inner{\nabla V(x_0), p_0} + \frac{\beta t^2}{2}\, \norm{p_0}^2\bigr)\,.
    \end{align*}
    The remainder of the bound proceeds identically to the analysis above.
\end{proof}

The following lemma is the main high-probability bound for the most problematic error term.

\begin{lemma}\label{lem:stoc-proc-bound}
    Suppose we draw $(X, P) \sim \bs \mu$ and produce $X^\msO, P^\msOB$, $(X^\msOH_t, P^\msOH_t)_{t \in [0,h]}$ from this starting point. Then, with probability $1-\delta$, under Assumption~\ref{as:higher-reg}, with $h \ll \nfrac{1}{\beta^{1/2} d^{1/4}}$, we can bound
    \begin{align*}
        \int_{0}^{h} \norm{\nabla^3 V(X^\msO+tP^{\msOB})[P^{\msOB}, P^{\msOB}]}^2 \, \D t
        &\lesssim \beta_\eff^2 h\,\bigl(d + \Renyi_2(\bs\mu\mmid\bs\pi)^2 + \log^2 \frac{1}{\delta}\bigr)\,.
    \end{align*}
    Likewise, for $h \ll \nfrac{1}{\beta^{1/2}}$,
    \begin{align*}
        \int_{0}^{h} \norm{\nabla^3 V(X^\msOH_t)[P^{\msOH}_t, P^{\msOH}_t]}^2 \, \D t
        &\lesssim \beta_\eff^2 h\,\bigl(d + \Renyi_2(\bs\mu\mmid\bs\pi)^2 + \log^2 \frac{1}{\delta}\bigr)\,.
    \end{align*}
\end{lemma}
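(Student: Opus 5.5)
The plan is to reduce both bounds to the Gaussian chaos bound (Lemma~\ref{lem:chaos}) at stationarity, and then move from $\bs\pi$ to $\bs\mu$ by a change of measure in high probability. The key observation is that each integrand is a chaos $T[\xi,\xi]$ with $T=\nabla^3 V(y)$ evaluated at a point $y$ and momentum $\xi$ that are (nearly) independent, with $\xi$ (nearly) standard Gaussian, under $\bs\pi$.

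\textbf{Step 1 (stationary bound for a fixed time $t$).} Take first $(X,P)\sim\bs\pi$. The O step preserves $\bs\pi$, so $(X^\msO,P^\msO)\sim\bs\pi$.
\begin{itemize}
\item For the Hamiltonian flow, $(X_t^\msOH,P_t^\msOH)=\Phi_t(X^\msO,P^\msO)\sim\bs\pi$ exactly, since $\Phi_t$ preserves $\bs\pi$.
\item For the leapfrog-type point, $(X^\msO+tP^\msOB,P^\msOB)=T_t(X^\msO,P^\msO)$ with $T_t$ as in Lemma~\ref{lem:perturb-renyi}. That lemma gives $\Renyi_q((T_t)_\#\bs\pi\mmid\bs\pi)\lesssim \beta dh^2q$, which is $O(1)$ for $q=2$ precisely when $h\ll \beta^{-1/2}d^{-1/2}$. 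This is too restrictive, so I would redo that computation using only the $\Renyi_2$ size and keep an $O(\beta d h^2)\lesssim d^{1/2}$ penalty, which is affordable under $h\ll\beta^{-1/2}d^{-1/4}$. This is exactly the stated step size condition for the first bound, and explains why the second bound needs only $h\ll\beta^{-1/2}$.
\end{itemize}
Under $\bs\pi$, position and momentum are independent and the momentum is $\cN(0,I)$. Conditioning on the position and applying Lemma~\ref{lem:chaos} with $\bar\beta\le\beta_\eff$ (Assumption~\ref{as:higher-reg}) gives, with probability $1-\delta$,
\[
\norm{\nabla^3V(y)[\xi,\xi]}^2\lesssim \beta_\eff^2\bigl(d+\log^2(1/\delta)\bigr).
\]

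\textbf{Step 2 (change of measure).} Use the elementary fact that for any event $E$, $\mu(E)\le \exp\bigl(\tfrac12\Renyi_2(\mu\mmid\nu)\bigr)\,\nu(E)^{1/2}$. This follows from Cauchy--Schwarz: $\mu(E)\le\sqrt{(1+\chi^2)\,\nu(E)}$.

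Apply it to the law of $(y,\xi)$ under $\bs\mu$ versus under $\bs\pi$, with $E$ the event where the chaos bound fails. By the data-processing inequality (Proposition~\ref{prop:renyi}) the relevant divergence is at most $\Renyi_2(\bs\mu\mmid\bs\pi)$, plus the $T_t$ penalty in the first case, handled by the weak triangle inequality. So $\mu(E)\le \delta'$ provided
\[
\delta=\delta'^2 e^{-R}, \qquad R\approx \Renyi_2(\bs\mu\mmid\bs\pi)+O(d^{1/2}).
\]
Then
\[
\log^2(1/\delta)\lesssim \log^2(1/\delta')+R^2\lesssim \log^2(1/\delta')+\Renyi_2(\bs\mu\mmid\bs\pi)^2+d,
\]
which produces exactly the $d+\Renyi_2^2+\log^2(1/\delta)$ form in the statement.

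\textbf{Step 3 (uniformity in $t$).} The integral is over $t\in[0,h]$, so the bound must hold simultaneously for all $t$, not just pointwise.
\begin{itemize}
\item The cleanest route is Markov's inequality on the integral. Write $\int_0^h$ and bound its $p$-th moment tails via Fubini, applying Steps 1--2 to a fixed $t$ at a time.
\item Alternatively, discretize $[0,h]$ and control the modulus using Lemma~\ref{lem:ham-flow-grow}, taking a union bound over a grid of $O(1)$ points.
\end{itemize}
I would take the first route: bound $\E_{\bs\pi}\norm{\nabla^3V(y_t)[\xi_t,\xi_t]}^{2}$-type tails via Step 1 and integrate. To avoid needing full exponential moments, I would rather bound the tail of $\frac1h\int_0^h$ by a union over a dyadic grid plus a Lipschitz-in-$t$ estimate. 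For that, note that $t\mapsto (y_t,\xi_t)$ moves by $O(h(\norm{P}+h\norm{\nabla V}))$, and by Assumption~\ref{as:higher-reg} together with $\norm{\xi}\lesssim\sqrt d$ this perturbs the chaos only by a lower-order amount when $h\ll \beta^{-1/2}$.

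I expect Step 3, combined with keeping the $T_t$ change-of-measure cost at the $d^{1/2}$ level, to be the main obstacle.
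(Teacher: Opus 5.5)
Your skeleton matches the paper's proof. It uses the chaos bound (Lemma~\ref{lem:chaos}) under $\bs\pi$ via independence of position and momentum, and exact invariance of $\bs\pi$ under the O step and $\Phi_t$ for the flow term. For the non-invariant map $T_t$ it invokes Lemma~\ref{lem:perturb-renyi} at a cost $\beta d h^2 \lesssim d^{1/2}$, which is exactly where $h \ll \beta^{-1/2} d^{-1/4}$ enters. Finally, the change-of-measure cost enters $\log(1/\delta)$ additively and so becomes $\Renyi_2(\bs\mu\mmid\bs\pi)^2 + d$ after squaring. Two points need fixing.

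\emph{Step 3.} You do not need the bound simultaneously for all $t$, and the grid-plus-Lipschitz-in-$t$ route you end up leaning towards does not work. Assumption~\ref{as:higher-reg} controls $\nabla^3 V$ only, so nothing bounds how $\nabla^3 V(y_t)[\xi,\xi]$ changes as $y_t$ moves; that would need a bound on $\nabla^4 V$. The only deterministic bound, $\norm{\nabla^3 V(y)[\xi,\xi]} \le \beta_\eff \norm{\xi}^2 \approx \beta_\eff d$, is too large by a factor $d^{1/2}$. Your first route is the right one, and it needs no exponential moments (which are infinite for a tail of the form $\log^2(1/\delta)$ anyway). The fixed-$t$ high-probability bound is uniform in $t$, and by Lemma~\ref{lem:tail-lp-equivalence} it is equivalent to $\bigl\lVert \norm{\nabla^3 V(y_t)[\xi_t,\xi_t]}^2 \bigr\rVert_{L^p} \lesssim \beta_\eff^2\,(d + \Renyi_2(\bs\mu\mmid\bs\pi)^2 + p^2)$ for all $p \ge 1$. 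Minkowski's integral inequality then bounds the $L^p$ norm of $\int_0^h$ by $h$ times the same quantity, and converting back to a tail bound with $p \approx \log(1/\delta)$ gives the claim. This is precisely the paper's argument; Markov with only second moments would give a $1/\delta$ dependence instead.

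\emph{Order bookkeeping in Step 2.} Suppose you change measure at order $2$ (Cauchy--Schwarz) and bound $\Renyi_2\bigl((T_t)_\#(\bs\mu \bs P^\msO) \bigm\Vert \bs\pi\bigr)$ with the weak triangle inequality of Proposition~\ref{prop:renyi}. Then you get $\Renyi_4(\bs\mu\mmid\bs\pi)$, not $\Renyi_2$. The paper avoids this by changing measure at order $3/2$ (Lemma~\ref{lem:high-prob-change}) and using the H\"older variant $\Renyi_{3/2}(\mu\mmid\pi)\le\frac32\,\Renyi_2(\mu\mmid\nu)+\Renyi_3(\nu\mmid\pi)$ with $\nu = (T_t)_\#\bs\pi$. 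Alternatively, chain two Cauchy--Schwarz changes of measure $\bs\pi \to (T_t)_\#\bs\pi \to (T_t)_\#(\bs\mu\bs P^\msO)$, using $\Renyi_2\bigl((T_t)_\#(\bs\mu\bs P^\msO)\bigm\Vert(T_t)_\#\bs\pi\bigr)\le\Renyi_2(\bs\mu\mmid\bs\pi)$, which follows from data processing and $\bs\pi\bs P^\msO=\bs\pi$. Either way $\log(1/\delta)$ picks up only $O(\Renyi_2(\bs\mu\mmid\bs\pi)+\beta d h^2)$, as you need. For the flow term no triangle inequality is needed at all.
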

\begin{proof}
    Let us begin by tackling the first integral.
    First, note that under $\bs \pi$, we have the following bound by Lemma~\ref{lem:chaos}:
    \begin{align*}
        \norm{\nabla^3 V(X)[P, P]}^2 \lesssim \beta_\eff^2\,\bigl( d + \log^2 \frac{1}{\delta}\bigr)\,,
    \end{align*}
    since under $\bs \pi$, $P$ is independent of $X$.
    
    Furthermore, let $\bs P^\msO$ be the~\eqref{eq:O} semigroup, and $T_t$ the transform from Lemma~\ref{lem:obabo-distortion}, for $t\in [0, h)$, $h \ll \nfrac{1}{\beta^{1/2}}$. We have, using the weak triangle inequality for R\'enyi divergences (Proposition~\ref{prop:renyi})
    \begin{align*}
        \Renyi_{3/2}\bigl((T_t)_\#(\bs \mu \bs P^\msO) \bigm\Vert \bs \pi \bigr) &\lesssim \Renyi_2\bigl((T_t)_\#(\bs \mu \bs P^\msO) \bigm\Vert (T_t)_\#(\bs \pi \bs P^\msO) \bigr) + \Renyi_3\bigl((T_t)_\#(\bs \mu \bs P^\msO) \bigm\Vert \bs \pi\bigr) \\
        &\le \Renyi_2(\bs \mu \mmid \bs \pi) + \Renyi_3\bigl((T_t)_\#(\bs \mu \bs P^\msO) \bigm\Vert \bs \pi\bigr)
        \lesssim \Renyi_2(\bs\mu\mmid\bs\pi)+ \beta dh^2\,,
    \end{align*}
    where the second inequality follows by the data-processing inequality on the first term, and the fact that $\bs \pi \bs P^\msO = \bs \pi$ in the second term. We then invoke Lemma~\ref{lem:obabo-distortion}.

    It follows that, with the initial iterate drawn from $\bs \mu$, via Lemma~\ref{lem:high-prob-change}, that with probability $1-\delta$, $\delta \in (0, \nfrac{1}{2})$, we have
    \begin{align}
        \norm{\nabla^3 V(X^\msO + tP^\msOB)[P^\msOB, P^\msOB]}^2 
        &\lesssim \beta_{\eff}^2\,\bigl( d + \log^2 \frac{1}{\delta} + \Renyi_2(\bs\mu\mmid\bs\pi)^2 + \beta^2 d^2 h^4\bigr) \nonumber\\[0.25em]
        &\lesssim \beta_{\eff}^2\,\bigl( d + \log^2 \frac{1}{\delta} + \Renyi_2(\bs\mu\mmid\bs\pi)^2\bigr)\label{eq:high-prob-mu}
    \end{align}
    provided $h\lesssim \nfrac{1}{\beta^{1/2} d^{1/4}}$.

    Next, by Minkowski's integral inequality,~\eqref{eq:high-prob-mu}, and the equivalence between tail bounds and growth of $L^p$ norms (see Lemma~\ref{lem:tail-lp-equivalence}),
    \begin{align*}
        \Bigl\lVert \int_0^h\norm{\nabla^3 V(X^\msO + tP^\msOB)[P^\msOB, P^\msOB]}^2 \, \D t\Bigr\rVert_{L^p}
        &\le \int_0^h \bigl\lVert \norm{\nabla^3 V(X^\msO + tP^\msOB)[P^\msOB, P^\msOB]}^2\bigr\rVert_{L^p}\,\D t \\
        &\lesssim \beta_\eff^2 h\,\bigl(d + \Renyi_2(\bs\mu\mmid\bs\pi)^2 + p^2\bigr)\,.
    \end{align*}
    Applying the equivalence again, we see that with probability $1-\delta$,
    \begin{align*}
        \int_0^h\norm{\nabla^3 V(X^\msO + tP^\msOB)[P^\msOB, P^\msOB]}^2 \, \D t
        &\lesssim \beta_\eff^2 h\,\bigl(d + \Renyi_2(\bs\mu\mmid\bs\pi)^2 + \log^2 \frac{1}{\delta}\bigr)\,.
    \end{align*}

    The second bound is deduced the exact same way, only we note that for $\bs P^\msOH_t$ corresponding to running the~\eqref{eq:O} semigroup for time $\nfrac h 2$ followed by~\eqref{eq:H} for any time $t \in [0, h)$, we have $\bs \pi \bs P^\msOH_t = \bs \pi$, so that
    \begin{align*}
        \Renyi_2\bigl(\bs\mu\bs P^\msOH_t \mmid \bs \pi )
        &\le \Renyi_2(\bs \mu \mmid \bs \pi)\,.
    \end{align*}
    The high-probability bound follows the same template.
\end{proof}

\subsubsection{Proof of Theorem~\ref{thm:main-warm}}\label{sssec:main_warm_pf}

\begin{proof}[Proof of Theorem~\ref{thm:main-warm}]
    We initialize OBABCO at $\bs\mu_0 \deq \cN(0, \beta^{-1} I) \otimes \cN(0, I)$ and $\gamma =\sqrt{32\beta}$.
    We assume that the step size condition~\eqref{eq:big_stepsize_cond} holds (which will indeed occur for the choices we make below).

    By the R\'enyi triangle inequality (Proposition~\ref{prop:renyi}),
    \begin{align}\label{eq:renyi_decomp}
        \Renyi_q(\bs\mu_0\widehat{\bs P}_h^n \mmid \bs\pi)
        &\lesssim \Renyi_{2q}(\bs\mu_0\widehat{\bs P}_h^n \mmid \bs\mu_0\bs P_h^n) + \Renyi_{2q-1}(\bs\mu_0\bs P_h^n\mmid\bs \pi)\,.
    \end{align}
    By Theorem~\ref{thm:harnack}, provided $nh \gtrsim (\beta^{1/2}/\alpha)\log\kappa$ and $h \ll \nfrac{1}{\beta^{1/2} \kappa^{1/2}}$,
    \begin{align*}
        \Renyi_{2q-1}(\delta_{x,p}\bs P_h^n \mmid \delta_{\bar x,\bar p}\bs P_h^n)
        &\lesssim \alpha q \exp\bigl(-\Omega\bigl(\frac{\alpha nh}{\beta^{1/2}}\bigr)\bigr) \,\bigl\{\norm{x-\bar x}^2 + \frac{1}{\beta}\,\norm{p-\bar p}^2\bigr\}\,.
    \end{align*}
    By the convexity principle (see~\cite[Theorem 4]{scr1}), it implies, for $\Coup(\bs \mu, \bs \nu)$ representing the set of couplings, $\gamma \in \mc P(\R^{2d} \times \R^{2d})$ such that its first marginal is $\bs \mu$ and its second marginal is $\bs \nu$,
    \begin{align*}
        &\exp(2\,(q-1)\,\Renyi_{2q-1}(\bs\mu_0\bs P_h^n \mmid \bs\pi)) \\
        &\qquad \le \inf_{\gamma \in \Coup(\bs\mu_0,\bs\pi)} \int \exp\Bigl(O\Bigl(\alpha q^2 \exp\bigl(-\Omega\bigl(\frac{\alpha nh}{\beta^{1/2}}\bigr)\bigr)\,\bigl\{\norm{x-\bar x}^2 + \frac{1}{\beta}\,\norm{p-\bar p}^2\bigr\}\Bigr)\Bigr)\,\gamma(\D(x,p),\D(\bar x,\bar p)) \\
        &\qquad \le \inf_{\gamma \in \Coup(\cN(0,\beta^{-1} I),\pi)} \int \exp\Bigl(O\Bigl(\alpha q^2 \exp\bigl(-\Omega\bigl(\frac{\alpha nh}{\beta^{1/2}}\bigr)\bigr)\,\norm{x-\bar x}^2 \Bigr)\Bigr)\,\gamma(\D x,\D\bar x)\,.
    \end{align*}
    By taking, e.g., an independent coupling, for $nh\gtrsim (\beta^{1/2}/\alpha) \log q$,
    \begin{align}\label{eq:oho_renyi_bd}
        \Renyi_{2q-1}(\bs\mu_0 \bs P_h^n\mmid \bs\pi)
        &\lesssim dq \exp\bigl(-\Omega\bigl(\frac{\alpha nh}{\beta^{1/2}}\bigr)\bigr)\,.
    \end{align}

    In phase 1, we choose a number of iterations $N_1$ such that $N_1 h \asymp (\beta^{1/2}/\alpha) \log(\kappa d)$, so that from~\eqref{eq:oho_renyi_bd} we have $\Renyi_3(\bs\mu_0\bs P^{N_1}\mmid\bs\pi) \lesssim d^{1/2}$.
    Then, by~\eqref{eq:renyi_decomp} and Theorem~\ref{thm:coarse}, since $\Renyi_2(\bs\mu_0\mmid\bs \pi) \le \frac{d}{2}\log \kappa$ by Lemma~\ref{lem:initialization}, we can ensure that $\Renyi_2(\bs\mu_0\widehat{\bs P}_h^{N_1} \mmid \bs\pi) \lesssim d^{1/2}$, provided
    \begin{align*}
        h \ll \frac{1}{\beta^{1/2} \kappa^{1/2} d^{1/4} \log^{1/2}(\kappa d)}\,.
    \end{align*}

    Next, for phase 2, we use~\eqref{eq:renyi_decomp} but with $\bs\mu_0$ replaced by $\bs\mu_0 \widehat{\bs P}_h^{N_1}$.
    We choose a number of iterations $N_2$ such that $N_2 h \asymp (\beta^{1/2}/\alpha)\log(\kappa dq/\varepsilon^2)$. Using a similar argument as the proof of~\eqref{eq:oho_renyi_bd} but with an extra change of measure step to argue that the coupling cost is bounded under $\bs\mu_0\widehat{\bs P}_h^{N_1}$, we can ensure that $\Renyi_{2q-1}(\bs\mu_0 \widehat{\bs P}_h^{N_1}\bs P_h^{N_2}\mmid \bs\pi) \ll \varepsilon^2$.
    From Theorem~\ref{thm:obabco-final-renyi}, it follows that $\Renyi_{2q}(\bs\mu_0\widehat{\bs P}_h^{N_1+N_2} \mmid \bs\mu_0\widehat{\bs P}_h^{N_1} \bs P_h^{N_2}) \ll \varepsilon^2$ as well, provided that
    \begin{align*}
        \beta^{1/2} d^{1/4} h \ll_{\log} \frac{\varepsilon^{1/2}}{\kappa^{1/4} q^{1/4}} \wedge \frac{\kappa^{1/2} \varepsilon^{1/2}}{\kapH^{1/2} q^{1/4}} \wedge \frac{1}{\kappa^{1/2} q} \wedge \frac{\kappa^{3/7} d^{1/4-1/7}}{\kapH^{2/7}} \wedge \frac{\kappa^{1/2} d^{1/4}}{\kapH^{1/2} q^{1/2}}\,,
    \end{align*}
    where $\ll_{\log}$ hides logarithmic factors.
    If we choose $h$ to be the largest allowable value, then the total number of iterations is
    \begin{align*}
        N_1 + N_2 = \widetilde O\Bigl(\kappa d^{1/4}\,\Bigl(\kappa^{1/2} q + \frac{\kapH^{2/7} }{\kappa^{3/7} d^{1/4-1/7}} + \frac{\kappa^{1/4} q^{1/4}}{\varepsilon^{1/2}} + \frac{\kapH^{1/2} q^{1/4}}{\kappa^{1/2} \varepsilon^{1/2}} + \frac{\kapH^{1/2} q^{1/2}}{\kappa^{1/2} d^{1/4}} \Bigr)\Bigr)\,.
    \end{align*}
    Setting $q = O(1)$ and $\varepsilon = 1$ yields the result, noting that $\kappa^{4/7} \kapH^{2/7} \le \kappa^{3/2} + \kappa^{1/2} \kapH^{1/2}$ by Young's inequality.
\end{proof}

    \paragraph*{Acknowledgements.} JMA acknowledges funding from a Sloan Research Fellowship and a Seed Grant Award from Apple. MSZ is funded by an NSERC CGS-D award. 

    \newpage
    \appendix

    \section{Helper lemmas}

\begin{lemma}[Hamiltonian flow growth]\label{lem:ham-flow-grow}
    Assume that $-\beta I \preceq \nabla^2 V \preceq \beta I$ and that $(x_t, p_t)_{t\ge 0}$ evolve according to the Hamiltonian flow~\eqref{eq:ham_ode}.
    If $h \ll \nfrac{1}{\beta^{1/2}}$, then
    \begin{align*}
        &\sup_{t \in [0, h]} \norm{p_t} \lesssim \norm{p_0} + h\, \norm{\nabla V(x_0)}\,, \\
        &\sup_{t \in [0, h]} \norm{\nabla V(x_t)} \lesssim \norm{\nabla V(x_0)} + \beta h\, \norm{p_0}\,.
    \end{align*}
\end{lemma}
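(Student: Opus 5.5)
The plan is to handle the momentum and the gradient simultaneously by a differential comparison (Gr\"onwall) argument on $[0,h]$, with Lemma~\ref{lem:diff-compare} as the main tool. Fix $(x_0,p_0)$ and write the flow in integral form,
\begin{align*}
    p_t = p_0 - \int_0^t \nabla V(x_s)\,\D s\,, \qquad \nabla V(x_t) = \nabla V(x_0) + \int_0^t H_s\, p_s\,\D s\,, \qquad H_s \deq \nabla^2 V(x_s)\,,
\end{align*}
where the second identity comes from differentiating $t\mapsto \nabla V(x_t)$ along $\dot x_t = p_t$. The hypothesis $-\beta I\preceq\nabla^2 V\preceq \beta I$ then gives $\norm{H_s}_{\rm op}\le\beta$.

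For the momentum bound, substitute the second identity into the first. This gives
\begin{align*}
    p_t = p_0 - t\,\nabla V(x_0) - \int_0^t (t-s)\, H_s\, p_s\,\D s\,.
\end{align*}
Taking norms, $y(t)\deq\norm{p_t}$ satisfies $y(t)\le y(0) + t\,\norm{\nabla V(x_0)} + \beta\int_0^t (t-s)\,y(s)\,\D s$. This is exactly hypothesis~\eqref{eq:growth-bound} with $b=\norm{\nabla V(x_0)}$ and $c=\beta$; continuity of $y$ is clear. Lemma~\ref{lem:diff-compare} then yields $\norm{p_t}\le(\norm{p_0}+t\,\norm{\nabla V(x_0)})\,e^{\beta^{1/2}t}$. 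Since $h\ll\beta^{-1/2}$, the exponential factor is $O(1)$ on $[0,h]$, which gives the first claim.

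The gradient bound follows by the symmetric argument. Substituting the first identity into the second gives
\begin{align*}
    \nabla V(x_t) = \nabla V(x_0) + \int_0^t H_s\,p_0\,\D s - \int_0^t H_s \int_0^s \nabla V(x_r)\,\D r\,\D s\,.
\end{align*}
The middle term has norm at most $\beta t\,\norm{p_0}$. The last term is at most $\beta\int_0^t\int_0^s\norm{\nabla V(x_r)}\,\D r\,\D s = \beta\int_0^t (t-r)\,\norm{\nabla V(x_r)}\,\D r$ by Fubini. So $y(t)\deq\norm{\nabla V(x_t)}$ satisfies~\eqref{eq:growth-bound} with $b=\beta\norm{p_0}$ and $c=\beta$. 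Lemma~\ref{lem:diff-compare} again gives $\norm{\nabla V(x_t)}\le(\norm{\nabla V(x_0)}+\beta t\,\norm{p_0})\,e^{\beta^{1/2}t}\lesssim \norm{\nabla V(x_0)}+\beta h\,\norm{p_0}$ for $t\le h\ll\beta^{-1/2}$.

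There is no real obstacle here. The only point needing care is the coupled structure: each quantity's derivative involves the other. Rewriting each as a Volterra equation in itself with kernel $(t-s)$, via one substitution and Fubini, removes this coupling, and then the comparison lemma closes the argument. The bound $\norm{H_s}_{\rm op}\le\beta$ uses only the two-sided Hessian bound, not convexity.
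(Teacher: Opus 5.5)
Your proof is correct, but it takes a different route from the paper's. The paper never writes separate integral inequalities for $\norm{p_t}$ and $\norm{\nabla V(x_t)}$. Instead it controls the single quantity $\sup_{t\le h}\norm{x_t-x_0}$ by a direct absorption argument,
\[
\sup_{t\le h}\norm{x_t-x_0}\le h\,\norm{p_0}+\tfrac{h^2}{2}\,\norm{\nabla V(x_0)}+\tfrac{\beta h^2}{2}\sup_{t\le h}\norm{x_t-x_0}\,,
\]
rearranges using $\beta h^2\ll 1$, and then reads off both claims from $\norm{\nabla V(x_t)}\le\norm{\nabla V(x_0)}+\beta\,\norm{x_t-x_0}$ and $p_t=p_0-\int_0^t\nabla V(x_s)\,\D s$.

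The paper's route is shorter. It uses only $\beta$-Lipschitzness of $\nabla V$, with no need to differentiate $t\mapsto\nabla V(x_t)$ or use the Hessian along the trajectory. It also yields the displacement bound $\sup_{t\le h}\norm{x_t-x_0}\lesssim h\,\norm{p_0}+h^2\,\norm{\nabla V(x_0)}$ as a by-product.

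Your route decouples the system into two Volterra inequalities via one substitution and Fubini, then applies Lemma~\ref{lem:diff-compare} twice. It costs a bit more bookkeeping. In exchange it gives explicit bounds valid for every $t\ge 0$, such as $\norm{p_t}\le(\norm{p_0}+t\,\norm{\nabla V(x_0)})\,e^{\beta^{1/2}t}$. The step-size condition then enters only to make $e^{\beta^{1/2}h}=O(1)$, and the argument reuses a comparison lemma the paper already proves.
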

\begin{proof}
    By definition of the Hamiltonian flow~\eqref{eq:ham_ode},
    \begin{align*}
        \sup_{t\in [0,h]}\norm{x_t - x_0} \le h\,\norm{p_0} + \frac{h^2}{2}\sup_{t\in [0,h]}{\norm{\nabla V(x_t)}}
        \le h\,\norm{p_0} + \frac{h^2}{2}\,\norm{\nabla V(x_0)} + \frac{\beta h^2}{2} \sup_{t\in [0,h]}{\norm{x_t - x_0}}\,.
    \end{align*}
    Thus, for $h \ll \nfrac{1}{\beta^{1/2}}$, $\sup_{t\in [0,h]}{\norm{x_t - x_0}} \lesssim h\,\norm{p_0} + h^2\,\norm{\nabla V(x_0)}$.
    The two statements in the lemma follow from this and the Lipschitzness of $\nabla V$.
\end{proof}

We also make use of the following suite of helper lemmas from~\cite{zhang2025analysis}, which we recall here for the convenience of the reader. 
\begin{lemma}[{Score concentration, adapted from~\cite[Lemma 6.2.7]{chewibook}}]\label{lem:subgsn-score}
    Let $\pi \propto \exp(-V)$ and assume that $\nabla V$ is $\beta$-Lipschitz. Then $\nabla V$ is $\sqrt{\beta}$ sub-Gaussian under $\pi$, in that for all $v \in \R^d$,
    \begin{align*}
        \E_\pi \exp \inner{\nabla V, v} \leq \exp\Bigl(\frac{\beta \norm{v}^2}{2} \Bigr)\,.
    \end{align*}
    In particular, for $0 < \delta < 1/2$, with probability at least $1-\delta$ under $\pi$,
    \begin{align*}
        \norm{\nabla V}^2 \lesssim \beta d + \beta \log \frac{1}{\delta}\,.
    \end{align*}
\end{lemma}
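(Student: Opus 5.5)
The plan is to reduce to a one-dimensional statement via the Herbst argument. Fix $v\in\R^d$ and set $f(x) \deq \inner{\nabla V(x), v}$. Since $\nabla V$ is $\beta$-Lipschitz, $f$ is $\beta\norm v$-Lipschitz. Moreover, by integration by parts, $\E_\pi \nabla V = -\int \nabla \pi = 0$, so $\E_\pi f = 0$.

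To control $\E_\pi e^{f}$ without assuming log-concavity (the lemma only assumes smoothness), I would avoid concentration of Lipschitz functions altogether. Instead I would use the exact identity obtained by integrating by parts against $\pi$:
\begin{align*}
    \E_\pi[\inner{\nabla V, v}\, g] = \E_\pi[\inner{\nabla g, v}]
\end{align*}
for smooth $g$, applied to $g = e^{\lambda f}$. With $\Lambda(\lambda) \deq \log \E_\pi e^{\lambda f}$, this gives
\begin{align*}
    \Lambda'(\lambda) = \frac{\E_\pi[f e^{\lambda f}]}{\E_\pi e^{\lambda f}} = \lambda\,\frac{\E_\pi[\inner{\nabla f, v} e^{\lambda f}]}{\E_\pi e^{\lambda f}} = \lambda\,\frac{\E_\pi[\inner{v,\nabla^2 V v} e^{\lambda f}]}{\E_\pi e^{\lambda f}} \le \lambda \beta\norm v^2\,,
\end{align*}
using $\nabla^2 V \preceq \beta I$. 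Integrating from $0$ with $\Lambda(0)=0$ yields $\Lambda(1)\le \beta\norm v^2/2$, which is exactly the claim. The main technical point is justifying the integration by parts and the finiteness of $\E_\pi e^{\lambda f}$. I would handle this by truncation: first prove the bound for bounded smooth approximations of $f$, or for $\lambda$ in a range where the moment is finite (note $\abs f \le \norm v(\norm{\nabla V(0)} + \beta\norm x)$ and $\pi$ has Gaussian-type tails only under additional assumptions). If needed, I would fall back on citing \cite[Lemma 6.2.7]{chewibook}, as the statement indicates. I expect this regularity justification to be the only real obstacle.

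For the high-probability consequence, sub-Gaussianity of the vector $\nabla V(X)$ with proxy $\beta$ gives a norm bound via a net argument. Take a $1/2$-net $\mc N$ of the sphere with $\abs{\mc N}\le 5^d$, so that $\norm{\nabla V} \le 2\max_{u\in\mc N}\inner{\nabla V,u}$. Each $\inner{\nabla V,u}$ has tail $\Pr(\inner{\nabla V,u}\ge s)\le e^{-s^2/(2\beta)}$ by Chernoff. A union bound then gives $\norm{\nabla V}\lesssim \sqrt{\beta d} + \sqrt{\beta\log(1/\delta)}$ with probability $1-\delta$, and squaring yields the stated bound.
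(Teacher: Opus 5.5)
Your argument is correct, but it takes a different route from the standard one. The paper gives no proof and cites \cite[Lemma 6.2.7]{chewibook}. The usual short proof of the moment bound uses the smoothness upper bound and translation invariance of Lebesgue measure rather than integration by parts. Since $\nabla V$ is $\beta$-Lipschitz, $\inner{\nabla V(x), v} \le V(x) - V(x-v) + \frac{\beta}{2}\,\norm v^2$, hence
\begin{align*}
    \E_\pi \exp\inner{\nabla V, v}
    \le e^{\beta\norm v^2/2}\,\frac{\int e^{-V(x)}\,e^{V(x)-V(x-v)}\,\D x}{\int e^{-V(x)}\,\D x}
    = e^{\beta\norm v^2/2}\,\frac{\int e^{-V(x-v)}\,\D x}{\int e^{-V(x)}\,\D x}
    = e^{\beta\norm v^2/2}\,.
\end{align*}
All integrands are nonnegative, so nothing needs justifying. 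There are no boundary terms and no second derivatives, and finiteness of the moment generating function is an output of the bound rather than an input. That is exactly the step you flag as the obstacle in your Herbst-type argument. In exchange, your route gives the exact differential identity $\Lambda'(\lambda) = \lambda\,\E_{\pi_\lambda}\inner{v, \nabla^2 V\, v}$ with $\pi_\lambda \propto e^{\lambda f}\,\pi$, though for the lemma as stated it yields the same constant. Your net argument for the high-probability bound is the standard one and is fine.

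If you keep your route, the truncation needs one point of care. Integration by parts produces the untruncated $f$ as the prefactor, so you should truncate only inside the exponential, and only from above.

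\textbf{Step 1 (integrability).} Integrate $\norm{\nabla V}^2 e^{-V} = -\inner{\nabla V, \nabla e^{-V}}$ against a cutoff $\chi_R$. Use $\Delta V \le \beta d$, and $\norm{\nabla V(x)} \le \norm{\nabla V(0)} + \beta\norm x$ on the support of $\nabla\chi_R$. This gives $\E_\pi \norm{\nabla V}^2 \le \beta d$, so $f \in L^1(\pi)$.

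\textbf{Step 2 (truncated ODE).} Set $f_n \deq f\wedge n$. The weight $e^{\lambda f_n}$ is bounded and Lipschitz, so your integration by parts is legitimate. Because $f_n \le f$ pointwise,
\begin{align*}
    \E_\pi[f_n\, e^{\lambda f_n}] \le \E_\pi[f\, e^{\lambda f_n}] = \lambda\, \E_\pi\bigl[\one_{f < n}\, \inner{v, \nabla^2 V\, v}\, e^{\lambda f_n}\bigr] \le \lambda \beta \norm v^2\, \E_\pi e^{\lambda f_n}\,.
\end{align*}
Integrating in $\lambda$ gives $\log \E_\pi e^{f_n} \le \beta\norm v^2/2$ for every $n$. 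Monotone convergence as $n \to \infty$ then gives the claim.
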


\begin{lemma}[{R\'enyi change of measure, adapted from~\cite[Lemma 6.2.8]{chewibook}}]\label{lem:renyi-change-measure}
    Consider two probability measures $\mu,\pi$ and a test function $\phi : \Omega \to \R$. Suppose that 
    \begin{align*}
        \pi\{\phi \geq {h}\} \leq \psi({h})\,,
    \end{align*}
    holds for some growth function $\psi$. Then
    \begin{align*}
        \mu\{\phi \geq {h}\} \leq \psi({h})^{(q-1)/q} \exp\Bigl(\frac{q-1}{q}\, \Renyi_q(\mu \mmid \pi) \Bigr)\,.
    \end{align*}
\end{lemma}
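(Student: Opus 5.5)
The plan is a single application of Hölder's inequality to the likelihood ratio, with $A \deq \{\phi \ge h\}$ as the event of interest. The case $\mu \not\ll \pi$ is trivial: then $\Renyi_q(\mu \mmid \pi) = +\infty$ by Definition~\ref{def:renyi}, so the right-hand side is infinite. (If also $\psi(h) = 0$ the right-hand side is $0\cdot\infty$; I read it with the convention that the bound is vacuous whenever $\Renyi_q = \infty$.) Henceforth assume $\mu \ll \pi$ and write $f \deq \D\mu/\D\pi$.

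First, I rewrite the target probability as an integral against $\pi$:
\begin{align*}
    \mu(A) = \int \one_A\, f \,\D\pi\,.
\end{align*}
Next, I apply Hölder's inequality with conjugate exponents $q' \deq q/(q-1)$ (on the indicator) and $q$ (on the density). Using $\one_A^{q'} = \one_A$, this gives
\begin{align*}
    \mu(A) \le \Bigl(\int \one_A \,\D\pi\Bigr)^{(q-1)/q}\, \Bigl(\int f^q \,\D\pi\Bigr)^{1/q}\,.
\end{align*}

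Then I identify each factor. The first factor is $\pi(A)^{(q-1)/q} \le \psi(h)^{(q-1)/q}$, by the hypothesis and the monotonicity of $t \mapsto t^{(q-1)/q}$. For the second factor, Definition~\ref{def:renyi} gives
\begin{align*}
    \int f^q \,\D\pi = \exp\bigl((q-1)\,\Renyi_q(\mu \mmid \pi)\bigr)\,,
\end{align*}
so raising to the power $1/q$ yields $\exp\bigl(\tfrac{q-1}{q}\,\Renyi_q(\mu\mmid\pi)\bigr)$. Multiplying the two bounds gives the claim.

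There is no real obstacle here. The only points needing care are the degenerate case $\mu\not\ll\pi$ and the requirement $q>1$, so that the Hölder exponents are finite and the conjugate pair is valid.
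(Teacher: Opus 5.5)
Your proof is correct and is the standard argument. The paper does not prove this lemma itself; it cites the textbook result, whose proof is the same single application of H\"older's inequality with exponents $q/(q-1)$ and $q$. Your remark that the bound should be read as vacuous when $\Renyi_q(\mu \mmid \pi) = \infty$ and $\psi(h) = 0$ is a worthwhile precision that the statement leaves implicit.
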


The following result easily follows from the preceding lemma.

\begin{lemma}[Sub-Weibull change of measure]\label{lem:high-prob-change}
    Suppose that under $(X, P) \sim \bs \pi$, we have for some measurable function $f: \R^{2d} \to \R$ with probability $1-\delta$,
    \begin{align*}
        f(X, P) \lesssim \msf C\,\bigl( d + \log^2 \frac{1}{\delta}\bigr)\,.
    \end{align*}
    Then, for any $q > 1$, we have with probability at least $1-\delta$ for $(\widehat X, \widehat P) \sim \widehat{\bs \mu}$,
    \begin{align*}
        f(\widehat X, \widehat P) \lesssim_q \msf C\,\bigl(d+ \log^2 \frac{1}{\delta} + \Renyi_q(\widehat{\bs\mu} \mmid \bs\pi)^2\bigr)\,.
    \end{align*}
\end{lemma}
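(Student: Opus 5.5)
The plan is to derive Lemma~\ref{lem:high-prob-change} from the R\'enyi change of measure (Lemma~\ref{lem:renyi-change-measure}), by turning the stationary high-probability bound into a tail bound. Write $R \deq \Renyi_q(\widehat{\bs\mu}\mmid\bs\pi)$. The hypothesis says there is an absolute constant $c_1$ such that for every $\delta\in(0,1/2)$,
\begin{align*}
\bs\pi\Bigl\{f \ge c_1\msf C\,\bigl(d+\log^2\tfrac1\delta\bigr)\Bigr\}\le \delta\,.
\end{align*}
Equivalently, for thresholds $u \ge c_1\msf C\,(d+\log^2 2)$ we can use the growth function $\psi(u)=\exp\bigl(-\sqrt{(u/(c_1\msf C)-d)_+}\bigr)$. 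For smaller thresholds we simply take $\psi \le 1$.

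Next, apply Lemma~\ref{lem:renyi-change-measure} with $\phi=f$ and $\mu=\widehat{\bs\mu}$. Set $\delta'\deq \delta^{q/(q-1)}\exp(-R)$, which is chosen so that
\begin{align*}
\delta'^{(q-1)/q}\,\exp\Bigl(\tfrac{q-1}{q}\,R\Bigr)=\delta\,.
\end{align*}
Taking the threshold $u=c_1\msf C\,(d+\log^2\frac1{\delta'})$ then gives $\widehat{\bs\mu}\{f\ge u\}\le\delta$. It remains to bound $u$:
\begin{align*}
\log\frac1{\delta'}=\frac{q}{q-1}\log\frac1\delta+R\,,\qquad\text{so}\qquad \log^2\frac1{\delta'}\le 2\Bigl(\frac{q}{q-1}\Bigr)^2\log^2\frac1\delta+2R^2\,.
\end{align*}
This yields $u\lesssim_q \msf C\,(d+\log^2\frac1\delta+R^2)$, with the $q$-dependence absorbed into $\lesssim_q$.

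The one technicality is the range restriction $\delta'<1/2$ in the hypothesis. This holds automatically, since $\delta<1/2$ and $R\ge0$ give $\delta'\le\delta<1/2$. The case $\delta\ge 1/2$ is not claimed. If one insists on covering it, the bound with $\delta=1/2$ dominates and loses only constants.

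I expect no real obstacle here. The only care needed is tracking the exponent $q/(q-1)$, which blows up as $q\searrow1$; this is why the statement carries $\lesssim_q$ rather than an absolute constant.
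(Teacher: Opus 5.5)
Your proof is correct and follows the route the paper has in mind when it says this lemma ``easily follows'' from the R\'enyi change of measure (Lemma~\ref{lem:renyi-change-measure}); the paper gives no further detail. Choosing $\delta' = \delta^{q/(q-1)}e^{-R}$ makes the H\"older bound return exactly $\delta$, and splitting $\log^2(1/\delta') \le 2\bigl(\frac{q}{q-1}\bigr)^2\log^2(1/\delta) + 2R^2$ finishes it, with your check $\delta' \le \delta < 1/2$ correctly handling the range restriction.
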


\begin{lemma}[Linearization of $\log \det$]\label{lem:log-det}
    Consider $M \in \R^{d\times d}$ with $\norm{M}_{\op} \leq c < 1$. Then
    \begin{align*}
        \log \det(I + M) = \tr(M) - \frac{1}{2} \tr(M^2) + O_c(\norm{M}^3_{\op}\, d)\,.
    \end{align*}
\end{lemma}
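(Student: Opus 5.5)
The statement to prove is the log-det linearization (Lemma~\ref{lem:log-det}): for $\norm{M}_{\op}\le c<1$, $\log\det(I+M)=\tr M-\tfrac12\tr(M^2)+O_c(\norm{M}_{\op}^3 d)$. I would prove it through the identity $\log\det = \tr\log$ together with a power-series expansion of the matrix logarithm.

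First, since $\norm{M}_{\op}\le c<1$, the series
\begin{align*}
\log(I+M)=\sum_{k\ge1}\frac{(-1)^{k+1}}{k}\,M^k
\end{align*}
converges absolutely in operator norm. It defines a matrix $L$ with $e^{L}=I+M$, by the usual formal power-series identity, which is valid under absolute convergence.

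Second, I would use $\det e^{L}=e^{\tr L}$. This holds for any square matrix, for instance by Schur triangularization over $\mathbb C$. It gives $\det(I+M)=\exp(\tr L)$. Since $\det(I+M)>0$ (the eigenvalues of $I+M$ have modulus at least $1-c>0$ and occur in conjugate pairs or are real positive by continuity from $I$), the real logarithm satisfies $\log\det(I+M)=\tr L$. Taking the trace termwise, which is allowed by absolute convergence, yields
\begin{align*}
\log\det(I+M)=\tr M-\tfrac12\tr(M^2)+\sum_{k\ge3}\frac{(-1)^{k+1}}{k}\tr(M^k).
\end{align*}

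Third, I would bound the remainder using $\abs{\tr A}\le d\,\norm{A}_{\op}$. This gives $\abs{\tr(M^k)}\le d\,\norm{M}_{\op}^k$, so the tail is at most
\begin{align*}
d\sum_{k\ge3}\norm{M}_{\op}^k/k\le \frac{d\,\norm{M}_{\op}^3}{3(1-c)}.
\end{align*}
This is $O_c(\norm{M}_{\op}^3 d)$.

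The only slightly delicate point is the second step, namely that $\log\det$ equals the trace of the series for a non-symmetric $M$. An alternative route avoids matrix exponentials. One can use the eigenvalues $\lambda_i$ of $M$, which satisfy $\abs{\lambda_i}\le c$, and write $\log\det(I+M)=\sum_i\log(1+\lambda_i)$ with the principal branch. The sum of imaginary parts vanishes because nonreal eigenvalues come in conjugate pairs. One then expands each scalar logarithm and uses $\sum_i\lambda_i^k=\tr(M^k)$. Either route gives the claim.
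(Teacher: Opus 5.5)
Your proof is correct and is essentially the paper's argument. Both reduce to the series $\log\det(I+M)=\sum_{k\ge1}\frac{(-1)^{k+1}}{k}\tr(M^k)$ and bound the tail by $d\sum_{k\ge3}\norm{M}_{\op}^k/k\le d\norm{M}_{\op}^3/(3(1-c))$ via $\abs{\tr(M^k)}\le d\norm{M}_{\op}^k$. The only difference is how you justify the series identity: the paper differentiates $t\mapsto\log\det(I+tM)$ (Jacobi's formula with the Neumann series for $(I+tM)^{-1}$) and integrates over $t\in[0,1]$, whereas you use the matrix logarithm with $\det e^{L}=e^{\tr L}$ (or the eigenvalue factorization). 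That route is equally valid, and since $L$ is real, positivity of $\det(I+M)=e^{\tr L}$ comes for free.
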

\begin{proof}
    Note that $I + tM$ is invertible for $|t| < 1/\norm{M}_{\op}$. Thus we can compute $\frac{\D }{\D t} \log \det (I + tM) = \tr((I + tM)^{-1} M) = \sum_{k \geq 0} (-t)^k \Tr(M^{k+1})$. Integrating $t$ from $0$ to $1$ gives the power series expansion
    \begin{align*}
        \log \det (I + M) = \sum_{k \geq 1} \frac{(-1)^{k+1}}{k} \Tr(M^k)\,.
    \end{align*}
    Since $\Tr(M^k)| \leq d \norm{M}_{\op}^k$, we can bound the remainder by $|\sum_{k \geq 3} (-1)^{k+1} \Tr(M^k) / k| \leq d \sum_{k \geq 3} \norm{M}_{\op}^k / k \leq d\norm{M}_{\op}^3 / (3(1 - \norm{M}_{\op}))$.
\end{proof}

\begin{lemma}[{Equivalence between $L^p$ bounds and tail decay; adapted from~\cite[Exercise 2.43]{vershynin2018high}}]\label{lem:tail-lp-equivalence}
    Suppose that $X \sim \pi \in \mc P(\R_+)$ obeys, with some constants $\alpha > 0$, $\msf C_1, \msf C_2 \in \R_+$,
    \begin{align}\label{eq:tail-bound-generic}
        X \lesssim \msf C_1 + \msf C_2 \log^\alpha \frac{1}{\delta}\qquad\text{with probability}\ge 1-\delta\,,~\text{for all}~0 < \delta < \frac{1}{2}\,.
    \end{align}
    Then,
    \begin{align}\label{eq:lp-bound-generic}
        \norm{X}_{L^p} \lesssim \msf C_1 + \msf C_2 p^\alpha\qquad\text{for all}~p \ge 1\,.
    \end{align}
    Furthermore,~\eqref{eq:lp-bound-generic} implies~\eqref{eq:tail-bound-generic} with possibly different absolute constants.
\end{lemma}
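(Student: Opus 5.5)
The plan is to prove both directions by elementary tail integration and Markov's inequality. Throughout, $\alpha$ is treated as a fixed parameter, so constants depending on $\alpha$ are absorbed into $\lesssim$; in the paper's applications $\alpha$ is an absolute constant such as $2$.

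\emph{Tail bound $\Rightarrow$ $L^p$ bound.} Let $c \ge 1$ be the implied constant in~\eqref{eq:tail-bound-generic}, so that $X \le c\,(\msf C_1 + \msf C_2 \log^\alpha \frac1\delta)$ with probability at least $1-\delta$, for each $\delta \in (0,\frac12)$. If $\msf C_2 = 0$, then $X \le c\,\msf C_1$ almost surely and there is nothing to prove. Otherwise, set $Y \deq (X/c - \msf C_1)_+/\msf C_2 \ge 0$. Then $X \le c\,(\msf C_1 + \msf C_2 Y)$ pointwise, so Minkowski's inequality gives $\norm{X}_{L^p} \le c\,(\msf C_1 + \msf C_2 \norm{Y}_{L^p})$, and it suffices to show $\norm{Y}_{L^p} \lesssim_\alpha p^\alpha$.

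The hypothesis, rewritten with $s = \log^\alpha \frac1\delta$, says $\Pr(Y > s) \le \exp(-s^{1/\alpha})$ for every $s > s_0 \deq (\log 2)^\alpha$. Using the layer-cake formula and bounding the probability by $1$ on $[0,s_0]$,
\begin{align*}
    \E Y^p = \int_0^\infty p\,s^{p-1}\,\Pr(Y>s)\,\D s \le s_0^p + \int_0^\infty p\,s^{p-1} e^{-s^{1/\alpha}}\,\D s = s_0^p + \alpha p\,\Gamma(\alpha p)\,,
\end{align*}
where the last equality follows from the substitution $u = s^{1/\alpha}$. Stirling's bound $\Gamma(\alpha p) \le (\alpha p)^{\alpha p}$, valid for $\alpha p \ge 1$, together with $(\alpha p)^{1/p} \lesssim_\alpha 1$ for $p\ge1$, yields $\norm{Y}_{L^p} \lesssim_\alpha 1 + p^\alpha \lesssim p^\alpha$. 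When $\alpha p < 1$ the Gamma term is $\alpha p\,\Gamma(\alpha p) = \Gamma(\alpha p + 1) \le 1$, so it is $O(1)$ and the same conclusion holds.

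\emph{$L^p$ bound $\Rightarrow$ tail bound.} Suppose $\norm{X}_{L^p} \le c\,(\msf C_1 + \msf C_2 p^\alpha)$ for all $p \ge 1$. Markov's inequality applied to $X^p$ gives $\Pr(X > e\,\norm{X}_{L^p}) \le e^{-p}$. Given $\delta \in (0,\frac12)$, I will take $p \deq \max\{1,\log\frac1\delta\}$. This choice yields the tail bound in each of two cases.
\begin{itemize}
    \item If $\delta \le e^{-1}$, then $p = \log\frac1\delta$ and $e^{-p} = \delta$. Hence, with probability at least $1-\delta$, $X \le e c\,(\msf C_1 + \msf C_2 \log^\alpha\frac1\delta)$.
    \item If $\delta \in (e^{-1},\frac12)$, then $p = 1$ and $\log\frac1\delta \ge \log 2$. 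So $p^\alpha = 1 \le (\log 2)^{-\alpha}\log^\alpha\frac1\delta$, and the failure probability $e^{-1} < \delta$ is already acceptable.
\end{itemize}
In both cases~\eqref{eq:tail-bound-generic} holds, with the absolute constant enlarged by a factor depending only on $\alpha$.

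No step is a genuine obstacle. The only care required is in the bookkeeping: the restriction $\delta<\frac12$ is handled by the $s_0^p$ term in the first direction and by the case split in the second, and the $\alpha$-dependence of constants is harmless because $\alpha$ is fixed.
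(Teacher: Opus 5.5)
Your proof is correct and is the standard argument behind the exercise the paper cites instead of giving its own proof. One direction integrates the layer-cake formula against the tail $\exp(-s^{1/\alpha})$ to get a Gamma-function moment bound; the other applies Markov's inequality to $X^p$ with $p = \max\{1,\log\frac{1}{\delta}\}$. Your note that the implied constants must depend on $\alpha$ is apt: for example, $Y = E^\alpha$ with $E$ a standard exponential has $\norm{Y}_{L^1} = \Gamma(\alpha+1)$. This dependence is harmless here, since the paper only uses $\alpha = 2$.
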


    \section{Proof of Lemma~\ref{lem:chaos}}\label{app:chaos}

    We start by noting that
    \begin{align*}
        \norm{T[\xi,\xi]}
        &= \Bigl\lVert \sum_{i,j=1}^d T_{i,j,\bullet} \xi_i \xi_j\Bigr\rVert
        \le \underbrace{\Bigl\lVert \sum_{i,j=1}^d T_{i,j,\bullet} \one_{i=j}\Bigr\rVert}_{\le \bar\beta d^{1/2}}
        + \underbrace{\Bigl\lVert \sum_{i,j=1}^d T_{i,j,\bullet} (\xi_i \xi_j -\one_{i=j})\Bigr\rVert}_{\eqqcolon Z}\,,
    \end{align*}
    where we use $\norm{\sum_{i,j=1}^d T_{i,j,\bullet} \one_{i=j}} \le \bar\beta\,\norm I_{\rm F} = \bar\beta d^{1/2}$.

    Next, we use the Banach-valued Hanson--Wright inequality in~\cite[Theorem 6]{AdaLatMel20HWBanach} via Remark 9 therein, which yields with probability at least $1-\delta$,
    \begin{align*}
        Z
        &\lesssim \E Z + R + U\sqrt{\log\frac{1}{\delta}} + V \log\frac{1}{\delta}
    \end{align*}
    where the terms $R,U,V$ are defined (and bounded) as follows. First, letting $G_{i,j} \sim \cN(0,1)$ be i.i.d.,
    \begin{align*}
        R
        \deq
        \E \Bigl\lVert \sum_{i \neq j} T_{i,j,\bullet} G_{i,j} \Bigr\lVert
        \leq
        \Bigl( \E \Bigl \lVert \sum_{i \neq j} T_{i,j,\bullet} G_{i,j} \Bigr\lVert^2 \Bigr)^{1/2}
        =
        \Bigl(  \sum_{i \neq j}  \lVert T_{i,j,\bullet} \lVert^2 \Bigr)^{1/2}
        \leq 
        \Bigl(  \sum_{k=1}^d \norm{T_{\bullet,\bullet,k}}_{\rm F}^2 \Bigr)^{1/2}
        \leq
        \bar \beta \sqrt{d}\,.
    \end{align*}
    Next,
    \begin{align*}
        U
        &\deq \sup_{\norm x\le 1} \E\Bigl\lVert \sum_{i,j=1}^d T_{i,j,\bullet} x_i \xi_j\Bigr\rVert + \sup_{\norm x_{\rm F} \le 1}{\Bigl\lVert \sum_{i,j=1}^d T_{i,j,\bullet} x_{i,j}\Bigr\rVert}
        \le \sup_{\norm x \le 1} \biggl\{\E\Bigl[\Bigl\lVert \sum_{i,j = 1}^d T_{i,j,\bullet} x_i \xi_j\Bigr\rVert^2\Bigr]\biggr\}^{1/2} + \bar\beta \\[0.25em]
        &= \sup_{\norm x \le 1} \biggl\{\sum_{j=1}^d \Bigl\lVert \sum_{i=1}^d T_{i,j,\bullet} x_i\Bigr\rVert^2\biggr\}^{1/2} + \bar\beta
        = \sup_{\norm x \le 1} \biggl\{\sum_{j,k=1}^d \Bigl(\sum_{i=1}^d T_{i,j,k} x_i\Bigr)^2\biggr\}^{1/2} + \bar\beta
        \le 2\bar\beta\,,
    \end{align*}
    where we note that $\sum_{i=1}^d T_{i, \bullet, \bullet} x_i$ yields a matrix with Frobenius norm bounded by $\bar\beta$, by the definition of the $\{1, 2\},\! \{3\}$ norm and the symmetry of the tensor.
    Also,
    \begin{align*}
        V \deq \sup_{\norm x\le 1,\,\norm y\le 1}{\Bigl\lVert \sum_{i,j=1}^d T_{i,j,\bullet} x_i y_j \Bigr\rVert}
        \le \sup_{\norm A_{\rm F} \le 1}{\Bigl\lVert \sum_{i,j=1}^d T_{i,j,\bullet} A_{i,j}\Bigr\rVert}
        \le \bar\beta\,.
    \end{align*}
    Finally, for the expectation, use $\E[Z] \leq (\E[Z^2])^{1/2}$ and the upper bound
    \begin{align*}
        \E[Z^2]
        &= \sum_{k=1}^d \E\Bigl[\Bigl\lvert\sum_{i,j=1}^d T_{i,j,k}(\xi_i \xi_j - \one_{i = j})\Bigr\rvert^2\Bigr]
        = \sum_{k=1}^d \var{\langle \xi, T_{\bullet, \bullet, k}\,\xi\rangle}
        = 2\sum_{k=1}^d \norm{T_{\bullet, \bullet, k}}_{\rm F}^2 
        \le 2\bar\beta^2 d\,.
    \end{align*}
    Combining the above displays concludes the proof.

    \section{OHO dynamics in general settings}\label{sec:app-oho}

In \S\ref{sec:oho} we established Harnack inequalities for OHO (Theorem~\ref{thm:harnack}) in the setting of strongly convex potentials $(\alpha > 0)$ and high friction $(\gamma \gtrsim \sqrt{\beta})$, since that is the setting of relevance for our application to sampling algorithms in \S\ref{sec:app}. Here we briefly mention that, with very little change to the analysis, these Harnack inequalities extend more generally to potentials that are weakly convex or semi-convex, and to settings with low friction. Although we do not use these Harnack inequalities in our applications, we state them here as they may be of independent interest, since  
Harnack inequalities provide detailed information about long-term regularity and convergence of the OHO dynamics. 

Just like the Harnack inequalities for underdamped Langevin dynamics (which we recall is the limit of OHO as the step size $h \searrow 0$, see \S\ref{ssec:oho_overview}), the strength of the Harnack inequalities for OHO are governed by the following parameter:
\begin{align}\label{eq:omeg-app}
    \omega \deq \begin{cases}
        \alpha/\gamma\,, & \text{if $\gamma \geq \sqrt{32\beta}$ \quad (high friction setting)}\,, \\
         -\sqrt{\beta}\,, & \text{if $\gamma < \sqrt{32\beta}$ \quad (low friction setting)}\,.
    \end{cases}
\end{align}

\begin{theorem}[Harnack inequalities for OHO in settings beyond Theorem~\ref{thm:harnack}]\label{thm:harnack-general}
    Assume that $-\beta I \preceq \alpha I \preceq \nabla^2 V \preceq \beta I$. Suppose that the following step size condition holds:
    \begin{align*}
        h \ll \frac{1}{\beta^{1/2} q^{1/2}} \wedge \begin{cases}
            \nfrac{1}{\gamma^{3/2} T^{1/2}}\,, & \text{if}~\alpha = 0~\text{and}~\gamma \ge \sqrt{32\beta} \quad(\text{weakly convex, high friction})\,, \\
            \nfrac{\abs{\omega}}{\gamma^{3/2}}\,, & \text{if}~\alpha < 0~\text{or}~\gamma < \sqrt{32\beta}\quad (\text{semi-convex or low friction})\,.
        \end{cases}
    \end{align*}
    Then the conclusions of Theorem~\ref{thm:harnack} hold.
\end{theorem}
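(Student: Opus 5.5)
The proof reuses the proof of Theorem~\ref{thm:harnack} verbatim except for the distance recursion (Lemma~\ref{lem:shift-contraction}). The one-shot regularity bound (Theorem~\ref{thm:regularity}) only needs $-\beta I \preceq \nabla^2 V \preceq \beta I$ and $h \ll 1/(\beta^{1/2}q^{1/2}) \wedge 1/\gamma$. It therefore holds unchanged in the semi-convex, weakly convex, and low-friction regimes.

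The shifted composition bookkeeping in the proof of Theorem~\ref{thm:harnack} is inherited from~\cite[Section 4]{scr4}. That argument is already stated for general $\omega$, with $\omega_+$ in the contraction exponent and $\omega$ possibly negative. This covers the sum~\eqref{eq:aux-dist-sum} and the treatment of the last step. So once the auxiliary process satisfies the same distance recursion as in~\cite{scr4}, namely
\[
(\mathtt d^\aux_{k+1})^2 \le \exp\Bigl(-c'\int_{kh}^{(k+1)h} (\omega_+ + \eta^\msp_t)\,\D t\Bigr)(\mathtt d^\aux_k)^2,
\]
with $\omega$ now given by~\eqref{eq:omeg-app}, the conclusion follows with the same constant $C(\alpha,\beta,\gamma,T)$.

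The plan is to redo the eigenvalue computation for the $2\times 2$ matrix $\mc M^\lambda_t$ in the proof of Lemma~\ref{lem:shift-contraction}, now for $\lambda\in[-\beta,\beta]$ (or $[\alpha,\beta]$). I would split into cases as in~\cite{scr4}.

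In the high-friction, weakly convex case ($\alpha=0$, $\omega=0$), subclaims (i) and (iii) go through as before. One uses $\gamma_t\eta^\msp_t/4 + c_t\lambda \ge \gamma_t\eta^\msp_t/4$ with $\lambda\ge 0$. The lower bound $b^\lambda_t\gtrsim\eta^\msp_t$ then follows, and the shift design, with $\eta^\msp_t \asymp 1/(Nh-t+Ah)$, replaces strong convexity.

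In the semi-convex or low-friction case, $\omega=-\sqrt\beta<0$ and the shift is $\eta^\msp_t=c_0|\omega|/(1-\exp(-|\omega|(Nh-t+Ah)))\gtrsim|\omega|$. This shift is large enough that the term $\gamma_t\eta^\msp_t/4$ dominates both the negative curvature $c_t\lambda\ge -c_t\beta$ and the $\omega\eta_t^\msp$ term. Here one takes $c_0$ large and uses $\gamma_t\ge\gamma+\eta^\msp_t\gtrsim\sqrt\beta$ in low friction, matching the ULD computation. Subclaim (i) ($b^\lambda_t\le 3\gamma_t/4$) may fail in low friction because $c_t\lambda/\gamma_t$ is not small relative to $\gamma_t$. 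I would handle this as~\cite{scr4} does: the effective friction $\gamma_t\ge\eta^\msp_t\gtrsim c_0\sqrt\beta$ restores the high-friction inequality for $\gamma_t$, and all bounds in (i) are stated with $\gamma_t$.

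The main obstacle is the discretization term from subclaim (ii). It requires
\[
\gamma_t^2\,(a_t/c_t-1)^2 \lesssim \gamma_t^3h^2 \ll (\omega_+ + \eta^\msp_t)\,\gamma_t,
\]
so the step size condition must be rechecked since $\alpha/\gamma$ is no longer available.

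For $\alpha<0$ or low friction, $\eta^\msp_t\gtrsim|\omega|$, so it suffices that $\gamma^3h^2\ll|\omega|\gamma$. This is implied by $h\ll|\omega|/\gamma^{3/2}$, provided $\gamma\gtrsim|\omega|$ (and if $\gamma_t$ exceeds $\gamma$, one uses $\eta_t^\msp h\ll1$ to absorb the extra factors).

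For $\alpha=0$, $\eta^\msp_t\gtrsim 1/(T+Ah)$, so one needs $\gamma^3h^2\ll 1/T$, i.e.\ $h\ll 1/(\gamma^{3/2}T^{1/2})$. This is exactly the stated condition.

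Finally, the requirement $\eta^\msp_th\le c$ used to linearize $a_t$ and $c_t$ needs $h\ll 1/(A|\omega|)$. This is implied by the stated condition since $|\omega|\lesssim\gamma$.

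With the recursion established, the remaining steps — Theorem~\ref{thm:regularity}, Theorem~\ref{thm:shifted_chain_rule}, and the summation from~\cite{scr4} — are identical to the proof of Theorem~\ref{thm:harnack}. The Hessian-Lipschitz variant follows by the same substitution as in that proof.
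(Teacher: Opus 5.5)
Your reduction is the paper's: Theorem~\ref{thm:regularity} and the shifted-composition bookkeeping from~\cite{scr4} carry over unchanged, so everything comes down to re-checking subclaims (i)--(iii) of Lemma~\ref{lem:shift-contraction}. Your treatment of (i), (iii) and of the weakly convex case matches Lemma~\ref{lem:shift-contraction-general}.

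\textbf{Subclaim (ii) in the semi-convex/low-friction regime.} This is the step that fails. Since $a_t/c_t-1=O(\gamma_t h)$, one has $\gamma_t^2(a_t/c_t-1)^2\lesssim\gamma_t^4h^2$, not $\gamma_t^3h^2$. So the requirement is $\gamma_t^3h^2\ll\omega_++\eta_t^\msp$, which is exactly the form you used correctly for $\alpha=0$. With $\eta_t^\msp\gtrsim|\omega|$ this reads $\gamma^3h^2\ll|\omega|$, i.e.\ $h\ll|\omega|^{1/2}/\gamma^{3/2}$. That is what the paper's Lemma~\ref{lem:shift-contraction-general} assumes.

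Your target $\gamma^3h^2\ll|\omega|\gamma$ is off by a factor of $\gamma$. The claimed implication from $h\ll|\omega|/\gamma^{3/2}$ is also false: that condition only gives $\gamma^3h^2\ll|\omega|^2$, which does not imply $\gamma^3h^2\ll|\omega|$ once $|\omega|\gg 1$. In fact $|\omega|/\gamma^{3/2}$ is not invariant under the rescaling $x\mapsto\lambda x$, $t\mapsto\lambda t$, $\gamma\mapsto\gamma/\lambda$, which conjugates OHO to itself. It scales like $\lambda^{1/2}$ while $h$ scales like $\lambda$. So the displayed condition should be read as $|\omega|^{1/2}/\gamma^{3/2}$, and you should correct the hypothesis rather than adjust the computation to fit it.

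\textbf{Low-friction proviso.} Your assumption $\gamma\gtrsim|\omega|$, used both for (ii) and for $\eta_t^\msp h\ll 1$, fails in low friction, where $|\omega|=\sqrt\beta$ can greatly exceed $\gamma$. There both facts follow instead from $h\ll\beta^{-1/2}$:
\begin{itemize}
\item $\gamma^3h^2\lesssim\beta^{3/2}h^2\ll\sqrt\beta=|\omega|$;
\item $h\ll 1/(A\sqrt\beta)$.
\end{itemize}

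\textbf{Semi-convex high friction.} By~\eqref{eq:omeg-app}, this case has $\omega=\alpha/\gamma$, not $-\sqrt\beta$. The absorption in (iii) is then of $c_t\alpha\ge -c_t|\alpha|$ by $\gamma_t\eta_t^\msp\ge c_0\gamma|\omega|=c_0|\alpha|$, as in the paper. It is not of $-c_t\beta$, which $c_0|\alpha|$ cannot dominate. Running your argument with $\omega=-\sqrt\beta$ there would only give the inequality with a larger constant $C(\alpha,\beta,\gamma,T)$, i.e.\ a weaker statement than claimed.
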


Theorem~\ref{thm:harnack-general} covers all settings of the convexity parameter $\alpha$ and the friction parameter $\beta$ not covered by Theorem~\ref{thm:harnack}. As in Remark~\ref{rem:harnack-recover} (which discusses the setting in Theorem~\ref{thm:harnack}),
the Harnack inequalities in the settings of Theorem~\ref{thm:harnack-general} exactly match the analogous bounds for underdamped Langevin in~\cite[Theorem 3.2]{scr4} up to absolute constants, except for an additive term $\beta^2 d h^3 q T$ without Assumption~\ref{as:hess-lip}. We are unsure if this additive term is fundamental, but it vanishes as $h\searrow 0$, thereby recovering the bounds for underdamped Langevin in the continuous-time limit. 

The proof of Theorem~\ref{thm:harnack-general} is identical to Theorem~\ref{thm:harnack} once we show that the key lemma about the auxiliary distance recursion (Lemma~\ref{lem:shift-contraction}) extends to these settings. We do this below. 

\begin{lemma}[Auxiliary distance recursion in settings beyond Lemma~\ref{lem:shift-contraction}]\label{lem:shift-contraction-general}
    Suppose that $-\beta I \preceq \alpha I \preceq \nabla^2 V \preceq \beta I$ and that the step size $h$ satisfies
     \begin{align*}
        h \ll \begin{cases}
            \nfrac{1}{\gamma^{3/2} T^{1/2}}\,, & \text{if}~\alpha = 0~\text{and}~\gamma \ge \sqrt{32\beta} \quad(\text{weakly convex, high friction})\,, \\
            \nfrac{\abs{\omega}^{1/2}}{\gamma^{3/2}}\,, & \text{if}~\alpha < 0~\text{or}~\gamma < \sqrt{32\beta}\quad (\text{semi-convex or low friction})\,.
        \end{cases}
    \end{align*}
    Then the conclusion of Lemma~\ref{lem:shift-contraction} holds.
\end{lemma}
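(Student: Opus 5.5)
We reuse the proof of Lemma~\ref{lem:shift-contraction} verbatim up to the reduction to the $2\times 2$ matrices $\mc M_t^\lambda$, now with $\lambda \in [\alpha,\beta]$ and $\alpha$ possibly $\le 0$. We then re-verify the three subclaims (i)--(iii) with $\omega_+ \deq \omega \vee 0$, which now vanishes in every case of the statement. Nothing in the Lyapunov functional $\mc L_t$, in the definitions of $a_t$, $c_t$, $\gamma_t = \gamma + \eta_t^\msp$, or in the final telescoping from $\mc L$ to $\mathtt d^\aux$ used the sign of $\alpha$ or the high-friction condition, so those parts carry over unchanged. The target therefore becomes
\[
\lambda_{\min}(\mc M_t^\lambda) \gtrsim \eta_t^\msp ,
\]
and everything must be paid for by the shift schedule. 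The key input is a lower bound on $\eta_t^\msp = c_0\omega/(\exp(\omega(Nh-t+Ah))-1)$ in each case. If $\omega<0$ (semi-convex, or low friction with $\omega=-\sqrt\beta$), then $\eta_t^\msp \ge c_0\abs{\omega}$ for all $t$. If $\omega = 0$ (weakly convex, high friction), we read the formula as its limit $c_0/(Nh-t+Ah)$, so $\eta_t^\msp \ge c_0/(2T)$. As before, $\eta_t^\msp h \le c_0/A \ll 1$ still holds, so $a_t = 1+O(\eta_t^\msp h)$ and $c_t = 1+O(\gamma_t h)$.

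For subclaim (i), $b_t^\lambda \le \tfrac34\gamma_t$, the only delicate term is $c_t\lambda/\gamma_t \le \gamma_t/12$, which needs $\gamma_t^2 \gtrsim 32\beta$. In the high-friction cases this follows from $\gamma_t \ge \gamma$. In the low-friction case we use $\gamma_t \ge \eta_t^\msp \ge c_0\sqrt\beta$ with $c_0$ large, so that the effective friction is automatically high. For subclaim (iii), the identity bounding the remaining terms from below by $c_t\lambda + \tfrac14\gamma_t\eta_t^\msp$ goes through as before; the $-\omega\eta_t^\msp a_t/2$ term is now nonnegative when $\omega<0$, so it only helps. The new difficulty is that $c_t\lambda$ may be negative, down to $c_t\alpha \ge -2\abs{\alpha}$ (or down to $-2\beta$ in the low-friction case). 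We absorb it into $\tfrac18\gamma_t\eta_t^\msp$:
\begin{itemize}
\item in the semi-convex high-friction case, $\gamma_t\eta_t^\msp \ge \gamma\, c_0\abs{\alpha}/\gamma = c_0\abs\alpha$;
\item in the low-friction case, $\gamma_t\eta_t^\msp \ge (\eta_t^\msp)^2 \ge c_0^2\beta$;
\item in the weakly convex case, $\lambda \ge 0$ and nothing is needed.
\end{itemize}
This leaves $b_t^\lambda \gtrsim \eta_t^\msp$ once $c_0$ is a large enough constant. The cross term $-\gamma_t^3h^2$ coming from $(a_t^2-1)(c_t^{-1}-1)$ is handled by subclaim (ii).

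I expect subclaim (ii), the discretization term
\[
\gamma_t^3 h^2 \ll \eta_t^\msp ,
\]
to be the main obstacle, because the term $\omega_+$ that previously absorbed it (via $h \ll 1/(\kappa^{1/2}\gamma)$) is gone. I would split $\gamma_t^3 h^2 \lesssim \gamma^3h^2 + (\eta_t^\msp)^3h^2$. For the second piece, $(\eta_t^\msp)^3 h^2 \le (c_0/A)^2\,\eta_t^\msp$, which is negligible once $A \gg c_0$. For the first piece we use the lower bounds on $\eta_t^\msp$ above. In the case $\omega<0$ it suffices that $\gamma^3h^2 \ll \abs\omega$, i.e.\ $h \ll \abs{\omega}^{1/2}/\gamma^{3/2}$. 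In the weakly convex case it suffices that $\gamma^3 h^2 \ll 1/T$, i.e.\ $h \ll 1/(\gamma^{3/2}T^{1/2})$. These are exactly the assumed step size conditions. In the low-friction case one should double-check that $\gamma$ here is the nominal friction, since only $\gamma$ and not $\eta_t^\msp$ enters the $c_t^{-1}-1$ factor through $e^{-\gamma h/2}$.

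With (i)--(iii) in hand, the eigenvalue computation from Lemma~\ref{lem:shift-contraction} gives
\[
\lambda_{\min}(\mc M_t^\lambda) \gtrsim \eta_t^\msp = \omega_+ + \eta_t^\msp ,
\]
hence the Lyapunov decay~\eqref{eq-pf:contract-lyapunov}. The final chain of equalities relating $\mc L$ to $\mathtt d^\aux$ then yields the stated contraction.
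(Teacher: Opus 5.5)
Your proposal is correct and follows the paper's proof. It keeps the Lyapunov argument of Lemma~\ref{lem:shift-contraction} and re-verifies subclaims (i)--(iii) with the same absorptions ($\gamma_t \ge \eta_t^\msp \ge c_0\sqrt\beta$ in low friction for (i), and $\gamma_t\eta_t^\msp \ge c_0\abs{\alpha}$ or $\ge c_0^2\beta$ to absorb the negative $c_t\alpha$ in (iii)), while your lower bounds on $\eta_t^\msp$ spell out the check of \eqref{eq:gam_bd_blah} that the paper only asserts. Two small caveats: when $\omega<0$ the bound $\eta_t^\msp h \lesssim c_0/A$ also needs $\abs{\omega}h \lesssim 1/A$ (in low friction this is $h\sqrt\beta \ll 1$, supplied by the standing condition $h \ll 1/(\beta^{1/2}q^{1/2})$ rather than the lemma's own hypothesis, as the paper also implicitly assumes), and $\eta_t^\msp$ does enter $c_t$ through $\int_t^{(k+1)h}\gamma_s\,\D s$, though your split $\gamma_t^3h^2 \lesssim \gamma^3h^2 + (\eta_t^\msp)^3h^2$ already covers that contribution.
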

\begin{proof}
    The proof is nearly identical to that of Lemma~\ref{lem:shift-contraction}; we only describe the differences in verifying the subclaims (i), (ii), and (iii).
    \begin{enumerate}[label=(\roman*)]
        \item In the low friction setting, we have $c_t \lambda \leq \frac{1}{12} (\eta_t^{\msp})^2$ if we take $c_0$ large enough. Thus this still gives the desired conclusion $\frac{c_t \lambda}{\gamma_t} \leq \frac{\gamma_t}{12}$.
        \item We need only justify~\eqref{eq:gam_bd_blah}. This follows from the step size assumption. 
        \item We need only adjust the final step in the proof of this claim, regarding $c_t \alpha + \frac{\gamma_t \eta_t^{\msp}}{2}$. In the semi-convex, high friction case, the second term is $\gamma_t \eta_t^{\msp} \geq c_0 \gamma \abs{\omega} \geq c_0 \abs{\alpha}$, which allows us to absorb $c_t \alpha$ into it if $c_0 \geq 24$ is large enough. In the low friction case, using $\alpha \geq -\beta$, $\gamma_t \eta_t^{\msp} \geq (\eta_t^{\msp})^2 \geq c_0^2 \omega^2 \geq c_0^2 \beta$. If $c_0 \geq 24$, we can again absorb the $c_t \alpha$ term into this. In any case, this gives the desired conclusion.
    \end{enumerate}
\end{proof}

    \addcontentsline{toc}{section}{References}
    \printbibliography{}
    
\end{document}